\providecommand{\tabularnewline}{\\}
\numberwithin{equation}{section}
\numberwithin{figure}{section}
\theoremstyle{plain}
\newtheorem{thm}{\protect\theoremname}[section]
  \theoremstyle{plain}
  \newtheorem{cor}[thm]{\protect\corollaryname}
  \theoremstyle{plain}
  \newtheorem{prop}[thm]{\protect\propositionname}
  \theoremstyle{plain}
  \newtheorem{fact}[thm]{\protect\factname}
  \theoremstyle{plain}
  \newtheorem{lem}[thm]{\protect\lemmaname}
  \theoremstyle{remark}
  \newtheorem{claim}[thm]{\protect\claimname}
  \theoremstyle{plain}
  \newtheorem*{thm*}{\protect\theoremname}
  \theoremstyle{remark}
  \newtheorem{rem}[thm]{\protect\remarkname}
\newcommand{\classAC}{\operatorname{AC}}
\newcommand{\orth}{\operatorname{orth}}
\newtheoremstyle{myplain}      {10pt}{10pt}{\itshape}{}{\scshape}{.}{.5em}{}
\newtheoremstyle{mydefinition} {10pt}{10pt}{}{}{\scshape}{.}{.5em}{}
\newtheoremstyle{myremark} {10pt}{10pt}{}{}{\itshape}{.}{.5em}{}
\renewcommand{\mathcal}[1]{\mathscr{#1}}
\def\@seccntformat#1{%
  \protect\textup{%
    \protect\@secnumfont
    \expandafter\protect\csname format#1\endcsname 
    \csname the#1\endcsname
    \protect\@secnumpunct
  }%
}
\newcommand \SparseDotfill {\leavevmode \leaders \hb@xt@ .7em{\hss .\hss }\hfill \kern \z@}
\def\@tocline#1#2#3#4#5#6#7{\relax
  \ifnum #1>\c@tocdepth 
  \else
    \par \addpenalty\@secpenalty\addvspace{\ifnum #1=1 2mm \else #2\fi}%
    \begingroup \hyphenpenalty\@M
    \@ifempty{#4}{%
      \@tempdima\csname r@tocindent\number#1\endcsname\relax
    }{%
      \@tempdima#4\relax
    }%
    \parindent\z@ \leftskip#3\relax \advance\leftskip\@tempdima\relax
    \rightskip\@pnumwidth plus4em \parfillskip-\@pnumwidth
          \ifnum #1=1 \bfseries #5\else #5\fi 
   \leavevmode\hskip-\@tempdima
      \ifcase #1
       \or\or \hskip 1em \or \hskip 2em \else \hskip 3em \fi%
#6     \nobreak\relax
{\ifnum #1=1\hfill \else \SparseDotfill\fi}
 \hbox to\@pnumwidth{\@tocpagenum{
    \ifnum #1=1 \bfseries \fi #7}}\par
    \nobreak
    \endgroup
  \fi}
\newcommand{\SURJ}{\operatorname{SURJ}}
\providecommand{\noopsort}[1]{}
\newcommand{\cube}{\operatorname{cube}}
\newcommand{\diam}{\operatorname{diam}}
\newcommand{\Smooth}{\mathfrak{S}}
\newcommand{\SmoothFunction}{\operatorname{Smooth}}
\newcommand{\Distribution}{\mathfrak{D}}
\DeclareMathOperator{\Sgn}{\widetilde{sgn}}
\newcommand{\VV}{\mathcal{V}}
\DeclareMathOperator{\srank}{\rk_\pm}
\DeclareMathOperator{\pospart}{pos}
\DeclareMathOperator{\negpart}{neg}
\DeclareMathOperator{\cone}{cone}
\newcommand{\WeaklyBounded}{\mathfrak{B}}
\newcommand{\StronglyBounded}{\mathfrak{B}^*}
\newcommand{\tallbar}{\scalebox{0.9}{|}}
\newcommand{\pomo}{\{-1,+1\}}
\newcommand{\pomon}{\pomo^n}
\DeclareMathOperator{\pp}{PP}
\DeclareMathOperator{\upp}{UPP}
  \providecommand{\claimname}{Claim}
  \providecommand{\corollaryname}{Corollary}
  \providecommand{\factname}{Fact}
  \providecommand{\lemmaname}{Lemma}
  \providecommand{\propositionname}{Proposition}
  \providecommand{\remarkname}{Remark}
  \providecommand{\theoremname}{Theorem}
\providecommand{\theoremname}{Theorem}
\begin{document}

\title[The Threshold Degree and Sign-Rank of AC$^{0}$]{Near-Optimal Lower Bounds on the Threshold Degree and Sign-Rank of
AC$^{0}$}

\author{Alexander A. Sherstov and Pei Wu}

\thanks{$^{*}$ Computer Science Department, UCLA, Los Angeles, CA~90095.
Supported by NSF grant CCF-1814947 and an Alfred P. Sloan Foundation
Research Fellowship.\\
 {\large{}\Letter ~}\texttt{\{sherstov,pwu\}@cs.ucla.edu }}
\begin{abstract}
The \emph{threshold degree} of a Boolean function $f\colon\zoon\to\zoo$
is the minimum degree of a real polynomial $p$ that represents $f$
in sign: $\sign p(x)=(-1)^{f(x)}.$ A related notion is \emph{sign-rank},
defined for a Boolean matrix $F=[F_{ij}]$ as the minimum rank of
a real matrix $M$ with $\sign M_{ij}=(-1)^{F_{ij}}$. Determining
the maximum threshold degree and sign-rank achievable by constant-depth
circuits ($\classAC^{0}$) is a well-known and extensively studied
open problem, with complexity-theoretic and algorithmic applications.

We give an essentially optimal solution to this problem. For any $\epsilon>0,$
we construct an $\classAC^{0}$ circuit in $n$~variables that has
threshold degree $\Omega(n^{1-\epsilon})$ and sign-rank $\exp(\Omega(n^{1-\epsilon})),$
improving on the previous best lower bounds of $\Omega(\sqrt{n})$
and $\exp(\tilde{\Omega}(\sqrt{n}))$, respectively. Our results subsume
\emph{all} previous lower bounds on the threshold degree and sign-rank
of $\classAC^{0}$ circuits of any given depth, with a strict improvement
starting at depth~$4$. As a corollary, we also obtain near-optimal
bounds on the discrepancy, threshold weight, and threshold density
of $\classAC^{0}$, strictly subsuming previous work on these quantities.
Our work gives some of the strongest lower bounds to date on the communication
complexity of $\classAC^{0}$.
\end{abstract}

\maketitle
\belowdisplayskip=12pt plus 1pt minus 3pt 
\abovedisplayskip=12pt plus 1pt minus 3pt 
\thispagestyle{empty}

\newpage{}\thispagestyle{empty}
\hypersetup{linkcolor=black} 
~\vspace{-10mm}
\tableofcontents{}\newpage{}

\hypersetup{linkcolor=teal} 
\thispagestyle{empty}

\hyphenation{com-po-nent-wise}

\section{Introduction}

A real polynomial $p$ is said to \emph{sign-represent} the Boolean
function $f\colon\zoon\to\zoo$ if $\sign p(x)=(-1)^{f(x)}$ for every
input $x\in\zoon.$ The \emph{threshold degree} of $f$, denoted $\degthr(f)$,
is the minimum degree of a multivariate real polynomial that sign-represents
$f$. Equivalent terms in the literature include \emph{strong degree}~\cite{aspnes91voting},
\emph{voting polynomial degree}~\cite{krause94depth2mod}, \emph{PTF
degree}~\cite{OS-extremal-ptf}, and \emph{sign degree}~\cite{buhrman07pp-upp}.
Since any function $f\colon\zoon\to\zoo$ can be represented exactly
by a real polynomial of degree at most $n,$ the threshold degree
of $f$ is an integer between $0$ and $n.$ Viewed as a computational
model, sign-representation is remarkably powerful because it corresponds
to the strongest form of pointwise approximation. The formal study
of threshold degree began in 1969 with the pioneering work of Minsky
and Papert~\cite{minsky88perceptrons} on limitations of perceptrons.
The authors of~\cite{minsky88perceptrons} famously proved that the
parity function on $n$ variables has the maximum possible threshold
degree, $n$. They obtained lower bounds on the threshold degree of
several other functions, including DNF formulas and intersections
of halfspaces. Since then, sign-representing polynomials have found
applications far beyond artificial intelligence. In theoretical computer
science, applications of threshold degree include circuit lower bounds~\cite{krause94depth2mod,KP98threshold,sherstov07ac-majmaj,arkadev07multiparty,beame-huyn-ngoc09multiparty-focs},
size-depth trade-offs~\cite{paturi-saks94rational,siu-roy-kailath94rational},
communication complexity~\cite{sherstov07ac-majmaj,arkadev07multiparty,sherstov07quantum,RS07dc-dnf,beame-huyn-ngoc09multiparty-focs,sherstov12mdisj,sherstov13directional},
structural complexity~\cite{aspnes91voting,beigel91rational}, and
computational learning~\cite{KS01dnf,KOS:02,odonnell03degree,ACRSZ07nand,sherstov09hshs,sherstov09opthshs,bun-thaler13amplification,sherstov14sign-deg-ac0,thaler14omb}.

The notion of threshold degree has been especially influential in
the study of $\classAC^{0}$, the class of constant-depth polynomial-size
circuits with $\wedge,\vee,\neg$ gates of unbounded fan-in. The first
such result was obtained by Aspnes et al.~\cite{aspnes91voting},
who used sign-representing polynomials to give a beautiful new proof
of classic lower bounds for $\classAC^{0}$. In communication complexity,
the notion of threshold degree played a central role in the first
construction~\cite{sherstov07ac-majmaj,sherstov07quantum} of an
$\classAC^{0}$ circuit with exponentially small discrepancy and hence
large communication complexity in nearly every model. That discrepancy
result was used in~\cite{sherstov07ac-majmaj} to show the optimality
of Allender's classic simulation of $\classAC^{0}$ by majority circuits,
solving the open problem~\cite{krause94depth2mod} on the relation
between the two circuit classes. Subsequent work~\cite{arkadev07multiparty,beame-huyn-ngoc09multiparty-focs,sherstov12mdisj,sherstov13directional}
resolved other questions in communication complexity and circuit complexity
related to constant-depth circuits by generalizing the threshold degree
method of~\cite{sherstov07ac-majmaj,sherstov07quantum}.

Sign-representing polynomials also paved the way for \emph{algorithmic}
breakthroughs in the study of constant-depth circuits. Specifically,
any function of threshold degree $d$ can be viewed as a halfspace
in $\binom{n}{0}+\binom{n}{1}+\cdots+\binom{n}{d}$ dimensions, corresponding
to the monomials in a sign-representation of $f$. As a result, a
class of functions of threshold degree at most $d$ can be learned
in the standard PAC model under arbitrary distributions in time polynomial
in $\binom{n}{0}+\binom{n}{1}+\cdots+\binom{n}{d}.$ Klivans and Servedio~\cite{KS01dnf}
used this threshold degree approach to give what is currently the
fastest algorithm for learning polynomial-size DNF formulas, with
running time $\exp(\tilde{O}(n^{1/3}))$. Another learning-theoretic
breakthrough based on threshold degree is the fastest algorithm for
learning Boolean formulas, obtained by O'Donnell and Servedio~\cite{odonnell03degree}
for formulas of constant depth and by Ambainis et al.~\cite{ACRSZ07nand}
for arbitrary depth. Their algorithm runs in time $\exp(\tilde{O}(n^{(2^{k-1}-1)/(2^{k}-1)}))$
for formulas of size $n$ and constant depth~$k$, and in time $\exp(\tilde{O}(\sqrt{n}))$
for formulas of unbounded depth. In both cases, the bound on the running
time follows from the corresponding upper bound on the threshold degree.

A far-reaching generalization of threshold degree is the matrix-analytic
notion of \emph{sign-rank}, which allows sign-representation out of
arbitrary low-dimensional subspaces rather than the subspace of low-degree
polynomials. The contribution of this paper is to prove essentially
optimal lower bounds on the threshold degree and sign-rank of $\classAC^{0}$,
which in turn imply lower bounds on other fundamental complexity measures
of interest in communication complexity and learning theory. In the
remainder of this section, we give a detailed overview of the previous
work, present our main results, and discuss our proofs. 

\begin{table}[b]
~\\
~\\
~\\

\begin{tabular}{ll>{\raggedright}p{1mm}l>{\raggedright}p{3mm}ll}
\noalign{\vskip\doublerulesep}
 & \textbf{Depth} &  & \textbf{Threshold degree} &  & \textbf{Reference} & \tabularnewline[\doublerulesep]
\hline 
\noalign{\vskip\doublerulesep}
\hline 
\noalign{\vskip5pt}
 & $2$ &  & $\Omega(n^{1/3})$ &  & Minsky and Papert~\cite{minsky88perceptrons} & \tabularnewline[\doublerulesep]
\noalign{\vskip\doublerulesep}
 & $2$ &  & $O(n^{1/3}\log n)$ &  & Klivans and Servedio~\cite{KS01dnf} & \tabularnewline[\doublerulesep]
\noalign{\vskip\doublerulesep}
 & $k+2$ &  & $\Omega(n^{1/3}\log^{2k/3}n)$ &  & O'Donnell and Servedio~\cite{odonnell03degree} & \tabularnewline[\doublerulesep]
\noalign{\vskip\doublerulesep}
 & $k$ &  & $\Omega(n^{\frac{k-1}{2k-1}})$ &  & Sherstov~\cite{sherstov14sign-deg-ac0} & \tabularnewline[\doublerulesep]
\noalign{\vskip\doublerulesep}
 & $4$ &  & $\Omega(\sqrt{n})$ &  & Sherstov~\cite{sherstov15asymmetry} & \tabularnewline[\doublerulesep]
\noalign{\vskip\doublerulesep}
 & $3$ &  & $\tilde{\Omega}(\sqrt{n})$ &  & Bun and Thaler~\cite{BT18ac0-large-error} & \tabularnewline[\doublerulesep]
\noalign{\vskip\doublerulesep}
 & $k$ &  & $\tilde{\Omega}(n^{\frac{k-1}{k+1}})$ &  & This paper & \tabularnewline[\doublerulesep]
\hline 
 &  &  &  &  &  & \tabularnewline
\end{tabular}

\caption{\label{tab:threshold-degree}Known bounds on the maximum threshold
degree of $\wedge,\vee,\neg$-circuits of polynomial size and constant
depth. In all bounds, $n$ denotes the number of variables, and $k$
denotes an arbitrary positive integer.}
\end{table}

\subsection{Threshold degree of AC\protect\textsuperscript{0}}

Determining the maximum threshold degree of an $\classAC^{0}$ circuit
in $n$ variables is a longstanding open problem in the area. It is
motivated by algorithmic and complexity-theoretic applications~\cite{KS01dnf,odonnell03degree,klivans-servedio06decision-lists,RS07dc-dnf,bun-thaler13amplification},
in addition to being a natural question in its own right. Table~\ref{tab:threshold-degree}
gives a quantitative summary of the results obtained to date. In their
seminal monograph, Minsky and Papert~\cite{minsky88perceptrons}
proved a lower bound of $\Omega(n^{1/3})$ on the threshold degree
of the following DNF formula in $n$ variables: 
\[
f(x)=\bigwedge_{i=1}^{n^{1/3}}\;\bigvee_{j=1}^{n^{2/3}}x_{i,j}.
\]
Three decades later, Klivans and Servedio~\cite{KS01dnf} obtained
an $O(n^{1/3}\log n)$ upper bound on the threshold degree of any
polynomial-size DNF formula in $n$ variables, essentially matching
Minsky and Papert's result and resolving the problem for depth~$2$.
Determining the threshold degree of circuits of depth $k\geq3$ proved
to be challenging. The only upper bound known to date is the trivial
$O(n),$ which follows directly from the definition of threshold degree.
In particular, it is consistent with our knowledge that there are
$\classAC^{0}$ circuits with linear threshold degree. On the lower
bounds side, the only progress for a long time was due to O'Donnell
and Servedio~\cite{odonnell03degree}, who constructed for any $k\geq1$
a circuit of depth $k+2$ with threshold degree $\Omega(n^{1/3}\log^{2k/3}n).$
The authors of~\cite{odonnell03degree} formally posed the problem
of obtaining a polynomial improvement on Minsky and Papert's lower
bound. Such an improvement was obtained in~\cite{sherstov14sign-deg-ac0},
with a threshold degree lower bound of $\Omega(n^{(k-1)/(2k-1)})$
for circuits of depth~$k.$ A polynomially stronger result was obtained
in~\cite{sherstov15asymmetry}, with a lower bound of $\Omega(\sqrt{n})$
on the threshold degree of an explicit circuit of depth~$4$. Bun
and Thaler~\cite{BT18ac0-large-error} recently used a different,
depth-$3$ circuit to give a much simpler proof of the $\tilde{\Omega}(\sqrt{n})$
lower bound for $\classAC^{0}$. We obtain a quadratically stronger,
and near-optimal, lower bound on the threshold degree of $\classAC^{0}$.
\begin{thm}
\label{thm:MAIN-degthr-ac0}Let $k\geq1$ be a fixed integer. Then
there is an $($explicitly given$)$ Boolean circuit family $\{f_{n}\}_{n=1}^{\infty},$
where $f_{n}\colon\zoon\to\zoo$ has polynomial size, depth $k,$
and threshold degree 
\[
\degthr(f_{n})=\Omega\left(n^{\frac{k-1}{k+1}}\cdot(\log n)^{-\frac{1}{k+1}\lceil\frac{k-2}{2}\rceil\lfloor\frac{k-2}{2}\rfloor}\right).
\]
Moreover, $f_{n}$ has bottom fan-in $O(\log n)$ for all $k\ne2.$
\end{thm}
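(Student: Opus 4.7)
The plan is to prove the lower bound via the dual characterization of threshold degree: $\degthr(f) \geq d$ if and only if there exists a nonzero function $\psi \colon \zoon \to \mathbb{R}$ with \emph{pure high degree} at least $d$ (that is, $\langle \psi, p \rangle = 0$ for every real polynomial $p$ of degree less than $d$) and with $\psi(x)(-1)^{f(x)} \geq 0$ for every input $x$. The task thus reduces to an explicit construction of an $\classAC^{0}$ circuit $f_n$ together with a matching dual witness $\psi$.

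The hard circuit $f_n$ is defined recursively in the depth $k$. For small $k$, I begin with a suitable base circuit (Minsky--Papert's AND-OR composition when $k = 2$, and a specialized depth-$3$ construction in the spirit of Bun--Thaler when $k = 3$), and I pass from depth $k-1$ to depth $k$ by \emph{block composition}: $f_n = h \circ (g, g, \dots, g)$, where $g$ is the inductively constructed circuit of depth $k-1$ on $m$ variables, $t = \Theta(n/m)$ independent copies of $g$ are combined by a simple top gate $h$ (an AND, OR, or small NAND-formula), and the block parameters $m, t$ are tuned to maximize the resulting threshold degree subject to the polynomial-size and exact-depth-$k$ constraints. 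For $k \neq 2$, the bottom fan-in of $f_n$ inherits that of $g$, which is maintained at $O(\log n)$ by using a logarithmic-fan-in base at $k = 3$.

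The dual witness is built by the corresponding \emph{dual block composition}. Given a dual $\psi_g$ for the inner circuit $g$ together with a one-sided outer dual $\Psi_h$ for $h$, one forms
\[
\psi_n(x_1, \dots, x_t) \;=\; \Psi_h\bigl(g^{*}(x_1), \dots, g^{*}(x_t)\bigr) \cdot \prod_{i=1}^{t} |\psi_g(x_i)|,
\]
where $g^{*}(x_i) = \operatorname{sign} \psi_g(x_i)$ is the Boolean function induced by the inner dual. The sign condition for $f_n$ follows from the product structure together with the outer dual's sign-agreement with $h$. The heart of the proof is a composition lemma showing that the pure high degree of $\psi_n$ grows essentially multiplicatively in the pure high degrees of $\psi_g$ and $\Psi_h$, up to at most polylogarithmic losses; solving the resulting recurrence $d_k \approx d_{k-1} \cdot t$ with $n = tm$ and the optimal choice of block sizes yields the stated exponent $(k-1)/(k+1)$, strictly improving the exponent $(k-1)/(2k-1)$ obtained in \cite{sherstov14sign-deg-ac0}.

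The main obstacle is proving this near-lossless dual composition. Earlier composition lemmas typically lose a polynomial factor per step because of imperfect sign-correlation between $\psi_g$ and $g$, or because of the interaction of cross-block error terms that must be cancelled to maintain pure high degree. The plan is to exploit a one-sided or mass-balanced property of the inner dual $\psi_g$ and to design the outer dual $\Psi_h$ so that it interfaces cleanly with these inner structural properties, thereby avoiding polynomial overhead and absorbing all per-level losses into polylogarithmic factors (these are what ultimately produce the $\lceil (k-2)/2 \rceil \lfloor (k-2)/2 \rfloor$ exponent on $\log n$ in the theorem). A secondary technical challenge is the careful bookkeeping of sizes, depths, and fan-ins required to ensure that $f_n$ has polynomial size, depth exactly $k$, and bottom fan-in $O(\log n)$ for $k \neq 2$.
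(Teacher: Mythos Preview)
Your proposal has a genuine gap: pure block composition with a simple top gate cannot amplify the threshold-degree exponent, and your own recurrence shows this. You write $d_k \approx d_{k-1}\cdot t$ with $n = tm$, i.e.\ $n_k = t\cdot n_{k-1}$; but then $d_k/n_k = d_{k-1}/n_{k-1}$, so the exponent is \emph{preserved}, not increased. Concretely, dual block composition (Corollary~\ref{cor:orth-block-composition} in the paper) gives
\[
\orth(\psi_n)\;\geq\;\orth(\Psi_h)\cdot\orth(\phi_1-\phi_0)\;=\;\orth(\Psi_h)\cdot\degthr(g),
\]
and for any $\Psi_h$ whose sign agrees with a single AND or OR gate one has $\orth(\Psi_h)\leq\degthr(h)=1$. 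There is no ``near-lossless'' composition lemma that yields a factor of $t$ here; to get $\orth(\Psi_h)=t$ you would need $\Psi_h$ to be a multiple of parity on $t$ bits, which has the wrong sign pattern for an $\classAC^{0}$ gate. This is not a polylogarithmic bookkeeping issue but a structural obstruction.

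The paper's route is quite different. It composes in the \emph{other} direction, $f\circ\MP_m$ with the inductively hard $f$ on the outside and the Minsky--Papert gadget inside (so the multiplicative factor comes from $\degthr(\MP_m)=\Theta(m)$, not from a top gate). The composed function lives on $\Theta(nm^3)$ variables, which by itself would again \emph{worsen} the exponent; the decisive new step is \emph{input compression}. One shows that the dual witness for $f\circ\MP_m$ can be modified---using the Razborov--Sherstov corrector, but spreading the transferred $\ell_1$ mass over exponentially many neighborhoods so that no sign is flipped---to be supported on inputs of weight $\tilde O(nm)$. Restricting to that slice and re-encoding (Theorem~\ref{thm:input-compression}) yields a function on only $\tilde\Theta(nm)$ Boolean variables with threshold degree $\Omega(m\cdot\degthr(f))$. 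This is what moves the exponent from $(k-1)/k$ toward $k/(k+1)$ and, after iterating, produces $(k-1)/(k+1)$. None of this mechanism---the low-weight concentration of the $\MP$ dual, the sign-preserving mass transfer, or the re-encoding onto a compressed domain---appears in your plan, and without it the inductive step does not go through.
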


\noindent For large $k,$ Theorem~\ref{thm:MAIN-degthr-ac0} essentially
matches the trivial upper bound of $n$ on the threshold degree of
any function. For any fixed depth $k,$ Theorem~\ref{thm:MAIN-degthr-ac0}
subsumes all\emph{ }previous lower bounds on the threshold degree
of $\classAC^{0},$ with a polynomial improvement starting at depth
$k=4.$ In particular, the lower bounds due to Minsky and Papert~\cite{minsky88perceptrons}
and Bun and Thaler~\cite{BT18ac0-large-error} are subsumed as the
special cases $k=2$ and $k=3$, respectively. From a computational
learning perspective, Theorem~\ref{thm:MAIN-degthr-ac0} definitively
rules out the threshold degree approach to learning constant-depth
circuits.

\noindent 
\begin{table}[b]
~\\
~\\
~\\

\begin{tabular}{ll>{\raggedright}p{1mm}l>{\raggedright}p{3mm}ll}
\noalign{\vskip\doublerulesep}
 & \textbf{Depth} &  & \textbf{Sign-rank} &  & \textbf{Reference} & \tabularnewline[\doublerulesep]
\hline 
\noalign{\vskip\doublerulesep}
\hline 
\noalign{\vskip5pt}
 & $3$ &  & $\exp(\Omega(n^{1/3}))$ &  & Razborov and Sherstov~\cite{RS07dc-dnf} & \tabularnewline[\doublerulesep]
\noalign{\vskip\doublerulesep}
 & $3$ &  & $\exp(\tilde{\Omega}(n^{2/5}))$ &  & Bun and Thaler~\cite{BT16sign-rank-ac0} & \tabularnewline[\doublerulesep]
\noalign{\vskip\doublerulesep}
 & 7 &  & $\exp(\tilde{\Omega}(\sqrt{n}))$ &  & Bun and Thaler~\cite{BT18ac0-large-error} & \tabularnewline[\doublerulesep]
\noalign{\vskip\doublerulesep}
 & $3k$ &  & $\exp(\tilde{\Omega}(n^{1-\frac{1}{k+1}}))$ &  & This paper & \tabularnewline[\doublerulesep]
\noalign{\vskip\doublerulesep}
 & $3k+1$ &  & $\exp(\tilde{\Omega}(n^{1-\frac{1}{k+1.5}}))$ &  & This paper & \tabularnewline[\doublerulesep]
\hline 
 &  &  &  &  &  & \tabularnewline
\end{tabular}

\caption{\label{tab:sign-rank}Known lower bounds on the maximum sign-rank
of $\wedge,\vee,\neg$-circuits $F\colon\zoon\times\zoon\to\zoo$
of polynomial size and constant depth. In all bounds, $k$ denotes
an arbitrary positive integer.}
\end{table}

\subsection{Sign-rank of AC\protect\textsuperscript{0}}

The \emph{sign-rank} of a matrix $A=[A_{ij}]$ without zero entries,
denoted $\srank(A),$ is the least rank of a real matrix $M=[M_{ij}]$
with $\sign M_{ij}=\sign A_{ij}$ for all $i,j.$ In other words,
the sign-rank of $A$ is the minimum rank of a matrix that can be
obtained by making arbitrary sign-preserving changes to the entries
of $A$. The sign-rank of a Boolean function $F\colon\zoon\times\zoon\to\zoo$
is defined in the natural way as the sign-rank of the matrix $[(-1)^{F(x,y)}]_{x,y}.$
In particular, the sign-rank of $F$ is an integer between $1$ and
$2^{n}$. This fundamental notion has been studied in contexts as
diverse as matrix analysis, communication complexity, circuit complexity,
and learning theory; see~\cite{RS07dc-dnf} for a bibliographic overview.
 To a complexity theorist, sign-rank is a vastly more challenging
quantity to analyze than threshold degree. Indeed, a sign-rank lower
bound rules out sign-representation out of \emph{every }linear subspace
of given dimension, whereas a threshold degree lower bound rules out
sign-representation specifically by linear combinations of monomials
up to a given degree. 

Unsurprisingly, progress in understanding sign-rank has been slow
and difficult. No nontrivial lower bounds were known for any explicit
matrices until the breakthrough work of Forster~\cite{forster02linear},
who proved strong lower bounds on the sign-rank of Hadamard matrices
and more generally all sign matrices with small spectral norm. The
sign-rank of constant-depth circuits $F\colon\zoon\times\zoon\to\zoo$
has since seen considerable work, as summarized in Table~\ref{tab:sign-rank}.
The first exponential lower bound on the sign-rank of an $\classAC^{0}$
circuit was obtained by Razborov and Sherstov~\cite{RS07dc-dnf},
solving a~$22$-year-old problem due to Babai, Frankl, and Simon~\cite{BFS86cc}.
The authors of~\cite{RS07dc-dnf} constructed a polynomial-size circuit
of depth~$3$ with sign-rank $\exp(\Omega(n^{1/3}))$. In follow-up
work, Bun and Thaler~\cite{BT16sign-rank-ac0} constructed a polynomial-size
circuit of depth~$3$ with sign-rank $\exp(\tilde{\Omega}(n^{2/5}))$.
A more recent and incomparable result, also due to Bun and Thaler~\cite{BT18ac0-large-error},
is a sign-rank lower bound of $\exp(\tilde{\Omega}(\sqrt{n}))$ for
a circuit of polynomial size and depth~$7$. No nontrivial upper
bounds are known on the sign-rank of $\classAC^{0}$. Closing this
gap between the best lower bound of $\exp(\tilde{\Omega}(\sqrt{n}))$
and the trivial upper bound of $2^{n}$ has been a challenging open
problem. We solve  this problem almost completely, by constructing
for any $\epsilon>0$ a constant-depth circuit with sign-rank $\exp(\Omega(n^{1-\epsilon})).$
In quantitative detail, our results on the sign-rank of $\classAC^{0}$
are the following two theorems. 
\begin{thm}
\label{thm:MAIN-sign-rank-3k}Let $k\geq1$ be a given integer. Then
there is an $($explicitly given$)$ Boolean circuit family $\{F_{n}\}_{n=1}^{\infty},$
where $F_{n}\colon\zoon\times\zoon\to\zoo$ has polynomial size, depth
$3k,$ and sign-rank
\[
\srank(F_{n})=\exp\left(\Omega\left(n^{1-\frac{1}{k+1}}\cdot(\log n)^{-\frac{k(k-1)}{2(k+1)}}\right)\right).
\]
\end{thm}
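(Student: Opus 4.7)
The plan is to convert a threshold degree lower bound of the type in Theorem~\ref{thm:MAIN-degthr-ac0} into a sign-rank lower bound via the pattern-matrix machinery of Razborov and Sherstov~\cite{RS07dc-dnf}. At the heart of that machinery lies a reduction from sign-rank to the existence of a \emph{smooth dual witness}: a signed measure $\psi$ on $\zoon$ that is orthogonal to every monomial of degree below $d$, correlates noticeably with $(-1)^{f}$, and satisfies the pointwise lower bound $|\psi(x)|\geq 2^{-n}/\operatorname{poly}(n)$ on every input. Given such a $\psi$, embedding $f$ into the two-party pattern matrix $F_n\colon\zoon\times\zoon\to\zoo$ using a constant-depth indexing gadget yields $\srank(F_n)\geq \exp(\Omega(d))$ via Forster's norm-based argument.

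First I would isolate, from the proof of Theorem~\ref{thm:MAIN-degthr-ac0}, not only the hard depth-$k$ circuit but also an explicit dual polynomial $\psi$ witnessing its threshold degree. Since the hard function of Theorem~\ref{thm:MAIN-degthr-ac0} is built compositionally, $\psi$ itself arises as a block composition (dual product) of duals for simpler building blocks and need not be smooth at this stage. Next I would carry out the smoothing step: compose the inner hard circuit with an outer constant-depth AND--OR amplifier $h$ whose own dual witness $\phi$ is highly spread pointwise, and form the block-composition dual $\Psi=\psi\star\phi$, which inherits the high-degree orthogonality from $\psi$ and the pointwise spread from $\phi$. To force $|\Psi(x)|\geq 2^{-n}/\operatorname{poly}(n)$ on \emph{every} input while giving up only polylogarithmic factors in degree, one would iterate this smoothing roughly $k$ times, each round spending about two circuit layers and a $\log n$ factor in degree; this is what drives the depth from $k$ in Theorem~\ref{thm:MAIN-degthr-ac0} up to $3k$ here and produces the $(\log n)^{-k(k-1)/(2(k+1))}$ factor in the final bound. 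With smoothness in hand, wrapping the smoothed circuit in the bipartite indexing gadget adds only an additive constant to the depth, and the pattern-matrix theorem applied to $\Psi$ yields $\srank(F_n)\geq \exp(\Omega(d))$ with $d$ equal to the smoothed threshold degree.

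The principal obstacle will be the smoothing step itself. Naive dual products tend to concentrate mass on sparse subsets of inputs and thereby destroy the pointwise lower bound that Forster's argument demands. Designing an outer amplifier whose dual is simultaneously orthogonal to a large-degree space and uniformly spread, and then showing that its iterated block composition with the dual of Theorem~\ref{thm:MAIN-degthr-ac0} preserves both properties, will be the crux. In particular, one must prove a block-composition lemma of the form ``orthogonality degrees add, while pointwise minima multiply,'' and apply it $k$ times without degrading the orthogonality degree by more than a polylogarithmic factor per level. Carefully accounting for the $\log n$ losses across the $k$ smoothing levels is precisely what yields the exact polylogarithmic factor stated in the theorem, and packaging the resulting dual into the pattern-matrix framework completes the proof.
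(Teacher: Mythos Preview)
Your proposal has a genuine gap at precisely the point you flag as the ``principal obstacle.'' The paper explicitly states that the approach you outline---starting from the threshold-degree witness of Theorem~\ref{thm:MAIN-degthr-ac0} and smoothing it---does not work: ``the threshold degree witnesses that arise from the proof of Theorem~\ref{thm:MAIN-degthr-ac0} are highly nonsmooth,'' and ``block-composition itself as a composition technique, which in the regime of interest to us transforms \emph{every} threshold degree witness for $f$ into a hopelessly nonsmooth witness for the composed function.'' The lemma you are hoping for, ``orthogonality degrees add, while pointwise minima multiply,'' fails in the needed direction: in the dual block-composition $\psi\star\phi$, the pointwise values involve products $\prod_i|\psi(x_i)|$, which concentrate wherever $|\psi|$ does, and an outer smooth $\phi$ cannot undo that concentration.

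The paper's actual route is disjoint from Theorem~\ref{thm:MAIN-degthr-ac0}. It introduces a notion of \emph{local smoothness} (Sections~\ref{subsec:Local-smoothness}--\ref{subsec:Weight-transfer}) and proves an amplification theorem directly for \emph{smooth} threshold degree (Theorems~\ref{thm:min-smooth-amplification} and~\ref{thm:degthr-composition-min-smooth-Boolean-input}). The inner gadget is a specially constructed, locally smooth dual polynomial for $\MP_m$ (Theorem~\ref{thm:dual-MP-smooth}), unrelated to the dual used in the threshold-degree analysis; local smoothness is then exploited to restore global min-smoothness after each composition step via Lemma~\ref{lem:SMOOTH-redistribute}. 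Starting from the trivial depth-$0$ base case and applying this amplification $k$ times (each step adding three circuit layers) yields the depth-$3k$ circuit of Theorem~\ref{thm:degthr-ac0-min-smooth-3k} with $\exp(-O(d))$-smooth threshold degree $d=\tilde\Omega(n^{1-1/(k+1)})$. A single application of Theorem~\ref{thm:thrdeg-to-sign-rank} then gives the sign-rank bound; the final composition with $\OR_m\circ\AND_2$ uses a constant $m$ and is absorbed into the bottom two levels, so no extra depth is incurred. Your depth accounting (depth $k$ from Theorem~\ref{thm:MAIN-degthr-ac0} plus $2k$ smoothing layers) does not match this architecture.
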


\noindent As a companion result, we prove the following qualitatively
similar but quantitatively incomparable theorem. 
\begin{thm}
\label{thm:MAIN-sign-rank-3k-plus-1}Let $k\geq1$ be a given integer.
Then there is an $($explicitly given$)$ Boolean circuit family $\{G_{n}\}_{n=1}^{\infty},$
where $G_{n}\colon\zoon\times\zoon\to\zoo$ has polynomial size, depth
$3k+1,$ and sign-rank
\[
\srank(G_{n})=\exp\left(\Omega\left(n^{1-\frac{1}{k+1.5}}\cdot(\log n)^{-\frac{k^{2}}{2k+3}}\right)\right).
\]
\end{thm}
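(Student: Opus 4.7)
My plan is to prove Theorem~\ref{thm:MAIN-sign-rank-3k-plus-1} by extending the construction of Theorem~\ref{thm:MAIN-sign-rank-3k} with a single additional circuit layer, tuned to provide a ``half step'' of hardness amplification. Theorem~\ref{thm:MAIN-sign-rank-3k} is proved by iterating a depth-$3$ amplification module $k$ times, producing a depth-$3k$ circuit together with a smooth dual witness of orthogonality degree $\tilde{\Omega}(n^{1-1/(k+1)})$, from which the sign-rank lower bound follows through the Razborov--Sherstov / Forster framework applied via the pattern matrix method. I would reuse this machinery and modify only how the inner hard function is assembled.

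Concretely, I would take the depth-$3k$ construction from Theorem~\ref{thm:MAIN-sign-rank-3k} and compose it with one additional $\vee$- or $\wedge$-gate of carefully chosen fan-in. The fan-in is picked so that the added layer interpolates between two consecutive full amplification modules: the orthogonality degree of the resulting dual witness should improve from $n^{k/(k+1)}$ (the depth-$3k$ value) to $n^{(2k+1)/(2k+3)}$, which is exactly the target exponent $1-1/(k+1.5)$, but without reaching the depth-$3(k+1)$ value $n^{(k+1)/(k+2)}$ that a full amplification step would deliver.

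The dual witness for $G_n$ is assembled by disjoint-support dual block composition, along the lines of Razborov--Sherstov and Bun--Thaler: the dual witness of the inner depth-$3k$ component is combined with an explicit dual witness for the extra $\vee$-gate. The polylogarithmic factor $(\log n)^{-k^{2}/(2k+3)}$ tracks the combined log-losses from the $k$ full modules plus the one partial module, mirroring the appearance of $(\log n)^{-k(k-1)/(2(k+1))}$ in Theorem~\ref{thm:MAIN-sign-rank-3k}. The sign-rank lower bound for $G_n$ then follows from the pattern matrix framework applied to this witness.

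The main obstacle I anticipate is controlling the \emph{smoothness} of the dual witness (the bound $\|\psi\|_{\infty} \le O(2^{-n})$ up to an inverse-correlation factor) through the partial amplification step, since smoothness is exactly what powers Forster's sign-rank lower bound and can degrade rapidly under dual block composition. The bulk of the technical work will be a careful quantitative tracking of $L_{\infty}$ norms, support sizes, and $\ell_{1}$ masses against orthogonality degree under the added layer, with its fan-in chosen to balance these competing quantities and deliver the claimed $(2k+1)/(2k+3)$ exponent. A secondary concern is ensuring the correlation with $(-1)^{G_{n}}$ stays bounded away from zero through all $k+\tfrac{1}{2}$ amplification stages.
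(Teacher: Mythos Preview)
Your proposal has a genuine gap, and the paper takes a quite different route.

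The paper does \emph{not} obtain the depth-$(3k+1)$ bound by appending a single gate to the depth-$3k$ construction. Instead, it changes the \emph{base case} of the induction and then applies exactly the same depth-$3$ amplification module $k$ times. For Theorem~\ref{thm:MAIN-sign-rank-3k} the base case is the trivial depth-$0$ dictator (smooth threshold degree $1$); for Theorem~\ref{thm:MAIN-sign-rank-3k-plus-1} the base case is the depth-$2$ Minsky--Papert function $\OR_{n^{1/3}}\circ\AND_{n^{2/3}}$, whose $\exp(-O(n^{1/3}))$-smooth threshold degree $\Omega(n^{1/3})$ is established separately (Theorem~\ref{thm:smooth-MP}). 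Because the first amplification step merges one layer with the depth-$2$ base, the resulting depths are $4,7,10,\ldots,3k+1$, and the recursion $\alpha\mapsto 1/(1+\alpha)$ started at $\alpha=2/3$ (rather than $\alpha=1$) yields the exponent $(2k+1)/(2k+3)=1-1/(k+1.5)$. The sign-rank bound then follows exactly as in the depth-$3k$ case via Theorem~\ref{thm:thrdeg-to-sign-rank}.

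Your proposed mechanism---a single $\vee$- or $\wedge$-gate as a ``half amplification step''---does not work as stated. The amplification theorem in the paper relies on composition with the full gadget $\MP_m=\AND_m\circ\OR_{m^2}$, whose threshold degree is $\Theta(m)$; this is what drives the degree increase in the composed dual object. A single $\OR_m$ or $\AND_m$ gate has threshold degree~$1$, so dual block-composition with its threshold-degree witness does not amplify orthogonal content at all. You may be thinking of the $\sqrt{m}$ amplification that $\OR_m$ provides for \emph{approximate} degree, but that dual object has large sign-disagreement with $\OR_m$ and cannot serve as (or be combined into) a threshold-degree witness, let alone a min-smooth one. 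Moreover, even if some ad~hoc single-gate step did raise the degree, you offer no mechanism for preserving min-smoothness across it; in the paper this is the hardest part and requires the full local-smoothness machinery (Sections~\ref{subsec:Local-smoothness}--\ref{subsec:A-locally-smooth}), which is tailored to the $\MP$ gadget. The ``half step'' is real, but it comes from shifting the starting point of the recursion, not from a fractional gadget at the end.
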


\noindent For large $k$, the lower bounds of Theorems~\ref{thm:MAIN-sign-rank-3k}
and~\ref{thm:MAIN-sign-rank-3k-plus-1} approach the trivial upper
bound of $2^{n}$ on the sign-rank of any Boolean function $\zoon\times\zoon\to\zoo$.
For any given depth, Theorems~\ref{thm:MAIN-sign-rank-3k} and~\ref{thm:MAIN-sign-rank-3k-plus-1}
subsume all previous lower bounds on the sign-rank of $\classAC^{0},$
with a strict improvement starting at depth~$3$. From a computational
learning perspective, Theorems~\ref{thm:MAIN-sign-rank-3k} and~\ref{thm:MAIN-sign-rank-3k-plus-1}
state that $\classAC^{0}$ has near-maximum \emph{dimension complexity}~\cite{sherstov07halfspace-mat,sherstov07cc-prod-nonprod,RS07dc-dnf,BT18ac0-large-error},
namely, $\exp(\Omega(n^{1-\epsilon}))$ for any constant $\epsilon>0.$\emph{
}This rules out the possibility of learning $\classAC^{0}$ circuits
via dimension complexity~\cite{RS07dc-dnf}, a far-reaching generalization
of the threshold degree approach from the monomial basis to arbitrary
bases.

\subsection{Communication complexity}

Theorems~\ref{thm:MAIN-degthr-ac0}\textendash \ref{thm:MAIN-sign-rank-3k-plus-1}
imply strong new lower bounds on the communication complexity of $\classAC^{0}$.
We adopt the standard randomized model of Yao~\cite{ccbook}, with
players Alice and Bob and a Boolean function $F\colon X\times Y\to\zoo.$
On input $(x,y)\in X\times Y,$ Alice and Bob receive the arguments
$x$ and $y,$ respectively, and communicate back and forth according
to an agreed-upon protocol. Each player privately holds an unlimited
supply of uniformly random bits that he or she can use when deciding
what message to send at any given point in the protocol. The \emph{cost}
of a protocol is the total number of bits communicated in a worst-case
execution. The\emph{ $\epsilon$-error randomized communication complexity
$R_{\epsilon}(F)$} of $F$ is the least cost of a protocol that computes
$F$ with probability of error at most $\epsilon$ on every input.

Of particular interest to us are communication protocols with error
probability close to that of random guessing, $1/2.$ There are two
standard ways to formalize the complexity of a communication problem
$F$ in this setting, both inspired by probabilistic polynomial time
$\mathsf{PP}$ for Turing machines: 
\[
\upp(F)=\inf_{0\leq\epsilon<1/2}R_{\epsilon}(F)
\]
and
\[
\pp(F)=\inf_{0\leq\epsilon<1/2}\left\{ R_{\epsilon}(F)+\log_{2}\left(\frac{1}{\frac{1}{2}-\epsilon}\right)\right\} .
\]
The former quantity, introduced by Paturi and Simon~\cite{paturi86cc},
is called the \emph{communication complexity of $F$ with unbounded
error}, in reference to the fact that the error probability can be
arbitrarily close to $1/2.$ The latter quantity is called the \emph{communication
complexity of $F$ with weakly unbounded error}. Proposed by Babai
et al.~\cite{BFS86cc}, it features an additional penalty term that
depends on the error probability. It is clear that 
\[
\upp(F)\leq\pp(F)\leq n+2
\]
for every communication problem $F\colon\zoon\times\zoon\to\zoo$,
with an exponential gap achievable between the two complexity measures~\cite{buhrman07pp-upp,sherstov07halfspace-mat}.
These two models occupy a special place in the study of communication
because they are more powerful than any other standard model (deterministic,
nondeterministic, randomized, quantum with or without entanglement).
Moreover, unbounded-error protocols represent a frontier in communication
complexity theory in that they are the most powerful protocols for
which explicit lower bounds are currently known. Our results imply
that even for such protocols, $\classAC^{0}$ has near-maximal communication
complexity. 

To begin with, combining Theorem~\ref{thm:MAIN-degthr-ac0} with
the \emph{pattern matrix method}~\cite{sherstov07ac-majmaj,sherstov07quantum}
gives:
\begin{thm}
\label{thm:MAIN-disc-ac0}Let $k\geq3$ be a fixed integer. Then there
is an $($explicitly given$)$ Boolean circuit family $\{F_{n}\}_{n=1}^{\infty},$
where $F_{n}\colon\zoon\times\zoon\to\zoo$ has polynomial size, depth
$k,$ communication complexity 
\[
\pp(F_{n})=\Omega\left(n^{\frac{k-1}{k+1}}\cdot(\log n)^{-\frac{1}{k+1}\lceil\frac{k-2}{2}\rceil\lfloor\frac{k-2}{2}\rfloor}\right)
\]
and discrepancy
\[
\disc(F_{n})=\exp\left(-\Omega\left(n^{\frac{k-1}{k+1}}\cdot(\log n)^{-\frac{1}{k+1}\lceil\frac{k-2}{2}\rceil\lfloor\frac{k-2}{2}\rfloor}\right)\right).
\]
\end{thm}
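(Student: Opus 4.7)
The plan is to combine the threshold-degree lower bound of Theorem~\ref{thm:MAIN-degthr-ac0} with Sherstov's \emph{pattern matrix method}~\cite{sherstov07ac-majmaj,sherstov07quantum}, which lifts a threshold-degree lower bound for a Boolean function $f$ to a discrepancy lower bound for an associated two-party function $M_f$, while retaining the $\classAC^0$ complexity of $f$ provided $f$ has $O(\log n)$ bottom fan-in. Fix $k\ge3$ and let $f_m$ be the depth-$k$ circuit supplied by Theorem~\ref{thm:MAIN-degthr-ac0}. Since $k\ne2$, this circuit comes with bottom fan-in $b=O(\log m)$ and threshold degree
\[
d \;=\; \Omega\!\left(m^{\frac{k-1}{k+1}}\cdot(\log m)^{-\frac{1}{k+1}\lceil\frac{k-2}{2}\rceil\lfloor\frac{k-2}{2}\rfloor}\right).
\]
For an appropriate block-size parameter $N$, set $n=\Theta(Nm)$ and form the $(N,m,f_m)$-pattern matrix
\[
F_n\bigl(x,\,(y,z)\bigr)\;=\;f_m\!\left(x_{1,y_1}\oplus z_1,\ldots,x_{m,y_m}\oplus z_m\right),
\]
where Alice holds $x\in\zoo^{N\times m}$ and Bob holds $(y,z)\in[N]^m\times\zoo^m$ (with $y$ encoded in unary), so that each input has total length $\Theta(n)$.

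The first task is to verify that $F_n$ is itself computable by a polynomial-size depth-$k$ $\classAC^0$ circuit. Here the bottom fan-in hypothesis is essential: each substituted ``literal'' $(x_{i,y_i}\oplus z_i)^{\pm}$ is a depth-$2$ sub-formula of size $O(N)$, and an AND or OR of $b=O(\log m)$ such literals expands by distributivity into a polynomial-size depth-$2$ sub-formula, provided $N$ is chosen small enough that $N^{O(b)}=\mathrm{poly}(n)$. Picking the DNF or CNF form of each such gadget to match the polarity of the gate one level above the bottom in $f_m$, the gadgets absorb into the second layer of $f_m$, so the total depth remains~$k$.

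The second task is the pattern matrix discrepancy bound. The dual formulation of threshold degree yields a signed probability measure $\phi$ on $\zoo^m$ of unit $\ell_1$-norm that is orthogonal to every polynomial of degree $<d$ and correlates nontrivially with $(-1)^{f_m}$. The pattern matrix lemma lifts $\phi$ to a hard distribution $\mu$ on the inputs of $F_n$ witnessing $\disc_\mu(F_n)\le 2^{-\Omega(d)}$, which yields the discrepancy bound in the theorem. The $\pp$ bound then follows at once from the standard relation $\pp(F)=\Theta(\log(1/\disc(F)))$~\cite{ccbook}. The main technical care lies in choosing $N$ so as to simultaneously keep the expansion size polynomial in $n$ and retain the full strength of the pattern matrix discrepancy estimate; this balance is precisely what the pattern matrix machinery is engineered to deliver when the base function has $O(\log m)$ bottom fan-in---which is why that hypothesis, supplied by Theorem~\ref{thm:MAIN-degthr-ac0} for all $k\ne 2$, is central to the proof.
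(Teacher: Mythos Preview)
Your proposal is correct and follows essentially the same route as the paper: take the depth-$k$ circuit $f_m$ from Theorem~\ref{thm:MAIN-degthr-ac0} with bottom fan-in $O(\log m)$, compose with a constant-size pattern-matrix gadget to lift the threshold-degree bound to a discrepancy bound, and use the small bottom fan-in together with the constant gadget size to absorb the gadget into the bottom two levels of $f_m$, keeping the depth at~$k$. The paper carries this out via the composition $f_m\circ\NOR_m\circ\AND_2$ with $m$ a fixed constant and invokes Theorem~\ref{thm:pm-discrepancy-k-party} (the multiparty pattern matrix bound, specialized to $\ell=2$), whereas you use the original selection-and-XOR pattern matrix; these are equivalent instantiations of the same method, and the depth-preservation argument is the same distributivity trick in both cases.

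Two small corrections. First, the relation you cite as $\pp(F)=\Theta(\log(1/\disc(F)))$ is only the one-sided inequality $\pp(F)\ge\log(2/\disc(F))$ (Corollary~\ref{cor:dm}); the reverse direction is false in general, but only the lower bound is needed here. Second, the dual object $\phi$ witnessing threshold degree satisfies pointwise sign agreement $\phi(x)(-1)^{f_m(x)}\ge 0$, not merely ``nontrivial correlation''; this stronger property is what the threshold-degree dual guarantees and what the pattern-matrix discrepancy argument actually uses.
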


\noindent \emph{Discrepancy} is a combinatorial complexity measure
of interest in communication complexity theory and other research
areas; see Section~\ref{subsec:Discrepancy-and-sign-rank} for a
formal definition. As $k$ grows, the bounds of Theorem~\ref{thm:MAIN-disc-ac0}
approach the best possible bounds for any communication problem $F_{n}\colon\zoon\times\zoon\to\zoo.$
The same \emph{qualitative} behavior was achieved in previous work
by Bun and Thaler~\cite{BT18ac0-large-error}, who constructed, for
any constant $\epsilon>0$, a constant-depth circuit $F_{n}\colon\zoon\times\zoon\to\zoo$
with communication complexity $\pp(F_{n})=\Omega(n^{1-\epsilon})$
and discrepancy $\disc(F_{n})=\exp(-\Omega(n^{1-\epsilon}))$. Theorem~\ref{thm:MAIN-disc-ac0}
strictly subsumes the result of Bun and Thaler~\cite{BT18ac0-large-error}
and all other prior work on the discrepancy and $\pp$-complexity
of constant-depth circuits~\cite{sherstov07ac-majmaj,buhrman07pp-upp,sherstov07quantum,sherstov14sign-deg-ac0,sherstov15asymmetry}.
For any fixed depth greater than~$3$, the bounds of Theorem~\ref{thm:MAIN-disc-ac0}
are a polynomial improvement in $n$ over all previous work. We further
show that Theorem~\ref{thm:MAIN-disc-ac0} carries over to the \emph{number-on-the-forehead
model}, the strongest formalism of multiparty communication. This
result, presented in detail in Section~\ref{subsec:Results-for-AC},
uses the multiparty version~\cite{sherstov13directional} of the
pattern matrix method.

Our work also gives near-optimal lower bounds for $\classAC^{0}$
in the much more powerful unbounded-error model. Specifically, it
is well-known~\cite{paturi86cc} that the unbounded-error communication
complexity of any Boolean function $F\colon X\times Y\to\zoo$ coincides
up to an additive constant with the logarithm of the sign-rank of
$F.$ As a result, Theorems~\ref{thm:MAIN-sign-rank-3k} and~\ref{thm:MAIN-sign-rank-3k-plus-1}
imply: 
\begin{thm}
\label{thm:MAIN-unbounded}Let $k\geq1$ be a given integer. Let $\{F_{n}\}_{n=1}^{\infty}$
and $\{G_{n}\}_{n=1}^{\infty}$ be the polynomial-size circuit families
of depth $3k$ and $3k+1,$ respectively, constructed in Theorems~\emph{\ref{thm:MAIN-sign-rank-3k}}
and~\emph{\ref{thm:MAIN-sign-rank-3k-plus-1}}. Then
\begin{align*}
\upp(F_{n}) & =\Omega\left(n^{1-\frac{1}{k+1}}\cdot(\log n)^{-\frac{k(k-1)}{2(k+1)}}\right),\\
\upp(G_{n}) & =\Omega\left(n^{1-\frac{1}{k+1.5}}\cdot(\log n)^{-\frac{k^{2}}{2k+3}}\right).
\end{align*}
\end{thm}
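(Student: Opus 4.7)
The plan is to deduce Theorem~\ref{thm:MAIN-unbounded} as an immediate corollary of Theorems~\ref{thm:MAIN-sign-rank-3k} and~\ref{thm:MAIN-sign-rank-3k-plus-1} via the classical equivalence, due to Paturi and Simon~\cite{paturi86cc}, between unbounded-error communication complexity and sign-rank. Concretely, for every Boolean function $F\colon X\times Y\to\zoo$,
\[
\upp(F)=\log_{2}\srank(F)\pm O(1),
\]
so any sign-rank lower bound translates, up to an additive constant, into a $\upp$ lower bound of the same magnitude. This characterization is precisely what the paragraph preceding the theorem statement already invokes.

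Given this equivalence, I would simply substitute the sign-rank lower bounds just established. For the depth-$3k$ family $\{F_{n}\}$ of Theorem~\ref{thm:MAIN-sign-rank-3k}, applying $\log_{2}$ to
\[
\srank(F_{n})=\exp\!\left(\Omega\!\left(n^{1-\frac{1}{k+1}}(\log n)^{-\frac{k(k-1)}{2(k+1)}}\right)\right)
\]
yields, after absorbing both the additive $O(1)$ loss and the constant factor $\log_{2}e$ into the $\Omega$-notation, the claimed bound on $\upp(F_{n})$. The derivation of the bound on $\upp(G_{n})$ from Theorem~\ref{thm:MAIN-sign-rank-3k-plus-1} is completely analogous, substituting the depth-$(3k+1)$ sign-rank estimate in place of the depth-$3k$ one.

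The only direction of the Paturi--Simon equivalence used is the ``sign-rank lower bound implies $\upp$ lower bound'' direction, which is the easier one: an $\epsilon$-error randomized protocol of cost $c$ for $F$, with $\epsilon$ bounded away from $1/2$, can be unrolled into a real matrix of rank at most $2^{c+O(1)}$ whose signs agree with the $\pm 1$-valued truth table of $F$, by averaging leaf-reach acceptance probabilities over the players' random coins. Accordingly, no substantive obstacle arises in the argument; the entire content of Theorem~\ref{thm:MAIN-unbounded} is already carried by the sign-rank lower bounds of Theorems~\ref{thm:MAIN-sign-rank-3k} and~\ref{thm:MAIN-sign-rank-3k-plus-1}, and the remaining work is bookkeeping.
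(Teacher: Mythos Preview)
Your proposal is correct and matches the paper's own approach: the paper derives Theorem~\ref{thm:MAIN-unbounded} directly from the sign-rank lower bounds of Theorems~\ref{thm:MAIN-sign-rank-3k} and~\ref{thm:MAIN-sign-rank-3k-plus-1} via the Paturi--Simon characterization (stated in the paper as Theorem~\ref{thm:srank-upp}), exactly as you outline. The only cosmetic difference is that the paper packages the $\upp$ bounds together with the sign-rank bounds inside the proofs of Theorems~\ref{thm:sign-rank-ac0-3k} and~\ref{thm:sign-rank-ac0-3k-plus-1} rather than as a separate corollary.
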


\noindent For large $k,$ the lower bounds of Theorem~\ref{thm:MAIN-unbounded}
essentially match the trivial upper bound of $n+1$ on the unbounded-error
communication complexity of any function $F\colon\zoon\times\zoon\to\zoo.$
Theorem~\ref{thm:MAIN-unbounded} strictly subsumes all previous
lower bounds on the unbounded-error communication complexity of $\classAC^{0}$,
with a polynomial improvement for any depth greater than~$2$. The
best lower bound on the unbounded-error communication complexity of
$\classAC^{0}$ prior to our work was $\tilde{\Omega}(\sqrt{n})$
for a circuit of depth~$7$, due to Bun and Thaler~\cite{BT18ac0-large-error}.
Finally, we remark that Theorem~\ref{thm:MAIN-unbounded} gives essentially
the strongest possible separation of the communication complexity
classes $\PH$ and $\UPP$. We refer the reader to the work of Babai
et al.~\cite{BFS86cc} for definitions and detailed background on
these classes.

Qualitatively, Theorem~\ref{thm:MAIN-unbounded} is stronger than
Theorem~\ref{thm:MAIN-disc-ac0} because communication protocols
with unbounded error are significantly more powerful than those with
weakly unbounded error. On the other hand, Theorem~\ref{thm:MAIN-disc-ac0}
is stronger quantitatively for any fixed depth and has the additional
advantage of generalizing to the multiparty setting.

\subsection{Threshold weight and threshold density}

By well-known reductions, Theorem~\ref{thm:MAIN-degthr-ac0} implies
a number of other lower bounds for the representation of $\classAC^{0}$
circuits by polynomials. For the sake of completeness, we mention
two such consequences. The \emph{threshold density }of a Boolean function
$f\colon\zoon\to\zoo,$ denoted $\dns(f),$ is the minimum size of
a set family $\Scal\subseteq\Pcal(\{1,2,\ldots,n\})$ such that 
\[
\sign\left(\sum_{S\in\Scal}\lambda_{S}(-1)^{\sum_{i\in S}x_{i}}\right)\equiv(-1)^{f(x)}
\]
for some reals $\lambda_{S}.$ A related complexity measure is \emph{threshold
weight}, denoted $W(f)$ and defined as the minimum sum $\sum_{S\subseteq\oneton}|\lambda_{S}|$
over all integers $\lambda_{S}$ such that 
\[
\sign\left(\sum_{S\subseteq\{1,2,\ldots,n\}}\lambda_{S}(-1)^{\sum_{i\in S}x_{i}}\right)\equiv(-1)^{f(x)}.
\]
It is not hard to see that the threshold density and threshold weight
of $f$ correspond to the minimum size of a threshold-of-parity and
majority-of-parity circuit for $f,$ respectively. The definitions
imply that $\dns(f)\leq W(f)$ for every $f,$ and a little more thought
reveals that $1\leq\dns(f)\leq2^{n}$ and $1\leq W(f)\leq4^{n}.$
These complexity measures have seen extensive work, motivated by applications
to computational learning and circuit complexity. For a bibliographic
overview, we refer the reader to~\cite[Section~8.2]{sherstov14sign-deg-ac0}.

Krause and Pudlák~\cite[Proposition~2.1]{krause94depth2mod} gave
an ingenious method for transforming threshold degree lower bounds
into lower bounds on threshold density and thus also threshold weight.
Specifically, let $f\colon\zoon\to\zoo$ be a Boolean function of
interest. The authors of~\cite{krause94depth2mod} considered the
related function $F\colon(\zoo^{n})^{3}\to\zoo$ given by $F(x,y,z)=f(\dots,(\overline{z_{i}}\wedge x_{i})\vee(z_{i}\wedge y_{i}),\dots)$,
and proved that $\dns(F)\geq2^{\degthr(f)}.$ In this light, Theorem~\ref{thm:MAIN-degthr-ac0}
implies that the threshold density of $\classAC^{0}$ is $\exp(\Omega(n^{1-\epsilon}))$
for any constant $\epsilon>0$:
\begin{cor}
\label{cor:MAIN-thr-density}Let $k\geq3$ be a fixed integer. Then
there is an $($explicitly given$)$ Boolean circuit family $\{F_{n}\}_{n=1}^{\infty},$
where $F_{n}\colon\zoon\to\zoo$ has polynomial size and depth $k$
and satisfies 
\begin{align*}
W(F_{n}) & \geq\dns(F_{n})\\
 & =\exp\left(\Omega\left(n^{\frac{k-1}{k+1}}\cdot(\log n)^{-\frac{1}{k+1}\lceil\frac{k-2}{2}\rceil\lfloor\frac{k-2}{2}\rfloor}\right)\right).
\end{align*}
\end{cor}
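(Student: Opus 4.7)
The plan is to derive the corollary directly from Theorem~\ref{thm:MAIN-degthr-ac0} by applying the Krause-Pudl\'ak transformation reviewed in the paragraph immediately preceding the corollary. Let $f_n\colon\zoon\to\zoo$ be the explicit polynomial-size depth-$k$ circuit from Theorem~\ref{thm:MAIN-degthr-ac0}, and set $d:=\degthr(f_n)=\Omega\bigl(n^{(k-1)/(k+1)}(\log n)^{-\frac{1}{k+1}\lceil\frac{k-2}{2}\rceil\lfloor\frac{k-2}{2}\rfloor}\bigr)$. Define the companion function on $3n$ variables by
\[
F_n(x,y,z)=f_n\bigl(\ldots,\;(\overline{z_i}\wedge x_i)\vee(z_i\wedge y_i),\;\ldots\bigr).
\]
By the Krause-Pudl\'ak inequality $\dns(F_n)\geq 2^{d}$, and by the trivial comparison $W(F_n)\geq\dns(F_n)$, both bounds in the statement will follow once I verify that $F_n$ is itself a polynomial-size circuit of depth~$k$.

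The size bound is immediate: $F_n$ is obtained from $f_n$ by replacing each of its $n$ input literals with an explicit constant-size depth-$2$ gadget, so the size of $F_n$ exceeds the size of $f_n$ only by a constant factor. The heart of the argument is the depth bookkeeping, which is where I would spend the most care. The idea is to exploit the additional guarantee in Theorem~\ref{thm:MAIN-degthr-ac0} that $f_n$ has bottom fan-in $O(\log n)$. Applying De~Morgan's law to the bottom layer and negating the appropriate inputs, I may assume that the bottom gates of $f_n$ are $\vee$-gates (the same substitution in its dual form $(\overline{z_i}\vee x_i)\wedge(z_i\vee y_i)$ handles the case of $\wedge$-gates at the bottom). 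The Krause-Pudl\'ak gadget $(\overline{z_i}\wedge x_i)\vee(z_i\wedge y_i)$ is an OR of two ANDs, so plugging it into an $\vee$-gate of fan-in $\ell$ at the bottom of $f_n$ produces an $\vee$-gate of fan-in at most $2\ell$ whose inputs are ANDs of two literals each. The added OR therefore collapses into the existing bottom layer rather than creating a new level. The resulting circuit $F_n$ has the same depth $k$ as $f_n$, and its bottom fan-in doubles at worst, so it remains $O(\log n)$.

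Combining these observations, $F_n$ is a polynomial-size $\classAC^0$ circuit of depth $k$ with $\dns(F_n)\geq 2^{d}$, and plugging in the bound on $d$ from Theorem~\ref{thm:MAIN-degthr-ac0} yields the claimed lower bound. The main obstacle is the depth accounting in the previous paragraph; everything else is either a routine invocation of the Krause-Pudl\'ak reduction or a trivial size estimate. I would emphasize in the writeup that the explicit bottom-fan-in guarantee in Theorem~\ref{thm:MAIN-degthr-ac0} is precisely what allows the transformation to be absorbed into the existing depth, so that the corollary inherits the same depth parameter $k$ and the same $(\log n)$-dependence as the underlying threshold-degree bound.
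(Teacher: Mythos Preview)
Your overall plan matches the paper's: apply Theorem~\ref{thm:MAIN-degthr-ac0}, invoke the Krause--Pudl\'ak inequality $\dns(F_n)\ge 2^{\degthr(f_n)}$, use $W\ge\dns$, and argue that the $O(\log n)$ bottom fan-in lets the gadget be absorbed without increasing the depth. The depth accounting, however, is where your argument breaks. After you merge the gadget's top $\vee$ with a bottom $\vee$-gate of $f_n$, you correctly obtain an $\vee$-gate of fan-in $2\ell$ whose inputs are ANDs of two literals each---but those ANDs form a \emph{new} bottom level. The circuit you describe therefore has depth $k+1$, not $k$, and as written your argument only proves the weaker bound for depth $k+1$. (Noting that ``the bottom fan-in doubles at worst, so it remains $O(\log n)$'' is true but does nothing to eliminate the extra level.)

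The fix---and what the paper means by ``absorbed into the bottom two levels of $f_n$''---uses the $O(\log n)$ fan-in more substantively than just merging one gate. After the Krause--Pudl\'ak substitution, each bottom gate of $f_n$ computes a function of $3\ell=O(\log n)$ Boolean variables. Rewrite that function by brute force as a CNF or DNF (whichever has its top connective matching the gate type at level $k-1$ of $f_n$) of size $2^{O(\log n)}=n^{O(1)}$ and bottom fan-in $O(\log n)$. The top connective of this CNF/DNF then merges with the level-$(k-1)$ gates of $f_n$, so the transformed bottom two levels replace the original bottom two levels with no net change in depth. Collapsing a single $\vee$ level, as in your write-up, is not enough; you need this brute-force re-expansion step to recover depth~$k$.
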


\noindent For large $k,$ the lower bounds on the threshold weight
and density in Corollary~\ref{cor:MAIN-thr-density} essentially
match the trivial upper bounds. Observe that the circuit family $\{F_{n}\}_{n=1}^{\infty}$
of Corollary~\ref{cor:MAIN-thr-density} has the same depth as the
circuit family $\{f_{n}\}_{n=1}^{\infty}$ of Theorem~\ref{thm:MAIN-degthr-ac0}.
This is because $f_{n}$ has bottom fan-in $O(\log n)$, and thus
the Krause-Pudlák transformation $f_{n}\mapsto F_{n}$ can be ``absorbed''
into the bottom two levels of $f_{n}$. Corollary~\ref{cor:MAIN-thr-density}
subsumes all previous lower bounds~\cite{krause94depth2mod,bun-thaler13amplification,sherstov14sign-deg-ac0,sherstov15asymmetry,BT18ac0-large-error}
on the threshold weight and density of $\classAC^{0},$ with a polynomial
improvement for every $k\geq4.$ The improvement is particularly noteworthy
in the case of threshold density, where the best previous lower bound~\cite{sherstov15asymmetry,BT18ac0-large-error}
was $\exp(\Omega(\sqrt{n}))$. 

\subsection{Previous approaches}

In the remainder of this section, we discuss our proofs of Theorems~\ref{thm:MAIN-degthr-ac0}\textendash \ref{thm:MAIN-sign-rank-3k-plus-1}.
The notation that we use here is standard, and we defer its formal
review to Section~\ref{sec:Preliminaries}. We start with necessary
approximation-theoretic background, then review relevant previous
work, and finally contrast it with the approach of this paper. To
sidestep minor technicalities, we will represent Boolean functions
in this overview as mappings $\pomon\to\pomo.$ We alert the reader
that we will revert to the standard $\zoon\to\zoo$ representation
starting with Section~\ref{sec:Preliminaries}.

\subsubsection*{Background}

Recall that our results concern the sign-representation of Boolean
functions and matrices. To properly set the stage for our proofs,
however, we need to consider the more general notion of pointwise
approximation~\cite{nisan-szegedy94degree}. Let $f\colon\pomon\to\pomo$
be a Boolean function of interest. The \emph{$\epsilon$-approximate
degree of $f,$} denoted $\degeps(f),$ is the minimum degree of a
real polynomial that approximates $f$ within~$\epsilon$ pointwise:
$\degeps(f)=\min\{\deg p\colon\|f-p\|_{\infty}\leq\epsilon\}.$ The
regimes of most interest are \emph{bounded-error approximation}, corresponding
to constants $\epsilon\in(0,1)$; and \emph{large-error approximation},
corresponding to $\epsilon=1-o(1).$ In the former case, the choice
of error parameter $\epsilon\in(0,1)$ is immaterial and affects the
approximate degree of a Boolean function by at most a multiplicative
constant. It is clear that pointwise approximation is a stronger requirement
than sign-representation, and thus $\degthr(f)\leq\degeps(f)$ for
all $0\leq\epsilon<1.$ A moment's thought reveals that threshold
degree is in fact the limiting case of $\epsilon$-approximate degree
as the error parameter approaches $1$:
\begin{equation}
\degthr(f)=\lim_{\epsilon\nearrow1}\degeps(f).\label{eq:degthr-limiting}
\end{equation}

Both approximate degree and threshold degree have dual characterizations~\cite{sherstov07quantum},
obtained by appeal to linear programming duality. Specifically, $\degeps(f)\geq d$
if and only if there is a function $\phi\colon\pomon\to\Re$ with
the following two properties: $\langle\phi,f\rangle>\epsilon\|\phi\|_{1}$;
and $\langle\phi,p\rangle=0$ for every polynomial $p$ of degree
less than $d$. Rephrasing, $\phi$ must have large correlation with
$f$ but zero correlation with every low-degree polynomial. By weak
linear programming duality, $\phi$ constitutes a proof that $\degeps(f)\geq d$
and for that reason is said to \emph{witness} the lower bound $\degeps(f)\geq d.$
In view of~(\ref{eq:degthr-limiting}), this discussion generalizes
to threshold degree. The dual characterization here states that $\degthr(f)\geq d$
if and only if there is a nonzero function $\phi\colon\pomon\to\Re$
with the following two properties: $\phi(x)f(x)\geq0$ for all $x;$
and $\langle\phi,p\rangle=0$ for every polynomial $p$ of degree
less than $d$. In this dual characterization, $\phi$ agrees in sign
with $f$ and is additionally orthogonal to polynomials of degree
less than $d.$ The sign-agreement property can be restated in terms
of correlation, as $\langle\phi,f\rangle=\|\phi\|_{1}$. As before,
$\phi$ is called a threshold degree \emph{witness~}for~$f.$

What distinguishes the dual characterizations of approximate degree
and threshold degree is how the dual object $\phi$ relates to $f$.
Specifically, a threshold degree witness must agree in sign with $f$
at every point. An approximate degree witness, on the other hand,
need only exhibit such sign-agreement with $f$ at \emph{most} points,
in that the points where the sign of $\phi$ is correct should account
for most of the $\ell_{1}$ norm of $\phi.$ As a result, constructing
dual objects for threshold degree is significantly more difficult
than for approximate degree. This difficulty is to be expected because
 the gap between threshold degree and approximate degree can be arbitrary,
e.g., $1$~versus~$\Theta(n)$ for the majority function on $n$
bits~\cite{paturi92approx}.

\subsubsection*{Hardness amplification via block-composition}

Much of the recent work on approximate degree and threshold degree
is concerned with composing functions in ways that amplify their hardness.
Of particular significance here is \emph{block-composition}, defined
for functions $f\colon\pomon\to\pomo$ and $g\colon X\to\pomo$ as
the Boolean function $f\circ g\colon X^{n}\to\pomo$ given by $\text{(}f\circ g)(x_{1},\ldots,x_{n})=f(g(x_{1}),\ldots,g(x_{n})).$
Block-composition works particularly well for threshold degree. To
use an already familiar example, the block-composition $\AND_{n^{1/3}}\circ\OR_{n^{2/3}}$
has threshold degree $\Omega(n^{1/3})$ whereas the constituent functions
$\AND_{n^{1/3}}$ and $\OR_{n^{2/3}}$ have threshold degree~$1.$
As a more extreme example, Sherstov~\cite{sherstov09opthshs} obtained
a lower bound of $\Omega(n)$ on the threshold degree of the conjunction
$h_{1}\wedge h_{2}$ of two halfspaces $h_{1},h_{2}\colon\zoon\to\zoo$,
each of which by definition has threshold degree~$1$. The fact that
threshold degree can increase spectacularly under block-composition
is the basis of much previous work, including the best previous lower
bounds~\cite{sherstov14sign-deg-ac0,sherstov15asymmetry} on the
threshold degree of $\classAC^{0}.$ Apart from threshold degree,
block-composition has yielded strong results for approximate degree
in various error regimes, including direct sum theorems~\cite{sherstov09hshs},
direct product theorems~\cite{sherstov11quantum-sdpt}, and error
amplification results~\cite{sherstov11quantum-sdpt,bun-thaler13amplification,thaler14omb,bun-thaler16approx-degree-depth3}. 

How, then, does one prove lower bounds on the threshold degree or
approximate degree of a composed function $f\circ g$? It is here
that the dual characterizations take center stage: they make it possible
to prove lower bounds \emph{algorithmically}, by constructing the
corresponding dual object for the composed function. Such algorithmic
proofs run the gamut in terms of technical sophistication, from straightforward
to highly technical, but they have some structure in common. In most
cases, one starts by obtaining dual objects $\phi$ and $\psi$ for
the constituent functions $f$ and $g$, respectively, either by direct
construction or by appeal to linear programming duality. They are
then combined to yield a dual object $\Phi$ for the composed function,
using \emph{dual block-composition~\cite{sherstov09hshs,lee09formulas}}:
\begin{equation}
\!\!\!\Phi(x_{1},x_{2},\ldots,x_{n})=\phi(\sign\psi(x_{1}),\ldots,\sign\psi(x_{n}))\prod_{i=1}^{n}|\psi(x_{i})|.\label{eq:dual-block-compose}
\end{equation}
This composed dual object often requires additional work to ensure
sign-agreement or correlation with the composed Boolean function.
Among the generic tools available to assist in this process is a ``corrector''
object $\zeta$ due to Razborov and Sherstov~\cite{RS07dc-dnf},
with the following four properties: (i)~$\zeta$ is orthogonal to
low-degree polynomials; (ii)~$\zeta$ takes on~$1$ at a prescribed
point of the hypercube; (iii)~$\zeta$ is bounded on inputs of low
Hamming weight; and (iv)~$\zeta$ vanishes on all other points of
the hypercube. Using the Razborov\textendash Sherstov object, suitably
shifted and scaled, one can surgically correct the behavior of a given
dual object $\Phi$ on a substantial fraction of inputs, thus modifying
its metric properties without affecting its orthogonality to low-degree
polynomials. This technique has played an important role in recent
work, e.g.,~\cite{BT16sign-rank-ac0,bun-thaler17adeg-ac0,BKT17poly-strikes-back,BT18ac0-large-error}.

\subsubsection*{Hardness amplification for approximate degree}

While block-composition has produced a treasure trove of results on
polynomial representations of Boolean functions, it is of limited
use when it comes to constructing functions with high \emph{bounded-error}
approximate degree. To illustrate the issue, consider arbitrary functions
$f\colon\pomo^{n_{1}}\to\pomo$ and $g\colon\pomo^{n_{2}}\to\pomo$
with $1/3$-approximate degrees $n_{1}^{\alpha_{1}}$ and $n_{2}^{\alpha_{2}},$
respectively, for some $0<\alpha_{1}<1$ and $0<\alpha_{2}<1.$ It
is well-known~\cite{sherstov12noisy} that the composed function
$f\circ g$ on $n_{1}n_{2}$ variables has $1/3$-approximate degree
$O(n_{1}^{\alpha_{1}}n_{2}^{\alpha_{2}})=O(n_{1}n_{2})^{\max\{\alpha_{1},\alpha_{2}\}}.$
This means that relative to the new number of variables, the block-composed
function $f\circ g$ is asymptotically no harder to approximate to
bounded error than the constituent functions $f$ and $g$. In particular,
one cannot use block-composition to transform functions on $n$ bits
with $1/3$-approximate degree at most $n^{\alpha}$ into functions
on $N\geq n$ bits with $1/3$-approximate degree $\omega(N^{\alpha}).$ 

Until recently, the best lower bound on the bounded-error approximate
degree of $\classAC^{0}$ was $\Omega(n^{2/3})$, due to Aaronson
and Shi~\cite{aaronson-shi04distinctness}. Breaking this $n^{2/3}$
barrier was a fundamental problem in its own right, in addition to
being a hard prerequisite for \emph{threshold} degree lower bounds
for $\classAC^{0}$ better than $\Omega(n^{2/3}).$ This barrier was
overcome in a brilliant paper of Bun and Thaler~\cite{bun-thaler17adeg-ac0},
who proved, for any constant $\epsilon>0,$ an $\Omega(n^{1-\epsilon})$
lower bound on the $1/3$-approximate degree of $\classAC^{0}$. Their
hardness amplification for approximate degree works as follows. Let
$f\colon\pomo^{n}\to\pomo$ be given, with $1/3$-approximate degree
$n^{\alpha}$ for some $0\leq\alpha<1$. Bun and Thaler consider the
block-composition $F=f\circ\AND_{\Theta(\log m)}\circ\OR_{m}$, for
an appropriate parameter $m=\poly(n).$ As shown in earlier work~\cite{sherstov09hshs,bun-thaler13amplification}
on approximate degree, dual block-composition witnesses the lower
bound $\deg_{1/3}(F)=\Omega(\deg_{1/3}(\OR_{m})\deg_{1/3}(f))=\Omega(\sqrt{m}\deg_{1/3}(f)).$
Next, Bun and Thaler make the crucial observation that the dual object
for $\OR_{m}$ has most of its $\ell_{1}$ mass on inputs of Hamming
weight $O(1)$, which in view of~(\ref{eq:dual-block-compose}) implies
that the dual object for $F$ places most of its $\ell_{1}$ mass
on inputs of Hamming weight $\tilde{O}(n).$ The authors of~\cite{bun-thaler17adeg-ac0}
then use the Razborov\textendash Sherstov corrector object to transfer
the small amount of $\ell_{1}$ mass that the dual object for $F$
places on inputs of high Hamming weight, to inputs of low Hamming
weight. The resulting dual object for $F$ is supported entirely on
inputs of low Hamming weight and therefore witnesses a lower bound
on the $1/3$-approximate degree of the \emph{restriction} $F'$ of
$F$ to inputs of low Hamming weight. By re-encoding the input to
$F'$, one finally obtains a function $F''$ on $\tilde{O}(n)$ variables
with $1/3$-approximate degree polynomially larger than that of $f.$
This passage from $f$ to $F''$ is the desired hardness amplification
for approximate degree. We find it helpful to think of Bun and Thaler's
technique as block-composition followed by input compression, to reduce
the number of input variables in the block-composed function. To obtain
an $\Omega(n^{1-\epsilon})$ lower bound on the approximate degree
of $\classAC^{0}$, the authors of~\cite{bun-thaler17adeg-ac0} start
with a trivial circuit and iteratively apply the hardness amplification
step a constant number of times, until approximate degree $\Omega(n^{1-\epsilon})$
is reached. 

In follow-up work, Bun, Kothari, and Thaler~\cite{BKT17poly-strikes-back}
refined the technique of~\cite{bun-thaler17adeg-ac0} by deriving
optimal concentration bounds for the dual object for $\OR_{m}.$ They
thereby obtained tight or nearly tight lower bounds on the $1/3$-approximate
degree of \emph{surjectivity}, \emph{element distinctness}, and other
important problems. The most recent contribution to this line of work
is due to Bun and Thaler~\cite{BT18ac0-large-error}, who prove an
$\Omega(n^{1-\epsilon})$ lower bound on the $(1-2^{-n^{1-\epsilon}})$-approximate
degree of $\classAC^{0}$ by combining the method of~\cite{bun-thaler17adeg-ac0}
with Sherstov's work~\cite{sherstov11quantum-sdpt} on direct product
theorems for approximate degree. This near-linear lower bound substantially
strengthens the authors' previous result~\cite{bun-thaler17adeg-ac0}
on the \emph{bounded-error} approximate degree of $\classAC^{0}$,
but does not address the threshold degree.

\subsection{Our approach}

\subsubsection*{Threshold degree of AC\protect\textsuperscript{$\,0$}}

Bun and Thaler~\cite{BT18ac0-large-error} refer to obtaining an
$\Omega(n^{1-\epsilon})$ threshold degree lower bound for $\classAC^{0}$
as the ``main glaring open question left by our work.'' It is important
to note here that lower bounds on approximate degree, even with the
error parameter exponentially close to $1$ as in~\cite{BT18ac0-large-error},
have no implications for threshold degree. For example, there are
functions~\cite{sherstov09opthshs} with $(1-2^{-\Theta(n)})$-approximate
degree $\Theta(n)$ but threshold degree~$1$. Our proof of Theorem~\ref{thm:MAIN-degthr-ac0}
is unrelated to the most recent work of Bun and Thaler~\cite{BT18ac0-large-error}
on the large-error approximate degree of $\classAC^{0}$ and instead
builds on their earlier and simpler ``block-composition followed
by input compression'' approach~\cite{bun-thaler17adeg-ac0}. The
centerpiece of our proof is a hardness amplification result for threshold
degree, whereby any function $f$ with threshold degree $n^{\alpha}$
for a constant $0\leq\alpha<1$ can be transformed efficiently and
within $\classAC^{0}$ into a function $F$ with polynomially larger
threshold degree. 

In more detail, let $f\colon\pomon\to\pomo$ be a function of interest,
with threshold degree $n^{\alpha}$. We consider the block-composition
$f\circ\MP_{m},$ where $m=n^{O(1)}$ is an appropriate parameter
and $\MP_{m}=\AND_{m}\circ\OR_{m^{2}}$ is the Minsky\textendash Papert
function with threshold degree $\Omega(m)$. We construct the dual
object for $\MP_{m}$ from scratch to ensure concentration on inputs
of Hamming weight $\tilde{O}(m)$. By applying dual block-composition
to the threshold degree witnesses of $f$ and $\MP_{m}$, we obtain
a dual object $\Phi$ witnessing the $\Omega(mn^{\alpha})$ threshold
degree of $f\circ\MP_{m}$. So far in the proof, our differences from~\cite{bun-thaler17adeg-ac0}
are as follows: (i)~since our goal is amplification of threshold
degree, we work with witnesses of threshold degree rather than approximate
degree; (ii)~to ensure rapid growth of threshold degree, we use block-composition
with inner function $\MP_{m}=\AND_{m}\circ\OR_{m^{2}}$ of threshold
degree $\Theta(m)$, in place of Bun and Thaler's inner function $\AND_{\Theta(\log m)}\circ\OR_{m}$
of threshold degree $\Theta(\log m).$

Since the dual object for $\MP_{m}$ by construction has most of its
$\ell_{1}$ norm on inputs of Hamming weight $\tilde{O}(m)$, the
dual object $\Phi$ for the composed function has most of its $\ell_{1}$
norm on inputs of Hamming weight $\tilde{O}(nm)$. Analogous to~\cite{bun-thaler17adeg-ac0,BKT17poly-strikes-back,BT18ac0-large-error},
we would like to use the Razborov\textendash Sherstov corrector object
to \emph{remove} the $\ell_{1}$ mass that $\Phi$ has on the inputs
of high Hamming weight, transferring it to inputs of low Hamming weight.
This brings us to the novel and technically demanding part of our
proof. Previous works~\cite{bun-thaler17adeg-ac0,BKT17poly-strikes-back,BT18ac0-large-error}
transferred the $\ell_{1}$ mass from the inputs of high Hamming weight
to the neighborhood of the all-zeroes input $(0,0,\ldots,0).$ An
unavoidable feature of the Razborov\textendash Sherstov transfer process
is that it amplifies the $\ell_{1}$ mass being transferred. When
the transferred mass finally reaches its destination, it overwhelms
$\Phi$'s original values at the local points, destroying $\Phi$'s
sign-agreement with the composed function $f\circ\MP_{m}$. It is
this difficulty that most prevented earlier works~\cite{bun-thaler17adeg-ac0,BKT17poly-strikes-back,BT18ac0-large-error}
from obtaining a strong threshold degree lower bound for $\classAC^{0}$. 

We proceed differently. Instead of transferring the $\ell_{1}$ mass
of $\Phi$ from the inputs of high Hamming weight to the neighborhood
of $(0,0,\dots,0)$, we transfer it simultaneously to \emph{exponentially
many} strategically chosen neighborhoods. Split this way across many
neighborhoods, the transferred mass does not overpower the original
values of $\Phi$ and in particular does not change any signs. Working
out the details of this transfer scheme requires subtle and lengthy
calculations; it was not clear to us until the end that such a scheme
exists. Once the transfer process is complete, we obtain a witness
for the $\Omega(mn^{\alpha})$ threshold degree of $f\circ\MP_{m}$
restricted to inputs of low Hamming weight. Compressing the input
as in~\cite{bun-thaler17adeg-ac0,BKT17poly-strikes-back}, we obtain
an amplification theorem for threshold degree. With this work behind
us, the proof of Theorem~\ref{thm:MAIN-degthr-ac0} for any depth
$k$ amounts to starting with a trivial circuit and amplifying its
threshold degree $O(k)$ times.

\subsubsection*{Sign-rank of AC\protect\textsuperscript{$\,0$}}

It is not known how to ``lift'' a threshold degree lower bound in
a black-box manner to a sign-rank lower bound. In particular, Theorem~\ref{thm:MAIN-degthr-ac0}
has no implications a priori for the sign-rank of $\classAC^{0}$.
Our proofs of Theorems~\ref{thm:MAIN-sign-rank-3k} and~\ref{thm:MAIN-sign-rank-3k-plus-1}
are completely disjoint from Theorem~\ref{thm:MAIN-degthr-ac0} and
are instead based on a stronger approximation-theoretic quantity that
we call \emph{$\gamma$-smooth threshold degree.} Formally, the $\gamma$-smooth
threshold degree of a Boolean function $f\colon X\to\pomo$ is the
largest $d$ for which there is a nonzero function $\phi\colon X\to\Re$
with the following two properties: $\phi(x)f(x)\geq\gamma\cdot\|\phi\|_{1}/|X|$
for all $x\in X;$ and $\langle\phi,p\rangle=0$ for every polynomial
$p$ of degree less than $d$. Taking $\gamma=0$ in this formalism,
one recovers the standard dual characterization of the threshold degree
of $f.$ In particular, threshold degree is synonymous with $0$-smooth
threshold degree. The general case of $\gamma$-smooth threshold degree
for $\gamma>0$ requires threshold degree witnesses $\phi$ that are
\emph{min-smooth}, in that the absolute value of $\phi$ at any given
point is at least a $\gamma$ fraction of the average absolute value
of $\phi$ over all points. A substantial advantage of \emph{smooth}
threshold degree is that it has immediate sign-rank implications.
Specifically, any lower bound of $d$ on the $2^{-O(d)}$-smooth threshold
degree can be converted efficiently and in a black-box manner into
a sign-rank lower bound of $2^{\Omega(d)}$, using a combination of
the pattern matrix method~\cite{sherstov07ac-majmaj,sherstov07quantum}
and Forster's spectral lower bound on sign-rank~\cite{forster02linear,forster01relations}.
Accordingly, we obtain Theorems~\ref{thm:MAIN-sign-rank-3k} and~\ref{thm:MAIN-sign-rank-3k-plus-1}
by proving an $\Omega(n^{1-\epsilon})$ lower bound on the $2^{-O(n^{1-\epsilon})}$-smooth
threshold degree of $\classAC^{0}$, for any constant $\epsilon>0$. 

At the core of our result is an amplification theorem for smooth threshold
degree, whose repeated application makes it possible to prove arbitrarily
strong lower bounds for $\classAC^{0}$. Amplifying smooth threshold
degree is a complex juggling act due to the presence of two parameters\textemdash degree
and smoothness\textemdash that must evolve in coordinated fashion.
The approach of Theorem~\ref{thm:MAIN-degthr-ac0} is not useful
here because the threshold degree witnesses that arise from the proof
of Theorem~\ref{thm:MAIN-degthr-ac0} are highly nonsmooth. In more
detail, when amplifying the threshold degree of a function $f$ as
in the proof of Theorem~\ref{thm:MAIN-degthr-ac0}, two phenomena
adversely affect the smoothness parameter. The first is block-composition
itself as a composition technique, which in the regime of interest
to us transforms \emph{every} threshold degree witness for $f$ into
a hopelessly nonsmooth witness for the composed function. The other
culprit is the input compression step, which re-encodes the input
and thereby affects the smoothness in ways that are hard to control.
To overcome these difficulties, we develop a novel approach based
on what we call \emph{local smoothness.} 

Formally, let $\Phi\colon\NN^{n}\to\Re$ be a function of interest.
For a subset $X\subseteq\NN^{n}$ and a real number $K\geq1,$ we
say that $\Phi$ is \emph{$K$-smooth on $X$} if $|\Phi(x)|\leq K^{|x-x'|}|\Phi(x')|$
for all $x,x'\in X.$ Put another way, for any two points of $X$
at $\ell_{1}$~distance $d,$ the corresponding values of $\Phi$
differ in magnitude by a factor of at most $K^{d}.$ In and of itself,
a locally smooth function $\Phi$ need not be min-smooth because for
a pair of points that are far from each other, the corresponding $\Phi$-values
can differ by many orders of magnitude. However, locally smooth functions
exhibit extraordinary plasticity. Specifically, we show how to modify
a locally smooth function's metric properties\textemdash such as its
support or the distribution of its $\ell_{1}$ mass\textemdash without
the change being detectable by low-degree polynomials. This apparatus
makes it possible to restore min-smoothness to the dual object $\Phi$
that results from the block-composition step and preserve that min-smoothness
throughout the input compression step, eliminating the two obstacles
to min-smoothness in the earlier proof of Theorem~\ref{thm:MAIN-degthr-ac0}.
The new block-composition step uses a \emph{locally smooth} witness
for the threshold degree of $\MP_{m}$, which needs to be built from
scratch and is quite different from the witness in the proof of Theorem~\ref{thm:MAIN-degthr-ac0}.

Our described approach departs considerably from previous work on
the sign-rank of constant-depth circuits~\cite{RS07dc-dnf,BT16sign-rank-ac0,BT18ac0-large-error}.
The analytic notion in those earlier papers is weaker than $\gamma$-smooth
threshold degree and in particular allows the dual object to be \emph{arbitrary}
on a $\gamma$ fraction of the inputs. This weaker property is acceptable
when the main result is proved in one shot, with a closed-form construction
of the dual object. By contrast, we must construct dual objects iteratively,
with each iteration increasing the degree parameter and proportionately
decreasing the smoothness parameter. This iterative process requires
that the dual object in each iteration be min-smooth on the entire
domain. Perhaps unexpectedly, we find $\gamma$-smooth threshold degree
easier to work with than the weaker notion in previous work~\cite{RS07dc-dnf,BT16sign-rank-ac0,BT18ac0-large-error}.
In particular, we are able to give a new and short proof of the $\exp(\Omega(n^{1/3}))$
lower bound on the sign-rank of $\classAC^{0}$, originally obtained
by Razborov and Sherstov~\cite{RS07dc-dnf} with a much more complicated
approach. The new proof can be found in Section~\ref{subsec:RS-simplified},
where it serves as a prelude to our main result on the sign-rank of
$\classAC^{0}$.

\section{\label{sec:Preliminaries}Preliminaries}

\subsection{General}

For a string $x\in\zoon$ and a set $S\subseteq\{1,2,\ldots,n\},$
we let $x|_{S}$ denote the restriction of $x$ to the indices in
$S.$ In other words, $x|_{S}=x_{i_{1}}x_{i_{2}}\ldots x_{i_{|S|}},$
where $i_{1}<i_{2}<\cdots<i_{|S|}$ are the elements of $S.$ The
\emph{characteristic function} of a set $S\subseteq\{1,2,\ldots,n\}$
is given by
\[
\1_{S}(x)=\begin{cases}
1 & \text{if }x\in S,\\
0 & \text{otherwise.}
\end{cases}
\]
For a logical condition $C,$ we use the Iverson bracket
\[
\I[C]=\begin{cases}
1 & \text{if \ensuremath{C} holds,}\\
0 & \text{otherwise.}
\end{cases}
\]
We let $\NN=\{0,1,2,3,\ldots\}$ denote the set of natural numbers.
The following well-known bound~\cite[Proposition~1.4]{jukna11extremal-2nd-edition}
is used in our proofs without further mention: 
\begin{align}
\sum_{i=0}^{k}{n \choose i}\leq\left(\frac{\e n}{k}\right)^{k}, &  & k=0,1,2,\dots,n,\label{eq:entropy-bound-binomial}
\end{align}
where $\e=2.7182\ldots$ denotes Euler's number.

We adopt the extended real number system $\Re\cup\{-\infty,\infty\}$
in all calculations, with the additional convention that $0/0=0.$
We use the comparison operators in a unary capacity to denote one-sided
intervals of the real line. Thus, ${<}a,$ ${\leq}a,$ ${>}a,$ ${\geq}a$
stand for $(-\infty,a),$ $(-\infty,a],$ $(a,\infty),$ $[a,\infty),$
respectively. We let $\ln x$ and $\log x$ stand for the natural
logarithm of $x$ and the logarithm of $x$ to base $2,$ respectively.
We use the following two versions of the sign function:
\begin{align*}
\sign x=\begin{cases}
-1 & \text{if }x<0,\\
0 & \text{if }x=0,\\
1 & \text{if }x>0,
\end{cases}\qquad\qquad\qquad & \Sgn x=\begin{cases}
-1 & \text{if }x<0,\\
1 & \text{if }x\geq0.
\end{cases}
\end{align*}
The term \emph{Euclidean space} refers to $\Re^{n}$ for some positive
integer $n.$ We let $e_{i}$ denote the vector whose $i$th component
is $1$ and the others are $0.$ Thus, the vectors $e_{1},e_{2},\dots,e_{n}$
form the standard basis for $\Re^{n}.$ For vectors $x$ and $y,$
we write $x\leq y$ to mean that $x_{i}\leq y_{i}$ for each $i.$
The relations $\geq,$ $<,$ $>$ on vectors are defined analogously.

We frequently omit the argument in equations and inequalities involving
functions, as in $\sign p=(-1)^{f}$. Such statements are to be interpreted
pointwise. For example, the statement ``$f\geq2|g|$ on $X$'' means
that $f(x)\geq2|g(x)|$ for every $x\in X.$ The positive and negative
parts of a function $f\colon X\to\Re$ are denoted $\pospart f=\max\{f,0\}$
and $\negpart f=\max\{-f,0\}$, respectively.

\subsection{Boolean functions and circuits}

We view Boolean functions as mappings $X\to\zoo$ for some finite
set $X.$ More generally, we consider \emph{partial} Boolean functions
$f\colon X\to\{0,1,*\},$ with the output value $*$ used for don't-care
inputs. The negation of a Boolean function $f$ is denoted as usual
by $\overline{f}=1-f.$ The familiar functions $\OR_{n}\colon\zoon\to\zoo$
and $\AND_{n}\colon\zoon\to\zoo$ are given by $\OR_{n}(x)=\bigvee_{i=1}^{n}x_{i}$
and $\AND_{n}(x)=\bigwedge_{i=1}^{n}x_{i}.$ We abbreviate $\NOR_{n}=\neg\OR_{n}.$
The generalized \emph{Minsky\textendash Papert function} $\MP_{m,r}\colon(\zoo^{r})^{m}\to\zoo$
is given by $\MP_{m,r}(x)=\bigwedge_{i=1}^{m}\bigvee_{j=1}^{r}x_{i,j}.$
We abbreviate $\MP_{m}=\MP_{m,m^{2}},$ which is the right setting
of parameters for most of our applications.

We adopt the standard notation for function composition, with $f\circ g$
defined by $(f\circ g)(x)=f(g(x)).$ In addition, we use the $\circ$
operator to denote the \emph{componentwise} composition of Boolean
functions. Formally, the componentwise composition of $f\colon\zoon\to\{0,1\}$
and $g\colon X\to\{0,1\}$ is the function $f\circ g\colon X^{n}\to\{0,1\}$
given by $(f\circ g)(x_{1},x_{2},\ldots,x_{n})=f(g(x_{1}),g(x_{2}),\ldots,g(x_{n})).$
To illustrate, $\MP_{m,r}=\AND_{m}\circ\OR_{r}.$ Componentwise composition
is consistent with standard composition, which in the context of Boolean
functions is only defined for $n=1.$ Thus, the meaning of $f\circ g$
is determined by the range of $g$ and is never in doubt. Componentwise
composition generalizes in the natural manner to partial Boolean functions
$f\colon\zoon\to\{0,1,*\}$ and $g\colon X\to\{0,1,*\}$, as follows:
\[
(f\circ g)(x_{1},\ldots,x_{n})=\begin{cases}
f(g(x_{1}),\ldots,g(x_{n})) & \text{if }x_{1},\ldots,x_{n}\in g^{-1}(0\cup1),\\
* & \text{otherwise.}
\end{cases}
\]
Compositions $f_{1}\circ f_{2}\circ\cdots\circ f_{k}$ of three or
more functions, where each instance of the $\circ$ operator can be
standard or componentwise, are well-defined by associativity and do
not require parenthesization.

For Boolean strings $x,y\in\zoon,$ we let $x\oplus y$ denote their
bitwise XOR. The strings $x\wedge y$ and $x\vee y$ are defined analogously,
with the binary connective applied bitwise. A \emph{Boolean circuit
}$C$ in variables $x_{1},x_{2},\ldots,x_{n}$ is a circuit with inputs
$x_{1},\neg x_{1},x_{2},\neg x_{2},\ldots,x_{n},\neg x_{n}$ and gates
$\wedge$ and $\vee$. The circuit $C$ is \emph{monotone} if it does
not use any of the negated inputs $\neg x_{1},\neg x_{2},\ldots,\neg x_{n}.$
The \emph{fan-in} of $C$ is the maximum in-degree of any $\wedge$
or $\vee$ gate. Unless stated otherwise, we place no restrictions
on the gate fan-in. The \emph{size} of $C$ is the number of $\wedge$
and $\vee$ gates. The \emph{depth} of $C$ is the maximum number
of $\wedge$ and $\vee$ gates on any path from an input to the circuit
output. With this convention, the circuit that computes $(x_{1},x_{2},\ldots,x_{n})\mapsto x_{1}$
has depth~$0$. The circuit class $\classAC^{0}$ consists of function
families $\{f_{n}\}_{n=1}^{\infty}$ such that $f_{n}\colon\zoon\to\zoo$
is computed by a Boolean circuit of size at most $cn^{c}$ and depth
at most~$c$, for some constant $c\geq1$ and all $n.$ We specify
small-depth layered circuits by indicating the type of gate used in
each layer. For example, an \emph{AND-OR-AND circuit} is a depth-$3$
circuit with the top and bottom layers composed of $\wedge$ gates,
and middle layer composed of $\vee$ gates. A \emph{Boolean formula
}is a Boolean circuit in which every gate has fan-out~$1$. Common
examples of Boolean formulas are DNF and CNF formulas.

\subsection{Norms and products}

For a set $X,$ we let $\Re^{X}$ denote the linear space of real-valued
functions on $X.$ The \emph{support} of a function $f\in\Re^{X}$
is denoted $\supp f=\{x\in X:f(x)\ne0\}.$ For real-valued functions
with finite support, we adopt the usual norms and inner product:
\begin{align*}
 & \|f\|_{\infty}=\max_{x\in\supp f}\,|f(x)|,\\
 & \|f\|_{1}=\sum_{x\in\supp f}\,|f(x)|,\\
 & \langle f,g\rangle=\sum_{x\in\supp f\,\cap\,\supp g}f(x)g(x).
\end{align*}
This covers as a special case functions on finite sets. The \emph{tensor
product} of $f\in\Re^{X}$ and $g\in\Re^{Y}$ is denoted $f\otimes g\in\Re^{X\times Y}$
and given by $(f\otimes g)(x,y)=f(x)g(y).$ The tensor product $f\otimes f\otimes\cdots\otimes f$
($n$ times) is abbreviated $f^{\otimes n}.$ For a subset $S\subseteq\{1,2,\ldots,n\}$
and a function $f\colon X\to\Re,$ we define $f^{\otimes S}\colon X^{n}\to\Re$
by $f^{\otimes S}(x_{1},x_{2},\ldots,x_{n})=\prod_{i\in S}f(x_{i}).$
As extremal cases, we have $f^{\otimes\varnothing}\equiv1$ and $f^{\otimes\{1,2,\ldots,n\}}=f^{\otimes n}.$
Tensor product notation generalizes naturally to \emph{sets} of functions:
$F\otimes G=\{f\otimes g:f\in F,g\in G\}$ and $F^{\otimes n}=\{f_{1}\otimes f_{2}\otimes\cdots\otimes f_{n}:f_{1},f_{2},\ldots,f_{n}\in F\}.$
A \emph{conical combination} of $f_{1},f_{2},\dots,f_{k}\in\Re^{X}$
is any function of the form $\lambda_{1}f_{1}+\lambda_{2}f_{2}+\cdots+\lambda_{k}f_{k},$
where $\lambda_{1},\lambda_{2},\ldots,\lambda_{k}$ are nonnegative
reals. A \emph{convex combination} of $f_{1},f_{2},\dots,f_{k}\in\Re^{X}$
is any function $\lambda_{1}f_{1}+\lambda_{2}f_{2}+\cdots+\lambda_{k}f_{k},$
where $\lambda_{1},\lambda_{2},\ldots,\lambda_{k}$ are nonnegative
reals that sum to~$1.$ The \emph{conical hull} of $F\subseteq\Re^{X}$,
denoted $\cone F,$ is the set of all conical combinations of functions
in $F.$ The \emph{convex hull}, denoted $\conv F$, is defined analogously
as the set of all convex combinations of functions in $F.$ For any
set of functions $F\subseteq\Re^{X},$ we have
\begin{align}
(\conv F)^{\otimes n} & \subseteq\conv(F^{\otimes n}).\label{eq:tensor-conv-conv-tensor}
\end{align}

Throughout this manuscript, we view probability distributions as real
functions. This convention makes available the shorthands introduced
above. In particular, for probability distributions $\mu$ and $\lambda,$
the symbol $\supp\mu$ denotes the support of $\mu$, and $\mu\otimes\lambda$
denotes the probability distribution given by $(\mu\otimes\lambda)(x,y)=\mu(x)\lambda(y).$
If $\mu$ is a probability distribution on $X,$ we consider $\mu$
to be defined also on any superset of $X$ with the understanding
that $\mu=0$ outside $X.$ We let $\Distribution(X)$ denote the
family of all finitely supported probability distributions on $X.$
Most of this paper is concerned with the distribution family $\Distribution(\NN^{n})$
and its subfamilies, each of which we will denote with a Fraktur letter.

Analogous to functions, we adopt the familiar norms for vectors $x\in\Re^{n}$
in Euclidean space: $\|x\|_{\infty}=\max_{i=1,\ldots,n}|x_{i}|$ and
$\|x\|_{1}=\sum_{i=1}^{n}|x_{i}|.$ The latter norm is particularly
prominent in this paper, and to avoid notational clutter we use $|x|$
interchangeably with $\|x\|_{1}$. We refer to $|x|=\|x\|_{1}$ as
the \emph{weight} of $x.$ For any sets $X\subseteq\NN^{n}$ and $W\subseteq\Re,$
we define
\[
X|_{W}=\{x\in X:|x|\in W\}.
\]
In the case of a one-element set $W=\{w\}$, we further shorten $X|_{\{w\}}$
to $X|_{w}.$ To illustrate, $\NN^{n}|_{\leq w}$ denotes the set
of vectors whose components are natural numbers and sum to at most
$w$, whereas $\zoon|_{w}$ denotes the set of Boolean strings of
length $n$ and Hamming weight exactly $w.$ For a function $f\colon X\to\Re$
on a subset $X\subseteq\NN^{n},$ we let $f|_{W}$ denote the restriction
of $f$ to $X|_{W}.$ A typical instance of this notation would be
$f|_{\leq w}$ for some real number~$w.$

\subsection{Orthogonal content}

For a multivariate real polynomial $p\colon\Re^{n}\to\Re$, we let
$\deg p$ denote the total degree of $p$, i.e., the largest degree
of any monomial of $p.$ We use the terms \emph{degree }and \emph{total
degree }interchangeably in this paper. It will be convenient to define
the degree of the zero polynomial by $\deg0=-\infty.$ For a real-valued
function $\phi$ supported on a finite subset of $\Re^{n}$, we define
the \emph{orthogonal content of $\phi,$} denoted $\orth\phi$, to
be the minimum degree of a real polynomial $p$ for which $\langle\phi,p\rangle\ne0.$
We adopt the convention that $\orth\phi=\infty$ if no such polynomial
exists. It is clear that $\orth\phi\in\NN\cup\{\infty\},$ with the
extremal cases $\orth\phi=0\;\Leftrightarrow\;\langle\phi,1\rangle\ne0$
and $\orth\phi=\infty\;\Leftrightarrow\;\phi=0.$ Our next three results
record additional facts about orthogonal content.
\begin{prop}
\label{prop:orth}Let $X$ and $Y$ be nonempty finite subsets of
Euclidean space. Then:
\begin{enumerate}
\item \label{item:orth-sum}$\orth(\phi+\psi)\geq\min\{\orth\phi,\orth\psi\}$
for all $\phi,\psi\colon X\to\Re;$
\item \label{item:orth-tensor}$\orth(\phi\otimes\psi)=\orth(\phi)+\orth(\psi)$
for all $\phi\colon X\to\Re$ and $\psi\colon Y\to\Re;$
\item \label{item:orth-difference-of-tensors}$\orth(\phi^{\otimes n}-\psi^{\otimes n})\geq\orth(\phi-\psi)$
for all $\phi,\psi\colon X\to\Re$ and all $n\geq1.$
\end{enumerate}
\end{prop}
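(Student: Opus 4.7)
My plan is to handle the three parts separately, noting that part (iii) reduces cleanly to (i) and (ii) via a telescoping identity, so the real work lies in part (ii). For part (i), I would argue by bilinearity of the inner product: if a polynomial $p$ satisfies $\langle\phi+\psi,p\rangle\neq 0$, then at least one of $\langle\phi,p\rangle$ and $\langle\psi,p\rangle$ is nonzero, hence $\deg p\geq\min\{\orth\phi,\orth\psi\}$. Minimizing over admissible $p$ yields (i). The degenerate cases in which either $\orth\phi$ or $\orth\psi$ equals $\infty$ are handled by the convention that $\phi=0$ whenever $\orth\phi=\infty$.

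For part (ii), the upper bound $\orth(\phi\otimes\psi)\leq\orth\phi+\orth\psi$ is immediate: take witness polynomials $p(x)$ and $q(y)$ of minimal degree and form $r(x,y)=p(x)q(y)$, which satisfies $\langle\phi\otimes\psi,r\rangle=\langle\phi,p\rangle\langle\psi,q\rangle\neq 0$. The reverse inequality is the interesting direction. Given any polynomial $r(x,y)$ with $\langle\phi\otimes\psi,r\rangle\neq 0$ and $\deg r<\orth\phi+\orth\psi$, I would expand in $y$-monomials as $r(x,y)=\sum_j r_j(x)n_j(y)$, so that $\langle\phi\otimes\psi,r\rangle=\sum_j\langle\phi,r_j\rangle\langle\psi,n_j\rangle$. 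Next, split each coefficient as $r_j=r_j^{\text{lo}}+r_j^{\text{hi}}$ with $\deg r_j^{\text{lo}}<\orth\phi$; the low-degree parts contribute zero to $\langle\phi,\cdot\rangle$ by definition of $\orth\phi$. What survives is $\langle\psi,Q\rangle$, where $Q(y)=\sum_j\langle\phi,r_j^{\text{hi}}\rangle n_j(y)$. If $\langle\phi,r_j^{\text{hi}}\rangle\neq 0$ then $\deg r_j\geq\orth\phi$, so since $\deg r_j+\deg n_j\leq\deg r<\orth\phi+\orth\psi$, we get $\deg n_j<\orth\psi$. Therefore $\deg Q<\orth\psi$ and $\langle\psi,Q\rangle=0$, contradicting our assumption.

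For part (iii), I would use the telescoping identity
\[
\phi^{\otimes n}-\psi^{\otimes n}=\sum_{i=1}^{n}\phi^{\otimes(i-1)}\otimes(\phi-\psi)\otimes\psi^{\otimes(n-i)}
\]
with the convention $\phi^{\otimes 0}\equiv 1$. Applying (i) to the sum and (ii) to each tensor summand gives
\[
\orth(\phi^{\otimes n}-\psi^{\otimes n})\geq\min_{1\leq i\leq n}\bigl((i-1)\orth\phi+\orth(\phi-\psi)+(n-i)\orth\psi\bigr)\geq\orth(\phi-\psi),
\]
where the last inequality uses the fact that orthogonal content takes values in $\NN\cup\{\infty\}$ and is therefore nonnegative. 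The main obstacle is the reverse inequality in part (ii), specifically the bookkeeping needed to justify that the $y$-monomial expansion commutes with the decomposition of each $r_j$ into low- and high-degree parts; once (ii) is established, parts (i) and (iii) follow with essentially no additional effort.
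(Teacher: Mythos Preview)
Your proof is correct and follows essentially the same approach as the paper: part~(i) by linearity of the inner product, part~(ii) by reducing to factored polynomials $p(x)q(y)$ (the paper compresses your monomial-expansion argument to the one-line remark that linearity permits restricting attention to such products), and part~(iii) via the identical telescoping identity combined with~(i) and~(ii). Your treatment of~(ii) is more explicit than the paper's, but the underlying idea---that for any monomial $m(x)n(y)$ of total degree below $\orth\phi+\orth\psi$, one factor must fall below the corresponding orthogonal content---is the same.
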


\begin{proof}
Item~\ref{item:orth-sum} is immediate, as is the upper bound in~\ref{item:orth-tensor}.
For the lower bound in~\ref{item:orth-tensor}, simply note that
the linearity of inner product makes it possible to restrict attention
to factored polynomials $p(x)q(y)$, where $p$ and $q$ are polynomials
on $X$ and $Y$, respectively. For~\ref{item:orth-difference-of-tensors},
use a telescoping sum to write
\begin{align*}
\phi^{\otimes n}-\psi^{\otimes n} & =\sum_{i=0}^{n-1}(\phi^{\otimes(n-i)}\otimes\psi^{\otimes i}-\phi^{\otimes(n-i-1)}\otimes\psi^{\otimes(i+1)})\\
 & =\sum_{i=0}^{n-1}\phi^{\otimes(n-i-1)}\otimes(\phi-\psi)\otimes\psi^{\otimes i}.
\end{align*}
By~\ref{item:orth-tensor}, each term in the final expression has
orthogonal content at least $\orth(\phi-\psi).$ By~\ref{item:orth-sum},
then, the sum has orthogonal content at least $\orth(\phi-\psi)$
as well.
\end{proof}
\begin{prop}
\label{prop:expect-out}Let $\phi_{0},\phi_{1}\colon X\to\Re$ be
given functions on a finite subset $X$ of Euclidean space with $\orth(\phi_{1}-\phi_{0})>0$.
Then for every polynomial $p\colon X^{n}\to\Re,$ the mapping $z\mapsto\langle\bigotimes_{i=1}^{n}\phi_{z_{i}},p\rangle$
is a polynomial on $\zoon$ of degree at most $(\deg p)/\orth(\phi_{1}-\phi_{0}).$ 
\end{prop}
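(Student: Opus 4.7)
The plan is to express each $\phi_{z_i}$ as a linear combination in $z_i$ and then expand the tensor product. Concretely, since $z_i\in\{0,1\}$, we have the identity
\[
\phi_{z_i}\;=\;\phi_0\,+\,z_i(\phi_1-\phi_0),
\]
so by the multilinearity of $\otimes$,
\[
\bigotimes_{i=1}^{n}\phi_{z_i}
\;=\;\sum_{S\subseteq\{1,\dots,n\}}\Biggl(\prod_{i\in S}z_i\Biggr)\,\Psi_S,
\]
where $\Psi_S\colon X^n\to\Re$ is the tensor product whose $i$-th factor is $\phi_1-\phi_0$ if $i\in S$ and $\phi_0$ if $i\notin S$. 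Taking inner product with $p$ and using the linearity of $\langle\cdot,\cdot\rangle$,
\[
\Biggl\langle\bigotimes_{i=1}^{n}\phi_{z_i},\,p\Biggr\rangle
\;=\;\sum_{S\subseteq\{1,\dots,n\}}\Biggl(\prod_{i\in S}z_i\Biggr)\,\langle\Psi_S,p\rangle.
\]
Since each monomial $\prod_{i\in S}z_i$ has degree $|S|$, the right-hand side is manifestly a polynomial in $z$ on $\{0,1\}^n$.

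Next I would bound the degree by showing that $\langle\Psi_S,p\rangle=0$ whenever $|S|$ is too large. This is where Proposition~\ref{prop:orth}(ii) does the work: the orthogonal content is additive under tensor product, so
\[
\orth(\Psi_S)\;=\;|S|\cdot\orth(\phi_1-\phi_0)\,+\,(n-|S|)\cdot\orth(\phi_0)\;\geq\;|S|\cdot\orth(\phi_1-\phi_0),
\]
using only that $\orth(\phi_0)\geq 0$. Consequently, if $|S|\cdot\orth(\phi_1-\phi_0)>\deg p$, then $p$ has degree strictly less than $\orth(\Psi_S)$, which by the definition of orthogonal content forces $\langle\Psi_S,p\rangle=0$. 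Therefore only sets $S$ with $|S|\leq(\deg p)/\orth(\phi_1-\phi_0)$ contribute to the sum, and the resulting polynomial in $z$ has degree at most $(\deg p)/\orth(\phi_1-\phi_0)$, as claimed.

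There is no real obstacle here: the hypothesis $\orth(\phi_1-\phi_0)>0$ is precisely what is needed to guarantee that the bound $|S|\cdot\orth(\phi_1-\phi_0)$ grows with $|S|$, and the rest is bookkeeping via the already-established tensor identity in Proposition~\ref{prop:orth}. The one small thing to be careful of is the zero polynomial convention $\deg 0=-\infty$, which makes the bound vacuous if $p$ annihilates every $\Psi_S$, consistent with the statement.
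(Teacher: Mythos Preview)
Your proof is correct, but it takes a different route from the paper's. The paper reduces by linearity to factored polynomials $p(x_1,\dots,x_n)=\prod_i p_i(x_i)$, so that $\langle\bigotimes_i\phi_{z_i},p\rangle=\prod_i\langle\phi_{z_i},p_i\rangle$, and then observes that the $i$-th factor is independent of $z_i$ whenever $\deg p_i<\orth(\phi_1-\phi_0)$; counting the remaining indices gives the degree bound. You instead expand the \emph{other} side of the inner product, writing $\bigotimes_i\phi_{z_i}$ as a multilinear polynomial in $z$ with tensor-product coefficients $\Psi_S$, and then invoke the additivity of orthogonal content under tensor products (Proposition~\ref{prop:orth}\ref{item:orth-tensor}) to kill all terms with $|S|$ large. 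The two arguments are essentially dual decompositions of the same bilinear pairing; yours makes the multilinear structure in $z$ explicit up front and packages the key estimate into a single appeal to Proposition~\ref{prop:orth}, while the paper's version is slightly more hands-on and avoids citing that proposition.
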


\begin{proof}
By linearity, it suffices to consider factored polynomials $p(x_{1},\ldots,x_{n})=\prod_{i=1}^{n}p_{i}(x_{i}),$
where each $p_{i}$ is a nonzero polynomial on $X.$ In this setting,
we have
\begin{align}
\left\langle \bigotimes_{i=1}^{n}\phi_{z_{i}},p\right\rangle  & =\prod_{i=1}^{n}\left\langle \phi_{z_{i}},p_{i}\right\rangle .\label{eq:factored-inner-product}
\end{align}
By definition, $\langle\phi_{0},p_{i}\rangle=\langle\phi_{1},p_{i}\rangle$
for any index $i$ with $\deg p_{i}<\orth(\phi_{1}-\phi_{0}).$ As
a result, such indices do not contribute to the degree of the right-hand
side of~(\ref{eq:factored-inner-product}) as a function of $z.$
The contribution of any other index to the degree is clearly at most~$1.$
Summarizing, the right-hand side of~(\ref{eq:factored-inner-product})
is a polynomial in $z\in\zoon$ of degree at most $|\{i:\deg p_{i}\geq\orth(\phi_{1}-\phi_{0})\}|\leq(\deg p)/\orth(\phi_{1}-\phi_{0}).$ 
\end{proof}
\begin{cor}
\label{cor:orth-block-composition}Let $X$ be a finite subset of
Euclidean space. Then for any functions $\phi_{0},\phi_{1}\colon X\to\Re$
and $\psi\colon\zoon\to\Re,$ 
\[
\orth\left(\sum_{z\in\zoon}\psi(z)\bigotimes_{i=1}^{n}\phi_{z_{i}}\right)\geq\orth(\psi)\cdot\orth(\phi_{1}-\phi_{0}).
\]
\end{cor}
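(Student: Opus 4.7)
The plan is to apply Proposition~\ref{prop:expect-out} directly, with the function $\psi$ playing the role of the ``test polynomial'' on $\zoon.$ Set $\Phi=\sum_{z\in\zoon}\psi(z)\bigotimes_{i=1}^{n}\phi_{z_{i}},$ and let $d=\orth(\psi)\cdot\orth(\phi_{1}-\phi_{0}).$ Our goal is to show that $\langle\Phi,p\rangle=0$ for every polynomial $p\colon X^{n}\to\Re$ of degree strictly less than $d.$

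First I would dispose of the degenerate cases. If $\orth(\psi)=0$ or $\orth(\phi_{1}-\phi_{0})=0,$ then $d=0$ and the inequality $\orth(\Phi)\geq 0$ is trivial. If $\psi\equiv 0,$ then $\Phi\equiv 0,$ so $\orth(\Phi)=\infty;$ and if $\phi_{1}=\phi_{0},$ then $\Phi=\langle\psi,1\rangle\,\phi_{0}^{\otimes n},$ which is identically zero whenever $\orth(\psi)>0,$ again giving $\orth(\Phi)=\infty.$

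For the main case, fix a polynomial $p$ with $\deg p<d.$ Linearity of inner product gives
\[
\langle\Phi,p\rangle=\sum_{z\in\zoon}\psi(z)\,q(z),\qquad\text{where }q(z):=\left\langle \bigotimes_{i=1}^{n}\phi_{z_{i}},\,p\right\rangle .
\]
By Proposition~\ref{prop:expect-out}, $q$ is a polynomial on $\zoon$ of degree at most
\[
\frac{\deg p}{\orth(\phi_{1}-\phi_{0})}<\frac{d}{\orth(\phi_{1}-\phi_{0})}=\orth(\psi).
\]
By the definition of orthogonal content, any polynomial on $\zoon$ of degree strictly less than $\orth(\psi)$ is orthogonal to $\psi,$ so $\langle\psi,q\rangle=0,$ i.e., $\langle\Phi,p\rangle=0.$

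There is no real obstacle here beyond bookkeeping with the extended real conventions; the work is entirely front-loaded into Propositions~\ref{prop:orth} and~\ref{prop:expect-out}, and the corollary is essentially a restatement in terms of the ``outer'' function $\psi.$
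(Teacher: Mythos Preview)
Your proposal is correct and follows essentially the same route as the paper: after setting aside the degenerate case $\orth(\psi)\cdot\orth(\phi_1-\phi_0)=0$, expand $\langle\Phi,p\rangle$ by linearity and invoke Proposition~\ref{prop:expect-out} to see that the resulting map $z\mapsto\langle\bigotimes_i\phi_{z_i},p\rangle$ is a polynomial of degree below $\orth(\psi)$, hence orthogonal to $\psi$. The paper's version is terser (it does not separately spell out the $\psi\equiv 0$ or $\phi_0=\phi_1$ sub-cases), but the argument is identical.
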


\begin{proof}
We may assume that $\orth(\psi)\cdot\orth(\phi_{1}-\phi_{0})>0$ since
the claim holds trivially otherwise. Fix any polynomial $P$ of degree
less than $\orth(\psi)\cdot\orth(\phi_{1}-\phi_{0}).$ The linearity
of inner product leads to
\[
\left\langle \sum_{z\in\zoon}\psi(z)\bigotimes_{i=1}^{n}\phi_{z_{i}},P\right\rangle =\sum_{z\in\zoon}\psi(z)\left\langle \bigotimes_{i=1}^{n}\phi_{z_{i}},P\right\rangle .
\]
By Proposition~\ref{prop:expect-out}, the right-hand side is the
inner product of $\psi$ with a polynomial of degree less than $\orth\psi$
and is therefore zero.
\end{proof}
\noindent Observe that Corollary~\ref{cor:orth-block-composition}
gives an alternate proof of Proposition~\ref{prop:orth}\ref{item:orth-difference-of-tensors}.
Our next proposition uses orthogonal content to give a useful criterion
for a real-valued function to be a probability distribution.
\begin{prop}
\label{prop:distribution-criterion}Let $\Lambda$ be a probability
distribution on a finite subset $X$ of Euclidean space. Let $\tilde{\Lambda}\colon X\to\Re$
be given with $\tilde{\Lambda}\geq0$ and $\orth(\Lambda-\tilde{\Lambda})>0.$
Then $\tilde{\Lambda}$ is a probability distribution on $X.$
\end{prop}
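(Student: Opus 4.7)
The proof should be essentially immediate from the definitions, so the plan is short. The hypothesis $\orth(\Lambda-\tilde\Lambda)>0$ means, by the very definition of orthogonal content, that $\langle\Lambda-\tilde\Lambda,\,p\rangle=0$ for every polynomial $p$ of degree less than $1$; equivalently, $\Lambda-\tilde\Lambda$ is orthogonal to the constant polynomial $p\equiv 1$.

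First I would unpack this orthogonality statement as
\[
\sum_{x\in X}\tilde\Lambda(x) \;=\; \langle\tilde\Lambda,1\rangle \;=\; \langle\Lambda,1\rangle \;=\; \sum_{x\in X}\Lambda(x) \;=\; 1,
\]
using in the last step that $\Lambda$ is a probability distribution. Combined with the assumption $\tilde\Lambda\geq 0$ pointwise on $X$, this gives that $\tilde\Lambda$ is a nonnegative function summing to $1$ on a finite set, hence a probability distribution on $X$.

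There is no genuine obstacle here; the only thing to double-check is the boundary convention for orthogonal content recorded earlier in the paper, namely that $\orth\phi=0\Leftrightarrow\langle\phi,1\rangle\neq 0$, so that $\orth(\Lambda-\tilde\Lambda)>0$ really does force $\langle\Lambda-\tilde\Lambda,1\rangle=0$ (and not merely constrain higher-degree inner products). Once this is noted, the proof is two lines.
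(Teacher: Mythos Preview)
Your proposal is correct and matches the paper's own proof essentially line for line: both use $\orth(\Lambda-\tilde\Lambda)>0$ to conclude $\langle\Lambda-\tilde\Lambda,1\rangle=0$, hence $\|\tilde\Lambda\|_1=\langle\tilde\Lambda,1\rangle=\langle\Lambda,1\rangle=1$, and combine this with the nonnegativity hypothesis.
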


\begin{proof}
By hypothesis, $\tilde{\Lambda}$ is a nonnegative function. Moreover,
$\|\tilde{\Lambda}\|_{1}=\langle\tilde{\Lambda},1\rangle=\langle\Lambda,1\rangle-\langle\Lambda-\tilde{\Lambda},1\rangle=\langle\Lambda,1\rangle=1,$
where the third step uses $\orth(\Lambda-\tilde{\Lambda})>0.$
\end{proof}

\subsection{Sign-representation}

Let $f\colon X\to\zoo$ be a given Boolean function, for a finite
subset $X\subset\Re^{n}$. The \emph{threshold degree} of $f,$ denoted
$\degthr(f),$ is the least degree of a real polynomial $p$ that
represents $f$ in sign: $\sign p(x)=(-1)^{f(x)}$ for each $x\in X.$
The term ``threshold degree'' appears to be due to Saks~\cite{saks93slicing}.
Equivalent terms in the literature include ``strong degree''~\cite{aspnes91voting},
``voting polynomial degree''~\cite{krause94depth2mod}, ``polynomial
threshold function degree''~\cite{odonnell03degree}, and ``sign
degree''~\cite{buhrman07pp-upp}. One of the first results on polynomial
representations of Boolean functions was the following tight lower
bound on the threshold degree of $\MP_{m},$ due to Minsky and Papert~\cite{minsky88perceptrons}.
\begin{thm}[Minsky and Papert]
\label{thm:MP-thrdeg}$\degthr(\MP_{m})=\Omega(m)$.
\end{thm}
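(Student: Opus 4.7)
The plan is to carry out Minsky and Papert's classical symmetrization argument. Suppose $p \colon \zoon \to \Re$ is a polynomial of degree $d$ that sign-represents $\MP_m$ on $n = m^3$ variables; the goal is to derive $d = \Omega(m)$.

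The first step is block-wise symmetrization. I would average $p$ over all permutations of the variables that act independently on each of the $m$ blocks of size $m^2$. Since $\MP_m$ is invariant under such permutations and $p$ sign-represents it, the averaged polynomial $\bar p$ still sign-represents $\MP_m$, has degree at most $d$, and is symmetric within each block. Using the standard fact that a symmetric multilinear polynomial on $\{0,1\}^{m^2}$ is a polynomial of the same degree in the Hamming weight of that block, there exists a polynomial $q(t_1, \ldots, t_m)$ of degree at most $d$ such that for every $(t_1, \ldots, t_m) \in \{0, 1, \ldots, m^2\}^m$ one has $q(t_1, \ldots, t_m) > 0$ iff some $t_i = 0$, and $q(t_1, \ldots, t_m) < 0$ iff all $t_i \geq 1$.

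The second step is a univariate reduction. Setting $k = \lfloor m/2 \rfloor$, I would consider
\[
R(s) = q\bigl((s-1)^2,\, (s-3)^2,\, \ldots,\, (s-2k+1)^2,\, m^2,\, \ldots,\, m^2\bigr),
\]
a polynomial in $s$ of degree at most $2d$, where the last $m - k$ arguments are frozen at the maximum weight $m^2$ so those blocks cannot cause the sign of $q$ to flip. For $s \in \{0, 1, \ldots, 2k-1\}$ and $i \in \{1, \ldots, k\}$, each squared offset $(s - 2i + 1)^2$ is a nonnegative integer with $|s - 2i + 1| \leq 2k - 1 \leq m$, so the substituted value lies in $\{0, 1, \ldots, m^2\}$ and is a legitimate block weight. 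Moreover $(s - 2i + 1)^2 = 0$ exactly when $s = 2i - 1$. Hence for odd $s \in \{1, 3, \ldots, 2k - 1\}$ precisely one substituted weight vanishes and $R(s) > 0$, whereas for even $s \in \{0, 2, \ldots, 2k - 2\}$ all substituted weights are at least $1$ and $R(s) < 0$.

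This alternation forces $R$ to change sign at least $2k - 1$ times on the integer points $\{0, 1, \ldots, 2k - 1\}$, so $R$ has at least $2k - 1$ real roots and $2d \geq \deg R \geq 2k - 1$, yielding $d = \Omega(m)$. The main obstacle I anticipate is calibrating the univariate substitution so that the squared offsets $(s - 2i + 1)^2$ stay within the valid block-weight range $\{0, \ldots, m^2\}$ while still generating $\Omega(m)$ sign changes; restricting to only $k = \lfloor m/2 \rfloor$ active blocks (with the remaining block weights frozen at $m^2$) is exactly the device that makes $(2k - 1)^2 \leq m^2$ and resolves this tension.
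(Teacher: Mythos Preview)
The paper does not actually include a proof of this theorem: it is stated as a classical result of Minsky and Papert and followed only by a remark that three alternative proofs appear in~\cite{sherstov14sign-deg-ac0}. Your proposal is correct and is precisely the original Minsky--Papert symmetrization argument that the paper is citing; the block-symmetrization step is exactly Proposition~\ref{prop:symmetrization}, and the univariate substitution $t_i=(s-(2i-1))^2$ with the remaining blocks frozen at weight $m^2$ produces the required $2\lfloor m/2\rfloor-1$ sign changes, yielding $d\geq\lfloor m/2\rfloor$.
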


\noindent Three new proofs of this lower bound, unrelated to Minsky
and Papert's original proof, were discovered in~\cite{sherstov14sign-deg-ac0}.
Threshold degree admits the following dual characterization, obtained
by appeal to linear programming duality.
\begin{fact}
\label{fact:dual-thrdeg}Let $f\colon X\to\zoo$ be a given Boolean
function on a finite subset $X$ of Euclidean space. Then $\degthr(f)\geq d$
if and only if there exists $\psi\colon X\to\Re$ such that
\begin{align*}
 & (-1)^{f(x)}\psi(x)\geq0, &  & x\in X,\\
 & \orth\psi\geq d,\\
 & \psi\not\equiv0.
\end{align*}
\end{fact}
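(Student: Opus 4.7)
The plan is a routine application of linear programming duality / a theorem of alternatives, so I would proceed in two directions.

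For the easy direction (existence of $\psi$ implies $\degthr(f)\geq d$), suppose $\psi$ with the three listed properties exists and, for contradiction, that some polynomial $p$ of degree less than $d$ sign-represents $f$. Then $(-1)^{f(x)}p(x)>0$ for every $x\in X$, and combining with $(-1)^{f(x)}\psi(x)\geq 0$ gives $\psi(x)p(x)\geq 0$ pointwise on $X$, with strict inequality wherever $\psi(x)\ne 0$. Because $\psi\not\equiv 0$, this forces $\langle\psi,p\rangle>0$. On the other hand $\orth\psi\geq d>\deg p$ forces $\langle\psi,p\rangle=0$, a contradiction.

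For the converse, I would invoke Gordan's theorem of alternatives. Let $V\subseteq\Re^X$ be the finite-dimensional linear subspace consisting of the restrictions to $X$ of polynomials of degree less than $d$. The hypothesis $\degthr(f)\geq d$ says precisely that the strict system $(-1)^{f(x)}p(x)>0$ (for all $x\in X$) has no solution $p\in V$. Gordan's theorem then produces a nonzero nonnegative function $\psi'\colon X\to\Re$ with $\psi'\geq 0$ such that $\sum_{x}\psi'(x)(-1)^{f(x)}p(x)=0$ for every $p\in V$. Setting $\psi(x):=(-1)^{f(x)}\psi'(x)$, the three required conditions are immediate: $(-1)^{f(x)}\psi(x)=\psi'(x)\geq 0$ gives the sign condition; the displayed orthogonality relation says exactly $\langle\psi,p\rangle=0$ for every polynomial $p$ of degree less than $d$, i.e.\ $\orth\psi\geq d$; and $\psi\not\equiv 0$ because $\psi'\not\equiv 0$.

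The only subtlety, and so the ``main obstacle,'' is selecting a form of duality that handles strict inequalities cleanly — Gordan's theorem is tailored for this situation, though one could equivalently separate the origin from the relatively open cone $\{((-1)^{f(x)}p(x))_{x\in X}:p\in V\}+\Re^X_{>0}$, or set up an explicit primal-dual LP pair. All of these deliver the same dual witness $\psi$. Since the characterization is classical and used throughout the authors' prior work, I would expect the writeup to consist of only a brief citation together with the short calculation above.
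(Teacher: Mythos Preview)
Your proposal is correct, and your prediction is exactly right: the paper does not give a proof at all, merely stating that the characterization follows by linear programming duality and referring the reader to \cite{aspnes91voting,odonnell03degree,sherstov09hshs}. Your Gordan-theorem argument is a standard way to make that duality explicit and would serve as a fine self-contained proof.
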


\noindent The function $\psi$ \emph{witnesses} the threshold degree
of $f$, and is called a \emph{dual polynomial} due to its origin
in a dual linear program. We refer the reader to~\cite{aspnes91voting,odonnell03degree,sherstov09hshs}
for a proof of Fact~\ref{fact:dual-thrdeg}. The following equivalent
statement is occasionally more convenient to work with.
\begin{fact}
\label{fact:dual-thrdeg-distribution-view}For every Boolean function
$f\colon X\to\zoo$ on a finite subset $X$ of Euclidean space,
\begin{equation}
\degthr(f)=\max_{\mu\in\Distribution(X)}\orth((-1)^{f}\cdot\mu).\label{eq:dual-degthr}
\end{equation}
\end{fact}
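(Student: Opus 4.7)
The plan is to derive Fact~\ref{fact:dual-thrdeg-distribution-view} directly from Fact~\ref{fact:dual-thrdeg} by exhibiting a bijection, up to positive scaling, between the witnesses $\psi$ featured in Fact~\ref{fact:dual-thrdeg} and signed probability distributions of the form $(-1)^{f}\cdot\mu$ with $\mu\in\Distribution(X)$. The two directions are each one-line arguments, and I expect no real obstacle beyond verifying that each normalization preserves what it should.

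For the inequality $\degthr(f)\le\max_{\mu}\orth((-1)^{f}\mu)$, I take a witness $\psi$ from Fact~\ref{fact:dual-thrdeg} with $\orth\psi\ge\degthr(f)$. Since $(-1)^{f(x)}\psi(x)\ge0$ pointwise and $\psi\not\equiv0$, the quantity $\|\psi\|_1=\sum_x(-1)^{f(x)}\psi(x)$ is strictly positive. I then set
\[
\mu(x)\;=\;\frac{(-1)^{f(x)}\psi(x)}{\|\psi\|_1},
\]
which is nonnegative and sums to $1$, hence lies in $\Distribution(X)$. Then $(-1)^{f}\mu=\psi/\|\psi\|_1$ is a nonzero scalar multiple of $\psi$, so $\orth((-1)^{f}\mu)=\orth\psi\ge\degthr(f)$, as desired.

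For the reverse inequality $\degthr(f)\ge\max_{\mu}\orth((-1)^{f}\mu)$, I take any $\mu\in\Distribution(X)$ and set $\psi=(-1)^{f}\mu$. Then $(-1)^{f(x)}\psi(x)=\mu(x)\ge0$ pointwise, and $\psi\not\equiv0$ because $\mu$ is a probability distribution and hence nonzero. Thus $\psi$ satisfies the hypotheses of Fact~\ref{fact:dual-thrdeg} with $d=\orth\psi=\orth((-1)^{f}\mu)$, giving $\degthr(f)\ge\orth((-1)^{f}\mu)$. Taking the supremum over $\mu$ yields the claim.

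Finally, the maximum in~\eqref{eq:dual-degthr} is genuinely attained (rather than merely a supremum): the set $\Distribution(X)$ is a compact polytope in $\Re^X$, and the function $\mu\mapsto\orth((-1)^{f}\mu)$ takes values in $\NN\cup\{\infty\}$, so its supremum over $\Distribution(X)$ is achieved by some $\mu^{*}$. The only subtle point worth stating explicitly is the normalization $\|\psi\|_1>0$ in the first direction, which uses both sign-agreement and $\psi\not\equiv0$; once that is noted, the argument is essentially mechanical.
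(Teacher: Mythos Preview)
Your proof is correct and is precisely the natural verification the paper has in mind: the paper does not give a separate proof of Fact~\ref{fact:dual-thrdeg-distribution-view} but simply presents it as ``the following equivalent statement'' to Fact~\ref{fact:dual-thrdeg}, and your argument makes that equivalence explicit via the obvious correspondence $\psi\leftrightarrow(-1)^f\mu$. One minor remark: your compactness argument for attainment is unnecessary (and a bit delicate, since $\mu\mapsto\orth((-1)^f\mu)$ is not continuous); the $\mu$ you construct in the first direction already achieves $\orth((-1)^f\mu)\ge\degthr(f)$, which together with the second direction shows it is a maximizer.
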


We now define a generalization of threshold degree inspired by the
dual view in Fact~\ref{fact:dual-thrdeg-distribution-view}. For
a function $f\colon X\to\zoo$ and a real number $0\leq\gamma\leq1,$
let
\begin{equation}
\degthr(f,\gamma)=\max_{\substack{\mu\in\Distribution(X):\\
\mu\geq\gamma/|X|\text{ on }X
}
}\orth((-1)^{f}\cdot\mu).\label{eq:dual-degthr-min-smooth}
\end{equation}
We call this quantity the \emph{$\gamma$-smooth threshold degree
of $f$}, in reference to the fact that the maximization in~(\ref{eq:dual-degthr-min-smooth})
is over probability distributions $\mu$ that assign to every point
of the domain at least a $\gamma$ fraction of the average point's
probability. A glance at~(\ref{eq:dual-degthr}) and~(\ref{eq:dual-degthr-min-smooth})
reveals that $\degthr(f,\gamma)$ is monotonically nonincreasing in
$\gamma,$ with the limiting case $\degthr(f,0)=\degthr(f).$
\begin{fact}
\label{fact:fact-degthr-min-smooth}For every nonconstant function
$f\colon X\to\zoo,$ 
\[
\degthr\!\left(f,\frac{1}{2}\right)\geq1.
\]
\end{fact}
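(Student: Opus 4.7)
The plan is to unpack the definition of $\degthr(f,1/2)$ and then exhibit an explicit distribution $\mu$ witnessing the bound. By equation~(\ref{eq:dual-degthr-min-smooth}), the claim $\degthr(f,1/2)\geq 1$ amounts to producing a probability distribution $\mu\in\Distribution(X)$ with $\mu(x)\geq 1/(2|X|)$ for every $x\in X$ such that $\orth((-1)^f\cdot\mu)\geq 1$. Since $\orth(\cdot)\geq 1$ is equivalent to being orthogonal to the constant polynomial $1$, this last condition simply says
\[
\sum_{x\in X}(-1)^{f(x)}\mu(x)=0,
\]
i.e.\ $\mu$ places total mass exactly $1/2$ on $f^{-1}(0)$ and exactly $1/2$ on $f^{-1}(1)$.

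Next I would construct such a $\mu$ by hand. Let $a=|f^{-1}(0)|$ and $b=|f^{-1}(1)|$; both are positive because $f$ is nonconstant, and $a+b=|X|$. Define
\[
\mu(x)=\begin{cases}1/(2a) & \text{if }f(x)=0,\\ 1/(2b) & \text{if }f(x)=1.\end{cases}
\]
This is nonnegative and sums to $1$, so $\mu\in\Distribution(X)$; by construction $\mu(f^{-1}(0))=\mu(f^{-1}(1))=1/2$, which gives the orthogonality to constants.

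It remains to verify the smoothness condition. Since $a\leq a+b=|X|$ and $b\leq a+b=|X|$, we have $1/(2a)\geq 1/(2|X|)$ and $1/(2b)\geq 1/(2|X|)$, so $\mu(x)\geq 1/(2|X|)$ for every $x\in X$, as required. Hence $(-1)^{f}\cdot\mu$ has orthogonal content at least $1$, and $\mu$ is admissible in the optimization~(\ref{eq:dual-degthr-min-smooth}) with $\gamma=1/2$, yielding $\degthr(f,1/2)\geq 1$. There is no real obstacle here: the statement is a warm-up fact that serves to fix normalization, and the only thing to be careful about is the joint requirement that $\mu$ both be orthogonal to~$1$ and satisfy the pointwise lower bound, which is exactly what the equal-mass-on-fibers construction achieves.
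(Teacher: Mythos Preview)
Your proof is correct and follows essentially the same approach as the paper: the paper defines $\mu=\frac{1}{2}\mu_0+\frac{1}{2}\mu_1$ with $\mu_i$ uniform on $f^{-1}(i)$, which is exactly your distribution, and verifies the same two properties.
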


\begin{proof}
Define $\mu=\frac{1}{2}\mu_{0}+\frac{1}{2}\mu_{1},$ where $\mu_{i}$
is the uniform probability distribution on $f^{-1}(i).$ Then clearly
$\orth((-1)^{f}\cdot\mu)\geq1$ and $\mu\geq\frac{1}{2}\max\{\mu_{0},\mu_{1}\}\geq\frac{1}{2|X|}$
on $X.$
\end{proof}
\noindent As one might expect, padding a Boolean function with irrelevant
variables does not decrease its smooth threshold degree. We record
this observation below.
\begin{prop}
\label{prop:smooth-degthr-padding}Fix integers $N\geq n\geq1$ and
a function $f\colon\zoon\to\zoo$. Define $F\colon\zoo^{N}\to\zoo$
by $F(x_{1},x_{2},\ldots,x_{N})=f(x_{1},x_{2},\ldots,x_{n}).$ Then
\begin{align*}
\degthr(F,\gamma) & \geq\degthr(f,\gamma), &  & 0\leq\gamma\leq1.
\end{align*}
In particular,
\[
\degthr(F)\geq\degthr(f).
\]
\end{prop}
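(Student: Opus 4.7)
The plan is to use the dual characterization of smooth threshold degree in equation~(\ref{eq:dual-degthr-min-smooth}), and build a witness for $F$ by tensoring a witness for $f$ with the uniform distribution on the extra $N-n$ coordinates.

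Concretely, fix $0 \leq \gamma \leq 1$ and let $d = \degthr(f, \gamma)$. By~(\ref{eq:dual-degthr-min-smooth}), there exists $\mu \in \Distribution(\zoo^n)$ with $\mu \geq \gamma/2^n$ on $\zoo^n$ and $\orth((-1)^f \cdot \mu) \geq d$. Let $\nu \in \Distribution(\zoo^{N-n})$ be the uniform distribution, so that $\nu \equiv 1/2^{N-n}$. Define
\[
\tilde{\mu}(x_1,\ldots,x_N) = \mu(x_1,\ldots,x_n) \cdot \nu(x_{n+1},\ldots,x_N).
\]
Then $\tilde{\mu}$ is a probability distribution on $\zoo^N$, and pointwise
\[
\tilde{\mu} \geq \frac{\gamma}{2^n} \cdot \frac{1}{2^{N-n}} = \frac{\gamma}{2^N} = \frac{\gamma}{|\zoo^N|}.
\]

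Since $F$ depends only on the first $n$ coordinates, we have $(-1)^{F} \cdot \tilde{\mu} = \bigl((-1)^{f} \cdot \mu\bigr) \otimes \nu$. Because $\nu$ is a constant positive function with $\langle \nu, 1 \rangle = 1 \neq 0$, we have $\orth(\nu) = 0$. Applying Proposition~\ref{prop:orth}\ref{item:orth-tensor} gives
\[
\orth\bigl((-1)^{F} \cdot \tilde{\mu}\bigr) = \orth\bigl((-1)^{f} \cdot \mu\bigr) + \orth(\nu) \geq d.
\]
Thus $\tilde{\mu}$ is a feasible point in the maximization defining $\degthr(F, \gamma)$, achieving objective value at least $d$, which proves $\degthr(F, \gamma) \geq \degthr(f, \gamma)$. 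The ``in particular'' claim follows immediately by specializing to $\gamma = 0$, since $\degthr(\cdot) = \degthr(\cdot, 0)$ from the dual characterization in Fact~\ref{fact:dual-thrdeg-distribution-view}.

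There is no real obstacle here — the argument is essentially bookkeeping in the dual formulation. The only minor point to verify is that the tensored distribution $\tilde{\mu}$ satisfies the smoothness constraint with the correct constant $\gamma/|\zoo^N|$, which works out cleanly because uniform padding contributes exactly the factor $1/2^{N-n}$ needed to convert $\gamma/2^n$ into $\gamma/2^N$.
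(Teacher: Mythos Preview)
Your proof is correct and essentially identical to the paper's: the paper defines $\Lambda(x_{1},\ldots,x_{N})=2^{-N+n}\lambda(x_{1},\ldots,x_{n})$, which is precisely your $\mu\otimes\nu$, and then asserts $\orth((-1)^{F}\cdot\Lambda)=\orth((-1)^{f}\cdot\lambda)$ directly. Your explicit invocation of Proposition~\ref{prop:orth}\ref{item:orth-tensor} to justify that equality is a welcome bit of extra rigor.
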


\begin{proof}
Fix $0\leq\gamma\leq1$ arbitrarily. Let $\lambda$ be a probability
distribution on $\zoon$ such that $\lambda(x)\geq\gamma2^{-n}$ for
all $x\in\zoon$, and in addition $\orth((-1)^{f}\cdot\lambda)=\degthr(f,\gamma).$
Consider the probability distribution $\Lambda$ on $\zoo^{N}$ given
by $\Lambda(x_{1},x_{2},\ldots,x_{N})=2^{-N+n}\lambda(x_{1},x_{2},\ldots,x_{n}).$
Then $\Lambda(x)\geq2^{-N+n}\cdot\gamma2^{-n}=\gamma2^{-N}$ for all
$x\in\zoo^{N}$, and in addition $\orth((-1)^{F}\cdot\Lambda)=\orth((-1)^{f}\cdot\lambda)=\degthr(f,\gamma).$
\end{proof}

\subsection{Symmetrization}

Let $S_{n}$ denote the symmetric group on $n$ elements. For a permutation
$\sigma\in S_{n}$ and an arbitrary sequence $x=(x_{1},x_{2},\ldots,x_{n}),$
we adopt the shorthand $\sigma x=(x_{\sigma(1)},x_{\sigma(2)},\ldots,x_{\sigma(n)}).$
A function $f(x_{1},x_{2},\ldots,x_{n})$ is called \emph{symmetric}
if it is invariant under permutation of the input variables: $f(x_{1},x_{2},\ldots,x_{n})=f(x_{\sigma(1)},x_{\sigma(2)},\ldots,x_{\sigma(n)})$
for all $x$ and $\sigma.$ Symmetric functions on $\zoon$ are intimately
related to univariate polynomials, as was first observed by Minsky
and Papert in their \emph{symmetrization argument}~\cite{minsky88perceptrons}.
\begin{prop}[Minsky and Papert]
\label{prop:minsky-papert}Let $p\colon\Re^{n}\to\Re$ be a given
polynomial. Then the mapping
\[
t\mapsto\Exp_{\substack{x\in\zoon|_{t}}
}\;p(x)
\]
is a univariate polynomial on $\{0,1,2,\ldots,n\}$ of degree at most
$\deg p.$
\end{prop}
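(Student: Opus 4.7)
The plan is to reduce to the monomial case and then give a closed-form polynomial expression for the symmetrization of each monomial. By linearity of expectation, if we can show that for every monomial $m(x) = \prod_{i \in S} x_i$ with $|S| \leq \deg p$ the mapping $t \mapsto \mathbb{E}_{x \in \zoo^n|_t}\, m(x)$ is a polynomial in $t$ of degree at most $|S|$, then the proposition follows immediately upon summing over the (at most $\deg p$) monomials appearing in $p$ with their coefficients and invoking Proposition~\ref{prop:orth}\ref{item:orth-sum}-style closure of polynomials of bounded degree under linear combinations.

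So I fix a monomial $m(x) = \prod_{i \in S} x_i$ and compute the expectation. Since $x \in \zoo^n$, the monomial $m(x)$ is the indicator that $x_i = 1$ for every $i \in S$. Therefore
\[
\Exp_{x \in \zoon|_t}\, m(x) \;=\; \Pr_{x \in \zoon|_t}[\,x_i=1 \text{ for all } i \in S\,] \;=\; \frac{\binom{n-|S|}{t-|S|}}{\binom{n}{t}},
\]
with the convention that the numerator is zero when $t < |S|$. A direct cancellation rewrites this as
\[
\frac{\binom{n-|S|}{t-|S|}}{\binom{n}{t}} \;=\; \frac{t(t-1)(t-2)\cdots(t-|S|+1)}{n(n-1)(n-2)\cdots(n-|S|+1)},
\]
valid for all $t \in \{0,1,2,\ldots,n\}$: when $t < |S|$, the numerator vanishes because one of the factors $t-j$ equals $0$, matching the convention above. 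The denominator is a nonzero constant (independent of $t$), and the numerator is visibly a polynomial in $t$ of degree exactly $|S|$. Hence the symmetrization of $m$ is a polynomial in $t$ of degree $|S| \leq \deg p$.

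Finally, combining the monomials: write $p = \sum_{S} c_S \prod_{i \in S} x_i$ with the sum over $S$ of size at most $\deg p$. Then
\[
\Exp_{x \in \zoon|_t}\, p(x) \;=\; \sum_S c_S \cdot \frac{t(t-1)\cdots(t-|S|+1)}{n(n-1)\cdots(n-|S|+1)},
\]
which is a sum of polynomials in $t$ each of degree at most $\deg p$, and is therefore itself a polynomial of degree at most $\deg p$ on $\{0,1,\ldots,n\}$.

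There is no real obstacle here; the only point requiring care is making sure the formula $\binom{n-|S|}{t-|S|}/\binom{n}{t}$ is interpreted correctly (i.e., zero) when $t<|S|$, which is handled automatically by the falling-factorial rewriting. The proof is thus a short combinatorial identity followed by an appeal to linearity.
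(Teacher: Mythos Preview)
Your proof is correct and is the standard Minsky--Papert symmetrization argument. The paper itself does not supply a proof of this proposition; it states the result as classical and attributes it to Minsky and Papert~\cite{minsky88perceptrons}, so there is nothing to compare against beyond noting that your approach is the expected one.

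One small point worth making explicit: you silently pass from an arbitrary polynomial $p\colon\Re^n\to\Re$ to a multilinear representation $p=\sum_S c_S\prod_{i\in S}x_i$. This is valid because the expectation is taken over $x\in\zoon$, where $x_i^k=x_i$ for $k\ge 1$, so $p$ agrees on the hypercube with a multilinear polynomial of degree at most $\deg p$. Stating this one-line reduction up front would remove any ambiguity. Also, the reference to ``Proposition~\ref{prop:orth}\ref{item:orth-sum}-style closure'' is unnecessary and slightly misplaced (that proposition concerns orthogonal content); closure of degree-$\le d$ polynomials under linear combinations needs no citation.
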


\noindent Minsky and Papert's result generalizes to block-symmetric
functions:
\begin{prop}
\label{prop:symmetrization} Let $n_{1},\dots,n_{k}$ be positive
integers. Let $p\colon\Re^{n_{1}}\times\cdots\times\Re^{n_{k}}\to\Re$
be a given polynomial. Then the mapping
\begin{align*}
(t_{1},t_{2},\ldots,t_{k})\mapsto\Exp_{\substack{x_{1}\in\zoo^{n_{1}}|_{t_{1}}}
}\Exp_{\substack{x_{2}\in\zoo^{n_{2}}|_{t_{2}}}
}\cdots\Exp_{\substack{x_{k}\in\zoo^{n_{k}}|_{t_{k}}}
}\;p(x_{1},x_{2},\ldots,x_{k})
\end{align*}
 is a polynomial on $\{0,1,\ldots,n_{1}\}\times\{0,1,\ldots,n_{2}\}\times\cdots\times\{0,1,\ldots,n_{k}\}$
of degree at most $\deg p.$
\end{prop}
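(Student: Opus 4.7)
The plan is to reduce the statement to the univariate Minsky--Papert symmetrization lemma (Proposition~\ref{prop:minsky-papert}) by exploiting the product structure of both the domain and the nested expectation operator. First, by linearity of expectation, it suffices to prove the claim when $p$ is a single monomial. Any monomial on $\Re^{n_{1}}\times\cdots\times\Re^{n_{k}}$ factors uniquely (up to reordering) as
\begin{equation*}
p(x_{1},\ldots,x_{k}) \;=\; c\prod_{i=1}^{k} q_{i}(x_{i}),
\end{equation*}
where $q_{i}$ is a monomial in the variables of the $i$-th block alone and $\sum_{i=1}^{k}\deg q_{i}=\deg p$.

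Because the $k$ nested expectations range over slices $\zoo^{n_{i}}|_{t_{i}}$ of pairwise disjoint sets of variables, they commute with the product across blocks, yielding
\begin{equation*}
\Exp_{x_{1}\in\zoo^{n_{1}}|_{t_{1}}}\cdots\Exp_{x_{k}\in\zoo^{n_{k}}|_{t_{k}}} c\prod_{i=1}^{k}q_{i}(x_{i}) \;=\; c\prod_{i=1}^{k}\,\Exp_{x_{i}\in\zoo^{n_{i}}|_{t_{i}}}q_{i}(x_{i}).
\end{equation*}
By Proposition~\ref{prop:minsky-papert} applied to each factor separately, the mapping $t_{i}\mapsto \Exp_{x_{i}\in\zoo^{n_{i}}|_{t_{i}}} q_{i}(x_{i})$ is a univariate polynomial on $\{0,1,\ldots,n_{i}\}$ of degree at most $\deg q_{i}$. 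Multiplying these univariate polynomials (each in its own variable $t_{i}$) produces a polynomial on $\{0,1,\ldots,n_{1}\}\times\cdots\times\{0,1,\ldots,n_{k}\}$ whose total degree is at most $\sum_{i=1}^{k}\deg q_{i}=\deg p$.

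Reassembling the original polynomial $p$ as a linear combination of its monomials, and observing that a linear combination of polynomials of degree $\leq\deg p$ is again a polynomial of degree $\leq\deg p$, we conclude that the symmetrized mapping has total degree at most $\deg p$, as claimed. There is no real obstacle in this argument; the substance lies entirely in Proposition~\ref{prop:minsky-papert}, and the only sanity check needed is that each monomial factors cleanly into block-monomials with additive degrees so that the block-wise symmetrizations can be multiplied and their degrees summed.
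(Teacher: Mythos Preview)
Your proof is correct and follows essentially the same approach as the paper. The paper does not spell out a proof but remarks that Proposition~\ref{prop:symmetrization} follows from Proposition~\ref{prop:minsky-papert} by induction on the number of blocks~$k$; your monomial-factorization argument is just an unrolled version of that induction, applying Minsky--Papert blockwise and summing the degrees.
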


\noindent Proposition~\ref{prop:symmetrization} follows in a straightforward
manner from Proposition~\ref{prop:minsky-papert} by induction on
the number of blocks $k$, as pointed out in~\cite[Proposition~2.3]{RS07dc-dnf}.
The next result is yet another generalization of Minsky and Papert's
symmetrization technique, this time to the setting when $x_{1},x_{2},\ldots,x_{n}$
are vectors rather than bits.
\begin{prop}
\label{prop:ambainis-symmetrization}Let $p\colon(\Re^{m})^{n}\to\Re$
be a polynomial of degree $d.$ Then there is a polynomial $p^{*}\colon\Re^{n}\to\Re$
of degree at most $d$ such that for all $x_{1},x_{2},\ldots,x_{n}\in\{e_{1},e_{2},\ldots,e_{m},0^{m}\},$
\begin{align*}
\Exp_{\sigma\in S_{n}}p(x_{\sigma(1)},x_{\sigma(2)},\ldots,x_{\sigma(n)})=p^{*}(x_{1}+x_{2}+\cdots+x_{n}).
\end{align*}
\end{prop}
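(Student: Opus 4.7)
The plan is to reduce to multilinear monomials of a special form and then compute the expectation over $S_n$ explicitly, obtaining a falling-factorial polynomial in the coordinates of the sum. Because every $x_{i,j}$ takes values in $\{0,1\}$ on the given domain, we may first multilinearize $p$ in each variable without increasing its total degree. Moreover, since each $x_i$ has at most one nonzero entry, any multilinear monomial that contains two distinct variables $x_{i,j}$ and $x_{i,j'}$ with $j \ne j'$ vanishes on the domain and can be discarded. By linearity of expectation, it therefore suffices to prove the claim for a single monomial of the form
\[
q(x_1,\ldots,x_n) \;=\; \prod_{i \in S} x_{i,j_i},
\]
where $S \subseteq \{1,\ldots,n\}$ has size at most $d$ and $j_i \in \{1,\ldots,m\}$ for each $i \in S$.

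Next, group the indices of $S$ by the value of $j_i$: set $A_\ell = \{i \in S : j_i = \ell\}$ and $c_\ell = |A_\ell|$, so that $\sum_\ell c_\ell = |S|$. Let $t_\ell = \sum_{i=1}^n x_{i,\ell}$, which counts how many of the inputs equal $e_\ell$, and note that $(t_1,\ldots,t_m)$ is exactly the vector $x_1+\cdots+x_n$. The product $\prod_{i \in S} x_{\sigma(i),j_i}$ equals $1$ precisely when, for every $\ell$, $\sigma$ maps $A_\ell$ into the set $B_\ell = \{j : x_j = e_\ell\}$ of size $t_\ell$; since the $A_\ell$'s are disjoint and the $B_\ell$'s are disjoint, the number of such permutations is $\prod_\ell t_\ell(t_\ell-1)\cdots(t_\ell-c_\ell+1) \cdot (n-|S|)!$. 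Dividing by $n!$ gives
\[
\Exp_{\sigma \in S_n} q(x_{\sigma(1)},\ldots,x_{\sigma(n)})
\;=\; \frac{(n-|S|)!}{n!}\prod_{\ell=1}^{m} t_\ell(t_\ell-1)\cdots(t_\ell-c_\ell+1).
\]

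Finally, the right-hand side is visibly a polynomial in $(t_1,\ldots,t_m)$ of degree $\sum_\ell c_\ell = |S| \le \deg q$. Since $(t_1,\ldots,t_m) = x_1 + \cdots + x_n$, this exhibits a polynomial $q^*$ of degree at most $\deg q$ with $\Exp_\sigma q(x_{\sigma(1)},\ldots,x_{\sigma(n)}) = q^*(x_1+\cdots+x_n)$, and summing over the monomials of the multilinearized $p$ produces the desired $p^*$. There is no real obstacle: the argument reduces to the observation that each $x_i$ is essentially a categorical indicator, so symmetrizing a monomial only records which multiset of values appears, which is encoded losslessly by the coordinate-sum $x_1+\cdots+x_n$; the mild bookkeeping step is recognizing the resulting count as a product of falling factorials and hence a polynomial of the right degree.
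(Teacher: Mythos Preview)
Your proposal is correct and follows essentially the same approach as the paper's proof: both reduce to multilinear monomials, discard those containing $x_{i,j}x_{i,j'}$ with $j\ne j'$, and then compute the symmetrized expectation of the remaining monomials as a product indexed by the color classes $A_\ell$. The only cosmetic difference is that you count permutations directly and arrive at falling factorials $\prod_\ell t_\ell(t_\ell-1)\cdots(t_\ell-c_\ell+1)\cdot(n-|S|)!/n!$, whereas the paper phrases the same count via a chain of conditional expectations and binomial coefficients; the two expressions are identical after expanding the binomials.
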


\begin{proof}
We closely follow an argument due to Ambainis~\cite[Lemma~3.4]{ambainis05collision},
who proved a related result. Since the components of $x_{1},x_{2},\ldots,x_{n}$
are Boolean-valued, we have $x_{i,j}=x_{i,j}^{2}=x_{i,j}^{3}=\cdots$
and therefore we may assume that $p$ is multilinear. By linearity,
it further suffices to consider the case when $p$ is a single monomial:
\begin{equation}
p(x_{1},x_{2},\ldots,x_{n})=\prod_{j=1}^{m}\prod_{i\in A_{j}}x_{i,j}\label{eq:ambainis-p-defined}
\end{equation}
for some sets $A_{1},A_{2},\ldots,A_{m}\subseteq\{1,2,\ldots,n\}$
with $\sum_{j=1}^{m}|A_{j}|\leq d.$ If some pair of sets $A_{j},A_{j'}$
with $j\ne j'$ have nonempty intersection, then the right-hand side
of~(\ref{eq:ambainis-p-defined}) contains a product of the form
$x_{i,j}x_{i,j'}$ for some $i$ and thus $p\equiv0$ on the domain
in question. As a result, the proposition holds with $p^{*}=0.$ In
the complementary case when $A_{1},A_{2},\ldots,A_{m}$ are pairwise
disjoint, we calculate
\begin{align*}
\Exp_{\sigma\in S_{n}} & p(x_{\sigma(1)},x_{\sigma(2)},\ldots,x_{\sigma(n)})\\
 & =\prod_{j=1}^{m}\Exp_{\sigma\in S_{n}}\left[\prod_{i\in A_{j}}x_{\sigma(i),j}\;\middle|\;\prod_{i\in A_{j'}}x_{\sigma(i),j'}=1\text{ for all }j'<j\right]\\
 & =\prod_{j=1}^{m}\binom{x_{1,j}+x_{2,j}+\cdots+x_{n,j}}{|A_{j}|}\binom{n-|A_{1}|-|A_{2}|-\cdots-|A_{j-1}|}{|A_{j}|}^{-1}.
\end{align*}
Expanding out the binomial coefficients shows that the final expression
is an $m$-variate polynomial whose argument is the vector sum $x_{1}+x_{2}+\cdots+x_{n}\in\Re^{m}$.
Moreover, the degree of this polynomial is $\sum|A_{j}|\leq d.$
\end{proof}
\begin{cor}
\label{cor:ambainis-symmetrization}Let $p\colon(\Re^{m})^{n}\to\Re$
be a given polynomial. Then the mapping
\begin{equation}
v\mapsto\Exp_{\substack{x\in\{0^{m},e_{1},e_{2},\ldots,e_{m}\}^{n}:\\
x_{1}+x_{2}+\cdots+x_{n}=v
}
}\;\;p(x)\label{eq:ambainis-mapping}
\end{equation}
is a polynomial on $\NN^{m}|_{\leq n}$ of degree at most $\deg p.$
\end{cor}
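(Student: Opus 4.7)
The plan is to reduce the corollary directly to Proposition~\ref{prop:ambainis-symmetrization} by recognizing that the averaging over $x$ with a prescribed sum is, up to stabilizer counting, the same as averaging a fixed tuple over all permutations.

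First, I would fix $v \in \NN^{m}|_{\leq n}$ and look at the set $X_v = \{x \in \{0^m,e_1,\ldots,e_m\}^n : x_1+x_2+\cdots+x_n = v\}$. The key observation is that $X_v$ is a single orbit under the natural action of $S_n$ on tuples by coordinate permutation: a tuple in $X_v$ is determined (as a multiset of vectors) by the multiplicities of the vectors $e_1,\ldots,e_m,0^m$, and these multiplicities are exactly $v_1,\ldots,v_m,n-|v|$, which are pinned down by $v$. Hence any two elements of $X_v$ differ by a permutation of coordinates.

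Next, I would pick any fixed representative $x^{0} \in X_v$ and use the orbit-stabilizer identity for uniform averaging:
\[
\Exp_{x \in X_v} p(x) \;=\; \Exp_{\sigma \in S_n} p\bigl(x^{0}_{\sigma(1)},x^{0}_{\sigma(2)},\ldots,x^{0}_{\sigma(n)}\bigr).
\]
Since $x^{0}_1,\ldots,x^{0}_n \in \{0^m,e_1,\ldots,e_m\}$, Proposition~\ref{prop:ambainis-symmetrization} applies and yields a polynomial $p^{*}$ of degree at most $\deg p$ such that the right-hand side equals $p^{*}(x^{0}_1+\cdots+x^{0}_n) = p^{*}(v)$. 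Therefore the mapping in~(\ref{eq:ambainis-mapping}) coincides with $p^{*}$ on $\NN^{m}|_{\leq n}$, and the degree bound follows immediately.

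There is no real obstacle here: the only thing to check carefully is the single-orbit claim (trivial from the multiset description) and the reduction of the uniform average over $X_v$ to a uniform average over $S_n$ acting on $x^{0}$, which is standard from $|S_n|/|\mathrm{Stab}(x^{0})| = |X_v|$. The corollary is essentially a restatement of Proposition~\ref{prop:ambainis-symmetrization} in the language of fiber averages.
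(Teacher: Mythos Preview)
Your proof is correct and follows essentially the same approach as the paper: both observe that the fiber $X_v$ is a single $S_n$-orbit (the paper phrases this as ``all representations $v=x_1+\cdots+x_n$ are the same up to the order of the summands''), rewrite the fiber average as $\Exp_{\sigma\in S_n} p$ applied to a fixed representative, and then invoke Proposition~\ref{prop:ambainis-symmetrization}. The only cosmetic difference is that the paper writes out the explicit representative $(\underbrace{e_1,\ldots,e_1}_{v_1},\ldots,\underbrace{0^m,\ldots,0^m}_{n-|v|})$ rather than an abstract $x^0$.
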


\noindent Minsky and Papert's symmetrization corresponds to $m=1$
in Corollary~\ref{cor:ambainis-symmetrization}.
\begin{proof}[Proof of Corollary~\emph{\ref{cor:ambainis-symmetrization}}.]
 Let $v\in\NN^{m}|_{\leq n}$ be given. Then all representations
$v=x_{1}+x_{2}+\cdots+x_{n}$ with $x_{1},x_{2},\ldots,x_{n}\in\{0^{m},e_{1},e_{2},\ldots,e_{m}\}$
are the same up to the order of the summands. As a result,~(\ref{eq:ambainis-mapping})
is the same mapping as
\[
v\mapsto\Exp_{\sigma\in S_{n}}p(\sigma(\underbrace{e_{1},\ldots,e_{1}}_{v_{1}},\underbrace{e_{2},\ldots,e_{2}}_{v_{2}},\ldots,\underbrace{e_{m},\ldots,e_{m}}_{v_{m}},\underbrace{0^{m},0^{m}\ldots,0^{m}}_{n-v_{1}-\cdots-v_{m}})),
\]
which by Proposition~\ref{prop:ambainis-symmetrization} is a polynomial
in
\[
\underbrace{e_{1}+\cdots+e_{1}}_{v_{1}}+\underbrace{e_{2}+\cdots+e_{2}}_{v_{2}}+\cdots+\underbrace{e_{m}+\cdots+e_{m}}_{v_{m}}+\underbrace{0^{m}+\cdots+0^{m}}_{n-v_{1}-\cdots-v_{m}}=v
\]
of degree at most $\deg p.$ 
\end{proof}
It will be helpful to define symmetric versions of basic Boolean functions.
We define $\AND_{n}^{*},\OR_{n}^{*}\colon\{0,1,2,\ldots,n\}\to\zoo$
by
\[
\AND_{n}^{*}(t)=\begin{cases}
1 & \text{if }t=n,\\
0 & \text{otherwise,}
\end{cases}\qquad\qquad\qquad\OR_{n}^{*}(t)=\begin{cases}
0 & \text{if }t=0,\\
1 & \text{otherwise.}
\end{cases}
\]
The symmetric variant of the Minsky\textendash Papert function is
$\MP_{m,r}^{*}=\AND_{m}\circ\OR_{r}^{*}.$

\subsection{Communication complexity}

An excellent reference on communication complexity is the monograph
by Kushilevitz and Nisan~\cite{ccbook}. In this overview, we will
limit ourselves to key definitions and notation. We adopt the standard
randomized model of multiparty communication, due to Chandra et al.~\cite{cfl83multiparty}.
The model features $\ell$ communicating players, tasked with computing
a Boolean function $F\colon X_{1}\times X_{2}\times\cdots\times X_{\ell}\to\zoo$
for some finite sets $X_{1},X_{2},\dots,X_{\ell}.$ A given input
$(x_{1},x_{2},\dots,x_{\ell})\in X_{1}\times X_{2}\times\cdots\times X_{\ell}$
is distributed among the players by placing $x_{i}$, figuratively
speaking, on the forehead of the $i$th player (for $i=1,2,\dots,\ell$).
In other words, the $i$th player knows the arguments $x_{1},\dots,x_{i-1},x_{i+1},\dots,x_{\ell}$
but not $x_{i}.$ The players communicate by sending broadcast messages,
taking turns according to a protocol agreed upon in advance. Each
of them privately holds an unlimited supply of uniformly random bits,
which he can use along with his available arguments when deciding
what message to send at any given point in the protocol. The players'
objective is to compute $F(x_{1},x_{2},\ldots,x_{\ell})$. An \emph{$\epsilon$-error
protocol} for $F$ is one which, on every input $(x_{1},x_{2},\dots,x_{\ell}),$
produces the correct answer $F(x_{1},x_{2},\dots,x_{\ell})$ with
probability at least $1-\epsilon.$ The \emph{cost} of a protocol
is the total bit length of the messages broadcast by all the players
in the worst case.\footnote{~The contribution of a $b$-bit broadcast to the protocol cost is
$b$ rather than $\ell\cdot b$.} The \emph{$\epsilon$-error randomized communication complexity}
of $F,$ denoted $R_{\epsilon}(F),$ is the least cost of an $\epsilon$-error
randomized protocol for $F$. As a special case of this model for
$\ell=2,$ one recovers the original two-party model of Yao~\cite{yao79cc}
reviewed in the introduction.

Our work focuses on randomized protocols with error probability close
to that of random guessing, $1/2.$ There are two natural ways to
define the communication complexity of a multiparty problem $F$ in
this setting. The \emph{communication complexity of $F$ with unbounded
error}, introduced by Paturi and Simon~\cite{paturi86cc}, is the
quantity 
\begin{equation}
\upp(F)=\inf_{0\leq\epsilon<1/2}R_{\epsilon}(F).\label{eq:upp-def}
\end{equation}
Here, the error is unbounded in the sense that it can be arbitrarily
close to $1/2.$ Babai et al.~\cite{BFS86cc} proposed an alternate
quantity, which includes an additive penalty term that depends on
the error probability: 
\begin{equation}
\pp(F)=\inf_{0\leq\epsilon<1/2}\left\{ R_{\epsilon}(F)+\log\frac{1}{\frac{1}{2}-\epsilon}\right\} .\label{eq:pp-def}
\end{equation}
This quantity is known as the \emph{communication complexity of $F$
with weakly unbounded error.}

\subsection{\label{subsec:Discrepancy-and-sign-rank}Discrepancy and sign-rank}

An \emph{$\ell$-dimensional cylinder intersection} is a function
$\chi\colon X_{1}\times X_{2}\times\cdots\times X_{\ell}\to\zoo$
of the form 
\begin{align*}
\chi(x_{1},x_{2},\dots,x_{\ell})=\prod_{i=1}^{\ell}\chi_{i}(x_{1},\dots,x_{i-1},x_{i+1},\dots,x_{\ell}),
\end{align*}
where $\chi_{i}\colon X_{1}\times\cdots\times X_{i-1}\times X_{i+1}\times\cdots\times X_{\ell}\to\zoo.$
In other words, an $\ell$-dimensional cylinder intersection is the
product of $\ell$ functions with range $\zoo,$ where the $i$th
function does not depend on the $i$th coordinate but may depend arbitrarily
on the other $\ell-1$ coordinates. Introduced by Babai et al.~\cite{bns92},
cylinder intersections are the fundamental building blocks of communication
protocols and for that reason play a central role in the theory. For
a Boolean function $F\colon X_{1}\times X_{2}\times\cdots\times X_{\ell}\to\zoo$
and a probability distribution $P$ on $X_{1}\times X_{2}\times\cdots\times X_{\ell},$
the \emph{discrepancy} \emph{of $F$ with respect to $P$} is given
by 
\begin{align*}
\disc_{P}(F)=\max_{\chi}\left|\sum_{x\in X_{1}\times X_{2}\times\cdots\times X_{\ell}}(-1)^{F(x)}P(x)\chi(x)\right|,
\end{align*}
where the maximum is over cylinder intersections $\chi$. The minimum
discrepancy over all distributions is denoted 
\[
\disc(F)=\min_{P}\;\disc_{P}(F).
\]
The \emph{discrepancy method}~\cite{chor-goldreich88ip,bns92,ccbook}
is a classic technique that bounds randomized communication complexity
from below in terms of discrepancy.
\begin{thm}[Discrepancy method]
\label{thm:dm} Let $F\colon X_{1}\times X_{2}\times\cdots\times X_{\ell}\to\zoo$
be an $\ell$-party communication problem. Then 
\begin{align*}
2^{R_{\epsilon}(F)}\geq\frac{1-2\epsilon}{\disc(F)}.
\end{align*}
\end{thm}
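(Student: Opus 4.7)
The plan is to prove the bound by the standard three-step argument: first reduce randomized protocols to deterministic ones via averaging; second show that the leaves of a deterministic number-on-the-forehead protocol partition the input space into cylinder intersections; and third apply the triangle inequality to bound the protocol's bias by the number of leaves times the discrepancy.

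First I would fix an arbitrary probability distribution $P$ on $X_{1}\times\cdots\times X_{\ell}$ together with an $\epsilon$-error randomized protocol $\Pi$ for $F$ of cost $c=R_{\epsilon}(F)$. Since $\Pi$ errs with probability at most $\epsilon$ on every input, the same bound holds in expectation under $P$, so averaging over the random coins yields a deterministic protocol $\Pi'$ of cost at most $c$ whose error probability under $P$ is still at most $\epsilon$. Writing $\Pi'(x)\in\{0,1\}$ for its output, this translates into the bias lower bound
\[
\left|\sum_{x}(-1)^{F(x)\oplus\Pi'(x)}\,P(x)\right|\;\geq\;(1-\epsilon)-\epsilon\;=\;1-2\epsilon.
\]

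Next I would invoke the key structural fact about the NOF model: for any deterministic protocol, the set of inputs reaching any fixed leaf is a cylinder intersection. This is a standard induction on the depth of the protocol tree. At each internal node where it is player $i$'s turn to speak, the broadcast bit depends only on $(x_{1},\ldots,x_{i-1},x_{i+1},\ldots,x_{\ell})$ together with the transcript so far, so the partition induced at that node cuts the current set of reaching inputs along a cylinder in the $i$th direction. Following the root-to-leaf path and intersecting, one accumulates at most one constraint in each of the $\ell$ coordinate directions, whose conjunction is a cylinder intersection $\chi_{L}$. Since $\Pi'$ has cost at most $c$, its tree has at most $2^{c}$ leaves, each labeled with an output $v(L)\in\{0,1\}$.

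Finally, decomposing the bias over leaves and applying the triangle inequality,
\[
\left|\sum_{x}(-1)^{F(x)\oplus\Pi'(x)}P(x)\right|
=\left|\sum_{L}(-1)^{v(L)}\sum_{x}\chi_{L}(x)\,(-1)^{F(x)}P(x)\right|
\leq\sum_{L}\disc_{P}(F)\leq 2^{c}\,\disc_{P}(F),
\]
where the middle inequality is just the definition of $\disc_{P}(F)$ applied to each $\chi_{L}$. Combined with the $1-2\epsilon$ lower bound from the first step, this yields $2^{R_{\epsilon}(F)}\geq(1-2\epsilon)/\disc_{P}(F)$ for every $P$, and minimizing the right-hand side over $P$ gives the theorem. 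The only step requiring genuine care is the structural claim that the leaves of a deterministic NOF protocol form cylinder intersections; everything else is bookkeeping and the triangle inequality.
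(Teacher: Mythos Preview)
The paper does not prove this theorem; it is stated as the classical discrepancy method with references to Chor--Goldreich, Babai--Nisan--Szegedy, and Kushilevitz--Nisan, and used as a black box. Your argument is exactly the standard proof found in those references, so there is nothing to compare.

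Two minor phrasing points, neither a real gap. First, in the last line you write ``minimizing the right-hand side over $P$,'' but since $\disc(F)=\min_{P}\disc_{P}(F)$ you actually want to \emph{maximize} $(1-2\epsilon)/\disc_{P}(F)$ over $P$; equivalently, pick the $P$ that achieves the minimum discrepancy. Second, the claim that along a root-to-leaf path one accumulates ``at most one constraint in each of the $\ell$ coordinate directions'' is not literally true, since player $i$ may speak several times; what is true is that the intersection of several $i$-cylinders is again an $i$-cylinder, so the leaf set is still a cylinder intersection. With these cosmetic fixes the proof is complete.
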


\noindent Combining this theorem with the definition of $\pp(F)$
gives the following corollary.
\begin{cor}
\label{cor:dm}Let $F\colon X_{1}\times X_{2}\times\cdots\times X_{\ell}\to\zoo$
be an $\ell$-party communication problem. Then
\[
\pp(F)\geq\log\frac{2}{\disc(F)}.
\]
 
\end{cor}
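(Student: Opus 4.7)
The plan is to combine Theorem~\ref{thm:dm} with the definition of $\pp(F)$ from equation~\eqref{eq:pp-def} via a direct algebraic manipulation. Fix any $\epsilon \in [0,1/2)$. By Theorem~\ref{thm:dm},
\[
2^{R_{\epsilon}(F)} \;\geq\; \frac{1-2\epsilon}{\disc(F)},
\]
so taking $\log_2$ of both sides yields $R_{\epsilon}(F) \geq \log(1-2\epsilon) - \log \disc(F)$.

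The key observation is the identity
\[
\frac{1-2\epsilon}{\tfrac{1}{2}-\epsilon} \;=\; 2,
\]
which is exactly what is needed to absorb the additive penalty term $\log\frac{1}{1/2-\epsilon}$ appearing in the definition of $\pp(F)$. Concretely, adding this term to both sides of the inequality above gives
\[
R_{\epsilon}(F) + \log\frac{1}{\tfrac{1}{2}-\epsilon} \;\geq\; \log(1-2\epsilon) + \log\frac{1}{\tfrac{1}{2}-\epsilon} + \log\frac{1}{\disc(F)} \;=\; \log\frac{2}{\disc(F)}.
\]
Since this holds for every $\epsilon \in [0,1/2)$, taking the infimum of the left-hand side over $\epsilon$ yields $\pp(F) \geq \log(2/\disc(F))$, which is the desired conclusion.

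There is no real obstacle here; the argument is purely definitional once Theorem~\ref{thm:dm} is in hand. The only ``clever'' step is noticing that the numerator $1-2\epsilon$ from the discrepancy method and the denominator $\tfrac{1}{2}-\epsilon$ from the definition of $\pp$ conspire to produce a factor of exactly $2$, which makes the bound uniform in $\epsilon$ and permits the infimum to be taken without any loss.
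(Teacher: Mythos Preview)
Your proof is correct and is precisely the argument the paper has in mind; the paper does not even spell it out, merely stating that the corollary follows by combining Theorem~\ref{thm:dm} with the definition~\eqref{eq:pp-def} of $\pp(F)$.
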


The \emph{sign-rank} of a real matrix $A\in\Re^{n\times m}$ with
nonzero entries is the least rank of a matrix $B\in\Re^{n\times m}$
such that $\sign A_{i,j}=\sign B_{i,j}$ for all $i,j$. In general,
the sign-rank of a matrix can be vastly smaller than its rank. For
example, consider the following nonsingular matrices of order $n\geq3$:\[
\begin{bmatrix}
\begin{aligned}
\begin{matrix}
1\\
&1\\
&&1\\
\end{matrix}
\end{aligned} & \scalebox{1.7}{$1$}\\
\begin{matrix}
\\
\scalebox{1.7}{$-1$~}
\end{matrix} & \begin{aligned}
\begin{matrix}
\ddots\\
&1\\
&&1\\
\end{matrix}
\end{aligned}
\end{bmatrix}, \qquad
\begin{bmatrix}
\begin{aligned}
\begin{matrix}
1\\
&1\\
&&1\\
\end{matrix}
\end{aligned} & \scalebox{1.7}{$-1$}\\
\begin{matrix}
\\\scalebox{1.7}{$-1$}
\end{matrix}
& \begin{aligned}
\begin{matrix}
\ddots\\
&1\\
&&1\\
\end{matrix}
\end{aligned}
\end{bmatrix}. \qquad
\]These matrices have sign-rank at most $2$ and $3,$ respectively.
Indeed, the first matrix has the same sign pattern as $[2(j-i)+1]_{i,j}.$
The second has the same sign pattern as $[\langle v_{i},v_{j}\rangle-(1-\epsilon)]_{i,j},$
where $v_{1},v_{2},\dots,v_{n}\in\Re^{2}$ are arbitrary pairwise
distinct unit vectors and $\epsilon$ is a suitably small positive
real, cf.~\cite[Section~5]{paturi86cc}. As a matter of notational
convenience, we extend the notion of sign-rank to Boolean functions
$f\colon X\times Y\to\zoo$ by defining $\srank(f)=\srank(M_{f}),$
where $M_{f}=[(-1)^{f(x,y)}]_{x\in X,y\in Y}$ is the matrix associated
with $f$. A remarkable fact, due to Paturi and Simon~\cite{paturi86cc},
is that the sign-rank of a two-party communication problem fully characterizes
its unbounded-error communication complexity.
\begin{thm}[Paturi and Simon]
\label{thm:srank-upp} Let $F\colon X\times Y\to\zoo$ be a given
communication problem. Then
\[
\log\srank(F)\leq\upp(F)\leq\log\srank(F)+2.
\]
\end{thm}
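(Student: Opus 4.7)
The plan is to prove the two inequalities separately by direct arguments. For the lower bound $\log\srank(F)\leq\upp(F)$, I would fix an arbitrary private-coin randomized protocol $\Pi$ of cost $c$ computing $F$ with error strictly less than $1/2$ on every input, and extract from it a low-rank sign-representing matrix. The key structural fact is that in a private-coin protocol the probability of reaching any given leaf $\ell$ on inputs $(x,y)$ factors as $p_\ell(x)\,q_\ell(y)$, because Alice's and Bob's random tapes are independent. Letting $P(x,y)$ denote the acceptance probability and $\out(\ell)\in\{0,1\}$ the output at leaf $\ell$, and using the identity $\sum_\ell p_\ell(x)\,q_\ell(y)=1$, I would rewrite
\[
P(x,y)-\tfrac{1}{2}\;=\;\sum_\ell p_\ell(x)\,q_\ell(y)\,\bigl(\out(\ell)-\tfrac{1}{2}\bigr),
\]
which exhibits the matrix $P-\tfrac{1}{2}J$, where $J$ is the all-ones matrix, as a sum of at most $2^c$ rank-one matrices. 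Since the error is strictly less than $1/2$, its $(x,y)$ entry is strictly positive when $F(x,y)=1$ and strictly negative when $F(x,y)=0$, so up to a global sign flip this matrix sign-represents $F$. Hence $\srank(F)\leq 2^c$, and taking the infimum over protocols gives $\log\srank(F)\leq\upp(F)$.

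For the upper bound $\upp(F)\leq\log\srank(F)+2$, I would start with a factorization $M=UV^\top$ of a real matrix of rank $r=\srank(F)$ satisfying $\sign M_{x,y}=(-1)^{F(x,y)}$, letting $u_x,v_y\in\Re^r$ denote the rows of $U$ and $V$ respectively. Scaling each $u_x$ and $v_y$ by positive scalars preserves the sign of $\langle u_x,v_y\rangle$, so Alice (who knows $u_x$) and Bob (who knows $v_y$) may reduce to the case $\|u_x\|_\infty\leq 1$ and $\|v_y\|_\infty\leq 1$. The protocol is then as follows: Alice samples $k$ uniformly from $\{1,\dots,r\}$ via private randomness, sets $\sigma_A=1$ with probability $(1+u_{x,k})/2$ and $\sigma_A=0$ otherwise, and sends the pair $(k,\sigma_A)$ to Bob in $\lceil\log r\rceil+1$ bits; Bob independently sets $\sigma_B=1$ with probability $(1+v_{y,k})/2$ and outputs $\sigma_A\oplus\sigma_B$. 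A direct computation gives
\[
\Pr[\,\text{output}=1\,]\;=\;\tfrac{1}{2}\;-\;\tfrac{1}{2r}\,\langle u_x,v_y\rangle,
\]
which is strictly greater than $\tfrac{1}{2}$ exactly when $\langle u_x,v_y\rangle<0$, i.e.\ exactly when $F(x,y)=1$. This is therefore a valid unbounded-error protocol with cost at most $\log r+2$.

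Both directions are essentially mechanical, and I expect no substantial obstacle. The one bookkeeping subtlety in the lower-bound direction is ensuring that the decomposition of $P-\tfrac{1}{2}J$ involves exactly $2^c$ rank-one terms rather than $2^c+1$, which the identity $\sum_\ell p_\ell q_\ell\equiv 1$ delivers automatically. In the upper-bound direction, the strict inequality $\langle u_x,v_y\rangle\neq 0$ furnished by strict sign-representation guarantees that the protocol errs with probability strictly less than $\tfrac{1}{2}$ on every input, placing it within the unbounded-error regime and thereby closing the argument.
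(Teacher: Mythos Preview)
The paper does not actually prove this theorem; it is stated without proof and attributed to Paturi and Simon~\cite{paturi86cc} as a classical fact. Your argument is the standard one and is correct in both directions: the lower bound via the rank-$2^c$ decomposition of the bias matrix $P-\tfrac{1}{2}J$ (using that leaf-reaching probabilities factor and that $\sum_\ell p_\ell q_\ell\equiv 1$), and the upper bound via the one-round sampling protocol that estimates the sign of $\langle u_x,v_y\rangle$, with cost $\lceil\log r\rceil+1\leq\log r+2$.
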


As Corollary~\ref{cor:dm} and Theorem~\ref{thm:srank-upp} show,
the study of communication with unbounded and weakly unbounded error
is in essence the study of discrepancy and sign-rank. These quantities
are difficult to analyze from first principles. The \emph{pattern
matrix method}, developed in~\cite{sherstov07ac-majmaj,sherstov07quantum},
is a technique that transforms lower bounds for polynomial approximation
into bounds on discrepancy, sign-rank, and various other quantities
in communication complexity. For our discrepancy bounds, we use the
following special case of the pattern matrix method~\cite[Theorem~5.7 and equation~(119)]{sherstov13directional}.
\begin{thm}[Sherstov]
\label{thm:pm-discrepancy-k-party} Let $f\colon\zoon\to\zoo$ be
given. Consider the $\ell$-party communication problem $F\colon(\zoo^{nm})^{\ell}\to\zoo$
given by $F=f\circ\NOR_{m}\circ\AND_{\ell}.$ Then
\begin{align*}
 & \disc(F)\leq\left(\frac{c2^{\ell}\ell}{\sqrt{m}}\right)^{\degthr(f)/2},
\end{align*}
where $c>0$ is a constant independent of $n,m,\ell,f.$
\end{thm}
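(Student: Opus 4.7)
The plan is to recognize this claim as an instance of the multiparty pattern matrix method and reduce the discrepancy bound to a dual formulation of threshold degree via LP duality. I would start by invoking Fact~\ref{fact:dual-thrdeg} to extract a witness $\psi\colon\zoon\to\Re$ with $\|\psi\|_1=1$, sign-agreement $(-1)^{f(x)}\psi(x)\geq 0$ for every $x$, and $\orth\psi\geq d:=\degthr(f)$. The discrepancy method dualizes to the existence of a signed measure on $(\zoo^{nm})^\ell$ that is highly correlated with $(-1)^F$ and nearly orthogonal to every $\ell$-dimensional cylinder intersection, so the task reduces to lifting $\psi$ through the inner gadget $\NOR_m\circ\AND_\ell$.

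The natural construction produces a signed measure $\Phi$ on $(\zoo^{nm})^\ell$ by tensoring $\psi$ with $n$ independent copies of a dual witness for $\NOR_m\circ\AND_\ell$. Two ingredients drive the construction: $\NOR_m$ has approximate degree $\Theta(\sqrt{m})$, so its dual witness contributes roughly $\sqrt{m}$ to the orthogonal content per copy; and $\AND_\ell$ encodes the multiparty structure that will later interact with cylinder intersections. By the tensor arithmetic of Proposition~\ref{prop:orth}, $\orth\Phi$ is then of order $d\sqrt{m}$, while the sign-alignment properties of $\psi$ and the inner witness ensure that $\langle\Phi,(-1)^F\rangle$ is bounded away from zero.

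The main obstacle is bounding $|\langle\Phi,\chi\rangle|$ for an arbitrary $\ell$-dimensional cylinder intersection $\chi$; cylinder intersections are not low-degree polynomials, so the orthogonality of $\Phi$ does not apply directly. The standard remedy, originating with Babai, Nisan, and Szegedy~\cite{bns92} and refined into the multiparty pattern matrix method in~\cite{sherstov13directional}, is an iterated Cauchy\textendash Schwarz argument: each round averages out a player coordinate at the cost of a factor roughly $2^\ell\ell$ and converts $\chi$ into a sum of products of lower-dimensional cylinders with reduced effective degree. After $d/2$ rounds, combined with Minsky\textendash Papert-style symmetrization on the $\NOR_m$ blocks to harvest one factor of $1/\sqrt{m}$ per round, the residual object has degree below $\orth\Phi$ and so contributes nothing. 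Aggregating the per-round losses yields the claimed bound $(c2^\ell\ell/\sqrt{m})^{d/2}$. The most delicate bookkeeping is ensuring that each Cauchy\textendash Schwarz step simultaneously extracts the $1/\sqrt{m}$ gain from symmetrization while preserving the player-permutation structure carried by $\Phi$; this is precisely the content of Theorem~5.7 in~\cite{sherstov13directional}, which is invoked here as a black box.
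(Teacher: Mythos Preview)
The paper does not prove this statement at all: Theorem~\ref{thm:pm-discrepancy-k-party} is quoted verbatim from prior work, with the attribution ``the following special case of the pattern matrix method~\cite[Theorem~5.7 and equation~(119)]{sherstov13directional}.'' There is no proof in the present paper to compare against; the result is imported as a black box. Your proposal ultimately does the same thing, since your final sentence invokes Theorem~5.7 of~\cite{sherstov13directional} as a black box. In that limited sense the two ``proofs'' agree: both are citations.

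That said, the intermediate sketch you give of \emph{how} that black box works is not accurate, and if you intend it as an actual argument rather than motivation, it has a gap. The exponent $\degthr(f)/2$ does not arise from ``$d/2$ rounds'' of iterated Cauchy--Schwarz. In the BNS-style analysis underlying~\cite{sherstov13directional}, the Cauchy--Schwarz iteration runs over the \emph{players} (roughly $\ell-1$ applications, independent of $d$), and its role is to reduce the cylinder intersection to a product structure amenable to Fourier analysis. The $d$ in the exponent enters separately, through the Fourier support of the dual witness $\psi$: since $\hat\psi(S)=0$ for $|S|<d$, the surviving terms in the spectral/correlation bound each carry at least $d$ copies of the per-coordinate gadget contribution, and it is this that produces the factor $(c2^{\ell}\ell/\sqrt{m})^{d/2}$. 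Your description conflates these two mechanisms. If you are content to cite~\cite{sherstov13directional}, that matches the paper exactly; if you want to sketch the argument, the decomposition into (i) $\ell$-fold Cauchy--Schwarz over players and (ii) Fourier-analytic exploitation of $\orth\psi\ge d$ should be kept distinct.
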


\noindent We note that the case $\ell=2$ of Theorem~\ref{thm:pm-discrepancy-k-party}
is vastly easier to prove than the general statement; this two-party
result can be found in~\cite[Theorem~7.3 and equation~(7.3)]{sherstov07quantum}.
For our sign-rank lower bounds, we use the following theorem implicit
in~\cite{sherstov07symm-sign-rank}.
\begin{thm}[Sherstov, implicit]
\label{thm:thrdeg-to-sign-rank}Let $f\colon\zoon\to\zoo$ be given.
Suppose that $\degthr(f,\gamma)\geq d,$ where $\gamma$ and $d$
are positive reals. Fix an integer $m\geq2$ and define $F\colon\zoo^{mn}\times\zoo^{mn}\to\zoo$
by $F=f\circ\OR_{m}\circ\AND_{2}.$ Then
\[
\srank(F)\geq\gamma\left\lfloor \frac{m}{2}\right\rfloor ^{d/2}.
\]
\end{thm}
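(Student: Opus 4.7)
The plan is to invoke the Forster--Schmitt--Simon generalization of Forster's spectral sign-rank bound: for any real matrix $A$ on $X\times Y$ with all entries nonzero,
\[
\srank(\sign A)\;\geq\;\frac{\min_{x,y}|A_{x,y}|\,\sqrt{|X|\,|Y|}}{\|A\|_2}.
\]
The entire proof reduces to exhibiting a single real matrix $A$ (or submatrix of $M_F$) whose sign pattern matches $M_F$, whose minimum absolute entry is at least $\gamma\cdot 2^{-n}$, and whose spectral norm is small enough to extract a factor of $\lfloor m/2\rfloor^{d/2}$ beyond the trivial scale.

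First, I would extract a dual witness from the hypothesis. By the dual characterization~(\ref{eq:dual-degthr-min-smooth}) of smooth threshold degree, there is a probability distribution $\mu$ on $\zoo^n$ with $\mu(z)\geq\gamma\cdot 2^{-n}$ for every $z$ and $\orth((-1)^f\mu)\geq d$. Write $\psi=(-1)^f\mu$, so that $\sign\psi=(-1)^f$ and $|\psi|\geq\gamma\cdot 2^{-n}$ pointwise. Next, I would restrict $M_F$ to rows indexed by $x\in\zoo^{mn}$ for which each block $x_i\in\zoo^m$ is a weight-one ``selector'' $e_{j_i}$ with $j_i\in[m]$. Under this restriction the inner gadget collapses to selection, since $\OR_m(e_{j_i}\wedge y_i)=y_{i,j_i}$, and the restricted matrix is exactly the classical pattern matrix $M^{\mathrm{pat}}[j,y]=(-1)^{f(y_{1,j_1},\ldots,y_{n,j_n})}$ of size $m^n\times 2^{mn}$. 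Since sign-rank is monotone under passing to a submatrix, it suffices to lower bound $\srank(M^{\mathrm{pat}})$.

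The natural witness matrix is $A[j,y]:=\psi(y_{1,j_1},\ldots,y_{n,j_n})$. By construction $\sign A=M^{\mathrm{pat}}$ and $\min|A|\geq\gamma\cdot 2^{-n}$. To bound $\|A\|_2$, expand $\psi$ in the parity basis, $\psi(z)=\sum_{S\subseteq[n]}\widehat\psi(S)\prod_{i\in S}(-1)^{z_i}$, and observe that $\orth\psi\geq d$ kills every coefficient with $|S|<d$. This realizes $A$ as a linear combination of tensor-product matrices, each built from $n$ blocks of size $m\times 2^m$: for $i\in S$ the block is the ``parity selector'' $G[j,y]=(-1)^{y_j}$, and for $i\notin S$ the block is the all-ones matrix. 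A direct singular-value computation --- the combinatorial core of Sherstov's pattern-matrix framework --- shows that replacing an all-ones block by $G$ multiplies the operator norm by a factor of at most $1/\sqrt{\lfloor m/2\rfloor}$. Combined with the bound $|\widehat\psi(S)|\leq 2^{-n}\|\psi\|_1=2^{-n}$ and summation over $|S|\geq d$, this yields $\|A\|_2\leq\sqrt{m^n\cdot 2^{mn}}\cdot 2^{-n}\cdot\lfloor m/2\rfloor^{-d/2}$. Plugging into Forster--Schmitt--Simon, the factors of $\sqrt{m^n\cdot 2^{mn}}$ and $2^{-n}$ cancel exactly, leaving $\srank(F)\geq\gamma\cdot\lfloor m/2\rfloor^{d/2}$.

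The principal technical obstacle is the per-block singular-value bound for the $m\times 2^m$ parity selector $G$ and its clean composition across tensor products; this is the combinatorial heart of the pattern-matrix method and requires identifying the relevant operator with a specific symmetric-group averaging map whose spectrum can be computed explicitly. Every other step in the proof --- extracting the dual witness via the smooth-threshold-degree duality, restricting to the selection subgrid, expanding in the parity basis, and carrying out the final cancellation --- is routine bookkeeping once this spectral gap is in hand.
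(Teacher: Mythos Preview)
Your overall strategy---extract the smooth dual witness $\psi=(-1)^f\mu$, pass to a submatrix of $M_F$ whose entries are values of $\psi$, and apply the Forster--Schmitt--Simon bound (Theorem~\ref{thm:forster}) with a spectral estimate coming from the Fourier expansion of $\psi$---is exactly what the paper does (Theorems~\ref{thm:sign-rank-pm} and~\ref{thm:pattern-spectrum}). The gap is in the specific submatrix. The matrix $A[j,y]=\psi(y_{1,j_1},\ldots,y_{n,j_n})$ on $\{1,\ldots,m\}^n\times\{0,1\}^{mn}$ that you wrote down is \emph{not} the pattern matrix: it is the pattern matrix with the $\oplus w$ twist removed, and for it the spectral bound you assert is false. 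Decomposing $A=\sum_S\hat\psi(S)\,A_S$ with each $A_S$ a tensor product of your parity-selector $G$ and the all-ones block $J$, one checks $A_SA_T^*=0$ for $S\ne T$ (since $GJ^*=0$), but \emph{not} $A_S^*A_T=0$. All the matrices $A_SA_S^*$ are simultaneously maximized on the all-ones vector, which gives the exact formula $\|A\|^2=2^{mn}\sum_{|S|\ge d}\hat\psi(S)^2\,m^{\,n-|S|}$, a \emph{sum} over $S$ rather than a maximum. Neither the triangle inequality nor Parseval collapses this sum to its leading term; already for $m=4$ and $d=n/2$ the bound you claim is off by an exponential factor, and the resulting Forster estimate is too weak to yield $\gamma\lfloor m/2\rfloor^{d/2}$.

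The fix---and the reason for the $\lfloor m/2\rfloor$ that you wrote but did not justify---is to embed the \emph{actual} $(\lfloor m/2\rfloor n,n,(-1)^f)$-pattern matrix of Section~\ref{subsec:Pattern-matrices}, whose entries are $\phi(x|_V\oplus w)$. The $\oplus w$ is precisely what forces $A_S^*A_T=0$ as well, so the singular values decouple completely and Theorem~\ref{thm:pattern-spectrum} gives the spectral norm as $\max_S|\hat\phi(S)|\cdot(n/N)^{|S|/2}$ times the trivial scale---exactly what you need. To see this matrix inside $M_F$, group the $m$ positions in each block into $\lfloor m/2\rfloor$ pairs: one party's input carries, in each pair $(2t{-}1,2t)$, a bit and its negation, and the other party's input selects one member of one pair, thereby encoding both the selector $V$ and the shift $w$. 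With this embedding, the Forster bound produces exactly $\gamma\lfloor m/2\rfloor^{d/2}$.
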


\noindent For the reader's convenience, we include a detailed proof
of Theorem~\ref{thm:thrdeg-to-sign-rank} in Appendix~\ref{sec:Sign-rank-and-smooth-thrdeg}.

\section{Auxiliary results}

In this section, we collect a number of supporting results on polynomial
approximation that have appeared in one form or another in previous
work. For the reader's convenience, we provide self-contained proofs
whenever the precise formulation that we need departs from published
work.

\subsection{Basic dual objects}

As described in the introduction, we prove our main results constructively,
by building explicit dual objects that witness the corresponding lower
bounds. An important tool in this process is the following lemma due
to Razborov and Sherstov~\cite{RS07dc-dnf}. Informally, it is used
to adjust a dual object's metric properties while preserving its orthogonality
to low-degree polynomials. The lemma plays a basic role in several
recent papers~\cite{RS07dc-dnf,BT16sign-rank-ac0,bun-thaler17adeg-ac0,BKT17poly-strikes-back,BT18ac0-large-error}
as well as our work. 
\begin{lem}[Razborov and Sherstov]
\label{lem:razborov-sherstov}Fix integers $d$ and $n,$ where $0\leq d<n.$
Then there is an $($explicitly given$)$ function $\zeta\colon\zoon\to\Re$
such that
\begin{align*}
 & \supp\zeta\subseteq\zoon|_{\leq d}\cup\{1^{n}\},\\
 & \zeta(1^{n})=1,\\
 & \|\zeta\|_{1}\leq1+2^{d}\binom{n}{d},\\
 & \orth\zeta>d.
\end{align*}
\end{lem}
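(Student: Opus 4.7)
The plan is to construct $\zeta$ as $\delta_{1^n} - \mu$, where $\delta_{1^n}$ is the indicator of the all-ones point and $\mu$ is a signed measure supported on $\zoon|_{\leq d}$ that represents the evaluation functional $p \mapsto p(1^n)$ on the space of polynomials of degree at most $d$. Since $d < n$, the supports of $\delta_{1^n}$ and $\mu$ are disjoint, so this decomposition will automatically give $\zeta(1^n) = 1$, $\supp\zeta \subseteq \zoon|_{\leq d} \cup \{1^n\}$, $\|\zeta\|_1 = 1 + \|\mu\|_1$, and $\orth \zeta > d$ (the last from the very definition of $\mu$). The entire task therefore reduces to constructing $\mu$ with $\|\mu\|_1 \leq 2^d \binom{n}{d}$.

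For $\mu$, I would use symmetrization followed by univariate Lagrange interpolation. By Proposition~\ref{prop:minsky-papert}, for every polynomial $p$ on $\zoon$ of degree at most $d$, the mapping $\tilde p(w) = \binom{n}{w}^{-1} \sum_{x \in \zoon|_w} p(x)$ is a univariate polynomial in $w$ of degree at most $d$ on $\{0,1,\ldots,n\}$, and $\tilde p(n) = p(1^n)$. Lagrange interpolation at the nodes $0,1,\ldots,d$ yields $\tilde p(n) = \sum_{w=0}^{d} L_w(n)\,\tilde p(w)$ with $L_w(n) = \prod_{w' \ne w}(n - w')/(w - w')$. Setting $\mu(x) = L_{|x|}(n)/\binom{n}{|x|}$ for $x \in \zoon|_{\leq d}$ and unpacking gives the identity $p(1^n) = \langle \mu, p \rangle$ for every polynomial $p$ of degree at most $d$.

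The one calculation of substance is bounding $\|\mu\|_1 = \sum_{w=0}^d |L_w(n)|$, where the Hamming-level multiplicities $\binom{n}{w}$ cancel the denominator in $\mu$. Expanding the Lagrange coefficients gives
\[
|L_w(n)| \;=\; \frac{n(n-1)\cdots(n-d)}{(n-w)\,w!\,(d-w)!} \;=\; \binom{n}{d}\binom{d}{w}\,\frac{n-d}{n-w} \;\leq\; \binom{n}{d}\binom{d}{w},
\]
using $w \leq d < n$, and summing via $\sum_{w=0}^{d} \binom{d}{w} = 2^d$ produces $\|\mu\|_1 \leq 2^d \binom{n}{d}$, whence $\|\zeta\|_1 \leq 1 + 2^d \binom{n}{d}$. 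I do not anticipate any serious obstacle: the construction is fully explicit and symmetric, the orthogonality and support conditions hold by construction, and the only delicate observation is that the slack factor $(n-d)/(n-w)$ is at most $1$ precisely in the regime $0 \leq w \leq d < n$, which furnishes exactly the room needed for the clean bound in the statement.
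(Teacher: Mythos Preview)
Your proof is correct and complete. The paper does not actually prove this lemma; it merely cites it as Lemma~3.2 of Razborov and Sherstov~\cite{RS07dc-dnf} (with the substitution $k=d$, $\zeta=(-1)^n g$), and your symmetrization-plus-Lagrange-interpolation construction is precisely the standard argument underlying that cited result.
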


\noindent In more detail, this result corresponds to taking $k=d$
and $\zeta=(-1)^{n}g$ in the proof of Lemma~3.2 of~\cite{RS07dc-dnf}.
We will need the following symmetrized version of Lemma~\ref{lem:razborov-sherstov}.
\begin{lem}
\label{lem:mass-transfer}Fix a point $u\in\NN^{n}$ and a natural
number $d<|u|.$ Then there is $\zeta_{u}\colon\NN^{n}\to\Re$ such
that
\begin{align}
 & \supp\zeta_{u}\subseteq\{u\}\cup\{v\in\NN^{n}:v\leq u\text{ and }|v|\leq d\},\label{eq:zeta-support}\\
 & \zeta_{u}(u)=1,\label{eq:zeta-at-u}\\
 & \|\zeta_{u}\|_{1}\leq1+2^{d}\binom{|u|}{d},\label{eq:zeta-norm}\\
 & \orth\zeta_{u}>d.\label{eq:zeta-orth}
\end{align}
\end{lem}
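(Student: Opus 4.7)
The plan is to deduce Lemma~\ref{lem:mass-transfer} from Lemma~\ref{lem:razborov-sherstov} by a direct symmetrization, treating the point $u\in\mathbb{N}^n$ of weight $N=|u|$ as a ``profile'' of a Boolean string in $\{0,1\}^N$ once the $N$ coordinates are partitioned into $n$ blocks of sizes $u_1,u_2,\ldots,u_n$.

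First, I would set $N=|u|$ and fix a partition $\{1,2,\ldots,N\}=B_1\sqcup B_2\sqcup\cdots\sqcup B_n$ with $|B_i|=u_i$. Since $d<N$, Lemma~\ref{lem:razborov-sherstov} supplies a function $\zeta\colon\{0,1\}^N\to\mathbb{R}$ with $\supp\zeta\subseteq\{0,1\}^N|_{\leq d}\cup\{1^N\}$, $\zeta(1^N)=1$, $\|\zeta\|_1\leq 1+2^d\binom{N}{d}$, and $\orth\zeta>d$. I would then define $\zeta_u\colon\mathbb{N}^n\to\mathbb{R}$ by
\[
\zeta_u(v)=\sum_{\substack{x\in\{0,1\}^N:\\ |x|_{B_i}=v_i\text{ for all }i}}\zeta(x),
\]
where $|x|_{B_i}=\sum_{j\in B_i}x_j$ is the Hamming weight of $x$ restricted to block~$B_i$. (Points $v\not\leq u$ automatically give an empty sum, so $\zeta_u$ is supported where $v\leq u$.)

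Next I would verify the four conditions one at a time. For the support condition \eqref{eq:zeta-support}: a contributing $x$ must lie in $\supp\zeta$, so either $|x|\leq d$, giving $|v|=|x|\leq d$, or $x=1^N$, which forces $v=u$. For \eqref{eq:zeta-at-u}: the only $x$ with block-profile equal to $u$ is $x=1^N$, hence $\zeta_u(u)=\zeta(1^N)=1$. For \eqref{eq:zeta-norm}: by the triangle inequality and the fact that the fibers of $x\mapsto(|x|_{B_1},\ldots,|x|_{B_n})$ partition $\{0,1\}^N$,
\[
\|\zeta_u\|_1\leq \sum_{x\in\{0,1\}^N}|\zeta(x)|=\|\zeta\|_1\leq 1+2^d\binom{|u|}{d}.
\]

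For the orthogonality condition \eqref{eq:zeta-orth}, let $p\colon\mathbb{N}^n\to\mathbb{R}$ be any polynomial with $\deg p\leq d$. Then the composed map $x\mapsto p(|x|_{B_1},\ldots,|x|_{B_n})$ is a polynomial on $\{0,1\}^N$ of degree at most $\deg p\leq d$, since each $|x|_{B_i}$ is linear in $x$. Therefore
\[
\langle\zeta_u,p\rangle=\sum_{v}\zeta_u(v)\,p(v)=\sum_{x\in\{0,1\}^N}\zeta(x)\,p(|x|_{B_1},\ldots,|x|_{B_n})=0
\]
by $\orth\zeta>d$, establishing $\orth\zeta_u>d$. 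No step is particularly delicate; the only thing to keep an eye on is ensuring the substitution preserves the degree bound, which it does because each block sum is linear. Thus the symmetrization argument goes through uniformly and yields the lemma.
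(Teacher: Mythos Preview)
Your proof is correct and follows essentially the same approach as the paper: both apply Lemma~\ref{lem:razborov-sherstov} on $\{0,1\}^{|u|}$, partition the coordinates into $n$ blocks of sizes $u_1,\ldots,u_n$, and define $\zeta_u$ by summing $\zeta$ over each fiber of the block-profile map $x\mapsto(|x|_{B_1},\ldots,|x|_{B_n})$. The verifications of support, value at $u$, $\ell_1$ norm, and orthogonality are likewise the same.
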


\begin{proof}
Lemma~\ref{lem:razborov-sherstov} gives a function $\zeta\colon\zoo^{|u|}\to\Re$
such that
\begin{align}
 & \supp\zeta\subseteq\zoo^{|u|}|_{\leq d}\cup\{1^{|u|}\},\label{eq:rs-support}\\
 & \zeta(1^{|u|})=1,\label{eq:rs-at-u}\\
 & \|\zeta\|_{1}\leq1+2^{d}\binom{|u|}{d},\label{eq:rs-norm}\\
 & \orth\zeta>d.\label{eq:rs-orth}
\end{align}
Now define $\zeta_{u}\colon\NN^{n}\to\Re$ by
\[
\zeta_{u}(v)=\sum_{x_{1}\in\zoo^{|u_{1}|}\tallbar_{|v_{1}|}}\cdots\sum_{x_{n}\in\zoo^{|u_{n}|}\tallbar_{|v_{n}|}}\zeta(x_{1}\ldots x_{n}),
\]
where we adopt the convention that the set $\{0,1\}^{0}=\{0,1\}^{0}|_{0}$
has as its only element the empty string, with weight~$0.$ Then
properties~(\ref{eq:zeta-support})\textendash (\ref{eq:zeta-norm})
are immediate from~(\ref{eq:rs-support})\textendash (\ref{eq:rs-norm}),
respectively. To verify the remaining property~(\ref{eq:zeta-orth}),
fix a polynomial $p\colon\Re^{n}\to\Re$ of degree at most $d.$ Then
\begin{align*}
\hspace{-3mm}\langle\zeta_{u},p\rangle & =\sum_{v:v\leq u}\left(\sum_{x_{1}\in\zoo^{|u_{1}|}\tallbar_{|v_{1}|}}\!\cdots\!\sum_{x_{n}\in\zoo^{|u_{n}|}\tallbar_{|v_{n}|}}\!\!\!\zeta(x_{1}\ldots x_{n})\right)p(v_{1},\ldots,v_{n})\\
 & =\sum_{v:v\leq u}\left(\sum_{x_{1}\in\zoo^{|u_{1}|}\tallbar_{|v_{1}|}}\!\cdots\!\sum_{x_{n}\in\zoo^{|u_{n}|}\tallbar_{|v_{n}|}}\!\!\!\zeta(x_{1}\ldots x_{n})p(|x_{1}|,\ldots,|x_{n}|)\!\right)\\
 & =\sum_{x_{1}\in\zoo^{|u_{1}|}}\cdots\sum_{x_{n}\in\zoo^{|u_{n}|}}\zeta(x_{1}\ldots x_{n})p(|x_{1}|,\ldots,|x_{n}|)\\
 & =0,
\end{align*}
where the last step uses~(\ref{eq:rs-orth}).
\end{proof}
When constructing a dual polynomial for a complicated constant-depth
circuit, it is natural to start with a dual polynomial for the OR
function or, equivalently, its counterpart AND. The first such dual
polynomial was constructed by Špalek~\cite{spalek08dual-or}, with
many refinements and generalizations~\cite{bun-thaler13and-or-tree,sherstov14sign-deg-ac0,sherstov15asymmetry,bun-thaler17adeg-ac0,BKT17poly-strikes-back}
obtained in follow-up work. We augment this line of work with yet
another construction, which delivers the exact combination of analytic
and metric properties that we need.
\begin{thm}
\label{thm:dual-OR}Let $0<\epsilon<1$ be given. Then for some constants
$c',c''\in(0,1)$ and all integers $N\geq n\geq1,$ there is an $($explicitly
given$)$ function $\psi\colon\{0,1,2,\ldots,N\}\to\Re$ such that
\begin{align*}
 & \psi(0)>\frac{1-\epsilon}{2},\\
 & \|\psi\|_{1}=1,\\
 & \orth\psi\geq c'\sqrt{n},\\
 & \sign\psi(t)=(-1)^{t}, &  & t=0,1,2,\ldots,N,\\
 & |\psi(t)|\in\left[\frac{c'}{(t+1)^{2}\,2^{c''t/\sqrt{n}}},\;\frac{1}{c'(t+1)^{2}\,2^{c''t/\sqrt{n}}}\right], &  & t=0,1,2,\ldots,N.
\end{align*}
\end{thm}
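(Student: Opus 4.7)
The plan is to construct $\psi$ in the form
\[
\psi(t) \;=\; \frac{(-1)^t\, w(t)\, q(t)}{Z}, \qquad w(t) = \frac{1}{(t+1)^2 \cdot 2^{c'' t/\sqrt n}},
\]
where the weight $w$ already carries the target two-sided decay envelope, $q$ is a strictly positive polynomial of degree $d = \lfloor c' \sqrt n\rfloor$ that enforces the desired orthogonal content without disturbing the $(-1)^t$ sign pattern, and $Z>0$ normalizes $\|\psi\|_1 = 1$. This is in the spirit of the Chebyshev/\v{S}palek-style dual witnesses for the OR function developed in \cite{spalek08dual-or,bun-thaler13and-or-tree,BKT17poly-strikes-back}, refined to deliver the explicit two-sided magnitude envelope the theorem asks for; the earlier constructions only needed one-sided upper bounds.

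The steps I would carry out, in order, are: (i) fix the absolute constants $c'$ and $c''$, where $c''$ will be taken sufficiently large in terms of $\epsilon$ at the end; (ii) construct $q$: the constraints $\sum_{t=0}^{N} (-1)^t w(t) q(t)\, t^j = 0$ for $j = 0,1,\dots,d-1$ are $d$ homogeneous linear equations in the $d+1$ coefficients of $q$, so nontrivial solutions exist, and I would search for a solution within the family $q(t) = \prod_{j=1}^{d}(1 + t/s_j)$ with nodes $s_j > 0$, which makes $q$ manifestly positive on the support $\{0,1,\dots,N\}$; (iii) read off $\sign\psi = (-1)^t$ from positivity of $w$ and $q$, and $\orth\psi \geq d$ by construction; (iv) verify the two-sided magnitude bound by showing $q(t)/q(0) \in [c,1/c]$ uniformly on $\{0,\ldots,N\}$, using that a positive polynomial of degree $O(\sqrt n)$ all of whose roots are at distance $\Omega(1)$ to the left of the support has logarithmic derivative $O(1)$; (v) control the normalizer $Z = \|(-1)^{\cdot} w q\|_1 \asymp \sum_t w(t)$, which is dominated by $w(0) = 1$ up to a tail $\sum_{t\geq 1} (t+1)^{-2} 2^{-c''t/\sqrt n}$ that can be made smaller than $\epsilon/(1-\epsilon)$ by taking $c''$ large enough, so that $\psi(0) > (1-\epsilon)/2$.

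The main obstacle will be step (ii): simultaneously achieving $\orth \psi \geq d$ and $\sign\psi = (-1)^t$ forces $q$ to be strictly positive on the whole support, whereas a generic solution of the linear system for $q$'s coefficients has $d$ real roots scattered inside $(0,N)$, which would flip the sign pattern at those points. I expect to handle this by an explicit shifted-Chebyshev ansatz: parametrize $q$ by its $d$ root positions $-s_1, \ldots, -s_d$, pin them inside a short interval $(-\delta, 0)$ just to the left of the support so that $q > 0$ on $\{0,\ldots,N\}$ is automatic, and then show via a continuity/degree argument that the $d$ orthogonality equations can be solved in the $d$ node coordinates $(s_1, \ldots, s_d)$ without any $s_j$ escaping $(0,\delta)$. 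The admissible spread $\delta$ of the nodes controls both the number of nodes that fit, giving the degree constant $c'$, and the Chebyshev-like growth rate of $q$ as one moves away from the node cluster, which is what converts into the exponential decay constant $c''$; these two constraints pin down $c'$ and $c''$ as the small absolute constants asserted by the theorem.
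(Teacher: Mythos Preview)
Your proposal has a genuine gap at the interface of steps~(ii) and~(iv). You propose to place the $d$ roots of $q$ in $(-\delta,0)$ so as to guarantee positivity and to solve the orthogonality equations by a continuity argument inside $(0,\delta)^d$. But roots sitting immediately to the left of $t=0$ force $q(t)=\prod_{j=1}^d(1+t/s_j)$ to blow up on the support: already $q(1)/q(0)=\prod_j(1+1/s_j)\ge(1+1/\delta)^d$, which is enormous for $d\asymp\sqrt n$ and small $\delta$. So the uniform bound $q(t)/q(0)\in[c,1/c]$ in step~(iv) cannot hold. Conversely, your step-(iv) heuristic about the logarithmic derivative $q'/q=\sum_j(t+s_j)^{-1}$ would require the roots to sit at distance $\Omega(N)$ from the support (so that $q'/q\le d/\mathrm{dist}=O(1/\sqrt n)$), and then the continuity argument confined to a short interval $(0,\delta)$ evaporates. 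More basically, the $d$ linear orthogonality equations in the $d+1$ coefficients of $q$ determine $q$ \emph{uniquely} up to scaling; you do not get to choose where its roots go, and you give no mechanism showing they land where you need them (or that they are even real).

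The paper avoids this tension altogether by a two-stage construction. First it builds, via the explicit binomial identity $\sum_t(-1)^t\binom{n}{t}p(t)=0$ for $\deg p<n$, a \emph{sparse} dual object $\omega$ on $\{0,\ldots,\lceil n/2\rceil\}$ supported on roughly $\sqrt n$ well-spaced points $\{0,1,\tfrac{\Delta+1}{2}\}\cup\{i^2\Delta:i\le d\}$, with $\orth\omega\ge c\sqrt n$, the correct alternating sign at its support points, and the one-sided upper bound $|\omega(t)|\lesssim t^{-2}2^{-ct/\sqrt n}$. Second, to obtain the two-sided envelope and fill in the full support $\{0,\ldots,N\}$, it \emph{smears} $\omega$ by a signed convolution:
\[
\Psi(t)=\omega(t)+\delta\sum_i\frac{(-1)^i}{i^2\,2^{ci/\sqrt n}}\,\omega(t-i)
\]
(plus a reflected tail). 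Each translate retains $\orth\ge c\sqrt n$; the $(-1)^i$ factor aligns the sign pattern with $(-1)^t$; the term $i=t$ contributes $\omega(0)\cdot t^{-2}2^{-ct/\sqrt n}$ at every $t$, which is the lower bound; and the upper bound survives by a routine convolution estimate. So rather than forcing a single polynomial multiplier to be simultaneously positive and nearly constant across the whole support, the paper superposes many shifted copies of a sparse dual to fill in the envelope.
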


\noindent A self-contained proof of Theorem~\ref{thm:dual-OR} is
available in Appendix~\ref{sec:dual-OR}.

\subsection{Dominant components}

We now recall a lemma due to Bun and Thaler~\cite{bun-thaler17adeg-ac0}
that serves to identify the dominant components of a vector. Its primary
use~\cite{bun-thaler17adeg-ac0,BKT17poly-strikes-back} is to prove
concentration-of-measure results for product distributions on $\NN^{n}.$
\begin{lem}[Bun and Thaler]
\label{lem:bun-thaler-combinatorial} Let $v\in\Re^{n}$ be given,
$v\ne0^{n}$. Then there is $S\subseteq\{1,2,\ldots,n\}$ such that
\begin{align*}
 & |S|\geq\frac{\|v\|_{1}}{2\|v\|_{\infty}},\\
 & |S|\min_{i\in S}|v_{i}|\geq\frac{\|v\|_{1}}{2(1+\ln n)}.
\end{align*}
\end{lem}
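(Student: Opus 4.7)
The plan is to search for $S$ among prefixes of the coordinates sorted by magnitude. Write $|v_{\pi(1)}| \geq |v_{\pi(2)}| \geq \cdots \geq |v_{\pi(n)}|$ for some permutation $\pi$, and set $a_k = |v_{\pi(k)}|$ and $S_k = \{\pi(1),\ldots,\pi(k)\}$. Any such candidate automatically satisfies $|S_k| = k$ and $\min_{i \in S_k}|v_i| = a_k$, so the lemma reduces to finding a single index $k^\star$ with both $k^\star \geq \|v\|_1/(2\|v\|_\infty)$ and $k^\star a_{k^\star} \geq \|v\|_1/(2(1+\ln n))$. The first constraint forces $k^\star \geq N$ where $N = \lceil \|v\|_1/(2\|v\|_\infty)\rceil$; since $\|v\|_1 \leq n\|v\|_\infty$ and $v \neq 0^n$, we have $1 \leq N \leq n$, so the window $\{N,N+1,\ldots,n\}$ is nonempty.

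The existence of a good $k^\star$ in that window I would prove by contradiction. Suppose $a_k < \|v\|_1/(2k(1+\ln n))$ for every $k \in \{N,\ldots,n\}$. Then the tail satisfies
\[
\sum_{k=N}^{n} a_k \;<\; \frac{\|v\|_1}{2(1+\ln n)} \sum_{k=N}^{n}\frac{1}{k} \;\leq\; \frac{\|v\|_1}{2(1+\ln n)} \cdot (1+\ln n) \;=\; \frac{\|v\|_1}{2},
\]
using the harmonic-sum bound $H_n \leq 1 + \ln n$. Meanwhile, by the defining inequality $N - 1 < \|v\|_1/(2\|v\|_\infty)$ of the ceiling, the head contributes at most
\[
\sum_{k=1}^{N-1} a_k \;\leq\; (N-1)\,a_1 \;=\; (N-1)\,\|v\|_\infty \;<\; \frac{\|v\|_1}{2}.
\]
Adding the two bounds yields $\|v\|_1 = \sum_{k=1}^{n} a_k < \|v\|_1$, the desired contradiction. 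Hence some $k^\star \in \{N,\ldots,n\}$ satisfies both requirements, and $S := S_{k^\star}$ finishes the proof.

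I do not foresee any genuine obstacle; the argument is a clean ``prefix plus harmonic tail'' dichotomy, and the $(1+\ln n)$ factor in the statement emerges for free from the bound on $H_n$. The only minor care-points are handling the edge case $N = 1$ (where the head sum is empty and the contradiction reads $\|v\|_1 < \|v\|_1/2$) and using the strict inequality $N-1 < \|v\|_1/(2\|v\|_\infty)$ rather than the weak one, both of which are built into the ceiling definition.
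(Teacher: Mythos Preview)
Your proof is correct and follows essentially the same approach as the paper: sort the coordinates by magnitude, assume for contradiction that no prefix $\{1,\ldots,k\}$ with $k \geq \lceil \|v\|_1/(2\|v\|_\infty)\rceil$ satisfies the second inequality, and derive $\|v\|_1 < \|v\|_1$ by bounding the head via $\|v\|_\infty$ and the tail via the harmonic sum. The paper's write-up is nearly identical, differing only in cosmetic presentation (it renumbers rather than introducing $\pi$, and phrases the contradiction hypothesis over all $S$ rather than just prefixes, though only prefixes are actually used).
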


\begin{proof}[Proof \emph{(adapted from~\cite{bun-thaler17adeg-ac0}).} ]
By renumbering the indices if necessary, we may assume that $|v_{1}|\geq|v_{2}|\geq\cdots\geq|v_{n}|\geq0.$
For the sake of contradiction, suppose that no such set $S$ exists.
Then
\begin{align*}
|v_{i}| & <\frac{1}{i}\cdot\frac{\|v\|_{1}}{2(1+\ln n)}
\end{align*}
for every index $i\geq\frac{\|v\|_{1}}{2\|v\|_{\infty}}.$ As a result,
\begin{align*}
\|v\|_{1} & =\sum_{i<\frac{\|v\|_{1}}{2\|v\|_{\infty}}}|v_{i}|+\sum_{i=\left\lceil \frac{\|v\|_{1}}{2\|v\|_{\infty}}\right\rceil }^{n}|v_{i}|\\
 & \leq\sum_{i<\frac{\|v\|_{1}}{2\|v\|_{\infty}}}\|v\|_{\infty}+\sum_{i=\left\lceil \frac{\|v\|_{1}}{2\|v\|_{\infty}}\right\rceil }^{n}\frac{1}{i}\cdot\frac{\|v\|_{1}}{2(1+\ln n)}\\
 & <\frac{\|v\|_{1}}{2}+\frac{\|v\|_{1}}{2(1+\ln n)}\sum_{i=1}^{n}\frac{1}{i}\\
 & \leq\|v\|_{1},
\end{align*}
where the final step uses
\[
\sum_{i=1}^{n}\frac{1}{i}=1+\sum_{i=2}^{n}\frac{1}{i}\leq1+\int_{1}^{n}\frac{di}{i}=1+\ln n.
\]
We have arrived at $\|v\|_{1}<\|v\|_{1}$, a contradiction.
\end{proof}
\noindent We will need a slightly more general statement, which can
be thought of as an extremal analogue of Lemma~\ref{lem:bun-thaler-combinatorial}.
\begin{lem}
\label{lem:combinatorial}Fix $\theta>0$ and let $v\in\Re^{n}$ be
an arbitrary vector with $\|v\|_{1}\geq\theta.$ Then there is $S\subseteq\{1,2,\ldots,n\}$
such that
\begin{align}
 & |S|\geq\frac{\|v\|_{1}}{2\|v\|_{\infty}},\label{eq:S-large}\\
 & \min_{i\in S}|v_{i}|\geq\frac{1}{|S|}\cdot\frac{\theta}{2(1+\ln n)},\label{eq:S-area}\\
 & \sum_{i\notin S}|v_{i}|<\theta.\label{eq:S-outside}
\end{align}
\end{lem}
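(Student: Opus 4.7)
My plan is to adapt the proof of Lemma~\ref{lem:bun-thaler-combinatorial} (Bun--Thaler), modifying it in two places to obtain the stronger conclusion. After sorting the coordinates so that $|v_1| \geq |v_2| \geq \cdots \geq |v_n|$, let $k_0 = \lceil \|v\|_1 / (2\|v\|_\infty) \rceil$ and consider the candidate indices $k \geq k_0$ at which $|v_k| \geq \frac{\theta}{2k(1+\ln n)}$. The idea is to take $S = \{1, 2, \ldots, k^*\}$, where $k^*$ is the \emph{largest} such candidate. Two things must then be verified: that at least one such $k$ exists (so $k^*$ is well-defined and in particular $k^* \geq k_0$), and that the tail bound (\ref{eq:S-outside}) holds.

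For the first point, I would assume for contradiction that $|v_k| < \frac{\theta}{2k(1+\ln n)}$ for every $k \geq k_0$, and then estimate
\[
\|v\|_1 \;=\; \sum_{k < k_0} |v_k| \;+\; \sum_{k \geq k_0} |v_k|
\;<\; (k_0 - 1)\|v\|_\infty \;+\; \frac{\theta}{2(1+\ln n)} \sum_{k=1}^n \frac{1}{k}
\;\leq\; \frac{\|v\|_1}{2} \;+\; \frac{\theta}{2},
\]
using $k_0 - 1 < \|v\|_1/(2\|v\|_\infty)$ and the familiar harmonic-sum bound $\sum_{k=1}^n 1/k \leq 1 + \ln n$. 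Since the hypothesis $\|v\|_1 \geq \theta$ gives $\theta/2 \leq \|v\|_1/2$, this yields $\|v\|_1 < \|v\|_1$, a contradiction. Thus $k^*$ is well-defined and at least $k_0$, which immediately delivers (\ref{eq:S-large}) and (\ref{eq:S-area}).

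For the third property, I would exploit the maximality of $k^*$: every index $k > k^*$ satisfies the strict inequality $|v_k| < \frac{\theta}{2k(1+\ln n)}$, so
\[
\sum_{i \notin S} |v_i| \;=\; \sum_{k > k^*} |v_k| \;<\; \frac{\theta}{2(1+\ln n)} \sum_{k=1}^n \frac{1}{k} \;\leq\; \frac{\theta}{2} \;<\; \theta,
\]
which is (\ref{eq:S-outside}).

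I do not anticipate any serious obstacle: the entire argument is a small modification of the Bun--Thaler proof, the only substantive changes being the replacement of $\|v\|_1$ by the smaller quantity $\theta$ on the right-hand side (justified by the hypothesis) and the choice of the \emph{largest} admissible prefix rather than an arbitrary one, which is what buys the tail bound essentially for free. The book-keeping about ceilings and the factor $1 + \ln n$ is routine.
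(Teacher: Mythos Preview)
Your proof is correct. The paper's argument is structurally a bit different: it calls a set $S$ \emph{regular} if it satisfies (\ref{eq:S-large}) and (\ref{eq:S-area}), invokes Lemma~\ref{lem:bun-thaler-combinatorial} together with the hypothesis $\|v\|_1\geq\theta$ to show that regular sets exist, and then takes $S$ to be a \emph{maximal} regular set. If (\ref{eq:S-outside}) failed, a second application of Lemma~\ref{lem:bun-thaler-combinatorial}, this time to $v|_{\overline S}$, would yield a nonempty $T\subseteq\overline S$ with $\min_{i\in T}|v_i|\geq\frac{\theta}{2|T|(1+\ln n)}$, making $S\cup T$ a strictly larger regular set and contradicting maximality. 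Your approach folds the Bun--Thaler argument directly into the proof: by sorting and taking the longest admissible prefix you avoid invoking Lemma~\ref{lem:bun-thaler-combinatorial} as a black box, and the tail bound (\ref{eq:S-outside}) drops out immediately (in fact with the sharper conclusion $<\theta/2$). Both proofs rest on the same maximality idea; yours is self-contained and slightly more explicit, while the paper's is more modular.
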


\begin{proof}
Fix $n,$ $v,$ and $\theta$ for the remainder of the proof. We will
refer to a subset $S\subseteq\{1,2,\ldots,n\}$ as \emph{regular}
if $S$ satisfies~(\ref{eq:S-large}) and~(\ref{eq:S-area}). Lemma~\ref{lem:bun-thaler-combinatorial}
along with $\|v\|_{1}\geq\theta$ ensures the existence of at least
one regular set. Now, let $S$ be a \emph{maximal }regular set. For
the sake of contradiction, suppose that~(\ref{eq:S-outside}) fails.
Applying Lemma~\ref{lem:bun-thaler-combinatorial} to $v|_{\overline{S}}$
produces a nonempty set $T\subseteq\overline{S}$ with
\[
\min_{i\in T}|v_{i}|\geq\frac{1}{|T|}\cdot\frac{\theta}{2(1+\ln n)}.
\]
But then $S\cup T$ is regular, contradicting the maximality of $S$.
\end{proof}
\noindent Lemma~\ref{lem:combinatorial} implies the following concentration-of-measure
result for product distributions on $\NN^{n}$; cf.~Bun and Thaler~\cite{bun-thaler17adeg-ac0}\emph{.}
\begin{lem}[cf.~Bun and Thaler]
 \label{lem:concentration-of-measure} Let $\lambda_{1},\lambda_{2},\ldots,\lambda_{n}\in\Distribution(\NN)$
be given with
\begin{align}
\lambda_{i}(t) & \leq\frac{C\alpha^{t}}{(t+1)^{2}}, &  & t\in\NN,\label{eq:lambda-i-upper}
\end{align}
where $C\geq0$ and $0\leq\alpha\leq1$. Then for all $\theta\geq8C\e n(1+\ln n),$
\[
\Prob_{v\sim\lambda_{1}\times\lambda_{2}\times\cdots\times\lambda_{n}}[\|v\|_{1}\geq\theta]\leq\alpha^{\theta/2}.
\]
\end{lem}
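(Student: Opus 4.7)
The natural approach is a union bound over the structured subset produced by Lemma~\ref{lem:combinatorial}, combined with the coordinate-wise tail bound implied by the hypothesis on each $\lambda_i$. For any $v$ with $\|v\|_1 \geq \theta$, Lemma~\ref{lem:combinatorial} produces a nonempty $S = S(v) \subseteq \{1,\ldots,n\}$ such that, writing $s = |S|$, every $v_i$ with $i \in S$ satisfies $v_i \geq k_s := \theta/(2s(1+\ln n))$. The event $\{\|v\|_1 \geq \theta\}$ thus falls inside the union $\bigcup_{\emptyset \neq S \subseteq \{1,\ldots,n\}} \{v_i \geq k_{|S|} \text{ for all } i \in S\}$, and applying the union bound, independence of the coordinates, and $\binom{n}{s} \leq (en/s)^s$ gives
\[
\Prob[\|v\|_1 \geq \theta] \;\leq\; \sum_{s=1}^{n} \left(\frac{en}{s}\right)^{s} \sup_{|S|=s}\prod_{i\in S} \Prob[v_i \geq k_s].
\]

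For each single-coordinate tail I would invoke the hypothesis $\lambda_i(t) \leq C\alpha^t/(t+1)^2$ together with the elementary estimate $\sum_{j \geq k+1} j^{-2} \leq 2/(k+1)$ to obtain $\Prob[v_i \geq k] \leq 2C\alpha^k/(k+1)$. Plugging in $k = k_s$ and observing that $s\,k_s = \theta/(2(1+\ln n))$ is independent of $s$, the $s$-th summand in the display collapses to $\left(\frac{4eCn(1+\ln n)}{\theta}\right)^{s} \alpha^{s k_s}$ after using $k_s+1 \geq k_s$. The hypothesis $\theta \geq 8eCn(1+\ln n)$ forces the base of the prefactor to at most $1/2$, so the geometric series sums to a constant and the $n$-dependence drops out.

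The principal obstacle is sharpening the $\alpha^{\theta/(2(1+\ln n))}$ yielded by this direct calculation to the $\alpha^{\theta/2}$ claimed in the statement. I would close this gap by splitting on $\|v\|_\infty$: in the regime $\|v\|_\infty \geq \theta/2$, a single-coordinate union bound immediately gives $\alpha^{\theta/2}$ under the given hypothesis on $\theta$; in the complementary regime, I would exploit the companion estimate $\|v\|_\infty \geq \theta/(2|S|)$ furnished by Lemma~\ref{lem:combinatorial}, isolating one coordinate of $S$ whose $\alpha$-contribution is much larger than the $k_s$-level tails of the remaining $s-1$ coordinates, and then re-optimize over $s$. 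Verifying that the combined exponent meets $\theta/2$ in every regime is the main technical work; all of the supporting estimates (binomial, geometric, and tail) are elementary.
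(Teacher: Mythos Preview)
Your union-bound structure over the sets $S$ produced by Lemma~\ref{lem:combinatorial} is the right skeleton, and your own diagnosis of the shortfall is accurate: the direct route only delivers an $\alpha$-exponent of $\theta/(2(1+\ln n))$. But the proposed case-split on $\|v\|_\infty$ does not close the gap. In the regime $\|v\|_\infty < \theta/2$, isolating one coordinate of $S$ with value at least $\theta/(2s)$ contributes only $\theta/(2s)$ to the $\alpha$-exponent, while the remaining $s-1$ coordinates contribute $(s-1)k_s = (s-1)\theta/(2s(1+\ln n))$; the total
\[
\frac{\theta}{2s}\Bigl(1+\frac{s-1}{1+\ln n}\Bigr)
\]
is already below $\theta/2$ at $s=2$ (for $n\ge 2$) and only degrades as $s$ grows. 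No re-optimization over $s$ helps, because you are using only the \emph{per-coordinate} lower bounds supplied by Lemma~\ref{lem:combinatorial}, and those alone cannot force $\sum_{i\in S}v_i$ to be of order~$\theta$.

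The idea you are missing is precisely the third conclusion~\eqref{eq:S-outside} of Lemma~\ref{lem:combinatorial}, which you never invoke. Apply the lemma with parameter $\theta/2$ rather than $\theta$: it then yields $\sum_{i\notin S}v_i < \theta/2$, so whenever $\|v\|_1\ge\theta$ one automatically has $\sum_{i\in S}v_i>\theta/2$. Now define ``$S$-heavy'' to mean both $\min_{i\in S}v_i\ge \theta/(4|S|(1+\ln n))$ \emph{and} $\sum_{i\in S}v_i>\theta/2$. To bound $\Prob[\text{$S$-heavy}]$, first pull the factor $\prod_{i\in S}\alpha^{v_i}=\alpha^{\sum_{i\in S}v_i}\le \alpha^{\theta/2}$ out of the product of densities (this is where $0\le\alpha\le 1$ is used), and only then sum the leftover $\prod_{i\in S}C/(v_i+1)^2$ subject to the min-constraint. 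This decouples the $\alpha^{\theta/2}$ factor from the $(t+1)^{-2}$ tails cleanly, and your geometric-series estimate over $s$ then finishes the proof unchanged.
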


\begin{proof}[Proof \emph{(adapted from~\cite{bun-thaler17adeg-ac0}).}]
For any vector $v\in\NN^{n}$ with $\|v\|_{1}\geq\theta/2,$ Lemma~\ref{lem:combinatorial}
guarantees the existence of a nonempty set $S\subseteq\{1,2,\ldots,n\}$
such that
\begin{align}
 & \min_{i\in S}|v_{i}|\geq\frac{1}{|S|}\cdot\frac{\theta}{4(1+\ln n)},\label{eq:S-area-1}\\
 & \sum_{i\notin S}|v_{i}|<\frac{\theta}{2}.
\end{align}
If in addition $\|v\|_{1}\geq\theta,$ the second property implies
\begin{equation}
\sum_{i\in S}|v_{i}|>\frac{\theta}{2}.\label{eq:S-outside-1}
\end{equation}
In what follows, we refer to the combination of properties~(\ref{eq:S-area-1})
and~(\ref{eq:S-outside-1}) by saying that $v$ is \emph{$S$-heavy.}
In this terminology, every vector $v\in\NN^{n}$ with $\|v\|_{1}\geq\theta$
is $S$-heavy for some nonempty set $S.$ 

Now, consider a random vector $v\in\NN^{n}$ distributed according
to $\lambda_{1}\times\lambda_{2}\times\cdots\times\lambda_{n}$. We
have
\begin{align*}
\Prob_{v}[\|v\|_{1}\geq\theta] & \leq\Prob_{v}[\text{\ensuremath{v} is \ensuremath{S}-heavy for some nonempty \ensuremath{S}]}\\
 & \leq\sum_{\substack{S\subseteq\{1,2,\ldots,n\}\\
S\ne\varnothing
}
}\Prob_{v}[\text{\ensuremath{v} is \ensuremath{S}-heavy}]\\
 & \leq\sum_{\substack{S\subseteq\{1,2,\ldots,n\}\\
S\ne\varnothing
}
}\alpha^{\theta/2}\left(\sum_{t\geq\frac{1}{|S|}\cdot\frac{\theta}{4(1+\ln n)}}\frac{C}{(t+1)^{2}}\right)^{|S|}\\
 & \leq\sum_{\substack{S\subseteq\{1,2,\ldots,n\}\\
S\ne\varnothing
}
}\alpha^{\theta/2}\left(C\int_{\frac{1}{|S|}\cdot\frac{\theta}{4(1+\ln n)}}^{\infty}\frac{dt}{t^{2}}\right)^{|S|}\allowdisplaybreaks\\
 & =\sum_{\substack{S\subseteq\{1,2,\ldots,n\}\\
S\ne\varnothing
}
}\alpha^{\theta/2}\left(\frac{C|S|\cdot4(1+\ln n)}{\theta}\right)^{|S|}\\
 & =\sum_{s=1}^{n}\binom{n}{s}\cdot\alpha^{\theta/2}\left(\frac{Cs\cdot4(1+\ln n)}{\theta}\right)^{s}\\
 & \leq\sum_{s=1}^{n}\alpha^{\theta/2}\left(\frac{\e n}{s}\cdot\frac{Cs\cdot4(1+\ln n)}{\theta}\right)^{s}\\
 & \leq\alpha^{\theta/2},
\end{align*}
where the first inequality holds by the opening paragraph of the proof;
the second step applies the union bound; the third step uses $0\leq\alpha\leq1$
and the upper bound~(\ref{eq:lambda-i-upper}) for the $\lambda_{i}$;
and the last two steps use~(\ref{eq:entropy-bound-binomial}) and
the hypothesis that $\theta\geq8C\e n(1+\ln n),$ respectively.
\end{proof}

\subsection{Input transformation}

We work almost exclusively with Boolean functions on $\NN^{n}|_{\leq\theta}$,
for appropriate settings of the dimension parameter $n$ and weight
parameter $\theta.$ This choice of domain is admittedly unusual but
greatly simplifies the analysis. Fortunately, approximation-theoretic
results obtained in this setting carry over in a black-box manner
to the hypercube. In more detail, we will now prove that every function
on $\NN^{n}|_{\leq\theta}$ can be transformed into a function in
$O(\theta\log n)$ Boolean variables with similar approximation-theoretic
properties. Analogous input transformations, with similar proofs,
have been used in previous work to translate results from $\zoo^{n}|_{\theta}$
or $\zoo^{n}|_{\leq\theta}$ to the hypercube setting~\cite{bun-thaler17adeg-ac0,BKT17poly-strikes-back}.
The presentation below seems more economical than previous treatments.

Recall that $e_{1},e_{2},\ldots,e_{n}$ denote the standard basis
for $\Re^{n}.$ The following encoding lemma was proved in~\cite[Lemma~3.1]{sherstov15asymmetry}. 
\begin{lem}[Sherstov]
\label{lem:reencoding}Let $n\geq1$ be a given integer. Then there
is a surjection $g\colon\zoo^{6\lceil\log(n+1)\rceil}\to\{0^{n},e_{1},e_{2},\ldots,e_{n}\}$
such that 
\begin{align*}
\Exp_{g^{-1}(0^{n})}\;p & =\Exp_{g^{-1}(e_{1})}\;p=\Exp_{g^{-1}(e_{2})}\;p=\cdots=\Exp_{g^{-1}(e_{n})}\;p
\end{align*}
for every polynomial $p$ of degree at most $\lceil\log(n+1)\rceil$.
Moreover, $g$ can be constructed deterministically in time polynomial
in $n.$
\end{lem}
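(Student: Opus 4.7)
The plan is to construct $g$ by coarsening the coset partition of a well-chosen binary linear code. Set $r=\lceil\log(n+1)\rceil$ and $m=6r$, so the domain $\{0,1\}^m$ can be split into $2^r\ge n+1$ equal-sized blocks. My goal would be to find a linear subspace $C\subseteq\mathbb{F}_2^m$ of codimension $r$ whose cosets all produce the same expectation for every polynomial of degree at most $r$; the surjection $g$ is then obtained by assigning each coset (or group of cosets) to one of the values in $\{0^n,e_1,\ldots,e_n\}$.

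The heart of the argument is a short Fourier calculation. Writing a degree-$r$ polynomial in the character basis as $p=\sum_{|S|\le r} c_S\chi_S$ with $\chi_S(y)=(-1)^{y\cdot S}$, one has
\[
\Exp_{y\in a+C}\chi_S(y)=\chi_S(a)\cdot \mathbf{1}[S\in C^{\perp}].
\]
Thus all coset averages agree iff no nonzero $S$ of weight $\le r$ lies in $C^{\perp}$, i.e., iff $D:=C^{\perp}$ has minimum distance strictly greater than $r$. So the problem reduces to constructing a linear code $D\subseteq\mathbb{F}_2^{6r}$ of dimension $r$ with minimum distance $>r$. Existence is an easy Gilbert--Varshamov check: $\sum_{i=0}^{r}\binom{6r}{i}\le(6e)^r<2^{5r}$, so a uniformly random $r$-dimensional code has the required distance with positive probability. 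For polynomial-time constructibility, I would derandomize this count using the method of conditional expectations, building a generator matrix entry by entry while tracking, for each of the $2^r-1$ nonzero message vectors, the probability that its codeword ends up of weight $\le r$; since there are only $O(n)$ such message vectors and $O(\log n)$ positions, the total running time is $\mathrm{poly}(n)$.

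With $D$ in hand, and $C=D^{\perp}$ of codimension $r$, fix any surjection $\pi$ from the $2^r$ cosets of $C$ onto the $n+1$-element set $\{0^n,e_1,\ldots,e_n\}$ and define $g(y)=\pi([y])$, where $[y]$ denotes the coset containing $y$. Each fiber $g^{-1}(v)$ is a disjoint union of cosets of $C$, and by the coset-averaging property above, every degree-$\le r$ polynomial has the same average $c_0$ on each individual coset, hence also on any union of them. Consequently all fiber-averages coincide, which is precisely the conclusion of the lemma.

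The main obstacle, I expect, is not the Fourier identity (which is routine) but making sure the Gilbert--Varshamov derandomization really fits into $\mathrm{poly}(n)$ time rather than $n^{O(\log n)}$; an acceptable alternative would be to invoke an explicit concatenated or algebraic construction of a rate-$1/6$ binary code with distance exceeding $r$, which circumvents derandomization entirely. The constant $6$ in the domain size $\{0,1\}^{6\lceil\log(n+1)\rceil}$ is simply the slack needed for the Gilbert--Varshamov inequality $5r>r\log(6e)$ to hold, and any constant larger than $\log_2(2e)/\log_2 2\approx 4.1$ would suffice if one is willing to tolerate a different multiplicative factor.
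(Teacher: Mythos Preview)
The paper does not actually prove this lemma; it is quoted verbatim from \cite[Lemma~3.1]{sherstov15asymmetry} and used as a black box. Your proposal is correct and is essentially the same construction as in the cited reference: take $D=C^{\perp}\subseteq\mathbb{F}_2^{6r}$ to be a linear code of dimension $r$ and minimum distance greater than $r$, so that every nontrivial character $\chi_S$ with $|S|\le r$ lies outside $C^{\perp}$ and hence averages to zero on each coset of $C$; consequently all coset averages of a degree-$r$ polynomial equal $\hat p(\varnothing)$, and any surjection from the $2^r\ge n+1$ cosets onto $\{0^n,e_1,\ldots,e_n\}$ yields $g$.

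Your concern about the running time of the Gilbert--Varshamov derandomization is unfounded. The generator matrix of $D$ has only $6r^2=O(\log^2 n)$ bits; fixing them one at a time, the conditional expected number of bad messages is a sum over the $2^r-1\le 2n+1$ nonzero messages, and for each message the conditional probability that its codeword has weight at most $r$ is a single binomial-tail computation (the undetermined bits of any fixed message's codeword are independent uniform). So the whole procedure runs in $\operatorname{poly}(n)$ time, and no appeal to explicit algebraic codes is needed.
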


\noindent Observe that the points $0^{n},e_{1},e_{2},\ldots,e_{n}$
in this lemma act simply as labels and can be replaced with any other
tuple of $n+1$ distinct points. Indeed, this result was originally
stated in~\cite{sherstov15asymmetry} for a different choice of points.
A tensor version of Lemma~\ref{lem:reencoding} is as follows.
\begin{lem}
\label{lem:gadget-g-tensor}Let $g\colon\zoo^{6\lceil\log(n+1)\rceil}\to\{0^{n},e_{1},e_{2},\ldots,e_{n}\}$
be as constructed in Lemma~\emph{\ref{lem:reencoding}}. Then for
any integer $\theta\geq1$ and any polynomial $p\colon(\Re^{6\lceil\log(n+1)\rceil})^{\theta}\to\Re,$
the mapping
\[
(y_{1},y_{2},\ldots,y_{\theta})\mapsto\Exp_{g^{-1}(y_{1})\times g^{-1}(y_{2})\times\cdots\times g^{-1}(y_{\theta})}p
\]
is a polynomial in $y\in\{0^{n},e_{1},e_{2},\ldots,e_{n}\}^{\theta}$
of degree at most $(\deg p)/\lceil\log(n+1)+1\rceil.$
\end{lem}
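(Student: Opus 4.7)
The plan is to mirror the template of Proposition~\ref{prop:expect-out}, generalized from two labels to the $n+1$ labels $\{0^{n}, e_{1}, \ldots, e_{n}\}$. For each $y \in \{0^{n}, e_{1}, \ldots, e_{n}\}$, I would let $\phi_{y}$ denote the uniform probability distribution on the preimage $g^{-1}(y) \subseteq \zoo^{6\lceil\log(n+1)\rceil}$. The target mapping then becomes
\[
(y_{1}, \ldots, y_{\theta}) \;\mapsto\; \left\langle \phi_{y_{1}} \otimes \cdots \otimes \phi_{y_{\theta}},\; p \right\rangle,
\]
and Lemma~\ref{lem:reencoding} translates into the statement that $\orth(\phi_{y} - \phi_{y'}) \geq \lceil\log(n+1)\rceil + 1 = \lceil\log(n+1) + 1\rceil$ for every pair $y, y' \in \{0^{n}, e_{1}, \ldots, e_{n}\}$.

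Next I would exploit linearity of the inner product in $p$ to reduce to the case of a factored polynomial $p(x_{1}, \ldots, x_{\theta}) = \prod_{i=1}^{\theta} p_{i}(x_{i})$ with $\sum_{i} \deg p_{i} \leq \deg p$. The mapping then becomes $\prod_{i=1}^{\theta} \langle \phi_{y_{i}}, p_{i} \rangle$, so it suffices to analyze each factor $f_{i}(y_{i}) := \langle \phi_{y_{i}}, p_{i} \rangle$ as a function of $y_{i} \in \{0^{n}, e_{1}, \ldots, e_{n}\}$. Two observations then finish the proof. First, any real-valued function on the $n+1$ distinct points $\{0^{n}, e_{1}, \ldots, e_{n}\} \subset \Re^{n}$ is the restriction of an affine polynomial $c_{0} + \sum_{k=1}^{n} c_{k}(y_{i})_{k}$ (prescribing the value at $0^{n}$ fixes $c_{0}$, and then prescribing the value at each $e_{k}$ fixes $c_{k}$), so every $f_{i}$ has degree at most $1$ in $y_{i}$. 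Second, whenever $\deg p_{i} \leq \lceil\log(n+1)\rceil$, the orthogonality statement above forces $f_{i}$ to be constant on its domain, contributing degree $0$ to the product.

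Combining these observations, $\prod_{i=1}^{\theta} f_{i}(y_{i})$ is a polynomial in $y$ of total degree at most the count of ``heavy'' indices, namely $|\{i : \deg p_{i} \geq \lceil\log(n+1) + 1\rceil\}| \leq (\deg p) / \lceil\log(n+1) + 1\rceil$, which matches the bound claimed. I do not anticipate a real obstacle here: the argument is essentially a bookkeeping generalization of Proposition~\ref{prop:expect-out} from two labels to $n+1$ labels. The one point that requires a moment of care is recognizing that each heavy factor has a polynomial representation of degree at most $1$ on the discrete domain $\{0^{n}, e_{1}, \ldots, e_{n}\}$, which reflects the fact that these $n+1$ designated points admit an exact interpolant of degree $1$ in $\Re^{n}$.
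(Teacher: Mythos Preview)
Your proposal is correct and follows essentially the same approach as the paper's proof: reduce by linearity to factored polynomials $p=\prod_i p_i$, observe that each factor $\Exp_{g^{-1}(y_i)}p_i$ is constant in $y_i$ when $\deg p_i\leq\lceil\log(n+1)\rceil$ (by Lemma~\ref{lem:reencoding}) and is affine in $y_i$ otherwise (since any function on $\{0^n,e_1,\ldots,e_n\}$ admits a degree-$1$ interpolant), and then count the heavy indices. The paper differs only cosmetically, writing out the affine interpolant explicitly rather than phrasing things via $\orth(\phi_y-\phi_{y'})$.
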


\begin{proof}
By linearity, it suffices to prove the lemma for factored polynomials
of the form $p(x_{1},x_{2},\dots,x_{\theta})=p_{1}(x_{1})p_{2}(x_{2})\cdots p_{\theta}(x_{\theta})$,
where $p_{1},p_{2},\dots,p_{\theta}$ are real polynomials on $\Re^{6\lceil\log(n+1)\rceil}.$
For such a polynomial $p$, the defining equation simplifies to 
\begin{equation}
\Exp_{g^{-1}(y_{1})\times g^{-1}(y_{2})\times\cdots\times g^{-1}(y_{\theta})}p\;=\prod_{i=1}^{n}\;\Exp_{g^{-1}(y_{i})}p_{i}.\label{eq:encoding-expectation-p-star}
\end{equation}
We now examine the individual contributions of $p_{1},p_{2},\dots,p_{\theta}$
to the degree of the right-hand side as a real polynomial in $y$.
For any polynomial $p_{i}$ of degree at most $\lceil\log(n+1)\rceil,$
Lemma~\ref{lem:reencoding} ensures that the corresponding expectation
$\Exp_{g^{-1}(y_{i})}p_{i}$ is a constant independent of the input
$y_{i}$. Thus, polynomials $p_{i}$ of degree at most $\lceil\log(n+1)\rceil$
do not contribute to the degree of the right-hand side of~(\ref{eq:encoding-expectation-p-star}).
For the other polynomials $p_{i}$, the expectation $\Exp_{g^{-1}(y_{i})}p_{i}$
is a linear polynomial in $y_{i}$, namely, 
\begin{multline*}
\Exp_{g^{-1}(y_{i})}p_{i}=y_{i,1}\Exp_{g^{-1}(e_{1})}p_{i}+y_{i,2}\Exp_{g^{-1}(e_{2})}p_{i}+\cdots+y_{i,n}\Exp_{g^{-1}(e_{n})}p_{i}\\
+\left(1-\sum_{j=1}^{n}y_{i,j}\right)\Exp_{g^{-1}(0^{n})}p_{i},
\end{multline*}
where we are crucially exploiting the fact that $y_{i}\in\{0^{n},e_{1},e_{2},\dots,e_{n}\}$.
Thus, polynomials $p_{i}$ of degree greater than $\lceil\log(n+1)\rceil$
contribute at most $1$ each to the degree. Summarizing, the right-hand
side of~(\ref{eq:encoding-expectation-p-star}) is a real polynomial
in $y_{1},y_{2},\ldots,y_{\theta}$ of degree at most 
\[
|\{i:\deg p_{i}\geq\lceil\log(n+1)\rceil+1\}|\leq\frac{\deg p}{\lceil\log(n+1)\rceil+1},
\]
as was to be shown.
\end{proof}
We have reached the claimed result on input transformation.
\begin{thm}
\label{thm:input-compression}Let $n,\theta\geq1$ be given integers.
Set $N=6\lceil\log(n+1)\rceil\theta.$ Then there is a surjection
$G\colon\zoo^{N}\to\NN^{n}|_{\leq\theta}$ such that:
\begin{enumerate}
\item \label{item:encoding-degree}for every polynomial $p\colon\Re^{N}\to\Re,$
the mapping $v\mapsto\Exp_{G^{-1}(v)}p$ is a polynomial on $\NN^{n}|_{\leq\theta}$
of degree at most $(\deg p)/\lceil\log(n+1)+1\rceil;$
\item \label{item:encoding-circuit-complexity}for every coordinate $i=1,2,\ldots,n,$
the mapping $x\mapsto\OR_{\theta}^{*}(G(x)_{i})$ is computable by
an explicitly given DNF formula with $O(\theta n^{6})$ terms, each
with at most $6\lceil\log(n+1)\rceil$ variables.
\end{enumerate}
\end{thm}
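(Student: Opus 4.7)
The plan is to assemble $G$ block-wise from the gadget $g$ of Lemma~\ref{lem:reencoding}. I partition the $N$ input bits into $\theta$ consecutive blocks $x_1,x_2,\ldots,x_\theta$ of $6\lceil\log(n+1)\rceil$ bits each and set
\[
G(x_1,x_2,\ldots,x_\theta)\;=\;g(x_1)+g(x_2)+\cdots+g(x_\theta).
\]
Since each summand lies in $\{0^n,e_1,\ldots,e_n\}$, the image lies in $\NN^n|_{\leq\theta}$; every $v\in\NN^n|_{\leq\theta}$ is a sum of $|v|\leq\theta$ standard basis vectors padded with $0^n$'s, so the surjectivity of $g$ on each block yields the surjectivity of $G$.

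For property~(i), the crucial observation is that the fiber
\[
\{(y_1,\ldots,y_\theta)\in\{0^n,e_1,\ldots,e_n\}^\theta:\; y_1+\cdots+y_\theta=v\}
\]
is a single permutation orbit, since every such tuple is a rearrangement of $\underbrace{e_1,\ldots,e_1}_{v_1},\ldots,\underbrace{e_n,\ldots,e_n}_{v_n},\underbrace{0^n,\ldots,0^n}_{\theta-|v|}$. Consequently, the weight $\prod_{i}|g^{-1}(y_i)|$ is constant along the fiber, and the uniform expectation over $G^{-1}(v)$ telescopes as
\[
\Exp_{G^{-1}(v)}p\;=\;\Exp_{\substack{(y_1,\ldots,y_\theta):\\ y_1+\cdots+y_\theta=v}}\;\Exp_{g^{-1}(y_1)\times\cdots\times g^{-1}(y_\theta)}p.
\]
Lemma~\ref{lem:gadget-g-tensor} identifies the inner expectation with a polynomial in $(y_1,\ldots,y_\theta)\in\{0^n,e_1,\ldots,e_n\}^\theta$ of degree at most $(\deg p)/\lceil\log(n+1)+1\rceil$, and Corollary~\ref{cor:ambainis-symmetrization} (with its $m$ and $n$ instantiated as our $n$ and $\theta$) then shows that the outer symmetrization over the fiber is a polynomial in $v\in\NN^n|_{\leq\theta}$ of the same degree.

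For property~(ii), observe that $G(x)_i=\sum_{j=1}^\theta\I[g(x_j)=e_i]$, so $\OR_\theta^*(G(x)_i)=1$ iff some block $x_j$ lies in $g^{-1}(e_i)$. Since $g^{-1}(e_i)\subseteq\{0,1\}^{6\lceil\log(n+1)\rceil}$ has at most $2^{6\lceil\log(n+1)\rceil}\leq(2(n+1))^{6}=O(n^6)$ elements, I obtain a DNF with one conjunction for each pair $(j,z)$ with $j\in\{1,\ldots,\theta\}$ and $z\in g^{-1}(e_i)$; each conjunction enforces ``block $j$ equals $z$'' using the $6\lceil\log(n+1)\rceil$ literals of that block. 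This produces $O(\theta n^6)$ terms as required, and explicitness is inherited from the explicit construction of $g$ furnished by Lemma~\ref{lem:reencoding}.

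The main subtlety to watch for is that Lemma~\ref{lem:reencoding} does \emph{not} guarantee that the preimages $g^{-1}(0^n),g^{-1}(e_1),\ldots,g^{-1}(e_n)$ all have the same size; without care this would make $\Exp_{G^{-1}(v)}p$ a \emph{weighted} rather than uniform average over the fiber of $y$-tuples, and the weighted-versus-uniform distinction could in principle ruin the polynomiality-in-$v$. The single-orbit observation is precisely what rescues the argument: since the weights $\prod_i|g^{-1}(y_i)|$ are constant along each fiber, the weighted and uniform averages over $\{y:\sum y_i=v\}$ coincide, so Corollary~\ref{cor:ambainis-symmetrization} applies without modification.
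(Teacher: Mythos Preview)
Your proof is correct and follows essentially the same approach as the paper: you build $G$ by summing the gadget $g$ over $\theta$ blocks, use the single-orbit observation to equate the uniform expectation over $G^{-1}(v)$ with a uniform average over $y$-tuples of the inner product-fiber expectations, and then invoke Lemma~\ref{lem:gadget-g-tensor} and Corollary~\ref{cor:ambainis-symmetrization} exactly as the paper does. Your closing paragraph on the equal-cardinality subtlety is a welcome elaboration of the same point the paper makes more tersely (``each part\ldots has the same cardinality'').
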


\noindent Applying Theorem~\ref{thm:input-compression} to a function
$f\colon\NN^{n}|_{\leq\theta}\to\zoo$ produces a composed function
$f\circ G\colon\zoo^{6\lceil\log(n+1)\rceil\theta}\to\zoo$ in the
hypercube setting. The theorem ensures that lower bounds for the pointwise
approximation, or sign-representation, of $f$ apply to $f\circ G$
as well. Moreover, the circuit complexity of $f\circ G$ is only slightly
higher than that of $f.$ This way, Theorem~\ref{thm:input-compression}
efficiently transfers approximation-theoretic results from $\NN^{n}|_{\leq\theta}$
(or any subset thereof, such as $\zoon|_{\leq\theta}$ or $\NN^{n}|_{\theta}$)
to the traditional setting of the hypercube.
\begin{proof}[Proof of Theorem~\emph{\ref{thm:input-compression}}.]
Define $G\colon(\zoo^{6\lceil\log(n+1)\rceil})^{\theta}\to\NN^{n}|_{\leq\theta}$
by
\[
G(x_{1},x_{2},\ldots,x_{\theta})=g(x_{1})+g(x_{2})+\cdots+g(x_{\theta}),
\]
where $g\colon\zoo^{6\lceil\log(n+1)\rceil}\to\{0^{n},e_{1},e_{2},\ldots,e_{n}\}$
is as constructed in Lemma~\ref{lem:reencoding}. The surjectivity
of $G$ follows trivially from that of $g.$ We proceed to verify
the additional properties required of $G.$

\ref{item:encoding-degree}~For $v\in\NN^{n}|_{\leq\theta},$ we
have the partition
\begin{equation}
G^{-1}(v)=\bigcup_{\substack{y\in\{0^{n},e_{1},e_{2},\ldots,e_{n}\}^{\theta}:\\
y_{1}+y_{2}+\cdots+y_{\theta}=v
}
}g^{-1}(y_{1})\times g^{-1}(y_{2})\times\cdots\times g^{-1}(y_{\theta}).\label{eq:encoding-tau-v-decomposition}
\end{equation}
All representations $v=y_{1}+y_{2}+\cdots+y_{\theta}$ with $y_{1},y_{2},\ldots,y_{\theta}\in\{0^{n},e_{1},e_{2},\ldots,e_{n}\}$
are the same up to the order of the summands. As a result, each part
$g^{-1}(y_{1})\times g^{-1}(y_{2})\times\cdots\times g^{-1}(y_{\theta})$
in the partition on the right-hand side of~(\ref{eq:encoding-tau-v-decomposition})
has the same cardinality. We conclude that for any given polynomial
$p,$
\begin{align}
\Exp_{G^{-1}(v)} & p=\Exp_{\substack{y\in\{0^{n},e_{1},e_{2},\ldots,e_{n}\}^{\theta}:\\
y_{1}+y_{2}+\cdots+y_{\theta}=v
}
}\quad\Exp_{g^{-1}(y_{1})\times g^{-1}(y_{2})\times\cdots\times g^{-1}(y_{\theta})}\;p.\label{eq:encoding-tau-exp-p}
\end{align}
Recall from Lemma~\ref{lem:gadget-g-tensor} that the rightmost expectation
in this equation is a polynomial in $y_{1},y_{2},\ldots,y_{\theta}\in\{0^{n},e_{1},e_{2},\ldots,e_{n}\}$
of degree at most $(\deg p)/\lceil\log(n+1)+1\rceil.$ As a result,
Corollary~\ref{cor:ambainis-symmetrization} implies that the right-hand
side of~(\ref{eq:encoding-tau-exp-p}) is a polynomial in $v$ of
degree at most $(\deg p)/\lceil\log(n+1)+1\rceil$.

\ref{item:encoding-circuit-complexity} Fix an index $i$. Then
\[
\OR_{\theta}^{*}(G(x)_{i})=\bigvee_{j=1}^{\theta}\I[g(x_{j})=e_{i}].
\]
Each of the disjuncts on the right-hand side is a function of $6\lceil\log(n+1)\rceil$
Boolean variables. Therefore, $\OR_{\theta}^{*}(G(x)_{i})$ is representable
by a DNF formula with $O(\theta n^{6})$ terms, each with at most
$6\lceil\log(n+1)\rceil$ variables.
\end{proof}

\section{\label{sec:Threshold-degree-of-AC0}The threshold degree of AC\protect\textsuperscript{0}}

This section is devoted to our results on threshold degree. While
we are mainly interested in the threshold degree of $\classAC^{0},$
the techniques developed here apply to a much broader class of functions.
Specifically, we prove an \emph{amplification theorem} that takes
an arbitrary function $f$ and builds from it a function $F$ with
higher threshold degree. We give analogous amplification theorems
for various other approximation-theoretic quantities. The transformation
$f\mapsto F$ is efficient with regard to circuit depth and size and
in particular preserves membership in $\classAC^{0}$. To deduce our
main results for $\classAC^{0}$, we start with a single-gate circuit
and iteratively apply the amplification theorem to produce constant-depth
circuits of higher and higher threshold degree. We develop this general
machinery in Sections~\ref{subsec:Shifting-probability-mass}\textendash \ref{subsec:Hardness-amplification-for},
followed by the applications to $\classAC^{0}$ in Sections~\ref{subsec:Results-for-AC}
and~\ref{subsec:surjectivity}. 

\subsection{\label{subsec:Shifting-probability-mass}Shifting probability mass
in product distributions}

Consider a product distribution $\Lambda$ on $\NN^{n}$ whereby every
component is concentrated near $0$. The centerpiece of our threshold
degree analysis, presented here, is the construction of an associated
probability distribution $\tilde{\Lambda}$ that is supported entirely
on inputs of low weight and cannot be distinguished from $\Lambda$
by a low-degree polynomial. More formally, define $\WeaklyBounded(r,c,\alpha)$
to be the family of probability distributions $\lambda$ on $\NN$
such that 
\[
\supp\lambda=\{0,1,2,\ldots,r'\}
\]
 for some nonnegative integer $r'\leq r,$ and in addition
\begin{align}
 & \frac{c^{t+1}}{(t+1)^{2}\,2^{\alpha t}}\leq\lambda(t)\leq\frac{1}{c(t+1)^{2}\,2^{\alpha t}}, &  & t\in\supp\lambda.\label{eq:B-tilde}
\end{align}
Distributions in this family are subject to pointwise constraints,
hence the symbol $\WeaklyBounded$ for ``bounded.'' Our bounding
functions are motivated mainly by the metric properties of the dual
polynomial for $\OR_{n}$, constructed in Theorem~\ref{thm:dual-OR}. 

In this notation, our analysis handles any distribution $\Lambda\in\WeaklyBounded(r,c,\alpha)^{\otimes n}.$
It would be possible to generalize our work further, but the lower
and upper bounds in~(\ref{eq:B-tilde}) are already exponentially
far apart and capture a much larger class of probability distributions
than what we need for the applications to $\classAC^{0}$. The precise
statement of our result is as follows.
\begin{thm}
\label{thm:weight-reduction}Let $\Lambda\in\WeaklyBounded(r,c,\alpha)^{\otimes n}$
be given, for some integer $r\geq0$ and reals $c>0$ and $\alpha\geq0.$
Let $d$ and $\theta$ be positive integers with
\begin{align}
\theta & \geq2d,\label{eq:tilde-Lambda-d-theta}\\
\theta & \geq\frac{4\e n(1+\ln n)}{c^{2}}.\label{eq:tilde-Lambda-theta}
\end{align}
Then there is a function $\tilde{\Lambda}\colon\NN^{n}\to\Re$ such
that
\begin{align}
 & \supp\tilde{\Lambda}\subseteq(\supp\Lambda)|_{<2\theta},\label{eq:tilde-Lambda-support}\\
 & \orth(\Lambda-\tilde{\Lambda})>d,\label{eq:tilde-Lambda-orthog}\\
 & |\Lambda-\tilde{\Lambda}|\leq\left(\frac{8nr}{c}\right)^{d}2^{-\lceil\theta/r\rceil-\alpha\lceil\theta/2\rceil+2}\,\Lambda\qquad\qquad\text{\emph{on} }\NN^{n}|_{<2\theta}.\label{eq:tilde-Lambda-statistical}
\end{align}
\end{thm}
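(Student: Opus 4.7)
The plan is to construct $\tilde\Lambda$ from $\Lambda$ by removing all probability mass that $\Lambda$ places on inputs of weight at least $2\theta$ and redistributing it locally into the low-weight region, using the Razborov-Sherstov corrector from Lemma~\ref{lem:mass-transfer}. For each $u\in\NN^n$ with $|u|\geq 2\theta$, let $\zeta_u$ be the corrector of Lemma~\ref{lem:mass-transfer} with parameter $d$, so that $\zeta_u(u)=1$, $\supp\zeta_u\subseteq\{u\}\cup\{w:w\leq u,\;|w|\leq d\}$, $\|\zeta_u\|_1\leq 1+2^d\binom{|u|}{d}$, and $\orth\zeta_u>d$. Define
\[
\tilde\Lambda \;=\; \Lambda \;-\; \sum_{\substack{u\in\NN^n\\ |u|\geq 2\theta}} \Lambda(u)\,\zeta_u.
\]
The structural point here, emphasized in the introduction, is that the mass at each heavy $u$ is scattered into its \emph{own} low-weight neighborhood (the down-set $\{w\leq u:|w|\leq d\}$) rather than concentrated at a single destination such as $0^n$; this is what prevents the transferred contributions from overpowering $\Lambda$'s original values on the low-weight region.

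Properties~\eqref{eq:tilde-Lambda-orthog} and~\eqref{eq:tilde-Lambda-support} follow quickly. For~\eqref{eq:tilde-Lambda-orthog}, Proposition~\ref{prop:orth}\ref{item:orth-sum} combined with $\orth\zeta_u>d$ gives $\orth(\Lambda-\tilde\Lambda)>d$. For~\eqref{eq:tilde-Lambda-support}, the identity $\zeta_u(u)=1$ ensures $\tilde\Lambda(u)=0$ whenever $|u|\geq 2\theta$; and since $\supp\zeta_u\subseteq\{w:w\leq u\}$ while each coordinate of $\supp\Lambda$ is a down-set $\{0,1,\dots,r_i'\}$, any $v\notin\supp\Lambda$ has some coordinate exceeding its marginal support, forcing $\Lambda(u)=0$ for every $u\geq v$ that appears in the sum.

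The substantive work is the pointwise magnitude bound~\eqref{eq:tilde-Lambda-statistical}. In the range $d<|v|<2\theta$, the support property of $\zeta_u$ forces $\zeta_u(v)=0$ for $u\neq v$, hence $\tilde\Lambda(v)=\Lambda(v)$ and there is nothing to prove. The case $|v|\leq d$ reduces to establishing
\[
\sum_{\substack{u\geq v\\ |u|\geq 2\theta}} \Lambda(u)\,|\zeta_u(v)| \;\leq\; \left(\frac{8nr}{c}\right)^{d} 2^{-\lceil\theta/r\rceil-\alpha\lceil\theta/2\rceil+2}\,\Lambda(v).
\]
I would proceed in two layers. First, using the explicit symmetric formula $\zeta_u(v)=\eta_{|v|}\prod_j\binom{|u_j|}{|v_j|}$ derived from the proof of Lemma~\ref{lem:mass-transfer} together with the $\ell_1$ bound on the underlying Razborov-Sherstov object, bound $|\zeta_u(v)|$ by a quantity of order $2^d\binom{|u|}{d}\cdot\prod_j\binom{|u_j|}{|v_j|}/\binom{|u|}{|v|}$. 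Second, exploit the pointwise bounds~\eqref{eq:B-tilde}: along every coordinate with $u_i>v_i$, the ratio $\lambda_i(u_i)/\lambda_i(v_i)$ is suppressed by a factor of $2^{-\alpha(u_i-v_i)}$ together with a $c$-power that decays in $u_i-v_i$. Since any $u\geq v$ with $|u|\geq 2\theta$ differs from $v$ by at least $\theta$ in total weight, and since each marginal support lies in $\{0,1,\dots,r\}$, at least $\lceil\theta/r\rceil$ of those coordinates must differ, producing the factor $2^{-\lceil\theta/r\rceil}$; the exponential factors combine to give $2^{-\alpha\lceil\theta/2\rceil}$. Hypothesis~\eqref{eq:tilde-Lambda-theta}, $\theta\geq 4\e n(1+\ln n)/c^{2}$, is precisely the threshold at which the geometric summation over admissible $u$ converges with constant loss, absorbed into the additive $+2$ in the exponent.

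The main obstacle is bookkeeping: simultaneously tracking the combinatorial factors in $|u|$ and $d$ coming from $|\zeta_u(v)|$, the multiplicative decay in each coordinate of $u-v$ coming from $\Lambda(u)/\Lambda(v)$, and the entropy of the choice of which coordinates of $u$ exceed those of $v$. Individually each estimate is elementary; the delicate part is arranging them so that all combinatorial overhead is absorbed into the clean factor $(8nr/c)^d$ while the tight exponents $\lceil\theta/r\rceil$ and $\alpha\lceil\theta/2\rceil$ survive in the final bound.
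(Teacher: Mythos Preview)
Your construction applies the corrector $\zeta_u$ to the \emph{full} coordinate vector $u$, whereas the paper applies it only to a carefully chosen subset $S(v)\subseteq\{1,\dots,n\}$ of coordinates of each heavy point $v$, freezing the remaining coordinates at $v|_{\overline{S(v)}}$. This is not a cosmetic difference; it is the heart of the argument, and your version does not yield~\eqref{eq:tilde-Lambda-statistical} under the stated hypotheses.

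To see the gap, take $\alpha=0$ (allowed by the theorem) and look at $v=0^n$. In your scheme,
\[
\frac{|\Lambda-\tilde\Lambda|(0)}{\Lambda(0)}\;\le\;\sum_{|u|\geq 2\theta}\frac{\Lambda(u)}{\Lambda(0)}\,|\zeta_u(0)|
\;\le\;2(nr)^d\sum_{|u|\geq 2\theta}\prod_{i:\,u_i>0}\frac{\lambda_i(u_i)}{\lambda_i(0)}
\;\le\;2(nr)^d\sum_{k\geq 2\theta/r}\binom{n}{k}\Bigl(\sum_{t=1}^{r}\frac{1}{c^2(t+1)^2}\Bigr)^{k}.
\]
The inner sum is a constant $C/c^2$ independent of $r$, so the summand is of order $(Cen/(kc^2))^k$. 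With only $\theta\geq 4\e n(1+\ln n)/c^2$ and $k\geq 2\theta/r$, the base $Cen/(kc^2)\leq C' r/(1+\ln n)$ is not below $1$ once $r\gg\log n$, and the sum diverges rather than producing the required factor $2^{-\lceil\theta/r\rceil}$. Your sentence ``at least $\lceil\theta/r\rceil$ of those coordinates must differ, producing the factor $2^{-\lceil\theta/r\rceil}$'' is the unjustified step: having many differing coordinates gives exponential decay only if each coordinate contributes a factor $<1$ to the sum, and with $\alpha=0$ that is simply false.

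The paper fixes this with Lemma~\ref{lem:combinatorial} (the dominant-component lemma): for each heavy $v$ it selects $S(v)$ with $|S(v)|\geq\theta/r$ and $\min_{i\in S(v)}v_i\gtrsim\theta/(|S(v)|\ln n)$. The corrector is applied only on $S(v)$; outside $S(v)$, the point is frozen at $v|_{\overline{S(v)}}$, so the destination points have weight up to almost $\theta+d$, not just $\leq d$. Because every $v_i$ with $i\in S(v)$ is large, the $1/(v_i+1)^2$ decay kicks in and the per-coordinate contribution to the sum over heavy $v$ becomes $\lesssim |S|(1+\ln n)/(c\theta)$, which \emph{is} at most $1/2$ under~\eqref{eq:tilde-Lambda-theta}. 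Summing over $s=|S|\geq\lceil\theta/r\rceil$ then yields the factor $2^{-\lceil\theta/r\rceil}$. This coordinate selection is the missing idea in your proposal; the ``exponentially many strategically chosen neighborhoods'' of the introduction are the sets $\{x:x|_{\overline{S(v)}}=v|_{\overline{S(v)}},\ |x|_{S(v)}|\leq d\}$, not the down-sets $\{w\leq u:|w|\leq d\}$.
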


\noindent In general, the function $\tilde{\Lambda}$ constructed
in Theorem~\ref{thm:weight-reduction} may not be a probability distribution.
However, when $\theta$ is large enough relative to the other parameters,
the pointwise property~(\ref{eq:tilde-Lambda-statistical}) forces
$|\Lambda-\tilde{\Lambda}|\leq\Lambda$ on the support of $\tilde{\Lambda}$,
and in particular $\tilde{\Lambda}\geq0.$ Since $\orth(\Lambda-\tilde{\Lambda})>0$
by construction, Proposition~\ref{prop:distribution-criterion} guarantees
that $\tilde{\Lambda}$ \emph{is} a probability distribution in that
case.
\begin{proof}[Proof of Theorem~\emph{\ref{thm:weight-reduction}}.]
 For $c>1,$ we have $\WeaklyBounded(r,c,\alpha)=\varnothing$ and
the theorem holds vacuously. Another degenerate possibility is $r=0,$
in which case $\Lambda$ is the single-point distribution on $0^{n}$,
and therefore it suffices to take $\tilde{\Lambda}=\Lambda.$ In what
follows, we treat the general case when 
\begin{align*}
 & c\in(0,1],\\
 & r\geq1.
\end{align*}

For every vector $v\in\NN^{n}$ with $\|v\|_{1}\geq\theta,$ let $S(v)\subseteq\{1,2,\ldots,n\}$
denote the corresponding subset identified by Lemma~\ref{lem:combinatorial}.
To restate the lemma's guarantees,
\begin{align}
 & |S(v)|\geq\frac{\theta}{r}, &  & v\in(\supp\Lambda)|_{\geq2\theta},\label{eq:Sv-large}\\
 & \min_{i\in S(v)}v_{i}\geq\frac{\theta}{2|S(v)|(1+\ln n)}, &  & v\in(\supp\Lambda)|_{\geq2\theta},\label{eq:Sv-heavy}\\
\rule{0mm}{5mm} & \|v|_{\overline{S(v)}}\|_{1}<\theta. &  & v\in(\supp\Lambda)|_{\geq2\theta}.\label{eq:Sv-outside}
\end{align}
Property~(\ref{eq:Sv-outside}) implies that
\begin{align}
\|v|_{S(v)}\|_{1} & >\theta, &  & v\in(\supp\Lambda)|_{\geq2\theta},\label{eq:v-Sv-heavy-theta}
\end{align}
and in particular
\begin{align}
\|v|_{S(v)}\|_{1} & >d, &  & v\in(\supp\Lambda)|_{\geq2\theta}.\label{eq:v-Sv-heavy-d}
\end{align}
For each $i=1,2,\ldots,n$ and each $u\in\NN^{i}|_{>d},$ Lemma~\ref{lem:mass-transfer}
gives a function $\zeta_{u}\colon\NN^{i}\to\Re$ such that
\begin{align}
 & \supp\zeta_{u}\subseteq\{u\}\cup\{v\in\NN^{i}:v\leq u\text{ and }|v|\leq d\},\label{eq:zeta-u-support}\\
 & \zeta_{u}(u)=1,\label{eq:zeta-u-at-u}\\
 & \|\zeta_{u}\|_{1}\leq1+2^{d}\binom{\|u\|_{1}}{d},\label{eq:zeta-u-ell1}\\
 & \orth\zeta_{u}>d,\label{eq:zeta-u-orth}
\end{align}
and in particular
\begin{align}
\|\zeta_{u}\|_{\infty} & \leq\max\{|\zeta_{u}(u)|,\|\zeta_{u}\|_{1}-|\zeta_{u}(u)|\}\nonumber \\
 & \leq2^{d}\binom{\|u\|_{1}}{d}\nonumber \\
 & \leq2\|u\|_{1}{}^{d}.\label{eq:zeta-u-infty-norm}
\end{align}

The central object of study in our proof is the following function
$\zeta\colon\NN^{n}\to\Re$, built from the auxiliary objects $S(v)$
and $\zeta_{u}$ just introduced:
\begin{equation}
\zeta(x)=\sum_{v\in(\supp\Lambda)|_{\geq2\theta}}\Lambda(v)\,\zeta_{v|_{S(v)}}(x|_{S(v)})\,\I[x|_{\overline{S(v)}}=v|_{\overline{S(v)}}].\label{eq:zeta-defined}
\end{equation}
The expression on the right-hand side is well-formed because, to restate~(\ref{eq:v-Sv-heavy-d}),
each string $v|_{S(v)}$ has weight greater than $d$ and can therefore
be used as a subscript in $\zeta_{v|_{S(v)}}.$ Specializing~(\ref{eq:zeta-u-orth})
and~(\ref{eq:zeta-u-infty-norm}),
\begin{align}
 & \orth\zeta_{v|_{S(v)}}>d, &  & v\in(\supp\Lambda)|_{\geq2\theta},\label{eq:zeta-vS-orth}\\
 & \|\zeta_{v|_{S(v)}}\|_{\infty}\leq2(nr)^{d}, &  & v\in(\supp\Lambda)|_{\geq2\theta}.\label{eq:zeta-vS-infty-norm}
\end{align}
Property~(\ref{eq:zeta-u-support}) ensures that $\zeta_{v|_{S(v)}}(x|_{S(v)})\,\I[x|_{\overline{S(v)}}=v|_{\overline{S(v)}}]\ne0$
only when $x\leq v.$ It follows that
\begin{align}
\supp\zeta & \subseteq\bigcup_{v\in\supp\Lambda}\{x\in\NN^{n}:x\leq v\}\nonumber \\
\rule{0mm}{5mm} & =\supp\Lambda,\label{eq:supp-Zeta-supp-Lambda}
\end{align}
where second step is valid because $\Lambda\in\WeaklyBounded(r,c,\alpha)^{\otimes n}.$

Before carrying on with the proof, we take a moment to simplify the
defining expression for $\zeta.$ For any $v\in\NN^{n}|_{\geq2\theta},$
we have
\begin{align*}
 & \zeta_{v|_{S(v)}}(x|_{S(v)})\,\I[x|_{\overline{S(v)}}=v|_{\overline{S(v)}}]\\
 & \qquad=\zeta_{v|_{S(v)}}(x|_{S(v)})\,\I[x|_{S(v)}=v|_{S(v)}\text{ or }\|x|_{S(v)}\|_{1}\leq d]\,\I[x|_{\overline{S(v)}}=v|_{\overline{S(v)}}]\\
 & \qquad=\zeta_{v|_{S(v)}}(x|_{S(v)})(\I[x|_{S(v)}=v|_{S(v)}]+\I[\|x|_{S(v)}\|_{1}\leq d])\I[x|_{\overline{S(v)}}=v|_{\overline{S(v)}}]\\
 & \qquad=\zeta_{v|_{S(v)}}(x|_{S(v)})\I[x=v]\\
 & \qquad\qquad\qquad\qquad+\zeta_{v|_{S(v)}}(x|_{S(v)})\I[\|x|_{S(v)}\|_{1}\leq d]\,\I[x|_{\overline{S(v)}}=v|_{\overline{S(v)}}]\\
 & \qquad=\I[x=v]+\zeta_{v|_{S(v)}}(x|_{S(v)})\I[\|x|_{S(v)}\|_{1}\leq d]\,\I[x|_{\overline{S(v)}}=v|_{\overline{S(v)}}],
\end{align*}
where the first, second, and fourth steps are valid by~(\ref{eq:zeta-u-support}),
(\ref{eq:v-Sv-heavy-d}), and~(\ref{eq:zeta-u-at-u}), respectively.
Making this substitution in the defining equation for $\zeta,$
\begin{multline}
\zeta(x)=\sum_{v\in(\supp\Lambda)|_{\geq2\theta}}\Lambda(v)\zeta_{v|_{S(v)}}(x|_{S(v)})\I[\|x|_{S(v)}\|_{1}\leq d]\,\I[x|_{\overline{S(v)}}=v|_{\overline{S(v)}}]\\
+\sum_{v\in(\supp\Lambda)|_{\geq2\theta}}\Lambda(v)\I[x=v].\qquad\label{eq:zeta-simplified}
\end{multline}
We proceed to establish key properties of $\zeta$.\\
~

\textsc{Step 1: Orthogonality.} By Proposition~\ref{prop:orth}\ref{item:orth-tensor},
each term in the summation on the right-hand side of~(\ref{eq:zeta-defined})
is a function orthogonal to polynomials of degree less than $\orth\zeta_{v|_{S(v)}}.$
Therefore,
\begin{align}
\orth\zeta & \geq\min_{v\in(\supp\Lambda)|_{\geq2\theta}}\orth\zeta_{v|_{S(v)}}\nonumber \\
 & >d,\label{eq:zeta-orthog}
\end{align}
where the first step uses Proposition~\ref{prop:orth}\ref{item:orth-sum},
and the second step applies~(\ref{eq:zeta-vS-orth}).

~

\textsc{Step 2: Heavy inputs.} We now examine the behavior of $\zeta$
on inputs of weight at least $2\theta,$ which we think of as ``heavy.''
For any string $v\in(\supp\Lambda)|_{\geq2\theta},$ we have
\begin{align*}
x\in\NN^{n}|_{\geq2\theta} & \implies\|x\|_{1}>d+\theta\\
 & \implies\|x|_{S(v)}\|_{1}>d\quad\vee\quad\|x|_{\overline{S(v)}}\|_{1}>\theta\\
 & \implies\|x|_{S(v)}\|_{1}>d\quad\vee\quad x|_{\overline{S(v)}}\ne v|_{\overline{S(v)}},
\end{align*}
where the final implication uses~(\ref{eq:Sv-outside}). We conclude
that the first summation in~(\ref{eq:zeta-simplified}) vanishes
on $\NN^{n}|_{\geq2\theta},$ so that
\begin{align}
 & \zeta(x)=\Lambda(x), &  & x\in\NN^{n}|_{\geq2\theta}.\label{eq:zeta-heavy}
\end{align}
This completes the analysis of heavy inputs.

~

\textsc{Step 3: Light inputs.} We now turn to inputs of weight less
than $2\theta$, the most technical part of the proof. Fix an arbitrary
string $x\in(\supp\Lambda)|_{<2\theta}$. Then 
\begin{align}
\!\!\frac{|\zeta(x)|}{\Lambda(x)} & =\left|\sum_{v\in(\supp\Lambda)|_{\geq2\theta}}\frac{\Lambda(v)}{\Lambda(x)}\zeta_{v|_{S(v)}}(x|_{S(v)})\,\I[\|x|_{S(v)}\|_{1}\leq d]\,\I[x|_{\overline{S(v)}}=v|_{\overline{S(v)}}]\right|\nonumber \\
 & \leq\sum_{v\in(\supp\Lambda)|_{\geq2\theta}}\frac{\Lambda(v)}{\Lambda(x)}|\zeta_{v|_{S(v)}}(x|_{S(v)})|\,\I[\|x|_{S(v)}\|_{1}\leq d]\,\I[x|_{\overline{S(v)}}=v|_{\overline{S(v)}}]\nonumber \\
 & \leq2(nr)^{d}\sum_{v\in(\supp\Lambda)|_{\geq2\theta}}\frac{\Lambda(v)}{\Lambda(x)}\,\I[\|x|_{S(v)}\|_{1}\leq d]\,\I[x|_{\overline{S(v)}}=v|_{\overline{S(v)}}]\nonumber \\
 & =2(nr)^{d}\sum_{\substack{S\subseteq\{1,\ldots,n\}:\\
|S|\geq\theta/r
}
}\I[\|x|_{S}\|_{1}\leq d]\sum_{\substack{v\in(\supp\Lambda)|_{\geq2\theta}:\\
S(v)=S
}
}\frac{\Lambda(v)}{\Lambda(x)}\,\I[x|_{\overline{S}}=v|_{\overline{S}}]\nonumber \\
 & \leq2(nr)^{d}\sum_{\substack{S\subseteq\{1,\ldots,n\}:\\
|S|\geq\theta/r
}
}\I[\|x|_{S}\|_{1}\leq d]\!\!\!\!\sum_{\substack{v\in\NN^{n}:\\
\sum_{i\in S}v_{i}\ge\theta,\\
\min_{i\in S}v_{i}\geq\frac{\theta}{2|S|(1+\ln n)}
}
}\!\!\!\!\frac{\Lambda(v)}{\Lambda(x)}\,\I[x|_{\overline{S}}=v|_{\overline{S}}],\label{eq:z-Lambda-intermediate}
\end{align}
where the first step uses~(\ref{eq:zeta-simplified}); the second
step applies the triangle inequality; the third step is valid by~(\ref{eq:zeta-vS-infty-norm});
the fourth step amounts to collecting terms according to $S(v),$
which by~(\ref{eq:Sv-large}) has cardinality at least $\theta/r$;
and the fifth step uses~(\ref{eq:Sv-heavy}) and~(\ref{eq:v-Sv-heavy-theta}). 

Bounding~(\ref{eq:z-Lambda-intermediate}) requires a bit of work.
To start with, write $\Lambda=\bigotimes_{i=1}^{n}\lambda_{i}$ for
some $\lambda_{1},\lambda_{2},\ldots,\lambda_{n}\in\WeaklyBounded(r,c,\alpha).$
Then for every nonempty set $S\subseteq\{1,2,\ldots,n\},$
\begin{align}
\I[\|x|_{S}\|_{1}\leq d]\prod_{i\in S}\frac{1}{\lambda_{i}(x_{i})} & \leq\I[\|x|_{S}\|_{1}\leq d]\prod_{i\in S}\frac{(x_{i}+1)^{2}\,2^{\alpha x_{i}}}{c^{x_{i}+1}}\nonumber \\
 & =\I[\|x|_{S}\|_{1}\leq d]\,c^{-|S|}\left(\frac{2^{\alpha}}{c}\right)^{\sum_{i\in S}x_{i}}\prod_{i\in S}(x_{i}+1)^{2}\nonumber \\
 & \leq\I[\|x|_{S}\|_{1}\leq d]\,c^{-|S|}\left(\frac{2^{\alpha}\e^{2}}{c}\right)^{\sum_{i\in S}x_{i}}\nonumber \\
 & \leq c^{-|S|}\left(\frac{2^{\alpha}\e^{2}}{c}\right)^{d},\label{eq:mu-i-lower}
\end{align}
where the first step applies the definition of $\WeaklyBounded(r,c,\alpha)$,
and the third step uses the bound $1+t\leq\e^{t}$ for real $t.$
Continuing,
\begin{align}
\sum_{\substack{v\in\NN^{n}:\\
\sum_{i\in S}v_{i}\ge\theta,\\
\min_{i\in S}v_{i}\geq\frac{\theta}{2|S|(1+\ln n)}
}
} & \frac{\Lambda(v)}{\Lambda(x)}\,\I[x|_{\overline{S}}=v|_{\overline{S}}]\nonumber \\
 & =\sum_{\substack{v\in\NN^{n}:\\
\sum_{i\in S}v_{i}\ge\theta,\\
\min_{i\in S}v_{i}\geq\frac{\theta}{2|S|(1+\ln n)},\\
v_{i}=x_{i}\text{ for }i\notin S
}
}\prod_{i\in S}\frac{\lambda_{i}(v_{i})}{\lambda_{i}(x_{i})}\nonumber \\
 & \leq\sum_{\substack{v\in\NN^{n}:\\
\sum_{i\in S}v_{i}\ge\theta,\\
\min_{i\in S}v_{i}\geq\frac{\theta}{2|S|(1+\ln n)},\\
v_{i}=x_{i}\text{ for }i\notin S
}
}2^{-\alpha\sum_{i\in S}v_{i}}\prod_{i\in S}\frac{1}{c(v_{i}+1)^{2}\lambda_{i}(x_{i})}\allowdisplaybreaks\nonumber \\
 & \leq\sum_{\substack{v\in\NN^{n}:\\
\min_{i\in S}v_{i}\geq\frac{\theta}{2|S|(1+\ln n)},\\
v_{i}=x_{i}\text{ for }i\notin S
}
}2^{-\alpha\theta}\prod_{i\in S}\frac{1}{c(v_{i}+1)^{2}\lambda_{i}(x_{i})}\allowdisplaybreaks\nonumber \\
 & =2^{-\alpha\theta}\left(\sum_{t=\left\lceil \frac{\theta}{2|S|(1+\ln n)}\right\rceil }^{\infty}\frac{1}{c(t+1)^{2}}\right)^{|S|}\prod_{i\in S}\frac{1}{\lambda_{i}(x_{i})}\nonumber \\
 & \leq2^{-\alpha\theta}\left(\int_{\left\lceil \frac{\theta}{2|S|(1+\ln n)}\right\rceil }^{\infty}\frac{dt}{ct^{2}}\right)^{|S|}\prod_{i\in S}\frac{1}{\lambda_{i}(x_{i})}\nonumber \\
 & \leq2^{-\alpha\theta}\left(\frac{2|S|(1+\ln n)}{c\theta}\right)^{|S|}\prod_{i\in S}\frac{1}{\lambda_{i}(x_{i})},\label{eq:Lambda-inner-sum}
\end{align}
where the first step uses $\Lambda=\bigotimes_{i=1}^{n}\lambda_{i},$
and the second step applies the definition of $\WeaklyBounded(r,c,\alpha)$.

It remains to put together the bounds obtained so far. We have:
\begin{align*}
\frac{|\zeta(x)|}{\Lambda(x)} & \leq2(nr)^{d}\sum_{\substack{S\subseteq\{1,\ldots,n\}:\\
|S|\geq\theta/r
}
}\I[\|x|_{S}\|_{1}\leq d]\cdot2^{-\alpha\theta}\left(\frac{2|S|(1+\ln n)}{c\theta}\right)^{|S|}\prod_{i\in S}\frac{1}{\lambda_{i}(x_{i})}\\
 & \leq2(nr)^{d}\sum_{\substack{S\subseteq\{1,\ldots,n\}:\\
|S|\geq\theta/r
}
}2^{-\alpha\theta}\left(\frac{2|S|(1+\ln n)}{c^{2}\theta}\right)^{|S|}\cdot\left(\frac{2^{\alpha}\e^{2}}{c}\right)^{d}\\
 & \leq2\cdot\frac{(\e^{2}nr/c)^{d}}{2^{\alpha\lceil\theta/2\rceil}}\sum_{\substack{S\subseteq\{1,\ldots,n\}:\\
|S|\geq\theta/r
}
}\left(\frac{2|S|(1+\ln n)}{c^{2}\theta}\right)^{|S|}\allowdisplaybreaks\\
 & =2\cdot\frac{(\e^{2}nr/c)^{d}}{2^{\alpha\lceil\theta/2\rceil}}\sum_{s=\lceil\theta/r\rceil}^{\infty}\binom{n}{s}\left(\frac{2s(1+\ln n)}{c^{2}\theta}\right)^{s}\\
 & \leq2\cdot\frac{(\e^{2}nr/c)^{d}}{2^{\alpha\lceil\theta/2\rceil}}\sum_{s=\lceil\theta/r\rceil}^{\infty}\left(\frac{\e n}{s}\cdot\frac{2s(1+\ln n)}{c^{2}\theta}\right)^{s}\\
 & \leq2\cdot\frac{(\e^{2}nr/c)^{d}}{2^{\alpha\lceil\theta/2\rceil}}\sum_{s=\lceil\theta/r\rceil}^{\infty}2^{-s}\\
 & =4\cdot\frac{(\e^{2}nr/c)^{d}}{2^{\alpha\lceil\theta/2\rceil+\lceil\theta/r\rceil}},
\end{align*}
where the first step follows from~(\ref{eq:z-Lambda-intermediate})
and (\ref{eq:Lambda-inner-sum}); the second step substitutes the
bound from~(\ref{eq:mu-i-lower}); the third step uses~(\ref{eq:tilde-Lambda-d-theta});
and the next-to-last step uses~(\ref{eq:tilde-Lambda-theta}). In
summary, we have shown that
\begin{align}
 & |\zeta(x)|\leq4\cdot\frac{(\e^{2}nr/c)^{d}}{2^{\alpha\lceil\theta/2\rceil+\lceil\theta/r\rceil}}\,\Lambda(x), & x\in(\supp\Lambda)|_{<2\theta}.\label{eq:zeta-lambda-light-inputs}
\end{align}

\textsc{~}

\textsc{Step 4: Finishing the proof. }Define $\tilde{\Lambda}=\Lambda-\zeta.$
Then the support property~(\ref{eq:tilde-Lambda-support}) follows
from~(\ref{eq:supp-Zeta-supp-Lambda}) and~(\ref{eq:zeta-heavy});
the analytic indistinguishability property~(\ref{eq:tilde-Lambda-orthog})
follows from~(\ref{eq:zeta-orthog}); and the pointwise property~(\ref{eq:tilde-Lambda-statistical})
follows from~(\ref{eq:supp-Zeta-supp-Lambda}) and~(\ref{eq:zeta-lambda-light-inputs}).
\end{proof}
We record a generalization of Theorem~\ref{thm:weight-reduction}
to translates of probability distributions in $\WeaklyBounded(r,c,\alpha)^{\otimes n},$
and further to convex combinations of such distributions. Formally,
define $\WeaklyBounded(r,c,\alpha,\Delta)$ for $\Delta\geq0$ to
be the family of probability distributions $\lambda$ on $\NN$ such
that $\lambda(t)\equiv\lambda'(t-a)$ for some $\lambda'\in\WeaklyBounded(r,c,\alpha)$
and $a\in[0,\Delta]$. We have: 
\begin{cor}
\label{cor:weight-reduction}Let $\Lambda\in\conv(\WeaklyBounded(r,c,\alpha,\Delta)^{\otimes n})$
be given, for some integers $r,\Delta\geq0$ and reals $c>0$ and
$\alpha\geq0$. Let $d$ and $\theta$ be positive integers with
\begin{align}
 & \theta\geq2d,\label{eq:tilde-Lambda-d-theta-1}\\
 & \theta\geq\frac{4\e n(1+\ln n)}{c^{2}},\label{eq:tilde-Lambda-theta-1}\\
 & 2^{\lceil\theta/r\rceil+\alpha\lceil\theta/2\rceil}\geq4\left(\frac{8nr}{c}\right)^{d}.
\end{align}
Then there is a probability distribution $\tilde{\Lambda}$ such that
\begin{align}
 & \supp\tilde{\Lambda}\subseteq(\supp\Lambda)|_{<2\theta+n\Delta},\label{eq:tilde-Lambda-support-restated}\\
 & \orth(\Lambda-\tilde{\Lambda})>d.\label{eq:tilde-Lambda-orthog-restated}
\end{align}
\end{cor}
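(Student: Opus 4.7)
The plan is to reduce the corollary to Theorem~\ref{thm:weight-reduction} by handling translation and convex combination separately. First I would observe that the extra numerical hypothesis $2^{\lceil\theta/r\rceil+\alpha\lceil\theta/2\rceil}\geq 4(8nr/c)^{d}$ is precisely what is needed to upgrade the conclusion of Theorem~\ref{thm:weight-reduction}: combined with~(\ref{eq:tilde-Lambda-statistical}), it gives $|\Lambda-\tilde\Lambda|\leq\Lambda$ pointwise on $\NN^n|_{<2\theta}$, hence $\tilde\Lambda\geq0$ there, and together with~(\ref{eq:tilde-Lambda-support}) and~(\ref{eq:tilde-Lambda-orthog}) this makes $\tilde\Lambda$ nonnegative everywhere with $\orth(\Lambda-\tilde\Lambda)>0$, so Proposition~\ref{prop:distribution-criterion} certifies that $\tilde\Lambda$ is a probability distribution in the pure product case.

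Next I would handle a single translated product $\Lambda=\bigotimes_{i=1}^{n}\lambda_{i}$ with $\lambda_{i}(t)=\lambda'_{i}(t-a_{i})$ for some $\lambda'_{i}\in\WeaklyBounded(r,c,\alpha)$ and $a_{i}\in[0,\Delta]$. Set $\Lambda'=\bigotimes_{i=1}^{n}\lambda'_{i}\in\WeaklyBounded(r,c,\alpha)^{\otimes n}$ and apply the preceding paragraph to obtain a probability distribution $\tilde\Lambda'$ on $\NN^{n}$ with $\supp\tilde\Lambda'\subseteq(\supp\Lambda')|_{<2\theta}$ and $\orth(\Lambda'-\tilde\Lambda')>d$. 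Define $\tilde\Lambda(x)=\tilde\Lambda'(x-a)$, where $a=(a_{1},\ldots,a_{n})$. Then $\tilde\Lambda$ is a probability distribution, its support lies in $(\supp\Lambda)|_{<2\theta+n\Delta}$ since $|a|\leq n\Delta$, and for any polynomial $p$ of degree at most $d$, the shifted polynomial $q(y):=p(y+a)$ also has degree at most $d$, so
\begin{equation*}
\langle\Lambda-\tilde\Lambda,p\rangle=\sum_{y}(\Lambda'(y)-\tilde\Lambda'(y))\,q(y)=0,
\end{equation*}
giving $\orth(\Lambda-\tilde\Lambda)>d$.

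Finally I would pass to convex combinations. Write $\Lambda=\sum_{j}\mu_{j}\Lambda_{j}$ with $\mu_{j}\geq0$, $\sum_{j}\mu_{j}=1$, and each $\Lambda_{j}\in\WeaklyBounded(r,c,\alpha,\Delta)^{\otimes n}$. Apply the previous step to each $\Lambda_{j}$ to obtain a probability distribution $\tilde\Lambda_{j}$ with $\supp\tilde\Lambda_{j}\subseteq(\supp\Lambda_{j})|_{<2\theta+n\Delta}$ and $\orth(\Lambda_{j}-\tilde\Lambda_{j})>d$, and set $\tilde\Lambda=\sum_{j}\mu_{j}\tilde\Lambda_{j}$. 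Convexity ensures $\tilde\Lambda$ is a probability distribution; support containment gives~(\ref{eq:tilde-Lambda-support-restated}) because $\supp\Lambda_{j}\subseteq\supp\Lambda$ for every $j$ with $\mu_{j}>0$; and Proposition~\ref{prop:orth}\ref{item:orth-sum} applied to the linear combination $\Lambda-\tilde\Lambda=\sum_{j}\mu_{j}(\Lambda_{j}-\tilde\Lambda_{j})$ yields~(\ref{eq:tilde-Lambda-orthog-restated}).

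The construction is essentially free once Theorem~\ref{thm:weight-reduction} is in hand, so no step is really an obstacle; the only subtle point is that Theorem~\ref{thm:weight-reduction} does not itself produce a probability distribution, and the added quantitative hypothesis on $\theta$ exists solely to force nonnegativity via~(\ref{eq:tilde-Lambda-statistical}), after which Proposition~\ref{prop:distribution-criterion} takes over. Everything else is bookkeeping for translates and linearity for convex combinations.
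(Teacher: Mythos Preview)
Your proposal is correct and follows essentially the same route as the paper: apply Theorem~\ref{thm:weight-reduction} to the untranslated product, use the extra numerical hypothesis with~(\ref{eq:tilde-Lambda-statistical}) to force nonnegativity and invoke Proposition~\ref{prop:distribution-criterion}, translate back, and then take convex combinations termwise. The paper merges your first two paragraphs into one step and omits the explicit verification of orthogonality under translation and of Proposition~\ref{prop:orth}\ref{item:orth-sum} for the convex combination, but these are exactly the details you spelled out.
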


\begin{proof}
We first consider the special case when $\Lambda\in\WeaklyBounded(r,c,\alpha,\Delta)^{\otimes n}.$
Then by definition, $\Lambda(t_{1},\ldots,t_{n})=\Lambda'(t_{1}-a_{1},\ldots,t_{n}-a_{n})$
for some probability distribution $\Lambda'\in\WeaklyBounded(r,c,\alpha)^{\otimes n}$
and integers $a_{1},\ldots,a_{n}\in[0,\Delta]$. Applying Theorem~\ref{thm:weight-reduction}
to $\Lambda'$ yields a function $\tilde{\Lambda}'\colon\NN^{n}\to\Re$
with
\begin{align}
 & \supp\tilde{\Lambda}'\subseteq(\supp\Lambda')|_{<2\theta},\label{eq:support-reduced-restated}\\
 & \orth(\Lambda'-\tilde{\Lambda}')>d,\label{eq:orthog-preserved-restated}\\
 & |\Lambda'-\tilde{\Lambda}'|\leq\Lambda'\qquad\qquad\text{on }\supp\tilde{\Lambda}'.\label{eq:pointwise}
\end{align}
The last property implies in particular that $\tilde{\Lambda}'$ is
a nonnegative function. As a result,~(\ref{eq:orthog-preserved-restated})
and Proposition~\ref{prop:distribution-criterion} guarantee that
$\tilde{\Lambda}'$ is a probability distribution. Now the sought
properties~(\ref{eq:tilde-Lambda-support-restated}) and (\ref{eq:tilde-Lambda-orthog-restated})
follow from~(\ref{eq:support-reduced-restated}) and~(\ref{eq:orthog-preserved-restated}),
respectively, for the probability distribution  $\tilde{\Lambda}(t_{1},\ldots,t_{n})=\tilde{\Lambda}'(t_{1}-a_{1},\ldots,t_{n}-a_{n}).$

In the general case of a convex combination $\Lambda=\lambda_{1}\Lambda_{1}+\cdots+\lambda_{k}\Lambda_{k}$
of probability distributions $\Lambda_{1},\ldots,\Lambda_{k}\in\WeaklyBounded(r,c,\alpha,\Delta)^{\otimes n},$
one uses the technique of the previous paragraph to transform $\Lambda_{1},\ldots,\Lambda_{k}$
individually into corresponding probability distributions $\tilde{\Lambda}_{1},\ldots,\tilde{\Lambda}_{k}$,
and takes $\tilde{\Lambda}=\lambda_{1}\tilde{\Lambda}_{1}+\cdots+\lambda_{k}\tilde{\Lambda}_{k}$.
\end{proof}

\subsection{\label{subsec:A-bounded-dual}A bounded dual polynomial for MP}

We now turn to the construction of a gadget for our amplification
theorem. Let $\StronglyBounded(r,c,\alpha)$ denote the family of
probability distributions $\lambda$ on $\NN$ such that 
\[
\supp\lambda=\{0,1,2,\ldots,r'\}
\]
for some nonnegative integer $r'\leq r,$ and moreover 
\begin{align*}
 & \frac{c}{(t+1)^{2}\,2^{\alpha t}}\leq\lambda(t)\leq\frac{1}{c(t+1)^{2}\,2^{\alpha t}}, &  & t\in\supp\lambda.
\end{align*}
In this family, a distribution's weight at any given point is prescribed
up to the multiplicative constant $c$, in contrast to the exponentially
large range allowed in the definition of $\WeaklyBounded(r,c,\alpha).$
For all parameter settings, we have
\begin{equation}
\mathfrak{\StronglyBounded}(r,c,\alpha)\subseteq\WeaklyBounded(r,c,\alpha).\label{eq:strong-weak-bounded}
\end{equation}
Indeed, the containment holds trivially for $c\leq1,$ and remains
valid for $c>1$ because the left-hand side and right-hand side are
both empty in that case. As before, it will be helpful to have shorthand
notation for \emph{translates} of distributions in $\WeaklyBounded(r,c,\alpha)$:
we define $\StronglyBounded(r,c,\alpha,\Delta)$ for $\Delta\geq0$
to be the family of probability distributions $\lambda$ on $\NN$
such that $\lambda(t)=\lambda'(t-a)$ for some $\lambda'\in\StronglyBounded(r,c,\alpha)$
and $a\in[0,\Delta]$. 

As our next step toward analyzing the threshold degree of $\classAC^{0},$
we will construct a dual object that witnesses the high threshold
degree of $\MP_{m,r}^{*}$ and possesses additional metric properties
in the sense of $\StronglyBounded$. To simplify the exposition, we
start with an auxiliary construction.
\begin{lem}
\label{lem:dual-OR-distributions}Let $0<\epsilon<1$ be given. Then
for some constants $c_{1},c_{2}\in(0,1)$ and all integers $R\geq r\geq1,$
there are $($explicitly given$)$ probability distributions $\lambda_{0},\lambda_{1},\lambda_{2}$
such that:
\begin{align}
 & \supp\lambda_{0}=\{0\},\label{eq:OR-distributions0-supp}\\
 & \supp\lambda_{i}=\{1,2,\ldots,R\}, &  & i=1,2,\label{eq:OR-distributions-i-supp}\\
 & \lambda_{i}\in\StronglyBounded\left(R,c_{1},\frac{c_{2}}{\sqrt{r}},1\right), &  & i=0,1,2,\label{eq:OR-distributions-i-pointwise}\\
 & \orth((1-\epsilon)\lambda_{0}+\epsilon\lambda_{2}-\lambda_{1})\geq c_{1}\sqrt{r}.\label{eq:OR-distributions-orth}
\end{align}
 
\end{lem}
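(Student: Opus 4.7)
My plan is to derive the three required distributions from the univariate dual witness for OR supplied by Theorem~\ref{thm:dual-OR}. I apply that theorem with parameters $(n,N)$ instantiated as $(r,R)$, obtaining $\psi\colon\{0,1,\ldots,R\}\to\Re$ with $\|\psi\|_1=1$, $\orth\psi\geq c'\sqrt{r}$, $\sign\psi(t)=(-1)^t$, $|\psi(t)|\asymp 1/((t+1)^2\,2^{c''t/\sqrt{r}})$, and $a:=\psi(0)>(1-\epsilon)/2$. Writing $\psi=\psi^+-\psi^-$, the orthogonality $\langle\psi,1\rangle=0$ together with $\|\psi\|_1=1$ yields $\sum_{t\geq1}\psi^-=1/2$, $\sum_{t\geq1}\psi^+=1/2-a$, and $\sum_{t\geq1}|\psi(t)|=1-a$; moreover the pointwise bounds on $\psi$ ensure $a$ is bounded away from both $(1-\epsilon)/2$ and $1$ by absolute constants (depending only on $\epsilon,c',c''$).

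Take $\lambda_0$ to be the point mass at $0$, which lies in $\StronglyBounded(R,c_1,c_2/\sqrt{r},1)$ whenever $c_1\leq 1$. Set $C=(1-\epsilon)/a$ and $D=(1-C/2)/(1-a)$, both positive constants depending only on $\epsilon,c',c''$. Define
\begin{align*}
\lambda_1(t) &= D\,|\psi(t)|+C\,\psi^-(t), \\
\epsilon\,\lambda_2(t) &= D\,|\psi(t)|+C\,\psi^+(t),
\end{align*}
for $t\in\{1,2,\ldots,R\}$, and $\lambda_1(0)=\lambda_2(0)=0$. A direct bookkeeping computation, using the three mass identities recorded above, shows that these are probability distributions and that, pointwise on $\{0,1,\ldots,R\}$,
\begin{equation*}
(1-\epsilon)\lambda_0 + \epsilon\lambda_2 - \lambda_1 \;=\; C\,\psi.
\end{equation*}
This immediately delivers~\eqref{eq:OR-distributions-orth}, since $\orth(C\psi)=\orth\psi\geq c'\sqrt{r}$, while~\eqref{eq:OR-distributions0-supp} and~\eqref{eq:OR-distributions-i-supp} are visible from the construction (positivity of $D$ and of $|\psi(t)|$ on $\{1,\ldots,R\}$ guarantees full support).

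The main obstacle, and the only nontrivial verification, is the pointwise bound~\eqref{eq:OR-distributions-i-pointwise}. By the sign pattern of $\psi$, each of $\lambda_1(t)$ and $\epsilon\lambda_2(t)$ equals either $D\,|\psi(t)|$ or $(D+C)\,|\psi(t)|$ depending on the parity of $t$, so both sit in a constant-factor window around $|\psi(t)|\asymp 1/((t+1)^2\,2^{c''t/\sqrt{r}})$. Matching this shape to the $\StronglyBounded(R,c_1,c_2/\sqrt{r},1)$ template, which for the relevant translate prescribes values in $[c_1/(t^2\,2^{c_2(t-1)/\sqrt{r}}),\,1/(c_1 t^2\,2^{c_2(t-1)/\sqrt{r}})]$ on $\{1,\ldots,R\}$, is an absolute-constant adjustment: choose $c_2=c''$ so the exponential factors differ only by $2^{c''/\sqrt{r}}\leq 2^{c''}$, and pick $c_1$ small enough to absorb the multiplicative slack $(t+1)^2/t^2\leq 4$ together with the constants $C,D,c',c''$. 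With $c_1$ and $c_2$ chosen this way (independently of $r$ and $R$), all three of $\lambda_0,\lambda_1,\lambda_2$ lie in $\StronglyBounded(R,c_1,c_2/\sqrt{r},1)$, completing the proof.
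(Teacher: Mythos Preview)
Your approach is essentially the paper's: once one unwinds your parameterization $\lambda_1=D|\psi|+C\psi^-$, $\epsilon\lambda_2=D|\psi|+C\psi^+$ in terms of the paper's $\mu_1,\mu_2,\delta$ (with $a=(1-\delta)/2$), one finds $C=2(1-\epsilon)/(1-\delta)$ and $D=2(\epsilon-\delta)/(1-\delta^2)$, and your $\lambda_1,\lambda_2$ coincide \emph{exactly} with the paper's definitions~(\ref{eq:lambda0-def})--(\ref{eq:lambda2-def}). Your packaging via $D|\psi|+C\psi^{\pm}$ is in fact a cleaner way to see why the sign pattern and pointwise two-sided bounds hold.

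There is one genuine gap. You assert that the pointwise bounds on $\psi$ ensure $a=\psi(0)$ is bounded away from $(1-\epsilon)/2$; this is not so. Theorem~\ref{thm:dual-OR} only guarantees the strict inequality $a>(1-\epsilon)/2$, and the pointwise bound at $t=0$ gives merely $a\geq c'$, which need not exceed $(1-\epsilon)/2$. If $a$ is allowed to approach $(1-\epsilon)/2$, then $D=(1-C/2)/(1-a)\to 0$, and since $\lambda_1(t)=D\,|\psi(t)|$ at even $t\geq 2$, the lower bound in~\eqref{eq:OR-distributions-i-pointwise} fails. The fix is precisely what the paper does: invoke Theorem~\ref{thm:dual-OR} with error parameter $\epsilon/2$ rather than $\epsilon$, so that $a>(1-\epsilon/2)/2=(1-\epsilon)/2+\epsilon/4$, forcing $1-C/2\geq \epsilon/(2-\epsilon)$ and hence $D$ bounded below by a constant depending only on $\epsilon,c',c''$. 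With that one-line change, your argument is complete.
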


\noindent Our analysis of the threshold degree of $\classAC^{0}$
only uses the special case $R=r$ of Lemma~\ref{lem:dual-OR-distributions}.
The more general formulation with $R\geq r$ will be needed much later,
in the analysis of the sign-rank of $\classAC^{0}.$ 
\begin{proof}
Theorem~\ref{thm:dual-OR} constructs a function $\psi\colon\{0,1,2,\ldots,R\}\to\Re$
such that 
\begin{align}
 & \psi(0)>\frac{1-\frac{\epsilon}{2}}{2},\label{eq:psi-at-0-restated}\\
 & \|\psi\|_{1}=1,\label{eq:psi-bounded-restated}\\
 & \orth\psi\geq c'\sqrt{r},\label{eq:psi-orthog-restated}\\
 & |\psi(t)|\in\left[\frac{c'}{(t+1)^{2}\,2^{c''t/\sqrt{r}}},\;\frac{1}{c'(t+1)^{2}\,2^{c''t/\sqrt{r}}}\right], &  & t=0,1,\ldots,R,\label{eq:psi-metric-restated}
\end{align}
for some constants $c',c''\in(0,1)$ independent of $R,r$. Property~(\ref{eq:psi-bounded-restated})
makes it possible to view $|\psi|$ as a probability distribution
on $\{0,1,2,\ldots,R\}.$ Let $\mu_{0},\mu_{1},\mu_{2}$ be the probability
distributions induced by $|\psi|$ on $\{0\},\{t\ne0:\psi(t)<0\},$
and $\{t\ne0:\psi(t)>0\},$ respectively. It is clear from~(\ref{eq:psi-at-0-restated})
that the negative part of $\psi$ is a multiple of $\mu_{1},$ whereas
the positive part of $\psi$ is a nonnegative linear combination of
$\mu_{0}$ and $\mu_{2}.$ Moreover, it follows from $\langle\psi,1\rangle=0$
and $\|\psi\|_{1}=1$ that the positive and negative parts of $\psi$
both have $\ell_{1}$-norm $1/2.$ Summarizing,
\begin{equation}
\psi=\frac{1-\delta}{2}\mu_{0}-\frac{1}{2}\mu_{1}+\frac{\delta}{2}\mu_{2}\label{eq:bounded-dual-psi}
\end{equation}
for some $0\leq\delta\leq1.$ In view of~(\ref{eq:psi-at-0-restated}),
we infer the more precise bound 
\begin{equation}
0\leq\delta<\frac{\epsilon}{2}.\label{eq:eps-range}
\end{equation}

We define
\begin{align}
\lambda_{0} & =\mu_{0},\label{eq:lambda0-def}\\
\lambda_{1} & =\frac{1-\epsilon\delta}{1-\delta^{2}}\mu_{1}+\delta\cdot\frac{\epsilon-\delta}{1-\delta^{2}}\mu_{2},\\
\lambda_{2} & =\frac{\epsilon-\delta}{\epsilon(1-\delta^{2})}\mu_{1}+\delta\cdot\frac{1-\epsilon\delta}{\epsilon(1-\delta^{2})}\mu_{2}.\label{eq:lambda2-def}
\end{align}
It follows from $0\leq\delta\leq\epsilon$ that $\lambda_{1}$ and
$\lambda_{2}$ are convex combinations of $\mu_{1}$ and $\mu_{2}$
and are therefore probability distributions with support
\begin{align}
\supp\lambda_{i} & \subseteq\{1,2,\ldots,R\}, &  & i=1,2.\label{eq:mu-i-bar-support}
\end{align}
Recall from~(\ref{eq:bounded-dual-psi}) that $|\psi|=\frac{1}{2}\mu_{1}+\frac{\delta}{2}\mu_{2}$
on $\{1,2,\ldots,R\}$. Comparing the coefficients in $|\psi|=\frac{1}{2}\mu_{1}+\frac{\delta}{2}\mu_{2}$
with the corresponding coefficients in the defining equations for
$\lambda_{1}$ and $\lambda_{2}$, where $0\leq\delta\leq\epsilon/2$
by~(\ref{eq:eps-range}), we conclude that $\lambda_{1},\lambda_{2}\in[c'''|\psi|,|\psi|/c''']$
on $\{1,2,\ldots,R\}$ for some constant $c'''=c'''(\epsilon)\in(0,1).$
In view of~(\ref{eq:psi-metric-restated}), we arrive at
\begin{multline}
\lambda_{i}(t)\in\left[\frac{c'c'''}{(t+1)^{2}\,2^{c''t/\sqrt{r}}},\;\frac{1}{c'c'''(t+1)^{2}\,2^{c''t/\sqrt{r}}}\right],\\
i=1,2;\quad t=1,2,\ldots,R.\qquad\label{eq:lambda-i-pointwise-in-proof}
\end{multline}
Continuing,
\begin{align}
\orth((1-\epsilon)\lambda_{0}+\epsilon\lambda_{2}-\lambda_{1}) & =\orth\left(2\cdot\frac{1-\epsilon}{1-\delta}\left(\frac{1-\delta}{2}\mu_{0}-\frac{1}{2}\mu_{1}+\frac{\delta}{2}\mu_{2}\right)\right)\nonumber \\
 & =\orth\left(2\cdot\frac{1-\epsilon}{1-\delta}\,\psi\right)\nonumber \\
 & \geq c'\sqrt{r},\label{eq:lambdas-orth-inproof-gadget}
\end{align}
where the first step follows from the defining equations~(\ref{eq:lambda0-def})\textendash (\ref{eq:lambda2-def}),
the second step uses~(\ref{eq:bounded-dual-psi}), and the final
step is a restatement of~(\ref{eq:psi-orthog-restated}). 

We are now in a position to verify the claimed properties of $\lambda_{0},\lambda_{1},\lambda_{2}$
in the theorem statement. Property~(\ref{eq:OR-distributions0-supp})
follows from~(\ref{eq:lambda0-def}), whereas property~(\ref{eq:OR-distributions-i-supp})
is immediate from~(\ref{eq:mu-i-bar-support}) and~(\ref{eq:lambda-i-pointwise-in-proof}).
The remaining properties~(\ref{eq:OR-distributions-i-pointwise})
and~(\ref{eq:OR-distributions-orth}) for $c_{2}=c''$ and a small
enough constant $c_{1}>0$ now follow from~(\ref{eq:lambda-i-pointwise-in-proof})
and~(\ref{eq:lambdas-orth-inproof-gadget}), respectively.
\end{proof}
We are now in a position to construct our desired dual polynomial
for the Minsky\textendash Papert function.
\begin{thm}
\label{thm:dual-MP}For some absolute constants $c_{1},c_{2}\in(0,1)$
and all positive integers $m$ and $r,$ there are probability distributions
$\Lambda_{0},\Lambda_{1}$ such that
\begin{align}
 & \Lambda_{i}\in\conv\,\left(\StronglyBounded\!\left(r,c_{1},\frac{c_{2}}{\sqrt{r}},1\right)^{\otimes m}\right), &  & i=0,1,\label{eq:Lambda-conv}\\
 & \supp\Lambda_{i}\subseteq(\MP_{m,r}^{*})^{-1}(i), &  & i=0,1,\label{eq:Lambda-support}\\
 & \orth(\Lambda_{1}-\Lambda_{0})\geq\min\{m,c_{1}\sqrt{r}\}.\label{eq:Lambda-orthog}
\end{align}
\end{thm}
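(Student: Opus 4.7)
The plan is to construct $\Lambda_0$ and $\Lambda_1$ by dual block-composition from Lemma~\ref{lem:dual-OR-distributions}. First I will apply that lemma with $R=r$ and any small enough constant $\epsilon\in(0,1)$, obtaining probability distributions $\lambda_0,\lambda_1,\lambda_2\in\StronglyBounded(r,c_1,c_2/\sqrt r,1)$ with $\supp\lambda_0=\{0\}$, $\supp\lambda_1=\supp\lambda_2=\{1,\ldots,r\}$, and
\[
\orth\bigl((1-\epsilon)\lambda_0+\epsilon\lambda_2-\lambda_1\bigr)\geq c_1\sqrt r.
\]
Setting $\mu_0=(1-\epsilon)\lambda_0+\epsilon\lambda_2$ and $\mu_1=\lambda_1$, this says that the two distributions $\mu_0,\mu_1$ on $\{0,1,\ldots,r\}$ are indistinguishable by polynomials of degree below $c_1\sqrt r$.

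I would then set
\[
\Lambda_1=\lambda_1^{\otimes m},\qquad \Lambda_0=\frac{\mu_0^{\otimes m}-\epsilon^m\lambda_2^{\otimes m}}{1-\epsilon^m}.
\]
Because the $(1-\epsilon)\lambda_0$ branch of $\mu_0$ is concentrated on $\{0\}$, the only part of $\mu_0^{\otimes m}$ that sits inside $\{1,\ldots,r\}^m=(\MP_{m,r}^*)^{-1}(1)$ is $\epsilon^m\lambda_2^{\otimes m}$, so subtracting it off produces a nonnegative measure on $(\MP_{m,r}^*)^{-1}(0)$ of total mass $1-\epsilon^m$. Expanding $\mu_0^{\otimes m}=\sum_{S\subseteq[m]}(1-\epsilon)^{|S|}\epsilon^{m-|S|}\bigotimes_{i\in S}\lambda_0\otimes\bigotimes_{i\notin S}\lambda_2$ makes $\Lambda_0$ a convex combination of tensor products of $\lambda_0$ and $\lambda_2$ indexed by nonempty $S$, and $\Lambda_1$ trivially belongs to the same family; properties~\eqref{eq:Lambda-conv} and~\eqref{eq:Lambda-support} follow immediately.

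For the orthogonality bound~\eqref{eq:Lambda-orthog}, clearing denominators gives the identity
\[
(1-\epsilon^m)(\Lambda_1-\Lambda_0)\;=\;(\mu_1^{\otimes m}-\mu_0^{\otimes m})+\epsilon^m(\lambda_2^{\otimes m}-\lambda_1^{\otimes m}).
\]
Proposition~\ref{prop:orth}\ref{item:orth-difference-of-tensors} applied to $\phi=\mu_0$, $\psi=\mu_1$ shows that the first summand already carries orthogonal content at least $c_1\sqrt r$, which alone settles the easy regime $m\geq c_1\sqrt r$. In the complementary regime $m<c_1\sqrt r$, the residual $\epsilon^m(\lambda_2^{\otimes m}-\lambda_1^{\otimes m})$ must additionally contribute orthogonality at least $m$, and here the elementary telescoping gives only $\orth(\lambda_2^{\otimes m}-\lambda_1^{\otimes m})\geq\orth(\lambda_2-\lambda_1)\geq 1$.

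The main obstacle is this last boosting step. To overcome it, I would modify $\Lambda_0$ by adding a signed correction $\zeta$ of total mass zero, supported inside $(\MP_{m,r}^*)^{-1}(0)$, whose moments against polynomials of degree below $m$ exactly cancel those of $\epsilon^m(\lambda_2^{\otimes m}-\lambda_1^{\otimes m})$; such a $\zeta$ can be manufactured coordinatewise from the Razborov--Sherstov-type corrector of Lemma~\ref{lem:mass-transfer}, and the exponentially small prefactor $\epsilon^m$ will be what keeps $\Lambda_0+\zeta$ nonnegative and inside $\conv(\StronglyBounded(r,c_1,c_2/\sqrt r,1)^{\otimes m})$. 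The hard part is ensuring all three of these requirements---supported in $(\MP_{m,r}^*)^{-1}(0)$, orthogonal content at least $m$, and compatibility with the convex-combination structure in~\eqref{eq:Lambda-conv}---can be met simultaneously. Once that is done, Proposition~\ref{prop:orth}\ref{item:orth-sum} combines the two contributions to give $\orth(\Lambda_1-\Lambda_0)\geq\min\{m,c_1\sqrt r\}$, as required.
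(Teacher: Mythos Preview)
Your setup is exactly right through the definition of $\Lambda_1=\lambda_1^{\otimes m}$ and the identification of the two summands in $(1-\epsilon^m)(\Lambda_1-\Lambda_0)$. The gap is the correction step. You acknowledge that the ``hard part'' is getting a corrector $\zeta$ that simultaneously (a) lives in $(\MP_{m,r}^*)^{-1}(0)$, (b) has the right moments up to degree $m-1$, and (c) keeps $\Lambda_0+\zeta$ inside $\conv(\StronglyBounded^{\otimes m})$---and you do not actually carry this out. Requirement (c) is the killer: the Razborov--Sherstov object of Lemma~\ref{lem:mass-transfer} has no tensor structure whatsoever, so there is no reason the perturbed $\Lambda_0+\zeta$ should remain a convex combination of \emph{product} measures from $\StronglyBounded(r,c_1,c_2/\sqrt r,1)^{\otimes m}$. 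The $\epsilon^m$ prefactor controls size, not shape.

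The paper sidesteps this entirely with a one-line algebraic fix: instead of subtracting only the $S=\varnothing$ term $\epsilon^m\lambda_2^{\otimes m}$ from $\mu_0^{\otimes m}$, subtract the full signed tensor power $(-(1-\epsilon)\lambda_0+\epsilon\lambda_2)^{\otimes m}$. With $\epsilon=1/2$ this gives
\[
\Lambda_0=\Bigl(\tfrac12\lambda_0+\tfrac12\lambda_2\Bigr)^{\otimes m}-\Bigl(-\tfrac12\lambda_0+\tfrac12\lambda_2\Bigr)^{\otimes m}
=\Exp_{\substack{S\subseteq[m]\\|S|\text{ odd}}}\lambda_0^{\otimes S}\cdot\lambda_2^{\otimes\overline S},
\]
which is visibly a convex combination of elements of $\{\lambda_0,\lambda_2\}^{\otimes m}$ with every $S\ne\varnothing$, so~\eqref{eq:Lambda-conv} and~\eqref{eq:Lambda-support} are immediate. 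Now $\Lambda_1-\Lambda_0=\bigl(\lambda_1^{\otimes m}-\mu_0^{\otimes m}\bigr)+(-\tfrac12\lambda_0+\tfrac12\lambda_2)^{\otimes m}$; the first summand has orthogonal content $\geq c_1\sqrt r$ as you already argued, and the second is a tensor power of a mean-zero function, so Proposition~\ref{prop:orth}\ref{item:orth-tensor} gives orthogonal content $\geq m$ directly. No corrector is needed.
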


\noindent The last two properties in the theorem statement are equivalent,
in the sense of linear programming duality, to the lower bound $\degthr(\MP_{m,r}^{*})\geq\min\{m,c_{1}\sqrt{r}\}$
and can be recovered in a black-box manner from many previous papers,
e.g.,~\cite{minsky88perceptrons,sherstov07ac-majmaj,sherstov14sign-deg-ac0}.
The key additional property that we prove is~(\ref{eq:Lambda-conv}),
which is where the newly established Lemma~\ref{lem:dual-OR-distributions}
plays an essential role. 
\begin{proof}[Proof of Theorem~\emph{\ref{thm:dual-MP}}.]
 Take $\epsilon=1/2$ and $R=r$ in Lemma~\ref{lem:dual-OR-distributions},
and let $\lambda_{0},\lambda_{1},\lambda_{2}$ be the resulting probability
distributions. Let
\begin{align*}
\Lambda_{0} & =\Exp_{\substack{S\subseteq\{1,2,\ldots,m\}\\
\text{\ensuremath{|S|} odd}
}
}\;\lambda_{0}^{\otimes S}\cdot\lambda_{2}^{\otimes\overline{S}},\\
\Lambda_{1} & =\lambda_{1}^{\otimes m}.
\end{align*}
Then (\ref{eq:Lambda-conv}) is immediate from~(\ref{eq:OR-distributions-i-pointwise}),
whereas~(\ref{eq:Lambda-support}) follows from~(\ref{eq:OR-distributions0-supp})
and~(\ref{eq:OR-distributions-i-supp}). To verify the remaining
property~(\ref{eq:Lambda-orthog}), rewrite
\begin{align*}
\Lambda_{0} & =2^{-m+1}\sum_{\substack{S\subseteq\{1,2,\ldots,m\}\\
\text{\ensuremath{|S|} odd}
}
}\lambda_{0}^{\otimes S}\cdot\lambda_{2}^{\,\otimes\overline{S}}\\
 & =\left(\frac{1}{2}\lambda_{0}+\frac{1}{2}\lambda_{2}\right)^{\otimes m}-\left(-\frac{1}{2}\lambda_{0}+\frac{1}{2}\lambda_{2}\right)^{\otimes m}.
\end{align*}
Observe that
\begin{align}
\orth(\lambda_{i}-\lambda_{j}) & \geq1, &  & i,j=0,1,2,\label{eq:orth-lambdai-lambdaj}
\end{align}
which can be seen from $\langle\lambda_{i}-\lambda_{j},1\rangle=\langle\lambda_{i},1\rangle-\langle\lambda_{j},1\rangle=1-1=0.$
Now
\begin{align*}
\!\!\!\!\!\! & \orth(\Lambda_{1}-\Lambda_{0})\\
 & \quad=\orth\left(\lambda_{1}^{\otimes m}-\left(\frac{1}{2}\lambda_{0}+\frac{1}{2}\lambda_{2}\right)^{\otimes m}+\left(-\frac{1}{2}\lambda_{0}+\frac{1}{2}\lambda_{2}\right)^{\otimes m}\right)\\
 & \quad\geq\min\left\{ \orth\left(\lambda_{1}^{\otimes m}-\left(\frac{1}{2}\lambda_{0}+\frac{1}{2}\lambda_{2}\right)^{\otimes m}\right),\orth\left(\left(-\frac{1}{2}\lambda_{0}+\frac{1}{2}\lambda_{2}\right)^{\otimes m}\right)\right\} \\
 & \quad\geq\min\left\{ \orth\left(\lambda_{1}-\frac{1}{2}\lambda_{0}-\frac{1}{2}\lambda_{2}\right),\orth\left(\left(-\frac{1}{2}\lambda_{0}+\frac{1}{2}\lambda_{2}\right)^{\otimes m}\right)\right\} \\
 & \quad=\min\left\{ \orth\left(\lambda_{1}-\frac{1}{2}\lambda_{0}-\frac{1}{2}\lambda_{2}\right),m\orth\left(-\frac{1}{2}\lambda_{0}+\frac{1}{2}\lambda_{2}\right)\right\} \\
 & \quad\geq\min\left\{ \orth\left(\lambda_{1}-\frac{1}{2}\lambda_{0}-\frac{1}{2}\lambda_{2}\right),m\right\} \\
 & \quad\geq\min\{c_{1}\sqrt{r},m\},
\end{align*}
where the last five steps are valid by Proposition~\ref{prop:orth}\ref{item:orth-sum},
Proposition~\ref{prop:orth}\ref{item:orth-difference-of-tensors},
Proposition~\ref{prop:orth}\ref{item:orth-tensor}, equation~(\ref{eq:orth-lambdai-lambdaj}),
and equation~(\ref{eq:OR-distributions-orth}), respectively.
\end{proof}

\subsection{\label{subsec:Hardness-amplification-for}Hardness amplification
for threshold degree and beyond}

We now present a black-box transformation that takes any given circuit
in $n$ variables with threshold degree $n^{1-\epsilon}$ into a circuit
with polynomially larger threshold degree, $\Omega(n^{1-\frac{\epsilon}{1+\epsilon}})$.
This hardness amplification procedure increases the circuit size additively
by $n^{O(1)}$ and the circuit depth by $2$, preserving membership
in $\classAC^{0}$. We obtain analogous hardness amplification results
for a host of other approximation-theoretic complexity measures. For
this reason, we adopt the following abstract view of polynomial approximation.
Let $I_{0},I_{1},I_{*}$ be nonempty convex subsets of the real line,
i.e., any kind of nonempty intervals (closed, open, or half-open;
bounded or unbounded). Let $f\colon X\to\{0,1,*\}$ be a (possibly
partial) Boolean function on a finite subset $X$ of Euclidean space.
We define an \emph{$(I_{0},I_{1},I_{*})$-approximant for $f$} to
be any real polynomial that maps $f^{-1}(0),f^{-1}(1),f^{-1}(*)$
into $I_{0},I_{1},I_{*},$ respectively. The \emph{$(I_{0},I_{1},I_{*})$-approximate
degree of $f,$} denoted $\deg_{I_{0},I_{1},I_{*}}(f)$, is the least
degree of an $(I_{0},I_{1},I_{*})$-approximant for $f.$ Threshold
degree corresponds to the special case
\begin{equation}
\degthr=\deg_{(0,\infty),(-\infty,0),(-\infty,\infty)}.\label{eq:degthr-iii}
\end{equation}
Other notable cases include \emph{$\epsilon$-approximate degree}
and \emph{one-sided $\epsilon$-approximate degree}, given by
\begin{align}
\deg_{\epsilon} & =\deg_{[-\epsilon,\epsilon],[1-\epsilon,1+\epsilon],[-\epsilon,1+\epsilon]},\label{eq:degeps-iii}\\
\deg_{\epsilon}^{+} & =\deg_{[-\epsilon,\epsilon],[1-\epsilon,\infty),(-\infty,\infty)},\label{eq:degeps-onesided-iii}
\end{align}
respectively. Our hardness amplification result applies to $(I_{0},I_{1},I_{*})$-approximate
degree for any nonempty convex $I_{0},I_{1},I_{*}\subseteq\Re$, with
threshold degree being a special case. The centerpiece of our argument
is the following lemma.
\begin{lem}
\label{lem:Booleanize}Let $c_{1},c_{2}>0$ be the absolute constants
from Theorem~\emph{\ref{thm:dual-MP}}. Let $n,m,r,d,\theta$ be
positive integers such that
\begin{align}
\theta & \geq2d,\label{eq:tilde-Lambda-d-theta-2}\\
\theta & \geq\frac{4\e nm(1+\ln(nm))}{c_{1}^{2}},\label{eq:tilde-Lambda-theta-2}\\
\theta & \geq\frac{2\sqrt{r}}{c_{2}}\left(d\log\left(\frac{8nmr}{c_{1}}\right)+2\right).\label{eq:tilde-Lambda-theta-new}
\end{align}
Then for each $z\in\zoon,$ there is a probability distribution $\tilde{\Lambda}_{z}$
on $\NN^{nm}$ such that:
\begin{enumerate}[topsep=2mm]
\item the support of $\tilde{\Lambda}_{z}$ is contained in $(\prod_{i=1}^{n}(\MP_{m,r}^{*})^{-1}(z_{i}))|_{<2\theta+nm};$
\item for every polynomial $p\colon\Re^{nm}\to\Re$ of degree at most $d,$
the mapping $z\mapsto\Exp_{\tilde{\Lambda}_{z}}p$ is a polynomial
on $\zoon$ of degree at most $\frac{1}{\min\{m,c_{1}\sqrt{r}\}}\cdot\deg p.$
\end{enumerate}
\end{lem}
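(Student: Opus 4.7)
The plan is to build $\tilde{\Lambda}_z$ in two stages: first form a distribution on $\NN^{nm}$ as the tensor product of the dual distributions for $\MP_{m,r}^{*}$ supplied by Theorem~\ref{thm:dual-MP}, then invoke the weight-reduction machinery of Corollary~\ref{cor:weight-reduction} to trim the support down into the weight range $<2\theta+nm$ without disturbing orthogonality to degree-$d$ polynomials.

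Concretely, let $\Lambda_{0},\Lambda_{1}\in\Distribution(\NN^{m})$ be as in Theorem~\ref{thm:dual-MP}, and for $z\in\zoon$ define
\[
\Lambda_{z}=\bigotimes_{i=1}^{n}\Lambda_{z_{i}}\in\Distribution(\NN^{nm}).
\]
Property~(\ref{eq:Lambda-conv}) together with the tensor-of-convex-hull containment~(\ref{eq:tensor-conv-conv-tensor}) gives $\Lambda_{z}\in\conv(\StronglyBounded(r,c_{1},c_{2}/\sqrt{r},1)^{\otimes nm})$, while~(\ref{eq:Lambda-support}) shows $\supp\Lambda_{z}\subseteq\prod_{i=1}^{n}(\MP_{m,r}^{*})^{-1}(z_{i})$. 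I would then apply Corollary~\ref{cor:weight-reduction} to $\Lambda_{z}$ with dimension $nm$, sparsity parameter $c=c_{1}$, exponent $\alpha=c_{2}/\sqrt{r}$, and offset $\Delta=1$. Its first two hypotheses are literal copies of~(\ref{eq:tilde-Lambda-d-theta-2}) and~(\ref{eq:tilde-Lambda-theta-2}); for the third, hypothesis~(\ref{eq:tilde-Lambda-theta-new}) gives
\[
\alpha\lceil\theta/2\rceil\;\geq\;\frac{c_{2}\theta}{2\sqrt{r}}\;\geq\;d\log\frac{8nmr}{c_{1}}+2,
\]
so $2^{\lceil\theta/r\rceil+\alpha\lceil\theta/2\rceil}\geq 4(8nmr/c_{1})^{d}$, as required. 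The corollary then produces a probability distribution $\tilde{\Lambda}_{z}$ with $\supp\tilde{\Lambda}_{z}\subseteq(\supp\Lambda_{z})|_{<2\theta+nm}$ and $\orth(\Lambda_{z}-\tilde{\Lambda}_{z})>d$; combined with the support containment above, the first of these immediately yields property~(i) of the lemma.

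For property~(ii), fix any polynomial $p\colon\Re^{nm}\to\Re$ of degree at most $d$. Since $\orth(\Lambda_{z}-\tilde{\Lambda}_{z})>d$, orthogonality gives
\[
\Exp_{\tilde{\Lambda}_{z}}p=\langle\tilde{\Lambda}_{z},p\rangle=\langle\Lambda_{z},p\rangle=\left\langle\bigotimes_{i=1}^{n}\Lambda_{z_{i}},p\right\rangle.
\]
Proposition~\ref{prop:expect-out} applied with $\phi_{i}=\Lambda_{i}$, together with the lower bound $\orth(\Lambda_{1}-\Lambda_{0})\geq\min\{m,c_{1}\sqrt{r}\}$ from~(\ref{eq:Lambda-orthog}), then shows that this is a polynomial in $z\in\zoon$ of degree at most $(\deg p)/\min\{m,c_{1}\sqrt{r}\}$.

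The only substantive obstacle is the arithmetic check that hypothesis~(\ref{eq:tilde-Lambda-theta-new})\textemdash\,which was engineered precisely for this purpose\,\textemdash translates into condition~(iii) of Corollary~\ref{cor:weight-reduction}; every other step is a mechanical assembly of previously prepared ingredients. All the real technical weight lies upstream, in Theorem~\ref{thm:weight-reduction} (the mass-shifting construction) and Theorem~\ref{thm:dual-MP} (the locally bounded dual polynomial for $\MP_{m,r}^{*}$), both of which this lemma invokes essentially black-box.
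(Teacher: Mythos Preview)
Your proposal is correct and follows essentially the same approach as the paper's own proof: form $\Lambda_z=\bigotimes_{i=1}^n\Lambda_{z_i}$ from the dual distributions of Theorem~\ref{thm:dual-MP}, apply Corollary~\ref{cor:weight-reduction} (using~(\ref{eq:tilde-Lambda-d-theta-2})--(\ref{eq:tilde-Lambda-theta-new}) to verify its hypotheses) to obtain $\tilde\Lambda_z$, and then read off~(ii) via Proposition~\ref{prop:expect-out} and~(\ref{eq:Lambda-orthog}). The only cosmetic omission is that Corollary~\ref{cor:weight-reduction} is stated for $\WeaklyBounded$ rather than $\StronglyBounded$, so you should invoke the containment~(\ref{eq:strong-weak-bounded}) when passing $\Lambda_z$ to that corollary; the paper makes this step explicit.
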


\begin{proof}
Theorem~\ref{thm:dual-MP} constructs probability distributions $\Lambda_{0}$
and $\Lambda_{1}$ such that
\begin{align}
 & \Lambda_{i}\in\conv\,\left(\StronglyBounded\!\left(r,c_{1},\frac{c_{2}}{\sqrt{r}},1\right)^{\otimes m}\right), &  & i=0,1,\label{eq:Lambda-conv-1}\\
 & \supp\Lambda_{i}\subseteq(\MP_{m,r}^{*})^{-1}(i), &  & i=0,1,\label{eq:Lambda-support-1}\\
 & \orth(\Lambda_{1}-\Lambda_{0})\geq\min\{m,c_{1}\sqrt{r}\}.\label{eq:Lambda-orthog-1}
\end{align}
As a result, the probability distributions $\Lambda_{z}=\bigotimes_{i=1}^{n}\Lambda_{z_{i}}$
for $z\in\zoon$ obey
\begin{align}
\Lambda_{z} & \in\left(\conv\left(\StronglyBounded\!\left(r,c_{1},\frac{c_{2}}{\sqrt{r}},1\right)^{\otimes m}\right)\right)^{\otimes n}\nonumber \\
 & \subseteq\conv\left(\StronglyBounded\!\left(r,c_{1},\frac{c_{2}}{\sqrt{r}},1\right)^{\otimes nm}\right)\nonumber \\
 & \subseteq\conv\left(\WeaklyBounded\!\left(r,c_{1},\frac{c_{2}}{\sqrt{r}},1\right)^{\otimes nm}\right),\label{eq:Lambda-z-bounded}
\end{align}
where the last two steps are valid by~(\ref{eq:tensor-conv-conv-tensor})
and~(\ref{eq:strong-weak-bounded}), respectively. By~(\ref{eq:tilde-Lambda-d-theta-2})\textendash (\ref{eq:tilde-Lambda-theta-new}),
(\ref{eq:Lambda-z-bounded}), and Corollary~\ref{cor:weight-reduction},
there are probability distributions $\tilde{\Lambda}_{z}$ for $z\in\zoon$
such that
\begin{align}
 & \supp\tilde{\Lambda}_{z}\subseteq(\supp\Lambda_{z})|_{<2\theta+nm},\label{eq:support-leadup}\\
 & \orth(\Lambda_{z}-\tilde{\Lambda}_{z})>d.\label{eq:orth-leadup}
\end{align}
We proceed to verify the properties required of $\tilde{\Lambda}_{z}$.
For~(i), it follows from~(\ref{eq:Lambda-support-1}) and~(\ref{eq:support-leadup})
that each $\tilde{\Lambda}_{z}$ has support contained in $(\prod_{i=1}^{n}(\MP_{m,r}^{*})^{-1}(z_{i}))|_{<2\theta+nm}.$
For~(ii), let $p$ be any polynomial of degree at most $d.$ Then~(\ref{eq:orth-leadup})
forces $\Exp_{\tilde{\Lambda}_{z}}p=\Exp_{\Lambda_{z}}p$, where the
right-hand side is by~(\ref{eq:Lambda-orthog-1}) and Proposition~\ref{prop:expect-out}
a polynomial in $z\in\zoon$ of degree at most $\deg p/\orth(\Lambda_{1}-\Lambda_{0})\leq\deg p/\min\{m,c_{1}\sqrt{r}\}.$
\end{proof}
At its core, a hardness amplification result is a lower bound on the
complexity of a composed function in terms of the complexities of
its constituent parts. We now prove such a composition theorem for
$(I_{0},I_{1},I_{*})$-approximate degree.
\begin{thm}
\label{thm:degthr-composition}There is an absolute constant $0<c<1$
such that 
\begin{align*}
\deg_{I_{0},I_{1},I_{*}}((f\circ\MP_{m}^{*})|_{\leq\theta}) & \geq\min\left\{ cm\deg_{I_{0},I_{1},I_{*}}(f),\frac{c\theta}{m\log(n+m)}-n\right\} ,\\
\deg_{I_{0},I_{1},I_{*}}((f\circ\neg\MP_{m}^{*})|_{\leq\theta}) & \geq\min\left\{ cm\deg_{I_{0},I_{1},I_{*}}(f),\frac{c\theta}{m\log(n+m)}-n\right\} 
\end{align*}
for all positive integers $n,m,\theta,$ all functions $f\colon\zoon\to\{0,1,*\},$
and all nonempty convex sets $I_{0},I_{1},I_{*}\subseteq\Re.$
\end{thm}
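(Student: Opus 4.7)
The plan is to argue by contraposition, using Lemma~\ref{lem:Booleanize} to turn an $(I_{0},I_{1},I_{*})$-approximant for $(f\circ\MP_{m}^{*})|_{\leq\theta}$ into one for $f$. Take $r=m^{2}$ in the lemma (so that $\min\{m,c_{1}\sqrt{r}\}=c_{1}m$) and $\theta_{\mathrm{lem}}=\lfloor(\theta-nm+1)/2\rfloor$, which ensures that the support bound $|x|<2\theta_{\mathrm{lem}}+nm$ from property~(i) implies $|x|\leq\theta$. Substituting $r=m^{2}$ into conditions~\eqref{eq:tilde-Lambda-d-theta-2}--\eqref{eq:tilde-Lambda-theta-new}, one verifies that any positive integer $d$ with $d\leq c\theta/(m\log(n+m))-n$, for $c$ a sufficiently small absolute constant, is admissible; this is exactly the second quantity in the theorem's lower bound. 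If that quantity is nonpositive the theorem is vacuous, so one may assume it is positive, which in turn forces $\theta=\Omega(nm\log(nm))$ and so validates~\eqref{eq:tilde-Lambda-theta-2}.

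Now suppose for contradiction that $P$ is an $(I_{0},I_{1},I_{*})$-approximant for $(f\circ\MP_{m}^{*})|_{\leq\theta}$ of degree strictly less than both $c_{1}m\,\deg_{I_{0},I_{1},I_{*}}(f)$ and $d$, and define $P^{*}(z)=\Exp_{x\sim\tilde{\Lambda}_{z}}P(x)$ for $z\in\zoon$. By property~(ii) of Lemma~\ref{lem:Booleanize}, $\deg P^{*}\leq(\deg P)/(c_{1}m)<\deg_{I_{0},I_{1},I_{*}}(f)$. By property~(i) together with the choice of $\theta_{\mathrm{lem}}$, every $x\in\supp\tilde{\Lambda}_{z}$ satisfies both $|x|\leq\theta$ and $\MP_{m}^{*}(x_{i})=z_{i}$ for all $i$; therefore $(f\circ\MP_{m}^{*})(x)=f(z)$ and $x$ lies in the domain of $(f\circ\MP_{m}^{*})|_{\leq\theta}$, so $P(x)\in I_{f(z)}$ (with $I_{f(z)}$ read as $I_{*}$ when $f(z)=*$). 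Since $\tilde{\Lambda}_{z}$ is a probability distribution and $I_{f(z)}$ is a convex subset of $\Re$, the average $P^{*}(z)$ also lies in $I_{f(z)}$. Thus $P^{*}$ is an $(I_{0},I_{1},I_{*})$-approximant for $f$ of degree below $\deg_{I_{0},I_{1},I_{*}}(f)$, contradicting the definition of that quantity.

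The $\neg\MP_{m}^{*}$ variant follows the identical argument with $\tilde{\Lambda}_{z}$ replaced by $\tilde{\Lambda}_{\overline{z}}$: since $(\neg\MP_{m}^{*})^{-1}(z_{i})=(\MP_{m}^{*})^{-1}(\overline{z}_{i})$, the new support lies in $\prod_{i}(\neg\MP_{m}^{*})^{-1}(z_{i})$, and $z\mapsto\Exp_{x\sim\tilde{\Lambda}_{\overline{z}}}P(x)$ is a polynomial in $\overline{z}$, hence in $z$, of the same degree bound. The principal technical obstacle is the parameter bookkeeping in the opening paragraph: one must pick $\theta_{\mathrm{lem}}$ large enough for the concentration and mass-transfer inequalities~\eqref{eq:tilde-Lambda-theta-2} and~\eqref{eq:tilde-Lambda-theta-new}, small enough that the support stays inside $\NN^{nm}|_{\leq\theta}$, and then distill from~\eqref{eq:tilde-Lambda-theta-new} the precise $\Theta(\theta/(m\log(n+m)))$ dependence, with the additive $-n$ accounting for the $nm$ loss in the support bound and the $\log(n+m)$ factor absorbing $\log(8nm^{3}/c_{1})$.
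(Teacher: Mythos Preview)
Your proposal is correct and follows essentially the same route as the paper: invoke Lemma~\ref{lem:Booleanize} with $r=m^{2}$ and $\theta'\approx(\theta-nm)/2$, average a putative low-degree $(I_{0},I_{1},I_{*})$-approximant $P$ against the distributions $\tilde{\Lambda}_{z}$, and use convexity of the $I_{i}$ together with the degree-reduction in property~(ii) to obtain an approximant for $f$ that is too good. The only cosmetic differences are that the paper phrases the argument directly rather than by contradiction, and handles the $\neg\MP_{m}^{*}$ case by the one-line observation that $\deg_{I_{0},I_{1},I_{*}}(f(x))=\deg_{I_{0},I_{1},I_{*}}(f(\neg x))$ (so the first bound applied to $f(\neg\,\cdot\,)$ yields the second), whereas you reindex via $\tilde{\Lambda}_{\overline{z}}$; both are equivalent.
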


\begin{proof}
Negating a function's inputs has no effect on the $(I_{0},I_{1},I_{*})$-approximate
degree, so that $f(x_{1},x_{2},\ldots,x_{n})$ and $f(\neg x_{1},\neg x_{2},\ldots,\neg x_{n})$
both have $(I_{0},I_{1},I_{*})$-approximate degree $\deg_{I_{0},I_{1},I_{*}}(f).$
Therefore, it suffices to prove the lower bound on $\deg_{I_{0},I_{1},I_{*}}((f\circ\MP_{m}^{*})|_{\leq\theta})$
for all $f$.

Let $c\in(0,1)$ be an absolute constant that is sufficiently small
relative to the constants in Lemma~\ref{lem:Booleanize}. For $\theta\leq\frac{1}{c}\cdot nm\log(n+m),$
the lower bounds in the statement of the theorem are nonpositive and
therefore trivially true. In the complementary case $\theta>\frac{1}{c}\cdot nm\log(n+m),$
Lemma~\ref{lem:Booleanize} applies to the positive integers $n',m',r',d',\theta'$,
where
\begin{align*}
n' & =n,\\
m' & =m,\\
r' & =m^{2},\\
\theta' & =\left\lfloor \frac{\theta-nm}{2}\right\rfloor ,\\
d' & =\left\lfloor \frac{c\theta}{m\log(n+m)}\right\rfloor .
\end{align*}
We thus obtain, for each $z\in\zoon$, a probability distribution
$\tilde{\Lambda}_{z}$ on $\NN^{nm}$ such that:
\begin{enumerate}
\item the support of $\tilde{\Lambda}_{z}$ is contained in $(\prod_{i=1}^{n}(\MP_{m}^{*})^{-1}(z_{i}))|_{\leq\theta};$
\item for every polynomial $p\colon\Re^{nm}\to\Re$ of degree at most $d',$
the mapping $z\mapsto\Exp_{\tilde{\Lambda}_{z}}p$ is a polynomial
on $\zoon$ of degree at most $\frac{1}{cm}\cdot\deg p.$
\end{enumerate}
Now, let $p\colon\Re^{nm}\to\Re$ be an $(I_{0},I_{1},I_{*})$-approximant
for $(f\circ\MP_{m}^{*})|_{\leq\theta}$ of degree at most $d'.$
Consider the mapping $p^{*}\colon z\mapsto\Exp_{\tilde{\Lambda}_{z}}p$,
which we view as a polynomial in $z\in\zoon$. Then~(i) along with
the convexity of $I_{0},I_{1},I_{*}$ ensures that $p^{*}$ is an
$(I_{0},I_{1},I_{*})$-approximant for $f$, whence $\deg p^{*}\geq\deg_{I_{0},I_{1},I_{*}}(f).$
At the same time,~(ii) guarantees that $\deg p^{*}\leq\frac{1}{cm}\cdot\deg p.$
This pair of lower and upper bounds force 
\[
\deg p\geq cm\deg_{I_{0},I_{1},I_{*}}(f).
\]
Since $p$ was chosen arbitrarily from among $(I_{0},I_{1},I_{*})$-approximants
of $(f\circ\MP_{m}^{*})|_{\leq\theta}$ that have degree at most $d',$
we conclude that
\begin{align*}
\deg_{I_{0},I_{1},I_{*}}((f\circ\MP_{m}^{*})|_{\leq\theta}) & \geq\min\{cm\deg_{I_{0},I_{1},I_{*}}(f),\;d'+1\}\\
 & \geq\min\left\{ cm\deg_{I_{0},I_{1},I_{*}}(f),\;\frac{c\theta}{m\log(n+m)}\right\} . &  & \qedhere
\end{align*}
\end{proof}
\noindent The previous composition theorem has the following analogue
for Boolean inputs.
\begin{thm}
\label{thm:degthr-composition-Boolean-input}Let $0<c<1$ be the absolute
constant from Theorem~\emph{\ref{thm:degthr-composition}.} Let $n,m,N$
be positive integers. Then there is an $($explicitly given$)$ transformation
$H\colon\zoo^{N}\to\zoon,$ computable by an AND-OR-AND circuit of
size $(Nnm)^{O(1)}$ with bottom fan-in $O(\log(nm)),$ such that
for all functions $f\colon\zoon\to\{0,1,*\}$ and all nonempty convex
sets $I_{0},I_{1},I_{*}\subseteq\Re,$ 
\begin{align*}
 & \!\!\!\!\!\!\!\!\!\!\!\!\!\!\!\phantom{\neg}\deg_{I_{0},I_{1},I_{*}}(f\circ H)\geq\min\left\{ cm\deg_{I_{0},I_{1},I_{*}}(f),\frac{cN}{50m\log^{2}(n+m)}-n\right\} \log(n+m),\\
 & \!\!\!\!\!\!\!\!\!\!\!\!\!\!\!\deg_{I_{0},I_{1},I_{*}}(f\circ\neg H)\geq\min\left\{ cm\deg_{I_{0},I_{1},I_{*}}(f),\frac{cN}{50m\log^{2}(n+m)}-n\right\} \log(n+m).
\end{align*}
\end{thm}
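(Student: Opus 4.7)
The plan is to reduce Theorem~\ref{thm:degthr-composition-Boolean-input} to its weight-restricted predecessor, Theorem~\ref{thm:degthr-composition}, by encoding $\NN^{nm}|_{\leq\theta}$ into $\zoo^{N}$ via the input transformation of Theorem~\ref{thm:input-compression}. Concretely, I would set $\theta=\lfloor N/(6\lceil\log(nm+1)\rceil)\rfloor$ and $N'=6\lceil\log(nm+1)\rceil\theta\le N$, apply Theorem~\ref{thm:input-compression} with dimension $nm$ and weight parameter $\theta$ to obtain a surjection $G\colon\zoo^{N'}\to\NN^{nm}|_{\leq\theta}$, and then define $H\colon\zoo^N\to\zoon$ to depend only on the first $N'$ bits via
\[
H(x)_i=\bigwedge_{j=1}^{m}\OR_\theta^{*}(G(x)_{i,j}),\qquad i=1,2,\ldots,n,
\]
so $H$ is essentially componentwise $\MP_m^{*}$ applied to $G$. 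By item~\ref{item:encoding-circuit-complexity} of Theorem~\ref{thm:input-compression}, each inner map $x\mapsto\OR_\theta^{*}(G(x)_{i,j})$ is an explicit DNF with $O(\theta(nm)^{6})$ terms of fan-in $O(\log(nm))$, so $H(x)_i$ is computable as an AND of $m$ such DNFs, yielding an AND-OR-AND circuit with top fan-in $m$ and bottom fan-in $O(\log(nm))$; aggregating over the $n$ outputs gives total size $(Nnm)^{O(1)}$, matching the claim.

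For the degree lower bound, let $p\colon\zoo^{N}\to\Re$ be any $(I_0,I_1,I_{*})$-approximant for $f\circ H$. First average $p$ over the $N-N'$ unused input bits; by convexity of $I_0,I_1,I_{*}$, the resulting polynomial $\bar p$ on $\zoo^{N'}$ has degree at most $\deg p$ and remains an $(I_0,I_1,I_{*})$-approximant. Then set
\[
p^{*}(v)=\Exp_{G^{-1}(v)}\bar p,\qquad v\in\NN^{nm}|_{\leq\theta}.
\]
Item~\ref{item:encoding-degree} of Theorem~\ref{thm:input-compression} guarantees $p^{*}$ is a polynomial on $\NN^{nm}|_{\leq\theta}$ of degree at most $\deg p/\lceil\log(nm+1)+1\rceil$. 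Moreover, for any $v$ and any $x\in G^{-1}(v)$, the value $H(x)_i$ depends only on which components of $v_i$ are nonzero and equals $\MP_m^{*}(v_i)$, so $(f\circ H)(x)=(f\circ\MP_m^{*})(v)$ is constant on the fiber $G^{-1}(v)$. A final invocation of convexity shows that $p^{*}$ is an $(I_0,I_1,I_{*})$-approximant for $(f\circ\MP_m^{*})|_{\leq\theta}$. Applying Theorem~\ref{thm:degthr-composition} to $p^{*}$ and multiplying through by $\lceil\log(nm+1)+1\rceil\ge\log(n+m)$ gives the desired bound for $f\circ H$; substituting $\theta=\Theta(N/\log(nm))$ converts the $c\theta/(m\log(n+m))$ factor into $cN/(\Theta(1)\cdot m\log^{2}(n+m))$, and routine log-arithmetic absorbs the remaining slack into the constant $50$. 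The $f\circ\neg H$ statement follows verbatim from the symmetric half of Theorem~\ref{thm:degthr-composition}.

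The main obstacle is bookkeeping rather than substance: one must confirm that the same absolute constant $c$ from Theorem~\ref{thm:degthr-composition} suffices, that the logarithmic factors introduced by the encoding ($\log(nm+1)$ from Theorem~\ref{thm:input-compression} versus $\log(n+m)$ in the statement) compose into the stated $\log^{2}(n+m)$ cleanly, and that the DNFs for the inner disjunctions expand into an AND-OR-AND circuit of the advertised polynomial size with the correct bottom fan-in. Once these constants are tracked, the reduction itself is essentially mechanical, relying on just two facts already in hand: the polynomial-degree preservation of the encoding $G$, and the convexity of $I_0,I_1,I_{*}$ that allows averaging over fibers.
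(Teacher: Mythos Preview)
Your approach is essentially the same as the paper's: encode $\NN^{nm}|_{\leq\theta}$ into $\zoo^{N}$ via Theorem~\ref{thm:input-compression}, define $H$ as componentwise Minsky--Papert composed with $G$, average an approximant over the fibers of $G$, and invoke Theorem~\ref{thm:degthr-composition}. One small point you elide that the paper makes explicit: your $H$ computes $\MP_{m,\theta}^{*}$ (you use $\OR_\theta^{*}$), whereas Theorem~\ref{thm:degthr-composition} is stated for $\MP_m^{*}=\MP_{m,m^{2}}^{*}$; the paper bridges this by observing that $(f\circ\MP_{m,\theta}^{*})|_{\leq\theta}$ contains $(f\circ\MP_m^{*})|_{\leq\theta}$ as a subfunction when $\theta>m^{2}$ and equals it otherwise, so an approximant for the former restricts to one for the latter. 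With that one-line fix, your argument goes through; the paper also chooses $\theta=\lceil N/(50\log(n+m))\rceil$ rather than your $\theta=\lfloor N/(6\lceil\log(nm+1)\rceil)\rfloor$, which makes the constant $50$ fall out without any log-arithmetic, but either choice works.
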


\noindent For the function $H$ with multibit output, the notation
$\neg H$ above refers to the function obtained by negating \emph{each}
of $H$'s outputs.
\begin{proof}[Proof of Theorem~\emph{\ref{thm:degthr-composition-Boolean-input}}.]
 As in the previous proof, settling the first lower bound for all
$f$ will automatically settle the second lower bound, due to the
invariance of $(I_{0},I_{1},I_{*})$-approximate degree under negation
of the input bits. In what follows, we focus on~$f\circ H$. 

We may assume that $N\geq50mn\log^{2}(n+m)$ since otherwise the lower
bounds in the theorem statement are nonpositive and hence trivially
true. Define
\[
\theta=\left\lceil \frac{N}{50\log(n+m)}\right\rceil .
\]
Theorem~\ref{thm:input-compression} gives a surjection $G\colon\zoo^{6\theta\lceil\log(nm+1)\rceil}\to\NN^{nm}|_{\leq\theta}$
with the following two properties: 
\begin{enumerate}[topsep=3mm]
\item for every coordinate $i=1,2,\ldots,nm,$ the mapping $x\mapsto\OR_{\theta}^{*}(G(x)_{i})$
is computable by an explicit DNF formula of size $(nm\theta)^{O(1)}=N^{O(1)}$
with bottom fan-in $O(\log(nm))$;
\item for any polynomial $p,$ the map $v\mapsto\Exp_{G^{-1}(v)}p$ is a
polynomial on $\NN^{nm}|_{\leq\theta}$ of degree at most $(\deg p)/\lceil\log(nm+1)+1\rceil\leq(\deg p)/\log(n+m)$.
\end{enumerate}
Consider the composition $F=(f\circ\MP_{m,\theta}^{*})\circ G.$ Then
\begin{align*}
F & =(f\circ(\AND_{m}\circ\OR_{\theta}^{*}))\circ G\\
 & =f\circ((\underbrace{\AND_{m}\circ\OR_{\theta}^{*},\ldots,\AND_{m}\circ\OR_{\theta}^{*}}_{n})\circ G),
\end{align*}
which by property~(i) of $G$ means that $F$ is the composition
of $f$ and an AND-OR-AND circuit $H$ on $6\theta\lceil\log(nm+1)\rceil\leq N$
variables  of size $(nmN)^{O(1)}=N^{O(1)}$ with bottom fan-in $O(\log(nm)).$
Hence, the proof will be complete once we show that
\begin{equation}
\deg_{I_{0},I_{1},I_{*}}(F)\geq\min\left\{ cm\deg_{I_{0},I_{1},I_{*}}(f),\frac{cN}{50m\log^{2}(n+m)}-n\right\} \log(n+m).\label{eq:degthr-composition-needed-bound}
\end{equation}

For this, fix an $(I_{0},I_{1},I_{*})$-approximant $p$ for $F$
of degree $\deg_{I_{0},I_{1},I_{*}}(F)$. Consider the polynomial
$p^{*}\colon\NN^{nm}|_{\leq\theta}\to\Re$ given by $p^{*}(v)=\Exp_{G^{-1}(v)}p.$
Since $I_{0},I_{1},I_{*}$ are convex and $p$ is an $(I_{0},I_{1},I_{*})$-approximant
for $F=(f\circ\MP_{m,\theta}^{*})\circ G$, it follows that $p^{*}$
is an $(I_{0},I_{1},I_{*})$-approximant for $(f\circ\MP_{m,\theta}^{*})|_{\leq\theta}$.
Therefore,
\begin{align*}
\deg p^{*} & \geq\deg_{I_{0},I_{1},I_{*}}((f\circ\MP_{m,\theta}^{*})|_{\leq\theta})\\
 & \geq\deg_{I_{0},I_{1},I_{*}}((f\circ\MP_{m}^{*})|_{\leq\theta})\\
 & \geq\min\left\{ cm\deg_{I_{0},I_{1},I_{*}}(f),\frac{c\theta}{m\log(n+m)}-n\right\} \\
 & \geq\min\left\{ cm\deg_{I_{0},I_{1},I_{*}}(f),\frac{cN}{50m\log^{2}(n+m)}-n\right\} ,
\end{align*}
where the second step is valid because $(f\circ\MP_{m,\theta}^{*})|_{\leq\theta}$
either contains the function $(f\circ\MP_{m}^{*})|_{\leq\theta}=(f\circ\MP_{m,m^{2}}^{*})|_{\leq\theta}$
as a subfunction (case $\theta>m^{2}$), or is equal to it (case $\theta\leq m^{2}$);
and the third step applies Theorem~\ref{thm:degthr-composition}.
However, property~(ii) of $G$ states that
\begin{align*}
\deg p^{*} & \leq\frac{\deg p}{\log(n+m)}\\
 & =\frac{\deg_{I_{0},I_{1},I_{*}}(F)}{\log(n+m)}.
\end{align*}
Comparing these lower and upper bounds on the degree of $p^{*}$ settles~(\ref{eq:degthr-composition-needed-bound}). 
\end{proof}
At last, we illustrate the use of the previous two composition results
to amplify hardness for polynomial approximation.
\begin{thm}[Hardness amplification]
\label{thm:degthr-hardness-amplification}Let $I_{0},I_{1},I_{*}\subseteq\Re$
be any nonempty convex subsets. Let $f\colon\zoon\to\zoo$ be a given
function with 
\[
\deg_{I_{0},I_{1},I_{*}}(f)\geq n^{1-\frac{1}{k}},
\]
for some real number $k\geq1.$ Suppose further that $f$ is computable
by a Boolean circuit of size $s$ and depth $d,$ where $d\geq1.$
Then there is a function $F\colon\zoo^{N}\to\zoo$ on $N=\Theta(n^{1+\frac{1}{k}}\log^{2}n)$
variables with
\[
\deg_{I_{0},I_{1},I_{*}}(F)\geq\Omega\left(\frac{N^{1-\frac{1}{k+1}}}{\log^{1-\frac{2}{k+1}}N}\right).
\]
Moreover, $F$ is computable by a Boolean circuit of size $s+n^{O(1)},$
bottom fan-in $O(\log n),$ depth $d+2$ if the circuit for $f$ is
monotone, and depth $d+3$ otherwise.
\end{thm}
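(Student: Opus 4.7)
The plan is to apply Theorem~\ref{thm:degthr-composition-Boolean-input} with carefully chosen parameters, and then translate the resulting lower bound from a statement in $n$ into one in $N$.

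First I would set $m = \lceil n^{1/k} \rceil$ and $N = C\, n^{1+1/k} \log^{2} n$ for a sufficiently large absolute constant $C$. These choices are designed to balance the two quantities inside the minimum in Theorem~\ref{thm:degthr-composition-Boolean-input}: the first term becomes
\[
c\, m\, \deg_{I_{0},I_{1},I_{*}}(f) \;\geq\; c\, n^{1/k} \cdot n^{1 - 1/k} \;=\; c n,
\]
while the second, $cN/(50m\log^{2}(n+m)) - n$, is also $\Omega(n)$ by the choice of $N$. Theorem~\ref{thm:degthr-composition-Boolean-input} thus furnishes a transformation $H\colon \zoo^{N} \to \zoon$, computable by an AND-OR-AND circuit of size $(Nnm)^{O(1)} = n^{O(1)}$ with bottom fan-in $O(\log(nm)) = O(\log n)$, such that both $f \circ H$ and $f \circ \neg H$ have $(I_{0},I_{1},I_{*})$-approximate degree at least $\Omega(n)\cdot \log(n+m) = \Omega(n \log n)$.

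Next I would re-express this lower bound in terms of $N$. Inverting the relation $N = \Theta(n^{1+1/k} \log^{2} n)$ and using $\log n = \Theta(\log N)$ yields $n = \Theta(N^{k/(k+1)}/\log^{2k/(k+1)} N)$, so that
\[
n \log n \;=\; \Omega\!\left(\frac{N^{\,1 - 1/(k+1)}}{\log^{\,1 - 2/(k+1)} N}\right),
\]
exactly matching the target bound. Taking $F$ to be whichever of $f \circ H$ or $f \circ \neg H$ is most convenient for the depth analysis below settles the degree portion of the statement.

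For the circuit complexity of $F$, the size is $s + (Nnm)^{O(1)} = s + n^{O(1)}$, and the bottom fan-in is $O(\log n)$ as delivered by Theorem~\ref{thm:degthr-composition-Boolean-input}. The depth analysis is the only remaining subtlety: the naive composition of $f$ with the depth-$3$ circuit $H$ (or $\neg H$, which after pushing negations down via De~Morgan becomes an OR-AND-OR circuit with negated literals at the inputs) has depth $d+3$. When $f$ is monotone, one may normalize its circuit so that its bottom layer consists of a single gate type, and then choose $F = f \circ H$ or $F = f \circ \neg H$ so that the top layer of the selected transformation matches the bottom layer of $f$; the matching layers merge and the depth drops to $d+2$. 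For non-monotone $f$, the mixture of positive and negated literals at the bottom of $f$'s circuit generally obstructs this merger, so $d+3$ is the best one can guarantee. I anticipate that the only mildly nontrivial aspect of the write-up is verifying that the depth savings are attainable in the monotone case; everything else reduces to parameter bookkeeping on top of Theorem~\ref{thm:degthr-composition-Boolean-input}.
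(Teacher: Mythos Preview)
Your proposal is correct and follows essentially the same approach as the paper: the paper likewise sets $m = \lceil n^{1/k}\rceil$ and $N = \Theta(n^{1+1/k}\log^2 n)$, invokes Theorem~\ref{thm:degthr-composition-Boolean-input} to get the $\Omega(n\log n)$ lower bound for both $f\circ H$ and $f\circ\neg H$, and then rewrites this as $\Theta(N^{1-1/(k+1)}/\log^{1-2/(k+1)}N)$. The monotone depth-saving argument is also handled the same way, by merging the bottom layer of $f$ with the top layer of whichever of $H$ or $\neg H$ has the matching gate type.
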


\begin{proof}
Take
\begin{align*}
m & =\lceil n^{1/k}\rceil,\\
N & =\left\lceil \frac{100}{c}\,mn\log^{2}(n+m)\right\rceil ,
\end{align*}
where $0<c<1$ is the absolute constant from Theorem~\ref{thm:degthr-composition}.
Then Theorem~\ref{thm:degthr-composition-Boolean-input} gives an
explicit transformation $H\colon\zoo^{N}\to\zoon,$ computable by
an AND-OR-AND circuit of size $n^{O(1)}$ with bottom fan-in $O(\log n)$,
such that
\begin{align*}
 & \min\{\deg_{I_{0},I_{1},I_{*}}(f\circ H),\deg_{I_{0},I_{1},I_{*}}(f\circ\neg H)\}\\
 & \qquad\qquad\geq\min\left\{ cm\deg_{I_{0},I_{1},I_{*}}(f),\frac{cN}{50m\log^{2}(n+m)}-n\right\} \log(n+m)\\
 & \qquad\qquad\geq cn\log n\\
 & \qquad\qquad=\Theta\left(\frac{N^{1-\frac{1}{k+1}}}{\log^{1-\frac{2}{k+1}}N}\right).
\end{align*}
Now, fix a circuit for $f$ of size $s$ and depth $d\geq1$. Composing
the circuits for $f$ and $H$ results in circuits for $f\circ H$
and $f\circ\neg H$ of size $s+n^{O(1)}$, bottom fan-in $O(\log n)$,
and depth at most $d+3$. Thus, $F$ can be taken to be either of
$f\circ H$ and $f\circ\neg H$.

When the circuit for $f$ is monotone, the depth of $F$ can be reduced
to $d+2$ as follows. After merging like gates if necessary, the circuit
for $f$ can be viewed as composed of $d$ layers of alternating gates
($\wedge$ and $\vee$). The bottom layer of $f$ can therefore be
merged with the top layer of either $H$ or $\neg H,$ resulting in
a circuit of depth at most $(d+3)-1=2$.
\end{proof}

We emphasize that in view of~(\ref{eq:degthr-iii}), the symbol $\deg_{I_{0},I_{1},I_{*}}$
in Theorems~\ref{thm:degthr-composition}\textendash \ref{thm:degthr-hardness-amplification}
can be replaced with the threshold degree symbol $\degthr$. The same
goes for any other special case of $(I_{0},I_{1},I_{*})$-approximate
degree.

\subsection{\label{subsec:Results-for-AC}Threshold degree and discrepancy of
AC\protect\textsuperscript{0}}

We have reached our main result on the sign-representation of constant-depth
circuits. For any $\epsilon>0,$ the next theorem constructs a circuit
family in $\classAC^{0}$ with threshold degree $\Omega(n^{1-\epsilon}).$
The proof amounts to a recursive application of the hardness amplification
procedure of Section~\ref{subsec:Hardness-amplification-for}.
\begin{thm}
\label{thm:degthr-ac0}Let $k\geq1$ be a fixed integer. Then there
is an $($explicitly given$)$ family of functions $\{f_{k,n}\}_{n=1}^{\infty},$
where $f_{k,n}\colon\zoon\to\zoo$ has threshold degree 
\begin{equation}
\degthr(f_{k,n})=\Omega\left(n^{\frac{k-1}{k+1}}\cdot(\log n)^{-\frac{1}{k+1}\lceil\frac{k-2}{2}\rceil\lfloor\frac{k-2}{2}\rfloor}\right)\label{eq:degthr-ac0}
\end{equation}
and is computable by a monotone Boolean circuit of size $n^{O(1)}$
and depth $k.$ In addition, the circuit for $f_{k,n}$ has bottom
fan-in $O(\log n)$ for all $k\ne2.$
\end{thm}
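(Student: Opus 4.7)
The plan is to prove Theorem~\ref{thm:degthr-ac0} by strong induction on $k,$ iterating the hardness amplification theorem (Theorem~\ref{thm:degthr-hardness-amplification}) from two monotone base cases.

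For the base, take $f_{1,n}=\AND_{\lceil\log n\rceil}$ padded with irrelevant variables (monotone, depth $1,$ bottom fan-in $O(\log n),$ threshold degree $1$), and $f_{2,n}=\OR_{m}\circ\AND_{m^{2}}$ with $m=\Theta(n^{1/3})$ (monotone, depth $2,$ bottom layer $\wedge,$ threshold degree $\Omega(n^{1/3})$ by Theorem~\ref{thm:MP-thrdeg} combined with the invariance of threshold degree under simultaneous negation of inputs and outputs). In both cases the log exponent $L_{k}:=\tfrac{1}{k+1}\lceil(k-2)/2\rceil\lfloor(k-2)/2\rfloor$ vanishes, consistent with the absence of any log factor in the threshold degree bound.

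For the inductive step $k\geq 3,$ I apply the amplification to the inductively constructed depth-$(k-2)$ monotone circuit whose bottom layer is $\wedge$ and whose threshold degree is $\Omega(n_{0}^{(k-3)/(k-1)}/\log^{L_{k-2}}n_{0}).$ Composition with the AND-OR-AND gadget $H$ from Theorem~\ref{thm:degthr-composition-Boolean-input} merges the bottom $\wedge$ of $f_{k-2,n_{0}}$ with the top $\wedge$ of $H,$ so $f_{k,n}=f_{k-2,n_{0}}\circ H$ is monotone, has depth $k$ and polynomial size, has bottom fan-in $O(\log n)$ (from $H$'s bottom AND layer), and again has bottom layer $\wedge$ ready for the next iteration. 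The amplification produces $\degthr(f_{k,n})\geq\Omega(n_{0}\log n_{0});$ the relation $n=\Theta(n_{0}^{(k+1)/(k-1)}\log^{2+L_{k-2}}n_{0})$ then rewrites this as
\[
\degthr(f_{k,n})\geq\Omega\!\left(\frac{n^{(k-1)/(k+1)}}{\log^{L_{k}}n}\right),\qquad L_{k}=\frac{(2+L_{k-2})(k-1)}{k+1}-1.
\]
Setting $g(k)=(k+1)L_{k}$ gives the recurrence $g(k)=g(k-2)+k-3$ with $g(1)=g(2)=0,$ solved by $g(k)=\lceil(k-2)/2\rceil\lfloor(k-2)/2\rfloor,$ matching the claimed exponent.

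The principal obstacle is the tight tracking of the polylog factors. Theorem~\ref{thm:degthr-hardness-amplification} is stated under the clean hypothesis $\degthr(f)\geq n^{1-1/k}$ with no log factor, whereas the induction carries an extra $1/\log^{L_{k-2}}n_{0}$ factor. To obtain the sharp recurrence above rather than a strictly weaker one, I reopen the amplification and invoke Theorem~\ref{thm:degthr-composition-Boolean-input} directly with the rebalanced Minsky\textendash Papert parameter $m\sim n_{0}/\degthr(f_{k-2,n_{0}})$ (in place of the canonical $m=\lceil n_{0}^{1/k_{\mathrm{param}}}\rceil$), so that $cm\,\degthr(f_{k-2,n_{0}})\log(n_{0}+m)=\Theta(n_{0}\log n_{0})$ after absorbing the $\log^{L_{k-2}}n_{0}$ factor into a slightly larger $N.$ This rebalancing is what produces the precise log exponent stated in the theorem.
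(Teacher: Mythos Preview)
Your approach is essentially the paper's: induct from depth $k-2$ to depth $k$ by invoking Theorem~\ref{thm:degthr-composition-Boolean-input} directly with the Minsky--Papert parameter $m$ balanced so that both branches of the $\min$ are $\Theta(n_0)$, and track the recursion $g(k)=g(k-2)+(k-3)$ for the log exponent. Your base cases differ only cosmetically from the paper's (the paper uses $x\mapsto x_1$ and $\MP_{\lfloor n^{1/3}\rfloor}$, then dualizes if needed to force $\wedge$ at the bottom); arranging the bases with bottom layer $\wedge$ from the start is a clean variant of the same idea.

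There is, however, a gap in your monotonicity claim. The gadget $H$ of Theorem~\ref{thm:degthr-composition-Boolean-input} is an AND-OR-AND circuit, but its bottom AND gates come from the DNF encoding in Theorem~\ref{thm:input-compression} (built from the gadget of Lemma~\ref{lem:reencoding}) and will in general read negated literals; nothing in those statements promises a monotone $H$. Hence $f_{k-2,n_0}\circ H$ need not be monotone, and your invariant ``bottom layer is $\wedge$ reading positive literals'' is not preserved. This matters on the \emph{next} iteration: if the bottom $\wedge$-gates of $f_{k,n}$ read some $\neg x_i$, then composing with the next gadget $H'$ forces those gates to read $\neg H'_i$, an OR-AND-OR subtree, and the merge that saves a layer of depth no longer applies. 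The paper closes this gap at the end of each inductive step by replacing every negated input $\neg x_i$ in the circuit for $F_N$ with a fresh positive variable; the resulting monotone circuit on $2N$ inputs has threshold degree at least that of $F_N$ (since $F_N$ is a restriction of it) and restores the invariant. Adding this variable-doubling step to your argument completes it.
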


\begin{proof}
The proof is by induction on $k.$ The base cases $k=1$ and $k=2$
correspond to the families
\begin{align*}
f_{1,n}(x) & =x_{1}, &  & n=1,2,3,\ldots,\\
f_{2,n}(x) & =\MP_{\lfloor n^{1/3}\rfloor}, &  & n=1,2,3,\ldots.
\end{align*}
For the former, the threshold degree lower bound~(\ref{eq:degthr-ac0})
is trivial. For the latter, it follows from Theorem~\ref{thm:MP-thrdeg}.

For the inductive step, fix $k\geq3$. Due to the asymptotic nature
of~(\ref{eq:degthr-ac0}), it is enough to construct the functions
in $\{f_{k,n}\}_{n=1}^{\infty}$ for $n$ larger than a certain constant
of our choosing. As a starting point, the inductive hypothesis gives
an explicit family $\{f_{k-2,n}\}_{n=1}^{\infty}$ in which $f_{k-2,n}\colon\zoon\to\zoo$
has threshold degree
\begin{equation}
\degthr(f_{k-2,n})=\Omega\left(n^{\frac{k-3}{k-1}}\cdot(\log n)^{-\frac{1}{k-1}\lceil\frac{k-4}{2}\rceil\lfloor\frac{k-4}{2}\rfloor}\right)\label{eq:degthr-inductive}
\end{equation}
and is computable by a monotone Boolean circuit of size $n^{O(1)}$
and depth $k-2.$ We view the circuit for $f_{k-2,n}$ as composed
of $k-2$ layers of alternating gates, where without loss of generality
the bottom layer consists of AND gates. This last property can be
forced by using $\neg f_{k-2,n}(\neg x_{1},\neg x_{2},\ldots,\neg x_{n})$
instead of $f_{k-2,n}(x_{1},x_{2},\ldots,x_{n})$, which interchanges
the circuit's AND and OR gates without affecting the threshold degree,
circuit depth, or circuit size.

Now, let $c>0$ be the absolute constant from Theorem~\ref{thm:degthr-composition}.
For every $N$ larger than a certain constant, we apply Theorem~\ref{thm:degthr-composition-Boolean-input}
with 
\begin{align}
n & =\left\lceil N^{\frac{k-1}{k+1}}(\log N)^{-\frac{1}{k+1}\lceil\frac{k-4}{2}\rceil\lfloor\frac{k-4}{2}\rfloor-\frac{2(k-1)}{k+1}}\cdot\frac{c}{100}\right\rceil ,\label{eq:param-setting-n}\\
m & =\left\lceil N^{\frac{2}{k+1}}(\log N)^{\frac{1}{k+1}\lceil\frac{k-4}{2}\rceil\lfloor\frac{k-4}{2}\rfloor-\frac{4}{k+1}}\right\rceil ,\label{eq:param-setting-m}\\
f & =f_{k-2,n},\\
I_{0} & =(0,\infty),\\
I_{1} & =(-\infty,0),\\
I_{*} & =(-\infty,\infty)
\end{align}
to obtain a function $H_{N}\colon\zoo^{N}\to\zoo^{n}$ such that the
composition $F_{N}=f_{k-2,n}\circ H_{N}$ has threshold degree
\begin{align}
\degthr(F_{N}) & \geq\min\left\{ cm\degthr(f_{k-2,n}),\frac{cN}{50m\log^{2}(n+m)}-n\right\} \log(n+m)\nonumber \\
 & =\Theta\left(N^{\frac{k-1}{k+1}}\,(\log N)^{-\frac{1}{k+1}\lceil\frac{k-4}{2}\rceil\lfloor\frac{k-4}{2}\rfloor-\frac{k-3}{k+1}}\right)\nonumber \\
 & =\Theta\left(N^{\frac{k-1}{k+1}}\,(\log N)^{-\frac{1}{k+1}\lceil\frac{k-2}{2}\rceil\lfloor\frac{k-2}{2}\rfloor}\right),\label{eq:F_N}
\end{align}
where the second step uses~(\ref{eq:degthr-inductive})\textendash (\ref{eq:param-setting-m}).
Moreover, Theorem~\ref{thm:degthr-composition-Boolean-input} ensures
that $H_{N}$ is computable by an AND-OR-AND circuit of polynomial
size and bottom fan-in $O(\log N)$. The bottom layer of $f_{k-2,n}$
consists of AND gates, which can be merged with the top layer of $H_{N}$
to produce a circuit for $F_{N}=f_{k-2,n}\circ H_{N}$ of depth~$(k-2)+3-1=k.$

We have thus constructed, for some constant $N_{0}$, a family of
functions $\{F_{N}\}_{N=N_{0}}^{\infty}$ in which each $F_{N}\colon\zoo^{N}\to\zoo$
has threshold degree~(\ref{eq:F_N}) and is computable by a Boolean
circuit of polynomial size, depth $k,$ and bottom fan-in $O(\log N).$
Now, take the circuit for $F_{N}$ and replace the negated inputs
in it with $N$ new, unnegated inputs. The resulting \emph{monotone}
circuit on $2N$ variables clearly has threshold degree at least that
of $F_{N}$. This completes the inductive step.
\end{proof}

\noindent Theorem~\ref{thm:degthr-ac0} settles Theorem~\ref{thm:MAIN-degthr-ac0}
from the introduction. Using the pattern matrix method, we now ``lift''
this result to communication complexity.
\begin{thm}
\label{thm:disc-ac0}Let $k\geq3$ be a fixed integer. Let $\ell\colon\NN\to\{2,3,4,\ldots\}$
be given. Then there is an $($explicitly given$)$ family $\{F_{n}\}_{n=1}^{\infty},$
where $F_{n}\colon(\zoon)^{\ell(n)}\to\zoo$ is an $\ell(n)$-party
communication problem with discrepancy
\begin{equation}
\disc(F_{n})\leq2\exp\left(-\Omega\left(\frac{\left\lceil \frac{n}{4^{\ell(n)}\ell(n)^{2}}\right\rceil ^{\frac{k-1}{k+1}}}{\left(1+\log\left\lceil \frac{n}{4^{\ell(n)}\ell(n)^{2}}\right\rceil \right)^{\frac{1}{k+1}\lceil\frac{k-2}{2}\rceil\lfloor\frac{k-2}{2}\rfloor}}\right)\right)\label{eq:disc-ac0-Fn-disc}
\end{equation}
and communication complexity
\begin{equation}
\pp(F_{n})=\Omega\left(\frac{\left\lceil \frac{n}{4^{\ell(n)}\ell(n)^{2}}\right\rceil ^{\frac{k-1}{k+1}}}{\left(1+\log\left\lceil \frac{n}{4^{\ell(n)}\ell(n)^{2}}\right\rceil \right)^{\frac{1}{k+1}\lceil\frac{k-2}{2}\rceil\lfloor\frac{k-2}{2}\rfloor}}\right).\label{eq:disc-ac0-Fn-PP}
\end{equation}
Moreover, $F_{n}$ is computable by a Boolean circuit of size $n^{O(1)}$
and depth $k+1$ in which the bottom two layers have fan-in $O(4^{\ell(n)}\ell(n)^{2}\log n)$
and $\ell(n),$ in that order. In particular, if $\ell(n)=O(1),$
then $F_{n}$ is computable by a Boolean circuit of polynomial size,
depth $k,$ and bottom fan-in $O(\log n).$
\end{thm}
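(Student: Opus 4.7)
The plan is to instantiate the multiparty pattern matrix method (Theorem~\ref{thm:pm-discrepancy-k-party}) on the hard function $f_{k,n'}$ furnished by Theorem~\ref{thm:degthr-ac0}, with the inner parameter $n'$ chosen so that each player's share is exactly $n$ bits.

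Write $\ell=\ell(n)$ and let $c$ be the constant from Theorem~\ref{thm:pm-discrepancy-k-party}. Take $m=\lceil 4c^{2}\cdot 4^{\ell}\ell^{2}\rceil$, so that $c\cdot 2^{\ell}\ell/\sqrt{m}\le 1/2$, and put $n'=\lfloor n/m\rfloor=\Theta(n/(4^{\ell}\ell^{2}))$. Since $k\ge 3$, Theorem~\ref{thm:degthr-ac0} supplies a polynomial-size monotone depth-$k$ circuit $f_{k,n'}$ of bottom fan-in $O(\log n')$ and threshold degree $D=\Omega((n')^{(k-1)/(k+1)}\,(\log n')^{-(1/(k+1))\lceil(k-2)/2\rceil\lfloor(k-2)/2\rfloor})$. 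By replacing $f_{k,n'}(x)$ by $\neg f_{k,n'}(\neg x)$ if necessary (which preserves threshold degree, depth, and size), we may assume the bottom layer of $f_{k,n'}$ consists of AND gates. Define
\[F_{n}=f_{k,n'}\circ\NOR_{m}\circ\AND_{\ell},\]
padded with dummy input variables so that each of the $\ell$ players receives exactly $n$ bits in the standard number-on-the-forehead partition associated with $\AND_{\ell}$.

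Theorem~\ref{thm:pm-discrepancy-k-party} immediately gives $\disc(F_{n})\le(c\cdot 2^{\ell}\ell/\sqrt{m})^{D/2}\le 2^{-D/2}$, and Corollary~\ref{cor:dm} converts this into $\pp(F_{n})\ge D/2+1=\Omega(D)$. Substituting $n'=\Theta(n/(4^{\ell}\ell^{2}))$ into the expression for $D$ yields the claimed bounds~(\ref{eq:disc-ac0-Fn-disc}) and~(\ref{eq:disc-ac0-Fn-PP}).

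It remains to verify the circuit structure. By de Morgan, $\NOR_{m}\circ\AND_{\ell}=\AND_{m}\circ\OR_{\ell}$ applied to negated bottom inputs, and the bottom AND layer of $f_{k,n'}$ (of fan-in $O(\log n')$) fuses with the outer $\AND_{m}$ into a single AND of fan-in $O(m\log n')=O(4^{\ell}\ell^{2}\log n)$, feeding into a bottom $\OR_{\ell}$ layer. This yields depth $k+1$ with exactly the claimed fan-ins at the bottom two layers. When $\ell=O(1)$, both $m$ and $\ell$ are constants and the fused bottom AND has fan-in $O(\log n)$; distributing the bottom $\OR_{\ell}$'s through the AND above expands these two layers into a single OR of $\ell^{O(\log n)}=n^{O(1)}$ terms, each an AND of $O(\log n)$ variables, dropping the depth to $k$ while preserving polynomial size and bottom fan-in $O(\log n)$. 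The main technical care lies precisely in this layer-accounting: forcing the bottom of $f_{k,n'}$ to be AND (so that the fusion works), and checking that the distributive expansion in the constant-$\ell$ case stays polynomial-size; both reduce to the observation that the fused bottom fan-in $O(m\log n')$ is only $O(\log n)$ when $\ell=O(1)$.
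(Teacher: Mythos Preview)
Your proof is correct and follows essentially the same route as the paper: apply Theorem~\ref{thm:pm-discrepancy-k-party} to the composed function $f_{k,n'}\circ\NOR_{m}\circ\AND_{\ell}$ with $m=\Theta(4^{\ell}\ell^{2})$, derive the $\pp$ bound via Corollary~\ref{cor:dm}, and obtain the circuit depth by merging the bottom AND layer of $f_{k,n'}$ with the $\AND_{m}$ from $\NOR_{m}\circ\AND_{\ell}=\AND_{m}\circ\OR_{\ell}(\neg\,\cdot\,)$. The only cosmetic difference is that the paper explicitly sets $F_{n}=\AND_{\ell(n)}$ in the degenerate regime $n\le 2m$ (where $n'=0$ and the stated bounds are vacuous anyway), and phrases the constant-$\ell$ depth reduction as collapsing the bottom three levels to two rather than your equivalent distribute-then-merge argument.
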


\begin{proof}
Theorem~\ref{thm:degthr-ac0} constructs a family of functions $\{f_{n}\}_{n=1}^{\infty}$,
where $f_{n}\colon\zoon\to\zoo$ has threshold degree 
\begin{equation}
\degthr(f_{n})=\Omega\left(n^{\frac{k-1}{k+1}}\cdot(\log n)^{-\frac{1}{k+1}\lceil\frac{k-2}{2}\rceil\lfloor\frac{k-2}{2}\rfloor}\right)\label{eq:degthr-ac0-restated}
\end{equation}
and is computable by a \emph{monotone} Boolean circuit of polynomial
size, depth~$k,$ and bottom fan-in $O(\log n).$ We view the circuit
for $f_{n}$ as composed of $k$ layers of alternating gates, where
without loss of generality the bottom layer consists of AND gates.
This last property can be forced by using $\neg f_{n}(\neg x_{1},\neg x_{2},\ldots,\neg x_{n})$
instead of $f_{n}(x_{1},x_{2},\ldots,x_{n})$, which interchanges
the circuit's AND and OR gates without affecting the threshold degree,
circuit depth, circuit size, or bottom fan-in. 

Now, let $c>0$ be the absolute constant from Theorem~\ref{thm:pm-discrepancy-k-party}.
For any given $n,$ define
\[
F_{n}=\begin{cases}
\AND_{\ell(n)} & \text{if }n\leq2m,\\
f_{\lfloor n/m\rfloor}\circ\NOR_{m}\circ\AND_{\ell(n)} & \text{otherwise},
\end{cases}
\]
where $m=2\lceil c2^{\ell(n)}\ell(n)\rceil^{2}$. Then the discrepancy
bound~(\ref{eq:disc-ac0-Fn-disc}) is trivial for $n\leq2m$, and
follows from~(\ref{eq:degthr-ac0-restated}) and Theorem~\ref{thm:pm-discrepancy-k-party}
for $n>2m$. The lower bound~(\ref{eq:disc-ac0-Fn-PP}) on the communication
complexity of $F_{n}$ with weakly unbounded error is now immediate
by the discrepancy method (Corollary~\ref{cor:dm}).

It remains to examine the circuit complexity of $F_{n}.$ Since $f_{n}$
is computable by a monotone circuit of size $n^{O(1)}$ and depth
$k$, with the bottom layer composed of AND gates of fan-in $O(\log n)$,
it follows that $F_{n}$ is computable by a circuit of size $n^{O(1)}$
and depth at most $k+1$ in which the bottom two levels have fan-in
$O(\log n)\cdot m=O(4^{\ell(n)}\ell(n)^{2}\log n)$ and $\ell(n),$
in that order. This means that for $\ell(n)=O(1),$ the bottom three
levels of $F_{n}$ can be computed by a circuit of polynomial size,
depth~$2,$ and bottom fan-in $O(\log n)$, which in turn gives  a
circuit for $F_{n}$ of polynomial size, depth~$(k+1)-3+2=k,$ and
bottom fan-in~$O(\log n)$.
\end{proof}
\noindent Taking $\ell(n)=2$ in Theorem~\ref{thm:disc-ac0} settles
Theorem~\ref{thm:MAIN-disc-ac0} from the introduction.

\subsection{\label{subsec:surjectivity}Threshold degree of surjectivity}

We close this section with another application of our amplification
theorem, in which we take the outer function $f$ to be the identity
map $f\colon\zoo\to\zoo$ on a single bit. 
\begin{thm}
\label{thm:degthr-surj}For any integer $m\geq1,$
\[
\degthr(\MP_{m}^{*}|_{\leq m^{2}\log m})=\Omega(m).
\]
\end{thm}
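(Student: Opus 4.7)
The plan is to build a dual witness for $\degthr(\MP_m^*|_{\leq m^2 \log m}) \geq \Omega(m)$ by means of Fact~\ref{fact:dual-thrdeg-distribution-view}, starting from the dual object for $\MP_m^*$ supplied by Theorem~\ref{thm:dual-MP} and then using Corollary~\ref{cor:weight-reduction} to compress its support into the low-weight region $\NN^m|_{\leq m^2 \log m}$. The crucial gain over a naive argument is that Theorem~\ref{thm:dual-MP} delivers not only orthogonality and sign-agreement but also the pointwise containment $\Lambda_i \in \conv(\StronglyBounded(m^2,c_1,c_2/m,1)^{\otimes m})$, which is exactly the structure demanded by the weight-reduction machinery.

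Concretely, I would apply Theorem~\ref{thm:dual-MP} with the given $m$ and $r=m^2$ to obtain probability distributions $\Lambda_0,\Lambda_1$ supported in $(\MP_m^*)^{-1}(0)$ and $(\MP_m^*)^{-1}(1)$, respectively, with $\orth(\Lambda_1-\Lambda_0)\geq c_1 m$. Since $\StronglyBounded\subseteq\WeaklyBounded$ by~(3.20), both $\Lambda_i$ lie in $\conv(\WeaklyBounded(m^2,c_1,c_2/m,1)^{\otimes m})$, so Corollary~\ref{cor:weight-reduction} is applicable. Setting $d=\lceil c_3 m\rceil$ for a sufficiently small absolute constant $c_3>0$ and $\theta=\lfloor (m^2\log m-m)/2\rfloor$, the three hypotheses of the corollary become
\[
\theta\geq 2d, \qquad \theta\geq \frac{4\e m(1+\ln m)}{c_1^2}, \qquad 2^{\lceil \theta/m^2\rceil+(c_2/m)\lceil\theta/2\rceil}\geq 4(8m^3/c_1)^d.
\]
The first two are automatic for large $m$, and the third is the delicate one: its exponent on the left is dominated by $(c_2/m)\lceil \theta/2\rceil=\Theta(c_2 m\log m)$, whereas the right-hand side's logarithm is $O(c_3 m\log m)$, so the inequality holds as soon as $c_3$ is chosen small compared with $c_2$. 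Applying the corollary to each $\Lambda_i$ yields probability distributions $\tilde\Lambda_i$ with $\supp\tilde\Lambda_i\subseteq(\supp\Lambda_i)|_{<2\theta+m}\subseteq(\MP_m^*)^{-1}(i)\cap\NN^m|_{\leq m^2\log m}$ and $\orth(\Lambda_i-\tilde\Lambda_i)>d$.

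Finally, take $\mu=\tfrac{1}{2}\tilde\Lambda_0+\tfrac{1}{2}\tilde\Lambda_1$, which is a probability distribution on $\NN^m|_{\leq m^2\log m}$. Because $\supp\tilde\Lambda_0$ and $\supp\tilde\Lambda_1$ lie in the disjoint preimages of $0$ and $1$ under $\MP_m^*$, we get $(-1)^{\MP_m^*}\mu=\tfrac{1}{2}(\tilde\Lambda_0-\tilde\Lambda_1)$. Writing
\[
\tilde\Lambda_0-\tilde\Lambda_1=(\Lambda_0-\Lambda_1)+(\tilde\Lambda_0-\Lambda_0)-(\tilde\Lambda_1-\Lambda_1)
\]
and invoking Proposition~\ref{prop:orth}\ref{item:orth-sum} together with the bounds $\orth(\Lambda_1-\Lambda_0)\geq c_1 m$ and $\orth(\Lambda_i-\tilde\Lambda_i)>d$ gives $\orth((-1)^{\MP_m^*}\mu)\geq\min\{c_1 m,d+1\}=\Omega(m)$, and Fact~\ref{fact:dual-thrdeg-distribution-view} then yields the claimed threshold-degree lower bound. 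The single obstacle worth flagging is the parameter balancing in the middle paragraph: $\theta$ must be small enough that $2\theta+m\leq m^2\log m$, yet large enough for the exponential pointwise hypothesis of Corollary~\ref{cor:weight-reduction} to be satisfied with $d=\Omega(m)$. The reason this balancing succeeds is precisely the $\alpha=c_2/m$ decay built into the $\StronglyBounded$ class, which contributes an extra $\Theta(c_2 m\log m)$ slack in the exponent of the pointwise constraint exactly matching the $\Theta(c_3 m\log m)$ demand coming from $d$.
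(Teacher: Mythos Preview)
Your argument is correct and follows the same conceptual path as the paper: use the dual object for $\MP_m^*$ from Theorem~\ref{thm:dual-MP} together with the weight-reduction machinery of Corollary~\ref{cor:weight-reduction}, with the $\alpha=c_2/m$ decay supplying exactly the slack needed to push $d$ up to $\Omega(m)$. The paper simply packages all of this as a one-line invocation of the composition theorem (Theorem~\ref{thm:degthr-composition}) with $f$ the one-bit identity, $n=1$, and $\theta=\lfloor m^2\log m\rfloor$; your construction of $\tilde\Lambda_0,\tilde\Lambda_1$ and the witness $\mu=\tfrac12\tilde\Lambda_0+\tfrac12\tilde\Lambda_1$ is exactly what the proof of that theorem (via Lemma~\ref{lem:Booleanize}) produces in this special case, worked out by hand in the dual.
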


\begin{proof}
Let $f\colon\zoo\to\zoo$ be the identity function, so that $\degthr(f)=1.$
Invoking Theorem~\ref{thm:degthr-composition} with $n=1$ and $\theta=\lfloor m^{2}\log m\rfloor$,
one obtains the claimed lower bound. 
\end{proof}
\noindent Theorem~\ref{thm:degthr-surj} has a useful interpretation.
For positive integers $n$ and $r$, the \emph{surjectivity problem}
is the problem of determining whether a given mapping $\{1,2,\ldots,n\}\to\{1,2,\ldots,r\}$
is surjective. This problem is trivial for $r>n,$ and the standard
regime studied in previous work is $r\leq cn$ for some constant $0<c<1.$
The input to the surjectivity problem is represented by a Boolean
matrix $x\in\zoo^{r\times n}$ with precisely one nonzero entry in
every column. More formally, let $e_{1},e_{2},\ldots,e_{r}$ be the
standard basis for $\Re^{r}.$ The surjectivity function $\SURJ_{n,r}\colon\{e_{1},e_{2},\ldots,e_{r}\}^{n}\to\zoo$
is given by
\[
\SURJ_{n,r}(x_{1},x_{2},\ldots,x_{n})=\bigwedge_{j=1}^{r}\bigvee_{i=1}^{n}x_{i,j}.
\]
It is clear that $\SURJ_{n,r}(x_{1},x_{2},\ldots,x_{n})$ is uniquely
determined by the vector sum $x_{1}+x_{2}+\cdots+x_{n}\in\NN^{r}|_{n}.$
It is therefore natural to consider a symmetric counterpart of the
surjectivity function, with domain $\NN^{r}|_{n}$ instead of $\{e_{1},e_{2},\ldots,e_{r}\}^{n}.$
This symmetric version is $(\AND_{r}\circ\OR_{n}^{*})|_{n}=\MP_{r,n}^{*}|_{n}$,
and Proposition~\ref{prop:ambainis-symmetrization} ensures that
\begin{equation}
\degthr(\SURJ_{n,r})=\degthr(\MP_{r,n}^{*}|_{n}).\label{eq:surj-symm-nonsymm}
\end{equation}
The surjectivity problem has seen much work recently~\cite{beame-machmouchi12quantum-query-ac0,sherstov17algopoly,BKT17poly-strikes-back,BT18ac0-large-error}.
In particular, Bun and Thaler~\cite{BT18ac0-large-error} have obtained
an essentially tight lower bound of $\tilde{\Omega}(\min\{r,\sqrt{n}\})$
on the threshold degree of $\SURJ_{n,r}$ in the standard regime $r\leq(1-\Omega(1))n.$
As a corollary to Theorem~\ref{thm:degthr-surj}, we give a new proof
of Bun and Thaler's result, sharpening their bound by a polylogarithmic
factor.
\begin{cor}
For any integers $n>r>1,$
\begin{equation}
\degthr(\SURJ_{n,r})\geq\Omega\left(\min\left\{ r,\sqrt{\frac{n-r}{1+\log(n-r)}}\right\} \right).\label{eq:surj-n-r-degthr}
\end{equation}
\end{cor}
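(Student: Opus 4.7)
The plan is to combine~(\ref{eq:surj-symm-nonsymm}), which identifies $\degthr(\SURJ_{n,r})$ with $\degthr(\MP_{r,n}^*|_n)$, with an embedding that exhibits $\MP_m^*|_{\leq m^2 \log m}$ as a subfunction of $\MP_{r,n}^*|_n$ for a suitable $m<r$. Once such an embedding is in hand, Theorem~\ref{thm:degthr-surj} immediately supplies $\degthr(\SURJ_{n,r}) \geq \Omega(m)$, and a final optimization over $m$ will recover~(\ref{eq:surj-n-r-degthr}).

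The embedding I propose is the natural padding-plus-slack map $\phi \colon \NN^m|_{\leq m^2\log m} \to \NN^r|_n$ given by
\[
\phi(y) = \bigl(y_1, y_2, \ldots, y_m, \underbrace{1, 1, \ldots, 1}_{r-m-1}, \, n - |y| - (r-m-1)\bigr),
\]
where the block of $1$'s occupies the padding coordinates and the last entry absorbs whatever slack is needed to make $|\phi(y)|=n$. Provided $m<r$, $m^2\leq n$, and $m^2\log m\leq n-r+m$, each coordinate of $\phi(y)$ lies in $\{0,1,\ldots,n\}$ and the slack coordinate is at least~$1$. Consequently, the outer AND of $\MP_{r,n}^*$ trivializes on the padding and slack coordinates, leaving
\[
\MP_{r,n}^*(\phi(y)) = \bigwedge_{i=1}^{m} \OR_n^*(y_i) = \MP_m^*(y).
\]
Thus $\MP_m^*|_{\leq m^2 \log m}$ is a subfunction of $\MP_{r,n}^*|_n$, and the monotonicity of threshold degree under subfunctions together with Theorem~\ref{thm:degthr-surj} completes the reduction.

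What remains is to pick an integer $m = \Theta(\min\{r,\,\sqrt{(n-r)/(1+\log(n-r))}\})$ satisfying the three feasibility constraints. Setting $m = \lfloor c\,\min\{r-1,\;\sqrt{(n-r)/(1+\log(n-r))}\}\rfloor$ for a sufficiently small absolute constant $c\in(0,1)$, the bound $m\leq c\sqrt{n-r}$ forces $\log m\leq \tfrac{1}{2}\log(n-r)+O(1)$, from which a short calculation yields both $m^2\log m \leq \tfrac{c^2}{2}(n-r)\leq n-r+m$ and $m^2\leq n-r\leq n$. Degenerate cases (very small $r$ or $n-r$) reduce the right-hand side of~(\ref{eq:surj-n-r-degthr}) to a constant lower bound and follow from $\degthr(\SURJ_{n,r})\geq 1$ whenever $n>r\geq 2$. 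The only delicate part of the plan is this parameter bookkeeping, specifically the competition between the $m^2\log m$ budget produced by Theorem~\ref{thm:degthr-surj} and the $n-r+m$ budget available to the embedding; everything else is a clean symmetrization-plus-subfunction reduction.
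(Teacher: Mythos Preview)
Your proposal is correct and follows essentially the same approach as the paper: both use the identity~(\ref{eq:surj-symm-nonsymm}), then the affine padding-plus-slack embedding $(y_1,\ldots,y_m,1,\ldots,1,\text{slack})$ to realize $\MP_m^*|_{\leq m^2\log m}$ inside $\MP_{r,n}^*|_n$, and finally Theorem~\ref{thm:degthr-surj}. The paper happens to take $m=\min\{r-1,\lfloor\sqrt{(n-r)/(1+\log(n-r))}\rfloor\}$ without the auxiliary constant~$c$ (the arithmetic works out exactly), but this is a cosmetic difference only.
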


\begin{proof}
Define
\begin{equation}
r'=\min\left\{ r-1,\left\lfloor \sqrt{\frac{n-r}{1+\log(n-r)}}\right\rfloor \right\} .\label{eq:r-prime-defined}
\end{equation}
We may assume that $r'\geq1$ since~(\ref{eq:surj-n-r-degthr}) holds
trivially otherwise. The identity
\begin{multline*}
\MP_{r',n}^{*}(x_{1},x_{2},\ldots,x_{r'})\\
=\MP_{r,n}^{*}\left(x_{1},x_{2},\ldots,x_{r'},\underbrace{1,1,\ldots,1}_{r-r'-1},1+n-(r-r')-\sum_{i=1}^{r'}x_{i}\right)
\end{multline*}
holds for all $(x_{1},x_{2},\ldots,x_{r'})\in\NN^{r'}|_{\leq n-(r-r')},$
whence
\begin{equation}
\degthr(\MP_{r',n}^{*}|_{\leq n-(r-r')})\leq\degthr(\MP_{r,n}^{*}|_{n}).\label{eq:MP-restrictions}
\end{equation}
Now
\begin{align*}
\degthr(\SURJ_{n,r}) & =\degthr(\MP_{r,n}^{*}|_{n})\\
 & \geq\degthr(\MP_{r',n}^{*}|_{\leq n-(r-r')})\\
 & \geq\degthr(\MP_{r',r'^{2}}^{*}|_{\leq r'^{2}\log r'})\\
 & \geq\Omega(r'),
\end{align*}
where the four steps use~(\ref{eq:surj-symm-nonsymm}), (\ref{eq:MP-restrictions}),
(\ref{eq:r-prime-defined}), and Theorem~\ref{thm:degthr-surj},
respectively.
\end{proof}

\section{\label{sec:Sign-rank-of-AC0}The sign-rank of AC\protect\textsuperscript{0}}

We now turn to the second main result of this paper, a near-optimal
lower bound on the sign-rank of constant-depth circuits. To start
with, we show that our smoothing technique from Theorem~\ref{thm:dual-MP}
already gives an exponential lower bound on the sign-rank of $\classAC^{0}$.
Specifically, we prove in Section~\ref{subsec:RS-simplified} that
the Minsky\textendash Papert function $\MP_{n^{1/3}}$ has $\exp(-O(n^{1/3}))$-smooth
threshold degree $\Omega(n^{1/3}),$ which by Theorem~\ref{thm:thrdeg-to-sign-rank}
immediately implies an $\exp(\Omega(n^{1/3}))$ lower bound on the
sign-rank of an $\classAC^{0}$ circuit of depth~$3$. This result
was originally obtained, with a longer and more demanding proof, by
Razborov and Sherstov~\cite{RS07dc-dnf}.

To obtain the near-optimal lower bound of $\exp(\Omega(n^{1-\epsilon}))$
for every $\epsilon>0$, we use a completely different approach. It
is based on the notion of \emph{local smoothness} and is unrelated
to the threshold degree analysis. In Section~\ref{subsec:Local-smoothness},
we define local smoothness and record basic properties of locally
smooth functions. In Sections~\ref{subsec:Metric-properties-of-locally-smooth}
and~\ref{subsec:Weight-transfer}, we develop techniques for manipulating
locally smooth functions to achieve desired global behavior, without
the manipulations being detectable by low-degree polynomials. To apply
this machinery to constant-depth circuits, we design in Section~\ref{subsec:A-locally-smooth}
a locally smooth dual polynomial for the Minsky\textendash Papert
function. We use this dual object in Section~\ref{subsec:An-amplification-theorem-for-smooth-thrdeg}
to prove an amplification theorem for \emph{smooth} threshold degree.
We apply the amplification theorem iteratively in Section~\ref{subsec:The-smooth-threshold-degree-AC0}
to construct, for any $\epsilon>0,$ a constant-depth circuit with
$\exp(-O(n^{1-\epsilon}))$-smooth threshold degree $\Omega(n^{1-\epsilon}).$
Finally, we present our main result on the sign-rank of $\classAC^{0}$
in Section~\ref{subsec:The-sign-rank-of-AC0}.

In the remainder of this section, we adopt the following additional
notation. For an arbitrary subset $X$ of Euclidean space, we write
$\diam X=\sup_{x,x'\in X}|x-x'|,$ with the convention that $\diam\varnothing=0.$
For a vector $x\in\ZZ^{n}$ and a natural number $d,$ we let $B_{d}(x)=\{v\in\ZZ^{n}:|x-v|\leq d\}$
denote the set of \emph{integer-valued} vectors within distance $d$
of $x$. For all $x,$
\begin{equation}
|B_{d}(x)|=|B_{d}(0)|\leq2^{d}\binom{n+d}{d},\label{eq:ball-around-x}
\end{equation}
where the binomial coefficient corresponds to the number of \emph{nonnegative}
integer vectors of weight at most $d.$ Finally, for vectors $u,v\in\NN^{n},$
we define $\cube(u,v)$ to be the smallest Cartesian product of integer
intervals that contains both $u$ and $v.$ Specifically,
\begin{align*}
\cube(u,v) & =\{w\in\NN^{n}:\min\{u_{i},v_{i}\}\leq w_{i}\leq\max\{u_{i},v_{i}\}\text{ for all }i\}\\
 & =\prod_{i=1}^{n}\{\min\{u_{i},v_{i}\},\min\{u_{i},v_{i}\}+1,\ldots,\max\{u_{i},v_{i}\}\}.
\end{align*}

\subsection{\label{subsec:RS-simplified}A simple lower bound for depth~3}

We start by presenting a new proof of Razborov and Sherstov's exponential
lower bound~\cite{RS07dc-dnf} on the sign-rank of $\classAC^{0}.$
More precisely, we prove the following stronger result that was not
known before. 
\begin{thm}
\label{thm:smooth-MP}There is a constant $0<c<1$ such that for all
positive integers $m$ and $r,$
\[
\degthr(\MP_{m,r},12^{-m-1})\geq\min\{m,c\sqrt{r}\}.
\]
\end{thm}

\noindent Theorem~\ref{thm:smooth-MP} is asymptotically optimal,
and it is the first lower bound on the smooth threshold degree of
the Minsky\textendash Papert function. As we will discuss shortly,
this theorem implies an $\exp(\Omega(n^{1/3}))$ lower bound on the
sign-rank of $\classAC^{0}$. In addition, we will use Theorem~\ref{thm:smooth-MP}
as the base case in the inductive proof of Theorem~\ref{thm:MAIN-sign-rank-3k-plus-1}.
\begin{proof}[Proof of Theorem~\emph{\ref{thm:smooth-MP}}.]
 It is well-known~\cite{nisan-szegedy94degree,paturi92approx,spalek08dual-or}
that for some constant $c>0$ and all $r,$ any real polynomial $p\colon\zoo^{r}\to\Re$
with $\|p-\NOR_{r}\|_{\infty}\leq0.49$ has degree at least $c\sqrt{r}$.
By linear programming duality~\cite[Theorem~2.5]{sherstov14sign-deg-ac0},
this approximation-theoretic fact is equivalent to the existence of
a function $\psi\colon\zoo^{r}\to\Re$ with
\begin{align}
 & \psi(0)>0.49,\label{eq:psi-at-0-restated-1}\\
 & \|\psi\|_{1}=1,\label{eq:psi-bounded-restated-1}\\
 & \orth\psi\geq c\sqrt{r}.\label{eq:psi-orthog-restated-1}
\end{align}

The rest of the proof is a reprise of Section~\ref{subsec:A-bounded-dual}.
To begin with, property~(\ref{eq:psi-bounded-restated-1}) makes
it possible to view $|\psi|$ as a probability distribution on $\zoo^{r}.$
Let $\mu_{0},\mu_{1},\mu_{2}$ be the probability distributions induced
by $|\psi|$ on the sets $\{0^{r}\},\{x\ne0^{r}:\psi(x)<0\},$ and
$\{x\ne0^{r}:\psi(x)>0\},$ respectively. It is clear from~(\ref{eq:psi-at-0-restated-1})
that the negative part of $\psi$ is a multiple of $\mu_{1},$ whereas
the positive part of $\psi$ is a nonnegative linear combination of
$\mu_{0}$ and $\mu_{2}.$ Moreover, it follows from $\langle\psi,1\rangle=0$
and $\|\psi\|_{1}=1$ that the positive and negative parts of $\psi$
both have $\ell_{1}$-norm $1/2.$ Summarizing,
\begin{equation}
\psi=\frac{1-\delta}{2}\mu_{0}-\frac{1}{2}\mu_{1}+\frac{\delta}{2}\mu_{2}\label{eq:bounded-dual-psi-1}
\end{equation}
for some $0\leq\delta\leq1.$ In view of~(\ref{eq:psi-at-0-restated-1}),
we infer the more precise bound 
\begin{equation}
0\leq\delta<\frac{1}{50}.\label{eq:eps-range-1}
\end{equation}

Let $\upsilon$ be the uniform probability distribution on $\zoo^{r}\setminus\{0^{r}\}.$
We define
\begin{align}
\lambda_{0} & =\mu_{0},\label{eq:lambda0-def-1}\\
\lambda_{1} & =\frac{2}{3(1-\delta)}\mu_{1}+\left(1-\frac{2}{3(1-\delta)}\right)\upsilon,\\
\lambda_{2} & =\frac{2\delta}{1-\delta}\mu_{2}+\left(1-\frac{2\delta}{1-\delta}\right)\upsilon.\label{eq:lambda2-def-1}
\end{align}
It is clear from~(\ref{eq:eps-range-1}) that $\lambda_{1}$ and
$\lambda_{2}$ are convex combinations of $\upsilon,\mu_{1},\mu_{2}$
and therefore are probability distributions with support
\begin{align}
\supp\lambda_{i} & \subseteq\zoo^{r}\setminus\{0^{r}\}, &  & i=1,2,\label{eq:mu-i-bar-support-1}
\end{align}
whereas
\begin{equation}
\supp\lambda_{0}=\{0^{r}\}\label{eq:RS-lambda0-support}
\end{equation}
by definition. Moreover,~(\ref{eq:eps-range-1}) implies that
\begin{align}
\lambda_{i} & \geq\frac{1}{4}\upsilon, &  & i=1,2.\label{eq:lambda-i-lower-bound}
\end{align}
The defining equations~(\ref{eq:lambda0-def-1})\textendash (\ref{eq:lambda2-def-1})
further imply that
\[
\frac{2}{3}\lambda_{0}+\frac{1}{3}\lambda_{2}-\lambda_{1}=\frac{4}{3(1-\delta)}\psi,
\]
which along with~(\ref{eq:psi-orthog-restated-1}) gives
\begin{align}
\orth\left(\frac{2}{3}\lambda_{0}+\frac{1}{3}\lambda_{2}-\lambda_{1}\right) & \geq c\sqrt{r}.\label{eq:lambdas-orth-inproof-gadget-1}
\end{align}

With this work behind us, define
\[
\Lambda=\frac{1}{2}\left(\frac{2}{3}\lambda_{0}+\frac{1}{3}\lambda_{2}\right)^{\otimes m}-\frac{1}{2}\left(-\frac{1}{3}\lambda_{0}+\frac{1}{3}\lambda_{2}\right)^{\otimes m}+\frac{1}{2}\lambda_{1}^{\otimes m}.
\]
 Multiplying out the tensor products in the definition of $\Lambda$
and collecting like terms, we obtain
\begin{align}
\Lambda & =\frac{1}{2}\sum_{\substack{S\subseteq\{1,2,\ldots,m\}\\
S\ne\varnothing
}
}\frac{2^{|S|}-(-1)^{|S|}}{3^{m}}\lambda_{0}^{\otimes S}\cdot\lambda_{2}^{\otimes\overline{S}}+\frac{1}{2}\lambda_{1}^{\otimes m}\label{eq:RS-Lambda-expanded}\\
 & \geq\frac{1}{4}\sum_{\substack{S\subseteq\{1,2,\ldots,m\}\\
S\ne\varnothing
}
}\frac{2^{|S|}}{3^{m}}\lambda_{0}^{\otimes S}\cdot\lambda_{2}^{\otimes\overline{S}}+\frac{1}{2}\lambda_{1}^{\otimes m}\nonumber \\
 & \geq\frac{1}{4}\sum_{\substack{S\subseteq\{1,2,\ldots,m\}\\
S\ne\varnothing
}
}\frac{2^{|S|}}{3^{m}}\lambda_{0}^{\otimes S}\cdot\left(\frac{1}{4}\upsilon\right)^{\otimes\overline{S}}+\frac{1}{2}\left(\frac{1}{4}\upsilon\right)^{\otimes m}\nonumber \\
 & \geq\frac{1}{4}\sum_{\substack{S\subseteq\{1,2,\ldots,m\}}
}\frac{2^{|S|}}{3^{m}}\lambda_{0}^{\otimes S}\cdot\left(\frac{1}{4}\upsilon\right)^{\otimes\overline{S}}\nonumber \\
 & =\frac{1}{4}\left(\frac{2}{3}\lambda_{0}+\frac{1}{3}\cdot\frac{1}{4}\upsilon\right)^{\otimes m}\nonumber \\
 & \geq\frac{1}{4}\left(\frac{1}{12\cdot2^{r}}\right)^{m}\1_{(\zoo^{r})^{m}},\label{eq:RS-smoothness}
\end{align}
where the third step uses~(\ref{eq:lambda-i-lower-bound}). In particular,
$\Lambda$ is a nonnegative function. We further calculate
\begin{align}
\langle\Lambda,1\rangle & =\frac{1}{2}\left\langle \frac{2}{3}\lambda_{0}+\frac{1}{3}\lambda_{2},1\right\rangle ^{m}-\frac{1}{2}\left\langle -\frac{1}{3}\lambda_{0}+\frac{1}{3}\lambda_{2},1\right\rangle ^{m}+\frac{1}{2}\langle\lambda_{1},1\rangle^{m}\nonumber \\
 & =\frac{1}{2}\left\langle \frac{2}{3}\lambda_{0}+\frac{1}{3}\lambda_{2},1\right\rangle ^{m}+\frac{1}{2}\langle\lambda_{1},1\rangle^{m}\nonumber \\
 & =\frac{1}{2}+\frac{1}{2}\nonumber \\
 & =1,\label{eq:RS-Lambda-norm}
\end{align}
which makes $\Lambda$ a probability distribution on $(\zoo^{r})^{m}.$

It remains to examine the orthogonal content of $\Lambda\cdot(-1)^{\MP_{m,r}}.$
We have
\begin{align*}
\Lambda\cdot(-1)^{\MP_{m,r}} & =\frac{1}{2}\sum_{\substack{S\subseteq\{1,2,\ldots,m\}\\
S\ne\varnothing
}
}\frac{2^{|S|}-(-1)^{|S|}}{3^{m}}\lambda_{0}^{\otimes S}\cdot\lambda_{2}^{\otimes\overline{S}}\cdot(-1)^{\MP_{m,r}}\\
 & \qquad\qquad\qquad\qquad\qquad\qquad\qquad\qquad+\frac{1}{2}\lambda_{1}^{\otimes m}\cdot(-1)^{\MP_{m,r}}\\
 & =\frac{1}{2}\sum_{\substack{S\subseteq\{1,2,\ldots,m\}\\
S\ne\varnothing
}
}\frac{2^{|S|}-(-1)^{|S|}}{3^{m}}\lambda_{0}^{\otimes S}\cdot\lambda_{2}^{\otimes\overline{S}}-\frac{1}{2}\lambda_{1}^{\otimes m}\\
 & =\frac{1}{2}\left(\frac{2}{3}\lambda_{0}+\frac{1}{3}\lambda_{2}\right)^{\otimes m}-\frac{1}{2}\left(-\frac{1}{3}\lambda_{0}+\frac{1}{3}\lambda_{2}\right)^{\otimes m}-\frac{1}{2}\lambda_{1}^{\otimes m},
\end{align*}
where the first step uses~(\ref{eq:RS-Lambda-expanded}); the second
step uses~(\ref{eq:mu-i-bar-support-1}) and~(\ref{eq:RS-lambda0-support});
and the final equality can be verified by multiplying out the tensor
powers and collecting like terms. Now
\begin{align*}
 & \orth(\Lambda\cdot(-1)^{\MP_{m,r}})\\
 & \qquad\qquad=\min\left\{ \orth\left(\frac{1}{2}\left(\frac{2}{3}\lambda_{0}+\frac{1}{3}\lambda_{2}\right)^{\otimes m}-\frac{1}{2}\lambda_{1}^{\otimes m}\right),\right.\\
 & \qquad\qquad\qquad\qquad\qquad\qquad\left.\orth\left(-\frac{1}{2}\left(-\frac{1}{3}\lambda_{0}+\frac{1}{3}\lambda_{2}\right)^{\otimes m}\right)\right\} \\
 & \qquad\qquad\geq\min\left\{ \orth\left(\frac{2}{3}\lambda_{0}+\frac{1}{3}\lambda_{2}-\lambda_{1}\right),m\orth\left(-\frac{1}{3}\lambda_{0}+\frac{1}{3}\lambda_{2}\right)\right\} \\
 & \qquad\qquad\geq\min\left\{ c\sqrt{r},m\orth\left(-\frac{1}{3}\lambda_{0}+\frac{1}{3}\lambda_{2}\right)\right\} \\
 & \qquad\qquad\geq\min\{c\sqrt{r},m\},
\end{align*}
where the first step applies Proposition~\ref{prop:orth}\ref{item:orth-sum};
the second step applies Proposition~\ref{prop:orth}\ref{item:orth-tensor},~\ref{item:orth-difference-of-tensors};
the third step substitutes the lower bound from~(\ref{eq:lambdas-orth-inproof-gadget-1});
and the last step uses $\langle-\lambda_{0}+\lambda_{2},1\rangle=-\langle\lambda_{0},1\rangle+\langle\lambda_{2},1\rangle=-1+1=0.$
Combining this conclusion with~(\ref{eq:RS-smoothness}) and~(\ref{eq:RS-Lambda-norm})
completes the proof.
\end{proof}
We now ``lift'' the approximation-theoretic result just obtained
to a sign-rank lower bound, reproving a result of Razborov and Sherstov~\cite{RS07dc-dnf}.
\begin{thm}[Razborov and Sherstov]
 \label{thm:RS}Define $F_{n}\colon\zoon\times\zoon\to\zoo$ by 
\[
F_{n}=\AND_{n^{1/3}}\circ\OR_{n^{2/3}}\circ\AND_{2}.
\]
Then
\[
\srank(F_{n})\geq2^{\Omega(n^{1/3})}.
\]
\end{thm}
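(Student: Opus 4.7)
The plan is to combine the new smoothness bound of Theorem~\ref{thm:smooth-MP} with the smooth-threshold-degree-to-sign-rank transfer of Theorem~\ref{thm:thrdeg-to-sign-rank}. Concretely, I will apply Theorem~\ref{thm:thrdeg-to-sign-rank} to the Minsky--Papert outer function $f = \MP_{m,r}$ with $m = \Theta(n^{1/3})$ and $r = \Theta(n^{2/3})$, using an inner $\OR$ of suitably large constant arity $M$.

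First, I would fix $M$ to be a constant large enough that $\sqrt{\lfloor M/2\rfloor}$ substantially exceeds $12$ (any $M$ with $\lfloor M/2 \rfloor > 144$ will do). I would then choose integers $m$ of order $n^{1/3}$ and $r$ of order $n^{2/3}$ subject to $m \leq n^{1/3}$, $rM \leq n^{2/3}$, and $c\sqrt{r} \geq m$, where $c$ is the constant from Theorem~\ref{thm:smooth-MP}. These three constraints are simultaneously met, for instance, by taking $r = \lfloor m^{2}/c^{2}\rfloor$ and $m = \lfloor c\, n^{1/3}/\sqrt{M}\rfloor$. Theorem~\ref{thm:smooth-MP} then yields
\[
\degthr(\MP_{m,r},\, 12^{-m-1}) \;\geq\; \min\{m,\, c\sqrt{r}\} \;=\; m.
\]

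Next, I would apply Theorem~\ref{thm:thrdeg-to-sign-rank} with $f = \MP_{m,r}$, $\gamma = 12^{-m-1}$, $d = m$, and inner arity $M$. Using the identity
\[
\MP_{m,r} \circ \OR_{M} \circ \AND_{2} \;=\; \AND_{m} \circ \OR_{r} \circ \OR_{M} \circ \AND_{2} \;=\; \AND_{m} \circ \OR_{rM} \circ \AND_{2} \;=:\; G,
\]
the theorem delivers $\srank(G) \geq 12^{-m-1}\,\lfloor M/2\rfloor^{m/2}$. Our choice of $M$ ensures that $\lfloor M/2\rfloor^{1/2}$ exceeds $12$ by a fixed constant factor, so the amplification dominates the smoothness loss and $\srank(G) \geq 2^{\Omega(m)} = 2^{\Omega(n^{1/3})}$.

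It remains to exhibit $G$ as a subfunction of $F_{n}$. Since $m \leq n^{1/3}$ and $rM \leq n^{2/3}$, the circuit for $G$ is obtained from the circuit for $F_{n}$ by hardwiring the $n^{1/3}-m$ surplus inputs of the top $\wedge$-gate to~$1$ and, within each retained $\vee$-gate, hardwiring the $n^{2/3}-rM$ surplus $\AND_{2}$ inputs to~$0$. Each such hardwiring is realized by fixing appropriate Alice/Bob input bits, so the communication matrix of $G$ appears as a submatrix of that of $F_{n}$; since sign-rank is monotone under taking submatrices, $\srank(F_{n}) \geq \srank(G) \geq 2^{\Omega(n^{1/3})}$. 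The main point requiring care is the parameter balance: the exponentially small smoothness $\gamma = 12^{-m-1}$ must be compensated by $\lfloor M/2\rfloor^{m/2}$, and the need to keep $M$ constant while ensuring both $c\sqrt{r}\geq m$ and $rM\leq n^{2/3}$ is precisely what forces $m$ to sit slightly below $n^{1/3}$ by the constant factor $\sqrt{M}/c$.
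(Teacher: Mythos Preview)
Your proof is correct and follows essentially the same route as the paper's: invoke Theorem~\ref{thm:smooth-MP} to get smooth threshold degree $\Omega(n^{1/3})$ for the Minsky--Papert function, feed this into Theorem~\ref{thm:thrdeg-to-sign-rank} with a constant-arity inner $\OR_{M}$ chosen so that $\lfloor M/2\rfloor^{1/2}$ beats the base~$12$, and then embed the result as a submatrix of $F_{n}$. The only cosmetic difference is that you shrink $m,r$ upfront so the composed function fits inside $F_{n}$, whereas the paper keeps $m=n^{1/3}$, $r=n^{2/3}$ and observes that the resulting function is a subfunction of $F_{cn}$ for a constant $c$; one tiny wrinkle is that $r=\lfloor m^{2}/c^{2}\rfloor$ actually gives $c\sqrt{r}\leq m$ rather than $\geq$, but taking $d=\lfloor c\sqrt{r}\rfloor=\Theta(m)$ (or using a ceiling for $r$) repairs this harmlessly.
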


\begin{proof}
Theorem~\ref{thm:smooth-MP} gives 
\[
\degthr(\AND_{n^{1/3}}\circ\OR_{n^{2/3}},\exp(-c'n^{1/3}))\geq c''n^{1/3}
\]
for some absolute constants $c',c''>0$ and all $n.$ This lower bound
along with Theorem~\ref{thm:thrdeg-to-sign-rank} implies that the
composition
\[
H_{n}=\AND_{n^{1/3}}\circ\OR_{n^{2/3}}\circ\OR_{2\left\lceil \exp\left(\frac{4c'}{c''}\right)\right\rceil }\circ\AND_{2}
\]
has sign-rank $\srank(H_{n})=\exp(\Omega(n^{1/3})).$ This completes
the proof because for some integer constant $c\geq1,$ each $H_{n}$
is a subfunction of $F_{cn}$.
\end{proof}

\subsection{\label{subsec:Local-smoothness}Local smoothness}

The remainder of this paper focuses on our $\exp(\Omega(n^{1-\epsilon}))$
lower bound on the sign-rank of $\classAC^{0}$, whose proof is unrelated
to the work in Section~\ref{sec:Threshold-degree-of-AC0} and Section~\ref{subsec:RS-simplified}.
Central to our approach is an analytic notion that we call \emph{local
smoothness}. Formally, let $\Phi\colon\NN^{n}\to\Re$ be a function
of interest. For a subset $X\subseteq\NN^{n}$ and a real number $K\geq1,$
we say that $\Phi$ is \emph{$K$-smooth on $X$} if
\begin{align*}
|\Phi(x)| & \leq K^{|x-x'|}|\Phi(x')|\text{\quad for all }x,x'\in X.
\end{align*}
Put another way, for any two points of $X$ at distance $d,$ the
corresponding values of $\Phi$ differ in magnitude by a factor of
at most $K^{d}.$ For any set $X,$ we let $\SmoothFunction(K,X)$
denote the family of functions that are $K$-smooth on $X.$ The following
proposition collects basic properties of local smoothness, to which
we refer as the restriction property, scaling property, tensor property,
and conical property.
\begin{prop}
\label{prop:smooth}Let $K\geq1$ be given.
\begin{enumerate}
\item \label{enu:smooth-restriction}If $\Phi\in\SmoothFunction(K,X)$ and
$X'\subseteq X,$ then $\Phi\in\SmoothFunction(K,X')$.
\item \label{enu:smooth-scaling}If $\Phi\in\SmoothFunction(K,X)$ and $a\in\Re,$
then $a\Phi\in\SmoothFunction(K,X).$
\item \label{enu:smooth-tensor}$\SmoothFunction(K,X)\otimes\SmoothFunction(K,Y)\subseteq\SmoothFunction(K,X\times Y).$
\item \label{enu:smooth-conical}If $\Phi,\Psi\in\SmoothFunction(K,X)$
and $\Phi,\Psi$ are nonnegative on $X,$ then $\cone\{\Phi,\Psi\}\subseteq\SmoothFunction(K,X)$.
\end{enumerate}
\end{prop}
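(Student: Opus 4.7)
The plan is to verify each of the four clauses directly from the defining inequality $|\Phi(x)|\leq K^{|x-x'|}|\Phi(x')|$, since all four are elementary consequences of this pointwise bound. None of the four requires auxiliary machinery, so I will proceed in the order listed.

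For the restriction property, the definition of $K$-smoothness on $X$ is a universally quantified statement over pairs in $X$, so it survives passage to any subset. For the scaling property, I would multiply both sides of the defining inequality by $|a|$ and observe that $|a\Phi(x)|=|a|\,|\Phi(x)|\leq K^{|x-x'|}|a\Phi(x')|$, which is the defining inequality for $a\Phi$.

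For the tensor property, given $\Phi\in\SmoothFunction(K,X)$ and $\Psi\in\SmoothFunction(K,Y)$, the plan is to write
\[
|(\Phi\otimes\Psi)(x,y)|=|\Phi(x)|\,|\Psi(y)|\leq K^{|x-x'|}|\Phi(x')|\cdot K^{|y-y'|}|\Psi(y')|=K^{|x-x'|+|y-y'|}|(\Phi\otimes\Psi)(x',y')|,
\]
and then use that the $\ell_1$-distance on the product $X\times Y$ decomposes as $|(x,y)-(x',y')|=|x-x'|+|y-y'|$.

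For the conical property, the subtle point is that local smoothness is a statement about absolute values, so linear combinations do not in general preserve it (the triangle inequality goes the wrong way once one combines a smoothness upper bound on one summand with an unrelated lower bound on another). The hypothesis that $\Phi,\Psi\geq0$ on $X$ resolves this: for nonnegative reals $\alpha,\beta\geq0$ and any $x,x'\in X$, I would write $\alpha\Phi(x)+\beta\Psi(x)\leq\alpha K^{|x-x'|}\Phi(x')+\beta K^{|x-x'|}\Psi(x')=K^{|x-x'|}(\alpha\Phi(x')+\beta\Psi(x'))$, and drop the absolute value bars throughout because every quantity in sight is nonnegative. This step is the only one where a hypothesis is actually doing work, so it is the closest the proof comes to an obstacle; everything else is bookkeeping against the definition.
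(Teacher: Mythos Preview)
Your proof is correct and essentially identical to the paper's own argument: parts~\ref{enu:smooth-restriction} and~\ref{enu:smooth-scaling} are dismissed as immediate from the definition, and parts~\ref{enu:smooth-tensor} and~\ref{enu:smooth-conical} are verified by exactly the same one-line calculations you give, with the nonnegativity hypothesis in~\ref{enu:smooth-conical} used in the same way to replace $|a\Phi+b\Psi|$ by $a\Phi+b\Psi$.
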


\begin{proof}
Properties~\ref{enu:smooth-restriction} and~\ref{enu:smooth-scaling}
are immediate from the definition of $K$-smoothness. For~\ref{enu:smooth-tensor},
fix $\Phi\in\SmoothFunction(K,X)$ and $\Psi\in\SmoothFunction(K,Y).$
Then for all $(x,y),(x',y')\in X\times Y,$ we have
\begin{align*}
|\Phi(x)\Psi(y)| & \leq K^{|x-x'|}|\Phi(x')|\,K^{|y-y'|}|\Psi(y')|\\
 & =K^{|(x,y)-(x',y')|}|\Phi(x')\Psi(y')|,
\end{align*}
where the first step uses the $K$-smoothness of $\Phi$ and $\Psi.$
Finally, for~\ref{enu:smooth-conical}, let $a$ and $b$ be nonnegative
reals. Then
\begin{align*}
|a\Phi(x)+b\Psi(x)| & =a|\Phi(x)|+b|\Psi(x)|\\
 & \leq aK^{|x-x'|}|\Phi(x')|+bK^{|x-x'|}|\Psi(x')|\\
 & =K^{|x-x'|}|a\Phi(x')+b\Psi(x')|
\end{align*}
for all $x,x'\in X,$ where the second step uses the $K$-smoothness
of $\Phi$ and $\Psi.$ 
\end{proof}
We will take a special interest in locally smooth functions that are
probability distributions. For our purposes, it will be sufficient
to consider locally smooth distributions whose support is the Cartesian
product of integer intervals. For an integer $n\geq1$ and a real
number $K\geq1,$ we let $\Smooth(n,K)$ denote the set of probability
distributions $\Lambda$ such that:
\begin{enumerate}
\item $\Lambda$ is supported on $\prod_{i=1}^{n}\{0,1,2,\ldots,r_{i}\}$,
for some $r_{1},r_{2},\ldots,r_{n}\in\NN$;
\item $\Lambda$ is $K$-smooth on its support.
\end{enumerate}
Analogous to the development in Section~\ref{sec:Threshold-degree-of-AC0},
we will need a notation for translates of distributions in $\Smooth(n,K)$.
For $\Delta\geq0,$ we let $\Smooth(n,K,\Delta)$ denote the set of
probability distributions $\Lambda\in\Distribution(\NN^{n})$ such
that $\Lambda(t_{1},\ldots,t_{n})\equiv\Lambda'(t_{1}-a_{1},\ldots,t_{n}-a_{n})$
for some fixed $\Lambda'\in\Smooth(n,K)$ and $a\in\NN^{n}|_{\leq\Delta}$.
As a special case, $\Smooth(n,K,0)=\Smooth(n,K).$ Specializing Proposition~\ref{prop:smooth}\ref{enu:smooth-tensor}
to this context, we obtain:
\begin{prop}
\label{prop:smooth-tensor}For any $n',n'',\Delta',\Delta'',K,$ one
has
\[
\Smooth(n',K,\Delta')\otimes\Smooth(n'',K,\Delta'')\subseteq\Smooth(n'+n'',K,\Delta'+\Delta'').
\]
\end{prop}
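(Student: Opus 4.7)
The plan is to unpack both sides of the asserted inclusion against the definition of $\Smooth(n,K,\Delta)$ and then invoke Proposition~\ref{prop:smooth}\ref{enu:smooth-tensor} as the main analytic input. Specifically, I would fix arbitrary $\Lambda'\in\Smooth(n',K,\Delta')$ and $\Lambda''\in\Smooth(n'',K,\Delta'')$, and by definition write
\[
\Lambda'(t)=\tilde{\Lambda}'(t-a'),\qquad \Lambda''(s)=\tilde{\Lambda}''(s-a''),
\]
where $\tilde{\Lambda}'\in\Smooth(n',K)$, $\tilde{\Lambda}''\in\Smooth(n'',K)$, and the shift vectors satisfy $a'\in\NN^{n'}|_{\leq\Delta'}$ and $a''\in\NN^{n''}|_{\leq\Delta''}$. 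The tensor product then factors as $(\Lambda'\otimes\Lambda'')(t,s)=(\tilde{\Lambda}'\otimes\tilde{\Lambda}'')((t,s)-(a',a''))$, with concatenated shift $a=(a',a'')\in\NN^{n'+n''}$ of weight $|a|=|a'|+|a''|\leq\Delta'+\Delta''$, exactly what the target family permits.

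It then remains to check that $\tilde{\Lambda}'\otimes\tilde{\Lambda}''$ lies in $\Smooth(n'+n'',K)$. First, it is a probability distribution on $\NN^{n'+n''}$ since it is the product of two probability distributions. Second, its support is the Cartesian product of the supports of $\tilde{\Lambda}'$ and $\tilde{\Lambda}''$; since each factor's support is a product of integer intervals of the form $\{0,1,\ldots,r_i\}$, so is the combined support. Third, $K$-smoothness on this product support is precisely Proposition~\ref{prop:smooth}\ref{enu:smooth-tensor} applied to $X=\supp\tilde{\Lambda}'$ and $Y=\supp\tilde{\Lambda}''$, using the fact that for concatenated vectors $(x,y),(x',y')$ one has $|(x,y)-(x',y')|=|x-x'|+|y-y'|$.

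There is no real obstacle here, as the statement is a routine specialization of the tensor property of local smoothness to the probability-distribution setting enriched with a translation parameter; the only thing to be careful about is the bookkeeping of translations, namely that $a=(a',a'')$ genuinely has weight at most $\Delta'+\Delta''$, which is immediate from $|a|_1=|a'|_1+|a''|_1$. The proof is therefore a short verification that takes no more than a few lines once the definitions of $\Smooth(n,K)$ and $\Smooth(n,K,\Delta)$ are written out side by side.
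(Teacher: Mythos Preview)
Your proposal is correct and follows essentially the same approach as the paper: the paper's proof simply notes that the only nontrivial property to verify is $K$-smoothness, which follows from Proposition~\ref{prop:smooth}\ref{enu:smooth-tensor}. Your write-up makes the bookkeeping of supports and translation vectors explicit, but the argument is the same.
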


\begin{proof}
The only nontrivial property to verify is $K$-smoothness, which follows
from Proposition~\ref{prop:smooth}\ref{enu:smooth-tensor}.
\end{proof}

\subsection{\label{subsec:Metric-properties-of-locally-smooth}Metric properties
of locally smooth distributions}

If $\Lambda$ is a locally smooth distribution on $X=\prod_{i=1}^{n}\{0,1,2,\ldots,r_{i}\},$
then a moment's thought reveals that $\Lambda(x)>0$ at every point
$x\in X.$ In general, local smoothness gives considerable control
over metric properties of $\Lambda,$ making it possible to prove
nontrivial upper and lower bounds on $\Lambda(S)$ for various sets
$S\subseteq X.$ We now record two such results, as regards our work
on the sign-rank on $\classAC^{0}$.
\begin{prop}
\label{prop:remove-cap}Let $\Lambda$ be a probability distribution
on $X=\prod_{i=1}^{n}\{0,1,2,\ldots,r_{i}\}.$ Let $\theta$ and $d$
be nonnegative integers with $\theta\geq d$. If $\Lambda$ is $K$-smooth
on $X|_{\leq\theta},$ then
\[
\Lambda(X|_{\leq\theta})\leq K^{d}\binom{n+d}{d}\Lambda(X|_{\leq\theta-d}).
\]
\end{prop}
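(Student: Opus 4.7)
My plan is to prove the bound by a simple covering/projection argument. The idea is to build a map $\pi \colon X|_{\leq \theta} \to X|_{\leq \theta-d}$ that moves each point a small distance, use $K$-smoothness to compare $\Lambda$-values across short hops, and then count how many preimages each target point can have.

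Concretely, for each $x \in X|_{\leq \theta}$ I will choose $\pi(x) \in X$ with $\pi(x) \leq x$ componentwise, $|\pi(x)| \leq \theta - d$, and $|x - \pi(x)| \leq d$. Such a $\pi(x)$ exists: if $|x| \leq \theta - d,$ take $\pi(x) = x$; otherwise decrement the coordinates of $x$ one at a time until the total weight drops to exactly $\theta - d,$ which requires at most $|x| - (\theta - d) \leq d$ decrements. Since each $\pi(x)$ lies in $X|_{\leq \theta}$ as well (its weight is at most $\theta - d \leq \theta$), the $K$-smoothness of $\Lambda$ on $X|_{\leq \theta}$ yields
\[
\Lambda(x) \leq K^{|x - \pi(x)|}\,\Lambda(\pi(x)) \leq K^{d}\,\Lambda(\pi(x)).
\]

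The remaining step is bounding the preimages. For any $y \in X|_{\leq \theta - d},$ a point $x \in \pi^{-1}(y)$ satisfies $x = y + v$ for some $v \in \NN^{n}$ with $|v| \leq d.$ The number of such $v$ is exactly the number of nonnegative integer solutions to $v_1 + \cdots + v_n \leq d,$ which equals $\binom{n+d}{d}.$ Therefore $|\pi^{-1}(y)| \leq \binom{n+d}{d}$ for every $y.$ Summing up:
\[
\Lambda(X|_{\leq \theta}) = \sum_{x \in X|_{\leq \theta}} \Lambda(x) \leq K^{d} \sum_{x \in X|_{\leq \theta}} \Lambda(\pi(x)) = K^{d} \sum_{y \in X|_{\leq \theta-d}} |\pi^{-1}(y)|\,\Lambda(y) \leq K^{d} \binom{n+d}{d}\, \Lambda(X|_{\leq \theta - d}).
\]

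There is essentially no hard step here; the entire content is the observation that a weight drop of $d$ can be achieved in at most $d$ unit decrements and that the downward $\ell_1$-ball of radius $d$ in $\NN^n$ has size $\binom{n+d}{d}.$ The only minor thing to be careful about is the degenerate case where $\Lambda$ vanishes somewhere on $X|_{\leq \theta}$: by the $K$-smoothness inequality, a single zero value forces $\Lambda \equiv 0$ on $X|_{\leq \theta},$ in which case the desired inequality is vacuous. Otherwise, $\Lambda$ is strictly positive throughout $X|_{\leq \theta}$ and the argument above goes through unchanged.
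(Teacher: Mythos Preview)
Your proof is correct and essentially identical to the paper's: both define a projection $\pi$ that decrements coordinates to drop the weight by at most $d$, invoke $K$-smoothness for the pointwise bound $\Lambda(x)\leq K^{d}\Lambda(\pi(x))$, and then bound the fiber size by $\binom{n+d}{d}$ via the count of nonnegative integer vectors of weight at most $d$. Your remark on the degenerate case is harmless but unnecessary, since the $K$-smoothness inequality already yields $\Lambda(x)\leq K^{d}\Lambda(\pi(x))$ even when both sides vanish.
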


\begin{proof}
Consider an arbitrary vector $x\in X|_{\leq\theta}$. By definition,
the components of $x$ are nonnegative integers that sum to at most
$\theta.$ By decreasing the components of $x$ as needed, one can
obtain a vector $x'$ with
\begin{align*}
 & x'\in X|_{\leq\theta-d},\\
 & x'\leq x,\\
 & |x'-x|\leq d.
\end{align*}
In particular, the $K$-smoothness of $\Lambda$ implies that
\[
\Lambda(x)\leq K^{d}\Lambda(x').
\]
Summing on both sides over $x\in X|_{\leq\theta},$ we obtain
\begin{align*}
\Lambda(X|_{\leq\theta}) & \leq K^{d}\Lambda(X|_{\leq\theta-d})\max_{x'\in X|_{\leq\theta-d}}|\{x\in X|_{\leq\theta}:x\geq x'\text{ and }|x-x'|\leq d\}|\\
 & \leq K^{d}\Lambda(X|_{\leq\theta-d})\,\max_{x'\in\NN^{n}}|\{x\in\NN^{n}:x\geq x'\text{ and }|x-x'|\leq d\}|\\
 & =K^{d}\Lambda(X|_{\leq\theta-d})\binom{n+d}{d}.\tag*{\qedhere}
\end{align*}
\end{proof}
\begin{prop}
\label{prop:remove-ball}Let $\Lambda$ be a probability distribution
on $X=\prod_{i=1}^{n}\{0,1,2,\ldots,r_{i}\}.$ Let $\theta$ and $d$
be nonnegative integers with
\begin{equation}
d<\frac{1}{2}\min\left\{ \theta,\sum_{i=1}^{n}r_{i}\right\} .\label{eq:d-theta-M}
\end{equation}
If $\Lambda$ is $K$-smooth on $X|_{\leq\theta},$ then
\[
\Lambda(X|_{\leq\theta})\leq2^{d+1}K^{2d+1}\binom{n+d}{d}\Lambda(X|_{\leq\theta}\setminus B_{d}(u))
\]
for every $u\in X.$
\end{prop}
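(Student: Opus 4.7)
The plan is to select a single ``anchor'' point $u^{\star}\in X|_{\leq\theta}\setminus B_{d}(u)$ at $\ell_{1}$-distance exactly $d+1$ from $u$, and then use local smoothness to dominate the values of $\Lambda$ on the cap $X|_{\leq\theta}\cap B_{d}(u)$ by $\Lambda(u^{\star})$. Combined with the trivial decomposition $\Lambda(X|_{\leq\theta})=\Lambda(X|_{\leq\theta}\cap B_{d}(u))+\Lambda(X|_{\leq\theta}\setminus B_{d}(u))$ and the observation that $\Lambda(u^{\star})\leq\Lambda(X|_{\leq\theta}\setminus B_{d}(u))$, this will yield the claim.

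First I would dispose of the degenerate case $|u|>\theta+d$: here every $x\in X|_{\leq\theta}$ satisfies $|x-u|\geq|u|-|x|>d$, so $X|_{\leq\theta}\cap B_{d}(u)=\varnothing$ and the inequality is immediate, since the constant $2^{d+1}K^{2d+1}\binom{n+d}{d}$ is at least $1$.

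In the remaining case $|u|\leq\theta+d$, the anchor $u^{\star}$ is obtained by either a pure increase or a pure decrease of $u$ by a total of $d+1$ units. Concretely, if $\sum_{j}(r_{j}-u_{j})\geq d+1$ and $|u|+d+1\leq\theta$, I would select nonnegative integers $\delta_{j}\leq r_{j}-u_{j}$ summing to $d+1$ and put $u^{\star}=u+\delta$, which satisfies $u^{\star}\in X$, $|u^{\star}|=|u|+d+1\leq\theta$, and $|u^{\star}-u|=d+1$. Otherwise at least one of these two inequalities fails; combined with the standing hypotheses $\theta\geq 2d+1$ and $\sum_{j}r_{j}\geq 2d+1$, this forces $|u|\geq d+1$, so I can instead select nonnegative integers $\delta_{j}\leq u_{j}$ summing to $d+1$ and put $u^{\star}=u-\delta$, which satisfies $u^{\star}\in X$, $|u^{\star}|=|u|-(d+1)\leq\theta+d-(d+1)<\theta$, and $|u^{\star}-u|=d+1$. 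In either branch $u^{\star}\in X|_{\leq\theta}$, and the condition $|u^{\star}-u|=d+1>d$ places the anchor outside $B_{d}(u)$.

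The finish is a short calculation. For every $x\in X|_{\leq\theta}\cap B_{d}(u)$ the triangle inequality gives $|x-u^{\star}|\leq|x-u|+|u-u^{\star}|\leq 2d+1$, so $K$-smoothness of $\Lambda$ on $X|_{\leq\theta}$ yields $\Lambda(x)\leq K^{2d+1}\Lambda(u^{\star})$. Summing over $x$ and bounding the number of summands by $|B_{d}(u)|\leq 2^{d}\binom{n+d}{d}$ via~(\ref{eq:ball-around-x}) gives $\Lambda(X|_{\leq\theta}\cap B_{d}(u))\leq 2^{d}K^{2d+1}\binom{n+d}{d}\,\Lambda(u^{\star})\leq 2^{d}K^{2d+1}\binom{n+d}{d}\,\Lambda(X|_{\leq\theta}\setminus B_{d}(u))$. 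Adding $\Lambda(X|_{\leq\theta}\setminus B_{d}(u))$ to both sides and absorbing the $+1$ into the factor $2^{d+1}$ (valid because $2^{d}K^{2d+1}\binom{n+d}{d}\geq 1$) completes the proof. The only step requiring genuine care is the construction of $u^{\star}$: one must verify that the hypothesis $d<\frac{1}{2}\min\{\theta,\sum_{j}r_{j}\}$ is strong enough to guarantee that at least one of the two candidate constructions succeeds, even in configurations where $|u|$ exceeds~$\theta$.
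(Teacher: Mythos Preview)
Your proof is correct and follows essentially the same approach as the paper: dispose of the case $|u|>\theta+d$ trivially, then exhibit an anchor $u^{\star}\in X|_{\leq\theta}$ at $\ell_{1}$-distance exactly $d+1$ from $u$, bound each $\Lambda(x)$ for $x\in X|_{\leq\theta}\cap B_{d}(u)$ by $K^{2d+1}\Lambda(u^{\star})$, sum using $|B_{d}(u)|\leq 2^{d}\binom{n+d}{d}$, and add $\Lambda(X|_{\leq\theta}\setminus B_{d}(u))$ to both sides. The only difference is cosmetic: the paper splits on whether $|u|>d$ (decrease) or $|u|\leq d$ (increase), whereas you split on whether the increase is simultaneously feasible in $X$ and in $X|_{\leq\theta}$; your verification that failure of either increase condition forces $|u|\geq d+1$ is correct.
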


\begin{proof}
Fix $u\in X$ for the rest of the proof. If $|u|>\theta+d,$ then
$X|_{\leq\theta}\setminus B_{d}(u)=X|_{\leq\theta}$ and the statement
holds trivially. In what follows, we treat the complementary case
$|u|\leq\theta+d.$ Here, the key is to find a vector $u'$ with
\begin{align}
 & |u-u'|=d+1,\label{eq:u-distance}\\
 & u'\in X|_{\leq\theta}.\label{eq:u-good}
\end{align}
The algorithm for finding $u'$ depends on $|u|,$ as follows.
\begin{enumerate}
\item If $|u|>d,$ decrease one or more of the components of $u$ as needed
to obtain a vector $u'$ whose components are nonnegative integers
that sum to exactly $|u|-d-1.$ Then~(\ref{eq:u-distance}) is immediate,
whereas~(\ref{eq:u-good}) follows in view of $|u|\leq\theta+d.$
\item If $|u|\leq d,$ the analysis is more subtle. Recall that $u\in\prod_{i=1}^{n}\{0,1,2,\ldots,r_{i}\}$
and therefore $|(r_{1},\ldots,r_{n})-u|=\sum r_{i}-|u|\geq\sum r_{i}-d>d,$
where the last step uses~(\ref{eq:d-theta-M}). As a result, by increasing
the components of $u$ as necessary, one can obtain a vector $u'\in\prod_{i=1}^{n}\{0,1,2,\ldots,r_{i}\}$
with $|u'|=|u|+d+1.$ Then property~(\ref{eq:u-distance}) is immediate.
Property~(\ref{eq:u-good}) follows from $|u'|=|u|+d+1\leq2d+1<\theta+1,$
where the last step uses~(\ref{eq:d-theta-M}).
\end{enumerate}
Now that $u'$ has been constructed, apply the $K$-smoothness of
$\Lambda$ to conclude that for every $x\in X|_{\leq\theta}\cap B_{d}(u)$,
\begin{align}
\Lambda(x) & \leq K^{|x-u'|}\Lambda(u')\nonumber \\
 & \leq K^{|x-u|+|u-u'|}\Lambda(u')\nonumber \\
 & \leq K^{2d+1}\Lambda(u'),\label{eq:single-point-ineq}
\end{align}
 where the last step uses~(\ref{eq:u-distance}). As a result,
\begin{align}
\Lambda(X|_{\leq\theta}\cap B_{d}(u)) & \leq\big|\,X|_{\le\theta}\,\cap\,B_{d}(u)\big|\,K^{2d+1}\Lambda(u')\nonumber \\
 & \leq|B_{d}(u)|\,K^{2d+1}\Lambda(u')\nonumber \\
 & \leq|B_{d}(u)|\,K^{2d+1}\Lambda(X|_{\leq\theta}\setminus B_{d}(u))\nonumber \\
 & \leq2^{d}\binom{n+d}{d}\,K^{2d+1}\Lambda(X|_{\leq\theta}\setminus B_{d}(u)),\label{eq:ball-excluded-bound}
\end{align}
where the first inequality is the result of summing~(\ref{eq:single-point-ineq})
over $x\in X|_{\leq\theta}\cap B_{d}(u)$; the third step uses~(\ref{eq:u-distance})
and~(\ref{eq:u-good}); and the last step applies~(\ref{eq:ball-around-x}).
To complete the proof, add $\Lambda(X|_{\leq\theta}\setminus B_{d}(u))$
to both sides of~(\ref{eq:ball-excluded-bound}).
\end{proof}

\subsection{\label{subsec:Weight-transfer}Weight transfer in locally smooth
distributions}

Locally smooth functions exhibit great plasticity. In what follows,
we will show that a locally smooth function on $\prod_{i=1}^{n}\{0,1,2,\ldots,r_{i}\}$
can be modified to achieve a broad range of global metric behaviors\textemdash without
the modification being detectable by low-degree polynomials. Among
other things, we will be able to take any locally smooth distribution
and make it globally min-smooth. Our starting point is a generalization
of Lemma~\ref{lem:mass-transfer}, which corresponds to taking $v=0^{n}$
in the new result.
\begin{lem}
\label{lem:mass-transfer-v}Fix points $u,v\in\NN^{n}$ and a natural
number $d<|u-v|.$ Then there is a function $\zeta_{u,v}\colon\cube(u,v)\to\Re$
such that
\begin{align}
 & \supp\zeta_{u,v}\subseteq\{u\}\cup\{x\in\cube(u,v):|x-v|\leq d\},\label{eq:zeta-support-1}\\
 & \zeta_{u,v}(u)=1,\label{eq:zeta-at-u-1}\\
 & \|\zeta_{u,v}\|_{1}\leq1+2^{d}\binom{|u-v|}{d},\label{eq:zeta-norm-1}\\
 & \orth\zeta_{u,v}>d.\label{eq:zeta-orth-1}
\end{align}
\end{lem}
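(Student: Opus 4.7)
The plan is to reduce Lemma~\ref{lem:mass-transfer-v} to the $v=0^n$ case already established in Lemma~\ref{lem:mass-transfer}, by applying a coordinatewise affine bijection that carries $v$ to the origin. Define $w\in\NN^n$ by $w_i=|u_i-v_i|$, so that $|w|=|u-v|>d$, and let $T\colon\cube(u,v)\to\{y\in\NN^n:y\leq w\}$ be the bijection $T(x)_i=|x_i-v_i|$, which sends $v\mapsto 0^n$ and $u\mapsto w$. Its inverse is coordinatewise affine, with $T^{-1}(y)_i=v_i+y_i$ when $u_i\geq v_i$ and $T^{-1}(y)_i=v_i-y_i$ otherwise.

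First I would invoke Lemma~\ref{lem:mass-transfer} on $w$ to obtain a function $\zeta_w\colon\NN^n\to\Re$ supported on $\{w\}\cup\{y\in\NN^n:y\leq w,\,|y|\leq d\}$ with $\zeta_w(w)=1$, $\|\zeta_w\|_1\leq 1+2^d\binom{|w|}{d}$, and $\orth\zeta_w>d$. Then I would set $\zeta_{u,v}(x)=\zeta_w(T(x))$ for $x\in\cube(u,v)$. Since $T$ is a bijection with $|T(x)|=|x-v|$ on $\cube(u,v)$, the support condition~(\ref{eq:zeta-support-1}), the normalization~(\ref{eq:zeta-at-u-1}), and the $\ell_1$-norm bound~(\ref{eq:zeta-norm-1}) all follow immediately by pulling the corresponding properties of $\zeta_w$ back through $T$.

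The only step that requires a moment's thought is the orthogonality~(\ref{eq:zeta-orth-1}). For any polynomial $p\colon\cube(u,v)\to\Re$ of degree at most $d$, the change of variable $y=T(x)$ yields
\[
\langle\zeta_{u,v},p\rangle=\sum_{x\in\cube(u,v)}\zeta_w(T(x))\,p(x)=\sum_{y\leq w}\zeta_w(y)\,p(T^{-1}(y))=\langle\zeta_w,\,p\circ T^{-1}\rangle.
\]
Because $T^{-1}$ is coordinatewise affine, $p\circ T^{-1}$ is again a polynomial of degree at most $d$, and the right-hand side vanishes by $\orth\zeta_w>d$. I do not expect any genuine obstacle: the whole argument is a black-box application of Lemma~\ref{lem:mass-transfer} through an affine change of coordinates, resting on the observation that affine substitutions preserve polynomial degree.
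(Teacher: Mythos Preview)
Your proposal is correct and essentially identical to the paper's proof. The paper defines $u^{*}=(|u_1-v_1|,\ldots,|u_n-v_n|)$ (your $w$), invokes Lemma~\ref{lem:mass-transfer} to get $\zeta_{u^{*}}$, sets $\zeta_{u,v}(x)=\zeta_{u^{*}}(|x_1-v_1|,\ldots,|x_n-v_n|)$ (your $\zeta_w\circ T$), and verifies orthogonality by writing $p(x)=q(|x_1-v_1|,\ldots,|x_n-v_n|)$ for a polynomial $q$ of degree at most $d$---which is exactly your observation that $p\circ T^{-1}$ has degree at most $d$ because $T^{-1}$ is coordinatewise affine.
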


\begin{proof}
Abbreviate $u^{*}=(|u_{1}-v_{1}|,|u_{2}-v_{2}|,\ldots,|u_{n}-v_{n}|).$
Lemma~\ref{lem:mass-transfer} constructs a function $\zeta_{u^{*}}\colon\NN^{n}\to\Re$
such that
\begin{align}
 & \supp\zeta_{u^{*}}\subseteq\{u^{*}\}\cup\{x\in\NN^{n}:x\leq u^{*}\text{ and }|x|\leq d\},\label{eq:zeta-support-2}\\
 & \zeta_{u^{*}}(u^{*})=1,\label{eq:zeta-at-u-2}\\
 & \|\zeta_{u^{*}}\|_{1}\leq1+2^{d}\binom{|u^{*}|}{d},\label{eq:zeta-norm-2}\\
 & \orth\zeta_{u^{*}}>d.\label{eq:zeta-orth-2}
\end{align}
Define $\zeta_{u,v}\colon\cube(u,v)\to\Re$ by
\[
\zeta_{u,v}(x)=\zeta_{u^{*}}(|x_{1}-v_{1}|,|x_{2}-v_{2}|,\ldots,|x_{n}-v_{n}|).
\]
Then~(\ref{eq:zeta-support-1}) and~(\ref{eq:zeta-at-u-1}) are
immediate from~(\ref{eq:zeta-support-2}) and~(\ref{eq:zeta-at-u-2}),
respectively. Property~(\ref{eq:zeta-norm-1}) can be verified as
follows:
\begin{align*}
\|\zeta_{u,v}\|_{1} & =\sum_{x\in\cube(u,v)}|\zeta_{u^{*}}(|x_{1}-v_{1}|,|x_{2}-v_{2}|,\ldots,|x_{n}-v_{n}|)|\\
 & =\sum_{\substack{w\in\NN^{n}:\\
w\leq u^{*}
}
}|\zeta_{u^{*}}(w)|\\
 & \leq1+2^{d}\binom{|u^{*}|}{d},
\end{align*}
where the last step uses~(\ref{eq:zeta-norm-2}). For~(\ref{eq:zeta-orth-1}),
fix an arbitrary polynomial $p$ of degree at most $d.$ Then at every
point $x\in\cube(u,v),$ we have
\begin{align}
p(x) & =p((x_{1}-v_{1})+v_{1},\ldots,(x_{n}-v_{n})+v_{n})\nonumber \\
 & =p(\sign(u_{1}-v_{1})|x_{1}-v_{1}|+v_{1},\ldots,\sign(u_{n}-v_{n})|x_{n}-v_{n}|+v_{n})\nonumber \\
 & =q(|x_{1}-v_{1}|,\ldots,|x_{n}-v_{n}|),\label{eq:p-transformed}
\end{align}
where $q$ is some polynomial of degree at most $d.$ As a result,
\begin{align*}
\langle\zeta_{u,v},p\rangle & =\sum_{x\in\cube(u,v)}\zeta_{u^{*}}(|x_{1}-v_{1}|,\ldots,|x_{n}-v_{n}|)\,p(x)\\
 & =\sum_{x\in\cube(u,v)}\zeta_{u^{*}}(|x_{1}-v_{1}|,\ldots,|x_{n}-v_{n}|)\,q(|x_{1}-v_{1}|,\ldots,|x_{n}-v_{n}|)\\
 & =\sum_{\substack{w\in\NN^{n}:\\
w\leq u^{*}
}
}\zeta_{u^{*}}(w)\,q(w)\\
 & =\langle\zeta_{u^{*}},q\rangle\\
 & =0,
\end{align*}
where the second, fourth, and fifth steps are valid by~(\ref{eq:p-transformed}),
(\ref{eq:zeta-support-2}), and~(\ref{eq:zeta-orth-2}), respectively.
\end{proof}
\noindent Our next result is a smooth analogue of Lemma~\ref{lem:mass-transfer-v}.
The smoothness offers a great deal of flexibility when using the lemma
to transfer $\ell_{1}$ mass from one region of $\NN^{n}$ to another.
\begin{lem}
\label{lem:Zeroizer-for-distributions}Let $X=\prod_{i=1}^{n}\{0,1,2,\ldots,r_{i}\},$
where each $r_{i}\geq0$ is an integer. Let $\theta$ and $d$ be
nonnegative integers with
\[
d<\frac{1}{3}\min\left\{ \theta,\sum_{i=1}^{n}r_{i}\right\} .
\]
Let $\Lambda$ be a probability distribution on $X|_{\leq\theta}.$
Suppose further that $\Lambda$ is $K$-smooth on $X|_{\leq\theta}.$
Then for every $u\in X,$ there is a function $Z_{u}\colon X\to\Re$
with
\begin{align}
 & \supp Z_{u}\subseteq X|_{\leq\theta}\cup\{u\},\label{eq:Zeta-u-supp}\\
 & Z_{u}(u)=1,\label{eq:Zeta-u-at-u}\\
 & \orth Z_{u}>d,\label{eq:Zeta-u-orth}\\
 & \|Z_{u}\|_{1}\leq2^{d}\binom{\diam(\{u\}\cup\supp\Lambda)}{d}+1,\label{eq:Zeta-u-norm}\\
 & |Z_{u}(x)|\leq2^{3d+1}K^{4d+1}\binom{n+d}{d}^{3}\binom{\diam(\{u\}\cup\supp\Lambda)}{d}\Lambda(x), &  & x\ne u.\label{eq:Zeta-u-outside-u}
\end{align}
\end{lem}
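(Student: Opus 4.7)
The plan is to construct $Z_u$ as the $\Lambda$-weighted average of the elementary mass-transfer functions $\zeta_{u,v}$ supplied by Lemma~\ref{lem:mass-transfer-v}, taken over the subset
\[
T \;=\; X|_{\leq\theta-d}\setminus B_d(u).
\]
The two constraints defining $T$ play complementary roles. The weight bound $|v|\leq\theta-d$ ensures that each $\zeta_{u,v}$, whose nontrivial support is contained in $\{x:|x-v|\leq d\}$, satisfies $|x|\leq|v|+|x-v|\leq\theta$, so the support property~(\ref{eq:Zeta-u-supp}) will follow. The condition $v\notin B_d(u)$ is precisely $|u-v|>d$, which is the hypothesis under which Lemma~\ref{lem:mass-transfer-v} produces $\zeta_{u,v}$. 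With $T$ in hand I set
\[
Z_u \;=\; \frac{1}{\Lambda(T)}\sum_{v\in T}\Lambda(v)\,\zeta_{u,v},
\]
extending each $\zeta_{u,v}$ by $0$ outside $\cube(u,v)$. Then $\zeta_{u,v}(u)=1$ yields $Z_u(u)=1$, Proposition~\ref{prop:orth}\ref{item:orth-sum} together with $\orth\zeta_{u,v}>d$ gives~(\ref{eq:Zeta-u-orth}), and the bound $\|\zeta_{u,v}\|_1\leq 1+2^d\binom{|u-v|}{d}\leq 1+2^d\binom{\diam(\{u\}\cup\supp\Lambda)}{d}$ delivers~(\ref{eq:Zeta-u-norm}).

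The central technical ingredient is a lower bound of the form $\Lambda(T)\geq 1/(2^{d+1}K^{3d+1}\binom{n+d}{d}^{2})$, obtained by composing the two structural inequalities of Section~\ref{subsec:Metric-properties-of-locally-smooth}. First, Proposition~\ref{prop:remove-cap} applied to $\Lambda$ gives $\Lambda(X|_{\leq\theta-d})\geq 1/(K^d\binom{n+d}{d})$. I then renormalize the restriction of $\Lambda$ to $X|_{\leq\theta-d}$, obtaining a probability distribution $\tilde{\Lambda}$ on $X$ that, by Proposition~\ref{prop:smooth}\ref{enu:smooth-restriction},~\ref{enu:smooth-scaling}, is still $K$-smooth on $X|_{\leq\theta-d}$. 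Applying Proposition~\ref{prop:remove-ball} to $\tilde{\Lambda}$ with weight parameter $\theta-d$ and ball radius $d$ is valid because the hypothesis $d<\frac{1}{3}\min\{\theta,\sum_{i=1}^{n}r_i\}$ furnishes both $d<(\theta-d)/2$ and $d<(\sum_{i=1}^{n}r_i)/2$; the application yields $\tilde{\Lambda}(X|_{\leq\theta-d}\setminus B_d(u))\geq 1/(2^{d+1}K^{2d+1}\binom{n+d}{d})$. Multiplying the two inequalities produces the desired lower bound on $\Lambda(T)$, and in particular shows $T\neq\varnothing$.

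For the pointwise bound~(\ref{eq:Zeta-u-outside-u}), fix $x\neq u$. If $|x|>\theta$, then every $v\in T$ satisfies $|x-v|\geq|x|-|v|>\theta-(\theta-d)=d$, so $\zeta_{u,v}(x)=0$ and hence $Z_u(x)=0$; this simultaneously completes the verification of~(\ref{eq:Zeta-u-supp}). In the remaining case $x\in X|_{\leq\theta}$, any $v$ with $\zeta_{u,v}(x)\neq 0$ must lie in $B_d(x)$. Combining the bound $|\zeta_{u,v}(x)|\leq\|\zeta_{u,v}\|_1-1\leq 2^d\binom{\diam(\{u\}\cup\supp\Lambda)}{d}$, the $K$-smoothness estimate $\Lambda(v)\leq K^d\Lambda(x)$ for $v\in B_d(x)\cap X|_{\leq\theta}$, and the ball-size estimate $|B_d(x)|\leq 2^d\binom{n+d}{d}$ from~(\ref{eq:ball-around-x}), I obtain
\[
|Z_u(x)|\;\leq\;\frac{1}{\Lambda(T)}\cdot 2^d\binom{\diam(\{u\}\cup\supp\Lambda)}{d}\cdot 2^d\binom{n+d}{d}K^d\,\Lambda(x),
\]
and substituting the lower bound on $\Lambda(T)$ yields exactly the constant $2^{3d+1}K^{4d+1}\binom{n+d}{d}^{3}$ required by~(\ref{eq:Zeta-u-outside-u}). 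The main obstacle throughout the argument is ensuring that cap removal and ball removal can be chained with only a multiplicative loss in $\Lambda(T)$; the essential observation is that the $K$-smoothness of $\Lambda$ is inherited by its renormalized restriction to $X|_{\leq\theta-d}$, which permits ball removal to be performed \emph{after} the cap has been excised rather than before.
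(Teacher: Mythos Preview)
Your proof is correct and follows essentially the same approach as the paper: define $T=X|_{\leq\theta-d}\setminus B_d(u)$, set $Z_u=\Lambda(T)^{-1}\sum_{v\in T}\Lambda(v)\zeta_{u,v}$, and derive the pointwise bound by combining $K$-smoothness, the ball-size estimate, and the lower bound on $\Lambda(T)$ obtained from Propositions~\ref{prop:remove-cap} and~\ref{prop:remove-ball}. The only cosmetic difference is that you renormalize $\Lambda$ to $\tilde\Lambda$ before invoking Proposition~\ref{prop:remove-ball}, whereas the paper applies that proposition directly to $\Lambda$ with weight parameter $\theta-d$ (the proposition is stated for distributions on $X$ with no requirement that the support equal $X|_{\leq\theta}$, so renormalization is unnecessary); both routes yield the identical bound $\Lambda(T)\geq(2^{d+1}K^{3d+1}\binom{n+d}{d}^{2})^{-1}$.
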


\begin{proof}
We have
\begin{align}
1 & =\Lambda(X|_{\leq\theta})\nonumber \\
 & \leq K^{d}\binom{n+d}{d}\Lambda(X|_{\leq\theta-d})\nonumber \\
 & \leq2^{d+1}K^{3d+1}\binom{n+d}{d}^{2}\Lambda(X|_{\leq\theta-d}\setminus B_{d}(u)),\label{eq:Lambda-V}
\end{align}
where the last two steps apply Propositions~\ref{prop:remove-cap}
and~\ref{prop:remove-ball}, respectively. 

We now move on to the construction of $Z_{u}.$ For any $v\in X|_{\leq\theta-d}\setminus B_{d}(u),$
Lemma~\ref{lem:mass-transfer-v} gives a function $\zeta_{u,v}\colon X\to\Re$
with
\begin{align}
 & \supp\zeta_{u,v}\subseteq\{x\in\cube(u,v):|x-v|\leq d\}\cup\{u\}\label{eq:zeta-support-1-1a}\\
 & \phantom{\supp\zeta_{u,v}}\,\subseteq X|_{\leq\theta}\cup\{u\},\label{eq:zeta-support-1-1}\\
 & \zeta_{u,v}(u)=1,\label{eq:zeta-at-u-1-1}\\
 & \orth\zeta_{u,v}>d,\label{eq:zeta-orth-1-1}\\
 & \|\zeta_{u,v}\|_{1}\leq2^{d}\binom{|u-v|}{d}+1.\label{eq:zeta-norm-1-1}
\end{align}
The last inequality can be simplified as follows:
\begin{align}
\|\zeta_{u,v}\|_{1} & \leq2^{d}\binom{\diam(X|_{\leq\theta}\cup\{u\})}{d}+1\nonumber \\
 & \leq2^{d}\binom{\diam(\{u\}\cup\supp\Lambda)}{d}+1,\label{eq:zeta-norm-weakened}
\end{align}
where the first step uses $v\in X|_{\leq\theta}$, and the second
step is legitimate because $\Lambda$ is a $K$-smooth probability
distribution on $X|_{\leq\theta}$ and therefore $\Lambda\ne0$ at
every point of $X|_{\leq\theta}.$ Combining~(\ref{eq:zeta-at-u-1-1})
and~(\ref{eq:zeta-norm-weakened}),
\begin{align}
\|\zeta_{u,v}\|_{\infty} & \leq2^{d}\binom{\diam(\{u\}\cup\supp\Lambda)}{d}.\label{eq:zeta-u-v-infty}
\end{align}
We define $Z_{u}\colon X\to\Re$ by 
\[
Z_{u}(x)=\frac{1}{\Lambda(X|_{\leq\theta-d}\setminus B_{d}(u))}\;\sum_{v\in X|_{\leq\theta-d}\setminus B_{d}(u)}\Lambda(v)\,\zeta_{u,v}(x),
\]
which is legitimate since $\Lambda(X|_{\leq\theta-d}\setminus B_{d}(u))>0$
by~(\ref{eq:Lambda-V}). Then properties~(\ref{eq:Zeta-u-supp}),
(\ref{eq:Zeta-u-at-u}), (\ref{eq:Zeta-u-orth}), and~(\ref{eq:Zeta-u-norm})
for $Z_{u}$ are immediate from the corresponding properties~(\ref{eq:zeta-support-1-1}),
(\ref{eq:zeta-at-u-1-1}), (\ref{eq:zeta-orth-1-1}), and~(\ref{eq:zeta-norm-weakened})
of $\zeta_{u,v}.$

It remains to verify~(\ref{eq:Zeta-u-outside-u}). Fix $x\ne u.$
If $x\notin X|_{\leq\theta},$ then~(\ref{eq:zeta-support-1-1})
implies that $Z_{u}(x)=0$ and therefore (\ref{eq:Zeta-u-outside-u})
holds in that case. In the complementary case when $x\in X|_{\leq\theta},$
we have
\begin{align*}
|Z_{u}(x)| & \leq\sum_{v\in X|_{\leq\theta-d}\setminus B_{d}(u)}\frac{\Lambda(v)}{\Lambda(X|_{\leq\theta-d}\setminus B_{d}(u))}\cdot|\zeta_{u,v}(x)|\\
 & =\sum_{\substack{v\in X|_{\leq\theta-d}\setminus B_{d}(u):\\
|v-x|\leq d
}
}\frac{\Lambda(v)}{\Lambda(X|_{\leq\theta-d}\setminus B_{d}(u))}\cdot|\zeta_{u,v}(x)|\\
 & \leq\sum_{\substack{v\in X|_{\leq\theta-d}\setminus B_{d}(u):\\
|v-x|\leq d
}
}\frac{K^{d}\Lambda(x)}{\Lambda(X|_{\leq\theta-d}\setminus B_{d}(u))}\cdot2^{d}\binom{\diam(\{u\}\cup\supp\Lambda)}{d}\\
 & \leq2^{d}\binom{n+d}{d}\cdot\frac{K^{d}\Lambda(x)}{\Lambda(X|_{\leq\theta-d}\setminus B_{d}(u))}\cdot2^{d}\binom{\diam(\{u\}\cup\supp\Lambda)}{d},
\end{align*}
where the first step applies the triangle inequality to the definition
of $Z_{u}$; the second step uses~(\ref{eq:zeta-support-1-1a}) and
$x\ne u$; the third step applies the $K$-smoothness of $\Lambda$
and substitutes the bound from~(\ref{eq:zeta-u-v-infty}); and the
final step uses~(\ref{eq:ball-around-x}). In view of~(\ref{eq:Lambda-V}),
this completes the proof of~(\ref{eq:Zeta-u-outside-u}).
\end{proof}
We now show how to efficiently zero out a locally smooth function
on points of large Hamming weight. The modified function is pointwise
close to the original and cannot be distinguished from it by any low-degree
polynomial.
\begin{lem}
\label{lem:SMOOTH-remove-large-Hamming}Define $X=\prod_{i=1}^{n}\{0,1,2,\ldots,r_{i}\},$
where each $r_{i}\geq0$ is an integer. Let $\theta$ and $d$ be
nonnegative integers with
\begin{equation}
d<\frac{\theta}{3}.\label{eq:smooth-remove-large-hm-d-theta}
\end{equation}
Let $\Phi\colon X\to\Re$ be a function that is $K$-smooth on $X|_{\leq\theta},$
with $\Phi|_{\leq\theta}\not\equiv0.$ Then there is $\tilde{\Phi}\colon X\to\Re$
such that
\begin{align}
 & \orth(\Phi-\tilde{\Phi})>d,\label{eq:zeroed-distribution-indistinguishable}\\
 & \supp\tilde{\Phi}\subseteq X|_{\leq\theta},\label{eq:zeroed-distribution-support}\\
 & |\Phi-\tilde{\Phi}|\leq2^{3d+1}K^{4d+1}\binom{n+d}{d}^{3}\binom{\diam(\supp\Phi)}{d}\frac{\|\Phi|_{>\theta}\|_{1}}{\|\Phi|_{\leq\theta}\|_{1}}\cdot|\Phi|\nonumber \\
 & \qquad\qquad\qquad\qquad\qquad\qquad\qquad\qquad\qquad\qquad\qquad\qquad\text{on }X|_{\leq\theta.}\label{eq:zeroed-distribution-distance}
\end{align}
\end{lem}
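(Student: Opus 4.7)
The plan is to zero out $\Phi$ on $X|_{>\theta}$ point-by-point using the locally smooth mass-transfer functions supplied by Lemma~\ref{lem:Zeroizer-for-distributions}. If $\sum_i r_i \leq \theta,$ then $X|_{>\theta}=\varnothing$ and $\tilde{\Phi}=\Phi$ works trivially; henceforth I assume $\sum_i r_i>\theta>3d,$ which ensures that the diameter hypothesis $d<\tfrac{1}{3}\min\{\theta,\sum_i r_i\}$ of Lemma~\ref{lem:Zeroizer-for-distributions} is met.

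The $K$-smoothness of $\Phi$ on $X|_{\leq\theta}$, together with $\Phi|_{\leq\theta}\not\equiv0,$ forces $\Phi$ to be nonvanishing throughout $X|_{\leq\theta}$ (otherwise smoothness would propagate a zero to the entire set). Consequently, the normalized absolute value $\Lambda=|\Phi|_{\leq\theta}|/\|\Phi|_{\leq\theta}\|_{1}$ is a bona fide probability distribution on $X|_{\leq\theta}$ that inherits $K$-smoothness from $\Phi.$ Applying Lemma~\ref{lem:Zeroizer-for-distributions} to this $\Lambda$ at each point $u\in X|_{>\theta}$ yields functions $Z_u\colon X\to\Re$ satisfying~(\ref{eq:Zeta-u-supp})--(\ref{eq:Zeta-u-outside-u}). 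I then define
\[
\tilde{\Phi}=\Phi-\sum_{u\in X|_{>\theta}}\Phi(u)\,Z_u.
\]

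For the support property~(\ref{eq:zeroed-distribution-support}), observe that at any $u'\in X|_{>\theta}$ the inclusion $\supp Z_u\subseteq X|_{\leq\theta}\cup\{u\}$ implies $Z_u(u')=0$ whenever $u\neq u'$, so that $\tilde{\Phi}(u')=\Phi(u')-\Phi(u')Z_{u'}(u')=\Phi(u')-\Phi(u')=0$ by~(\ref{eq:Zeta-u-at-u}). The orthogonality~(\ref{eq:zeroed-distribution-indistinguishable}) follows immediately from $\orth Z_u>d$ and Proposition~\ref{prop:orth}\ref{item:orth-sum}, since $\Phi-\tilde{\Phi}$ is a finite linear combination of the $Z_u.$ For the pointwise bound~(\ref{eq:zeroed-distribution-distance}), fix $x\in X|_{\leq\theta}$; since $x\neq u$ for every $u\in X|_{>\theta},$ the triangle inequality together with~(\ref{eq:Zeta-u-outside-u}) gives
\[
|\Phi(x)-\tilde{\Phi}(x)|\leq\sum_{u\in X|_{>\theta}}|\Phi(u)|\cdot|Z_u(x)|\leq C\cdot\Lambda(x)\sum_{u\in X|_{>\theta}}|\Phi(u)|,
\]
where $C=2^{3d+1}K^{4d+1}\binom{n+d}{d}^{3}\binom{\diam(\supp\Phi)}{d}$ and the bound $\diam(\{u\}\cup\supp\Lambda)\leq\diam(\supp\Phi)$ is used uniformly in $u$ (valid because $\supp\Lambda\subseteq\supp\Phi$ and $u\in\supp\Phi$). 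Substituting $\Lambda(x)=|\Phi(x)|/\|\Phi|_{\leq\theta}\|_{1}$ and $\sum_{u\in X|_{>\theta}}|\Phi(u)|=\|\Phi|_{>\theta}\|_{1}$ yields exactly the desired inequality.

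The heart of the argument is the availability of $Z_u$ from Lemma~\ref{lem:Zeroizer-for-distributions}, which packages a smoothness-respecting version of the Razborov--Sherstov corrector; given this machinery, the present proof is essentially bookkeeping. The only genuinely delicate point is the uniform replacement $\diam(\{u\}\cup\supp\Lambda)\leq\diam(\supp\Phi),$ which is what allows the final bound to be stated purely in terms of metric data of $\Phi$ rather than of the auxiliary distribution $\Lambda.$
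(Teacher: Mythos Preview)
Your proof is correct and follows essentially the same route as the paper's: normalize $|\Phi|$ on $X|_{\leq\theta}$ to a $K$-smooth distribution, invoke Lemma~\ref{lem:Zeroizer-for-distributions} for each heavy point, and set $\tilde{\Phi}=\Phi-\sum_{u\in X|_{>\theta}}\Phi(u)Z_u$. One small presentational point: your justification of $\diam(\{u\}\cup\supp\Lambda)\leq\diam(\supp\Phi)$ via ``$u\in\supp\Phi$'' is only literally true for those $u$ with $\Phi(u)\ne0$, but since the remaining terms vanish this is harmless---the paper simply restricts the sum to such $u$ explicitly.
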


\begin{proof}
If $\theta>\sum_{i=1}^{n}r_{i},$ the lemma holds trivially for $\tilde{\Phi}=\Phi.$
In what follows, we treat the complementary case $\theta\leq\sum_{i=1}^{n}r_{i}.$
By~(\ref{eq:smooth-remove-large-hm-d-theta}),
\[
d<\frac{1}{3}\min\left\{ \theta,\sum_{i=1}^{n}r_{i}\right\} .
\]
Since $\Phi$ is $K$-smooth on $X|_{\leq\theta}$, the probability
distribution $\Lambda$ on $X|_{\leq\theta}$ given by $\Lambda(x)=|\Phi(x)|/\|\Phi|_{\leq\theta}\|_{1}$
is also $K$-smooth. As a result, Lemma~\ref{lem:Zeroizer-for-distributions}
gives for every $u\in X$ a function $Z_{u}\colon X\to\Re$ with
\begin{align}
 & Z_{u}(u)=1,\label{eq:Zeta-u-at-u-1}\\
 & |Z_{u}(x)|\leq2^{3d+1}K^{4d+1}\binom{n+d}{d}^{3}\binom{\diam(\{u\}\cup\supp\Lambda)}{d}\frac{|\Phi(x)|}{\|\Phi|_{\leq\theta}\|_{1}}\nonumber \\
 & \qquad\qquad\qquad\qquad\qquad\qquad\qquad\qquad\qquad\qquad\qquad\qquad\text{ for }x\ne u,\label{eq:Zeta-u-outside-u-1}\\
 & \orth Z_{u}>d,\label{eq:Zeta-u-orth-1}\\
 & \supp Z_{u}\subseteq X|_{\leq\theta}\cup\{u\}.\label{eq:eq:Zeta-supp-1}
\end{align}
Now define
\[
\tilde{\Phi}=\Phi-\sum_{u\in X|_{>\theta}}\Phi(u)Z_{u}.
\]
Then~(\ref{eq:zeroed-distribution-indistinguishable}) is immediate
from~(\ref{eq:Zeta-u-orth-1}). To verify~(\ref{eq:zeroed-distribution-support}),
fix any point $x\in X|_{>\theta}.$ Then
\begin{align*}
\tilde{\Phi}(x) & =\Phi(x)-\sum_{u\in X|_{>\theta}}\Phi(u)Z_{u}(x)\\
 & =\Phi(x)-\Phi(x)Z_{x}(x)\\
 & =0,
\end{align*}
where the last two steps use~(\ref{eq:eq:Zeta-supp-1}) and~(\ref{eq:Zeta-u-at-u-1}),
respectively. 

It remains to verify~(\ref{eq:zeroed-distribution-distance}) on
$X|_{\leq\theta}$:
\begin{align*}
\!\!\!\!\!|\Phi-\tilde{\Phi}| & \leq\sum_{\substack{u\in X|_{>\theta}:\\
\Phi(u)\ne0
}
}|\Phi(u)|\,|Z_{u}|\\
 & \leq2^{3d+1}K^{4d+1}\binom{n+d}{d}^{3}\binom{\diam(\supp\Phi)}{d}\,\sum_{\substack{u\in X|_{>\theta}:\\
\Phi(u)\ne0
}
}|\Phi(u)|\cdot\frac{|\Phi|}{\|\Phi|_{\leq\theta}\|_{1}}\\
 & =2^{3d+1}K^{4d+1}\binom{n+d}{d}^{3}\binom{\diam(\supp\Phi)}{d}\frac{\|\Phi|_{>\theta}\|_{1}}{\|\Phi|_{\leq\theta}\|_{1}}\cdot|\Phi|,
\end{align*}
where the second step uses~(\ref{eq:Zeta-u-outside-u-1}).
\end{proof}
For technical reasons, we need a generalization of the previous lemma
to functions on $\prod_{i=1}^{n}\{\Delta_{i},\Delta_{i}+1,\ldots,\Delta_{i}+r_{i}\}$
for nonnegative integers $\Delta_{i}$ and $r_{i},$ and further to
convex combinations of such functions. We obtain these generalizations
in the two corollaries that follow.
\begin{cor}
\label{cor:Zeroizer-for-distributions}Define $X=\prod_{i=1}^{n}\{\Delta_{i},\Delta_{i}+1,\ldots,\Delta_{i}+r_{i}\},$
where all $\Delta_{i}$ and $r_{i}$ are nonnegative integers. Let
$\theta$ and $d$ be nonnegative integers with
\[
d<\frac{1}{3}\left(\theta-\sum_{i=1}^{n}\Delta_{i}\right).
\]
Let $\Phi\colon X\to\Re$ be a function that is $K$-smooth on $X|_{\leq\theta},$
with $\Phi|_{\leq\theta}\not\equiv0.$ Then there is a function $\tilde{\Phi}\colon X\to\Re$
such that
\begin{align}
 & \orth(\Phi-\tilde{\Phi})>d,\label{eq:zeroed-distribution-indistinguishable-1}\\
 & \supp\tilde{\Phi}\subseteq X|_{\leq\theta},\label{eq:zeroed-distribution-support-1}\\
 & |\Phi-\tilde{\Phi}|\leq2^{3d+1}K^{4d+1}\binom{n+d}{d}^{3}\binom{\diam(\supp\Phi)}{d}\frac{\|\Phi|_{>\theta}\|_{1}}{\|\Phi|_{\leq\theta}\|_{1}}\cdot|\Phi|\nonumber \\
 & \qquad\qquad\qquad\qquad\qquad\qquad\qquad\qquad\qquad\qquad\qquad\qquad\text{on }X|_{\leq\theta.}\label{eq:zeroed-distribution-distance-1}
\end{align}
\end{cor}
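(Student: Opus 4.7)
The plan is to reduce Corollary \ref{cor:Zeroizer-for-distributions} to the already established Lemma \ref{lem:SMOOTH-remove-large-Hamming} by a simple translation argument. Set $\Delta = (\Delta_1, \Delta_2, \ldots, \Delta_n)$ and consider the translate $X' = \prod_{i=1}^{n}\{0,1,\ldots,r_i\}$ together with the shifted function $\Phi'\colon X' \to \Re$ defined by $\Phi'(y) = \Phi(y+\Delta)$. The point of the reduction is that weight constraints on $X$ become weight constraints on $X'$ via $|x| \leq \theta \iff |y| \leq \theta - |\Delta|$, so the relevant threshold on $X'$ is $\theta' := \theta - \sum_i \Delta_i$, and the hypothesis $d < (\theta - \sum_i \Delta_i)/3$ is precisely $d < \theta'/3$ as required by Lemma \ref{lem:SMOOTH-remove-large-Hamming}.

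First I would verify that $\Phi'$ is $K$-smooth on $X'|_{\leq \theta'}$, which is immediate since $|y-y'| = |(y+\Delta)-(y'+\Delta)|$ and the translation sends $X'|_{\leq \theta'}$ bijectively onto $X|_{\leq \theta}$. Applying Lemma \ref{lem:SMOOTH-remove-large-Hamming} to $\Phi'$ with parameters $\theta'$ and $d$ produces $\widetilde{\Phi'}\colon X' \to \Re$ satisfying the three analogous conclusions for $\Phi'$. I then define $\tilde{\Phi}(x) = \widetilde{\Phi'}(x - \Delta)$ for $x \in X$, and claim this is the sought function.

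Verifying the three required properties is now routine. The support property \eqref{eq:zeroed-distribution-support-1} transfers verbatim since $\supp \widetilde{\Phi'} \subseteq X'|_{\leq \theta'}$ translates to $\supp \tilde{\Phi} \subseteq X|_{\leq \theta}$. The pointwise bound \eqref{eq:zeroed-distribution-distance-1} likewise transfers unchanged, because translation preserves $\ell_1$ norms on corresponding sets ($\|\Phi'|_{>\theta'}\|_1 = \|\Phi|_{>\theta}\|_1$ and $\|\Phi'|_{\leq \theta'}\|_1 = \|\Phi|_{\leq \theta}\|_1$) and also preserves the diameter of the support ($\diam(\supp \Phi') = \diam(\supp \Phi)$). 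For the orthogonality property \eqref{eq:zeroed-distribution-indistinguishable-1}, given any polynomial $p$ on $\Re^n$ of degree at most $d$, I would write
\[
\langle \Phi - \tilde{\Phi}, p \rangle = \sum_{x \in X} (\Phi(x) - \tilde{\Phi}(x))\, p(x) = \sum_{y \in X'} (\Phi'(y) - \widetilde{\Phi'}(y))\, p(y+\Delta),
\]
and observe that $y \mapsto p(y+\Delta)$ is a polynomial in $y$ of degree at most $\deg p \leq d$, so the right-hand side vanishes by $\orth(\Phi' - \widetilde{\Phi'}) > d$.

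No serious obstacle is expected here: the entire content of the corollary is encapsulated in the observation that all of the hypotheses and conclusions of Lemma \ref{lem:SMOOTH-remove-large-Hamming} are invariant under integer translation of the domain. The only minor care required is to check that the constants in the right-hand side of \eqref{eq:zeroed-distribution-distance-1} genuinely do survive the translation without degradation, which they do because the combinatorial factors depend only on $n$ and $d$, while the geometric factors (diameter, $\ell_1$ masses) are translation-invariant.
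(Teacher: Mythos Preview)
Your proposal is correct and follows essentially the same approach as the paper: both reduce to Lemma~\ref{lem:SMOOTH-remove-large-Hamming} by translating $X$ to $X'=\prod_{i=1}^n\{0,1,\ldots,r_i\}$ via the shift $\Delta=(\Delta_1,\ldots,\Delta_n)$, setting $\theta'=\theta-\sum_i\Delta_i$, applying the lemma to the shifted function, and then translating back. Your write-up is slightly more explicit than the paper's in spelling out why orthogonality survives the translation, but the argument is the same.
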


\begin{proof}
Abbreviate $X'=\prod_{i=1}^{n}\{0,1,2,\ldots,r_{i}\}$ and $\theta'=\theta-\sum_{i=1}^{n}\Delta_{i}.$
In this notation,
\begin{equation}
d<\frac{\theta'}{3}.\label{eq:d-prime}
\end{equation}
Consider the function $\Phi'\colon X'\to\Re$ given by $\Phi'(x)=\Phi(x+(\Delta_{1},\Delta_{2}\ldots,\Delta_{n})).$
Then any two points $u,v\in X'|_{\leq\theta'}$ obey
\begin{align*}
|\Phi'(u)| & =|\Phi(u+(\Delta_{1},\Delta_{2},\ldots,\Delta_{n}))|\\
 & \leq K^{|u-v|}|\Phi(v+(\Delta_{1},\Delta_{2},\ldots,\Delta_{n}))|\\
 & =K^{|u-v|}|\Phi'(v)|,
\end{align*}
where the second step uses the $K$-smoothness of $\Phi$ on $X|_{\leq\theta}$.
As a result, $\Phi'$ is $K$-smooth on $X'|_{\leq\theta'}.$ Moreover,
$\|\Phi'|_{\leq\theta'}\|_{1}=\|\Phi|_{\leq\theta}\|_{1}>0.$ In view
of~(\ref{eq:d-prime}), Lemma~\ref{lem:SMOOTH-remove-large-Hamming}
gives a function $\tilde{\Phi}'\colon X'\to\Re$ such that
\begin{align*}
 & \orth(\Phi'-\tilde{\Phi}')>d,\\
 & \supp\tilde{\Phi}'\subseteq X'|_{\leq\theta'},
\end{align*}
and
\begin{align*}
 & |\Phi'-\tilde{\Phi}'|\leq2^{3d+1}K^{4d+1}\binom{n+d}{d}^{3}\binom{\diam(\supp\Phi')}{d}\frac{\|\Phi'|_{>\theta'}\|_{1}}{\|\Phi'|_{\leq\theta'}\|_{1}}\cdot|\Phi'|\\
 & \phantom{|\Lambda'-\tilde{\Lambda}'|}=2^{3d+1}K^{4d+1}\binom{n+d}{d}^{3}\binom{\diam(\supp\Phi)}{d}\frac{\|\Phi|_{>\theta}\|_{1}}{\|\Phi|_{\leq\theta}\|_{1}}\cdot|\Phi'|
\end{align*}
on $X'|_{\leq\theta'}.$ As a result,~(\ref{eq:zeroed-distribution-indistinguishable-1})\textendash (\ref{eq:zeroed-distribution-distance-1})
hold for the real-valued function $\tilde{\Phi}\colon X\to\Re$ given
by $\tilde{\Phi}(x)=\tilde{\Phi}'(x-(\Delta_{1},\Delta_{2},\ldots,\Delta_{n})).$
\end{proof}
\begin{cor}
\label{cor:Zeroizer-for-distributions-convex-hull}Fix integers $\Delta,d,\theta\geq0$
and $n\geq1,$ and a real number $\delta,$ where 
\begin{align*}
 & \delta\in[0,1),\\
 & d<\frac{1}{3}(\theta-\Delta).
\end{align*}
Then for every
\[
\Lambda\in\conv(\Smooth(n,K,\Delta)\cap\{\Lambda'\in\Distribution(\NN^{n}):\Lambda'(\NN^{n}|_{>\theta})\leq\delta\}),
\]
there is a function $\tilde{\Lambda}\colon\NN^{n}\to\Re$ such that
\begin{align*}
 & \orth(\Lambda-\tilde{\Lambda})>d,\\
 & \supp\tilde{\Lambda}\subseteq\NN^{n}|_{\leq\theta}\cap\supp\Lambda,\\
 & |\Lambda-\tilde{\Lambda}|\leq2^{3d+1}K^{4d+1}\binom{n+d}{d}^{3}\binom{\diam(\supp\Lambda)}{d}\frac{\delta}{1-\delta}\cdot\Lambda\qquad\text{ on }\NN^{n}|_{\leq\theta}.
\end{align*}
\end{cor}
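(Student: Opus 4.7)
The plan is to reduce to Corollary~\ref{cor:Zeroizer-for-distributions} by applying it separately to each distribution appearing in the convex combination, then reassembling the outputs. Write $\Lambda = \sum_{j} \alpha_{j} \Lambda_{j}$ as a convex combination with $\alpha_{j} \geq 0$, $\sum_{j}\alpha_{j}=1$, and $\Lambda_{j} \in \Smooth(n,K,\Delta)$ with $\Lambda_{j}(\NN^{n}|_{>\theta}) \leq \delta$. By definition of $\Smooth(n,K,\Delta)$, each $\Lambda_{j}$ is supported on a shifted product-interval $X_{j} = \prod_{i=1}^{n}\{\Delta_{j,i},\Delta_{j,i}+1,\ldots,\Delta_{j,i}+r_{j,i}\}$ with $\sum_{i}\Delta_{j,i}\leq \Delta$, and is $K$-smooth on $X_{j}$.

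Next, I would apply Corollary~\ref{cor:Zeroizer-for-distributions} to each $\Lambda_{j}$ individually, with $\Phi = \Lambda_{j}$ and $X = X_{j}$. The hypothesis $d < \frac{1}{3}(\theta - \sum_{i}\Delta_{j,i})$ follows from $d < \frac{1}{3}(\theta - \Delta)$ since $\sum_{i}\Delta_{j,i}\leq\Delta$, and $\Lambda_{j}|_{\leq\theta}\not\equiv 0$ follows from $\Lambda_{j}(\NN^{n}|_{>\theta})\leq \delta < 1$. This produces a function $\tilde{\Lambda}_{j}\colon\NN^{n}\to\Re$ such that $\orth(\Lambda_{j}-\tilde{\Lambda}_{j})>d$, $\supp\tilde{\Lambda}_{j}\subseteq X_{j}|_{\leq\theta}=\supp\Lambda_{j}\cap\NN^{n}|_{\leq\theta}$, and pointwise on $X_{j}|_{\leq\theta}$,
\[
|\Lambda_{j}-\tilde{\Lambda}_{j}| \leq 2^{3d+1}K^{4d+1}\binom{n+d}{d}^{3}\binom{\diam(\supp\Lambda_{j})}{d}\,\frac{\|\Lambda_{j}|_{>\theta}\|_{1}}{\|\Lambda_{j}|_{\leq\theta}\|_{1}}\cdot\Lambda_{j}.
\]
To obtain a uniform bound independent of $j$, I would observe that $\supp\Lambda_{j}\subseteq\supp\Lambda$ gives $\diam(\supp\Lambda_{j})\leq\diam(\supp\Lambda)$, and that $\|\Lambda_{j}|_{>\theta}\|_{1}\leq\delta$ together with $\|\Lambda_{j}|_{\leq\theta}\|_{1}\geq 1-\delta$ yields $\|\Lambda_{j}|_{>\theta}\|_{1}/\|\Lambda_{j}|_{\leq\theta}\|_{1}\leq\delta/(1-\delta)$.

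Finally, I would define $\tilde{\Lambda} = \sum_{j}\alpha_{j}\tilde{\Lambda}_{j}$ and verify the three required properties. The orthogonality property follows by linearity of inner product and Proposition~\ref{prop:orth}\ref{item:orth-sum}, since $\Lambda-\tilde{\Lambda}=\sum_{j}\alpha_{j}(\Lambda_{j}-\tilde{\Lambda}_{j})$ and each summand has orthogonal content exceeding $d$. The support property is immediate from $\supp\tilde{\Lambda}\subseteq\bigcup_{j}\supp\tilde{\Lambda}_{j}\subseteq\NN^{n}|_{\leq\theta}\cap\supp\Lambda$. The pointwise bound follows from the triangle inequality: for $x\in\NN^{n}|_{\leq\theta}$,
\[
|\Lambda(x)-\tilde{\Lambda}(x)| \leq \sum_{j}\alpha_{j}\,|\Lambda_{j}(x)-\tilde{\Lambda}_{j}(x)| \leq C\sum_{j}\alpha_{j}\Lambda_{j}(x) = C\Lambda(x),
\]
where $C = 2^{3d+1}K^{4d+1}\binom{n+d}{d}^{3}\binom{\diam(\supp\Lambda)}{d}\,\frac{\delta}{1-\delta}$. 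Here one uses that when $x\notin X_{j}|_{\leq\theta}$, both $\Lambda_{j}(x)$ and $\tilde{\Lambda}_{j}(x)$ vanish, so the per-$j$ bound is trivially valid on all of $\NN^{n}|_{\leq\theta}$.

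There is no genuine technical obstacle here; the entire argument is an exercise in unpacking a convex combination and checking that the bounds in Corollary~\ref{cor:Zeroizer-for-distributions} depend on each $\Lambda_{j}$ only through quantities (the diameter of the support and the tail mass ratio) that are controlled uniformly by the corresponding data for $\Lambda$. The only point requiring care is the uniform upper bound on the tail ratio, which is where the hypothesis $\delta<1$ is used essentially to ensure $1-\delta$ is bounded away from zero.
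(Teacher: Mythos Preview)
Your proposal is correct and follows essentially the same approach as the paper: decompose $\Lambda$ as a convex combination, apply Corollary~\ref{cor:Zeroizer-for-distributions} to each constituent $\Lambda_j$, and recombine. Your write-up is in fact more detailed than the paper's, which is quite terse; in particular you spell out the verification of the hypotheses of Corollary~\ref{cor:Zeroizer-for-distributions}, the uniform bound $\diam(\supp\Lambda_j)\leq\diam(\supp\Lambda)$, and the observation that on $\NN^{n}|_{\leq\theta}\setminus X_j$ both $\Lambda_j$ and $\tilde{\Lambda}_j$ vanish so the per-$j$ pointwise bound extends trivially.
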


\begin{proof}
Write $\Lambda$ out explicitly as
\[
\Lambda=\sum_{i=1}^{N}\lambda_{i}\Lambda_{i}
\]
for some positive reals $\lambda_{1},\ldots,\lambda_{N}$ with $\sum\lambda_{i}=1,$
where $\Lambda_{i}\in\Smooth(n,K,\Delta)$ and $\Lambda_{i}(\NN^{n}|_{>\theta})\leq\delta$.
Then clearly
\begin{equation}
\supp\Lambda=\bigcup_{i=1}^{n}\supp\Lambda_{i}.\label{eq:chain-starts}
\end{equation}
For $i=1,2,\ldots,N,$ Corollary~\ref{cor:Zeroizer-for-distributions}
constructs $\tilde{\Lambda}_{i}\colon\NN^{n}\to\Re$ with
\begin{align}
 & \orth(\Lambda_{i}-\tilde{\Lambda}_{i})>d,\\
 & \supp\tilde{\Lambda}_{i}\subseteq\NN^{n}|_{\leq\theta},\\
 & |\Lambda_{i}-\tilde{\Lambda}_{i}|\leq2^{3d+1}K^{4d+1}\binom{n+d}{d}^{3}\binom{\diam(\supp\Lambda_{i})}{d}\frac{\delta}{1-\delta}\cdot\Lambda_{i}\nonumber \\
 & \qquad\qquad\qquad\qquad\qquad\qquad\qquad\qquad\qquad\qquad\qquad\qquad\text{ on }\NN^{n}|_{\leq\theta},\\
 & \supp\tilde{\Lambda}_{i}\subseteq\supp\Lambda_{i},\label{eq:chain-ends}
\end{align}
where the last property follows from the two before it. In view of~(\ref{eq:chain-starts})\textendash (\ref{eq:chain-ends}),
the proof is complete by taking $\tilde{\Lambda}=\sum_{i=1}^{N}\lambda_{i}\tilde{\Lambda}_{i}$.
\end{proof}
Our next result uses local smoothness to achieve something completely
different. Here, we show how to start with a locally smooth function
and make it globally min-smooth. The new function has the same sign
pointwise as the original, and cannot be distinguished from it by
any low-degree polynomial. Crucially for us, the global min-smoothness
can be achieved relative to any distribution on the domain.
\begin{lem}
\label{lem:SMOOTH-redistribute}Define $X=\prod_{i=1}^{n}\{0,1,2,\ldots,r_{i}\},$
where each $r_{i}\geq0$ is an integer. Let $\theta$ and $d$ be
nonnegative integers with
\[
d<\frac{1}{3}\min\left\{ \theta,\sum_{i=1}^{n}r_{i}\right\} .
\]
Let $\Phi\colon X|_{\leq\theta}\to\Re$ be a function that is $K$-smooth
on $X|_{\leq\theta}.$ Then for every probability distribution $\Lambda^{*}$
on $X|_{\leq\theta},$ there is $\Phi^{*}\colon X|_{\leq\theta}\to\Re$
such that
\begin{align}
 & \orth(\Phi-\Phi^{*})>d,\label{eq:smoothed-distribution-indistinguishable}\\
 & \|\Phi^{*}\|_{1}\leq2\|\Phi\|_{1},\label{eq:smoothed-distribution-ell1}\\
 & \Phi\cdot\Phi^{*}\geq0,\label{eq:smoothed-distribution-sign}\\
 & |\Phi^{*}|\geq\left(2^{3d+1}K^{4d+1}\binom{n+d}{d}^{3}\binom{\diam(\supp\Phi)}{d}\right)^{-1}\|\Phi\|_{1}\,\Lambda^{*}.\label{eq:smoothed-distribution-factor}
\end{align}
\end{lem}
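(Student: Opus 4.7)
My plan is to produce $\Phi^*$ by adding to $\Phi$ a carefully weighted combination of the single-point ``bumping'' objects $Z_u$ furnished by Lemma~\ref{lem:Zeroizer-for-distributions}. Each $Z_u$ satisfies $Z_u(u)=1$ and is orthogonal to degree-$d$ polynomials, so it serves as an atomic correction that raises $|\Phi|$ near $u$ without being visible to low-degree polynomials. The central observation is that Lemma~\ref{lem:Zeroizer-for-distributions}, when instantiated with the reference distribution $\Lambda := |\Phi|/\|\Phi\|_1$ (which is $K$-smooth because $\Phi$ is), gives the crucial pointwise bound $|Z_u(v)| \leq C\,|\Phi(v)|/\|\Phi\|_1$ for all $v \neq u$, where $C := 2^{3d+1}K^{4d+1}\binom{n+d}{d}^{3}\binom{\diam(\supp\Phi)}{d}$ is exactly the reciprocal of the constant appearing in~\eqref{eq:smoothed-distribution-factor}. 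This is precisely what is needed so that the cross-contamination at any point $v$ is proportional to $|\Phi(v)|$ itself, and hence cannot overwhelm $\Phi(v)$.

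We may assume $\Phi \not\equiv 0$ (else take $\Phi^* = 0$), which by $K$-smoothness forces $\Phi$ to be nonzero throughout $X|_{\leq\theta}$, so $\epsilon_u := \sign \Phi(u) \in \{\pm 1\}$ is well-defined. Apply Lemma~\ref{lem:Zeroizer-for-distributions} with the reference distribution $\Lambda=|\Phi|/\|\Phi\|_1$ to obtain, for each $u \in X|_{\leq\theta}$, a function $Z_u$ satisfying $\supp Z_u \subseteq X|_{\leq\theta}$, $Z_u(u)=1$, $\orth Z_u > d$, $\|Z_u\|_1 \leq 2^d\binom{\diam(\supp\Phi)}{d}+1$, and the pointwise bound above. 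Then define
\[
\Phi^* \;=\; \Phi \;+\; \frac{\|\Phi\|_1}{C}\sum_{u\in X|_{\leq\theta}} \Lambda^*(u)\,\epsilon_u\, Z_u.
\]
Property~\eqref{eq:smoothed-distribution-indistinguishable} is immediate since $\Phi^*-\Phi$ is a linear combination of functions with $\orth > d$. For~\eqref{eq:smoothed-distribution-sign} and~\eqref{eq:smoothed-distribution-factor}, evaluating at an arbitrary $v \in X|_{\leq\theta}$ yields
\[
\Phi^*(v) \;=\; \epsilon_v\!\left(|\Phi(v)| + \frac{\|\Phi\|_1}{C}\Lambda^*(v)\right) + \frac{\|\Phi\|_1}{C}\sum_{u\neq v}\Lambda^*(u)\,\epsilon_u\, Z_u(v),
\]
and the cross-sum is bounded in absolute value by $(\|\Phi\|_1/C) \cdot C|\Phi(v)|/\|\Phi\|_1 \cdot \sum_{u\neq v}\Lambda^*(u) \leq |\Phi(v)|$. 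Hence $\epsilon_v\,\Phi^*(v) \geq (\|\Phi\|_1/C)\Lambda^*(v) \geq 0$, which delivers both the sign-consistency~\eqref{eq:smoothed-distribution-sign} and the min-smoothness lower bound~\eqref{eq:smoothed-distribution-factor}. Finally,~\eqref{eq:smoothed-distribution-ell1} follows from the triangle inequality: $\|\Phi^*\|_1 \leq \|\Phi\|_1 + (\|\Phi\|_1/C)(2^d\binom{\diam(\supp\Phi)}{d}+1)$, and the parenthesized quantity is at most $C$ since $C$ carries an extra factor of $2^{2d+1}K^{4d+1}\binom{n+d}{d}^3 \geq 2$.

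There is no real obstacle in this argument once the right reference distribution is chosen; the delicate point is simply the bookkeeping of constants, which conspires to match~\eqref{eq:smoothed-distribution-factor} exactly via the choice of amplification weight $\|\Phi\|_1/C$. Conceptually, this lemma is a vivid illustration of the ``plasticity'' of locally smooth functions promised in the introduction: one can redistribute $\ell_1$ mass arbitrarily, achieving min-smoothness against \emph{any} prescribed distribution $\Lambda^*$, without the manipulation being detectable by degree-$d$ polynomials or compromising the sign pattern.
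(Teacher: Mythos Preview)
Your proof is correct and essentially identical to the paper's: the same reference distribution $|\Phi|/\|\Phi\|_1$, the same construction $\Phi^*=\Phi+\frac{\|\Phi\|_1}{C}\sum_u \Lambda^*(u)\,\epsilon_u\,Z_u$, and the same verification of the four properties. The only cosmetic differences are that the paper writes $\Sgn(\Phi(u))$ where you write $\sign\Phi(u)$ (equivalent since $\Phi$ is nowhere zero) and bounds $\|Z_u\|_1\leq C/2+1$ directly rather than via your slightly different bookkeeping for~\eqref{eq:smoothed-distribution-ell1}.
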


\begin{proof}
If $\Phi\equiv0,$ the lemma holds trivially with $\Phi^{*}=\Phi.$
In the complementary case, abbreviate
\[
N=2^{3d+1}K^{4d+1}\binom{n+d}{d}^{3}\binom{\diam(\supp\Phi)}{d}.
\]
We will view $|\Phi|/\|\Phi\|_{1}$ as a probability distribution
on $X|_{\leq\theta}.$ By hypothesis, this probability distribution
is $K$-smooth on $X|_{\leq\theta}.$ In particular, $\supp|\Phi|=\supp\Phi=X|_{\leq\theta}.$
Therefore, Lemma~\ref{lem:Zeroizer-for-distributions} gives for
every $u\in X|_{\leq\theta}$ a function $Z_{u}\colon X|_{\leq\theta}\to\Re$
with
\begin{align}
 & Z_{u}(u)=1,\label{eq:Zeta-u-at-u-2}\\
 & \|Z_{u}\|_{1}\leq\frac{N}{2}+1,\label{eq:Zeta-u-norm-2}\\
 & |Z_{u}(x)|\leq N\cdot\frac{|\Phi(x)|}{\|\Phi\|_{1}}, &  & x\ne u,\label{eq:Zeta-u-outside-u-2}\\
 & \orth Z_{u}>d.\label{eq:Zeta-u-orth-2}
\end{align}
Now, define $\Phi^{*}\colon X|_{\leq\theta}\to\Re$ by
\[
\Phi^{*}=\Phi+\frac{\|\Phi\|_{1}}{N}\sum_{u\in X|_{\leq\theta}}\Sgn(\Phi(u))\Lambda^{*}(u)Z_{u}.
\]
Then~(\ref{eq:smoothed-distribution-indistinguishable}) follows
directly from~(\ref{eq:Zeta-u-orth-2}). For~(\ref{eq:smoothed-distribution-ell1}),
we have:
\begin{align}
\|\Phi^{*}\|_{1} & \leq\|\Phi\|_{1}+\frac{\|\Phi\|_{1}}{N}\sum_{u\in X|_{\leq\theta}}\Lambda^{*}(u)\,\|Z_{u}\|_{1}\nonumber \\
 & \leq\|\Phi\|_{1}+\frac{\|\Phi\|_{1}}{N}\cdot\left(\frac{N}{2}+1\right)\sum_{u\in X|_{\leq\theta}}\Lambda^{*}(u)\nonumber \\
 & =\frac{3N+2}{2N}\|\Phi\|_{1}\nonumber \\
\rule{0mm}{5mm} & \leq2\,\|\Phi\|_{1},\label{eq:redistribute-Phi-tilde-norm}
\end{align}
where the second step uses~(\ref{eq:Zeta-u-norm-2}). The remaining
properties~(\ref{eq:smoothed-distribution-sign}) and~(\ref{eq:smoothed-distribution-factor})
can be established simultaneously as follows: for every $x\in X|_{\leq\theta}$,
\begin{align}
\Sgn(\Phi(x)) & \cdot\Phi^{*}(x)\nonumber \\
 & =|\Phi(x)|+\frac{\|\Phi\|_{1}}{N}\sum_{u\in X|_{\leq\theta}}\Lambda^{*}(u)Z_{u}(x)\nonumber \\
 & \geq|\Phi(x)|+\frac{\|\Phi\|_{1}}{N}\,\Lambda^{*}(x)Z_{x}(x)-\frac{\|\Phi\|_{1}}{N}\sum_{\substack{u\in X|_{\leq\theta}:\\
u\ne x
}
}\Lambda^{*}(u)\,|Z_{u}(x)|\nonumber \\
 & =|\Phi(x)|+\frac{\|\Phi\|_{1}}{N}\,\Lambda^{*}(x)-\frac{\|\Phi\|_{1}}{N}\sum_{\substack{u\in X|_{\leq\theta}:\\
u\ne x
}
}\Lambda^{*}(u)\,|Z_{u}(x)|\nonumber \\
 & \geq|\Phi(x)|+\frac{\|\Phi\|_{1}}{N}\,\Lambda^{*}(x)-\frac{\|\Phi\|_{1}}{N}\cdot N\cdot\frac{|\Phi(x)|}{\|\Phi\|_{1}}\sum_{\substack{u\in X|_{\leq\theta}:\\
u\ne x
}
}\Lambda^{*}(u)\nonumber \\
 & =|\Phi(x)|+\frac{\|\Phi\|_{1}}{N}\,\Lambda^{*}(x)-|\Phi(x)|\,(1-\Lambda^{*}(x))\nonumber \\
 & \geq\frac{\|\Phi\|_{1}}{N}\,\Lambda^{*}(x),\label{eq:redistribute-Phi-tilde-smoothness}
\end{align}
where the third and fourth steps use~(\ref{eq:Zeta-u-at-u-2}) and~(\ref{eq:Zeta-u-outside-u-2}),
respectively.
\end{proof}

\subsection{\label{subsec:A-locally-smooth}A locally smooth dual polynomial
for MP}

As Sections~\ref{subsec:Local-smoothness}\textendash \ref{subsec:Weight-transfer}
show, local smoothness implies several useful metric and analytic
properties. To tap into this resource, we now construct a locally
smooth dual polynomial for the Minsky\textendash Papert function.
It is helpful to view this new result as a counterpart of Theorem~\ref{thm:dual-MP}
from our analysis of the threshold degree of $\classAC^{0}$. The
new proof is considerably more technical because local smoothness
is a delicate property to achieve. 
\begin{thm}
\label{thm:dual-MP-smooth}For some absolute constant $0<c<1$ and
all positive integers $m,r,R$ with $r\leq R,$ there are probability
distributions $\Lambda_{0}$ and $\Lambda_{1}$ such that
\begin{align}
 & \supp\Lambda_{0}=(\MP_{m,R}^{*})^{-1}(0),\label{eq:srank-Lambda0-supp}\\
 & \supp\Lambda_{1}=(\MP_{m,R}^{*})^{-1}(1),\label{eq:srank-Lambda1-supp}\\
 & \orth(\Lambda_{0}-\Lambda_{1})\geq\min\{m,c\sqrt{r}\},\label{eq:srank-Lambda-orth-1}\\
 & \frac{\Lambda_{0}+\Lambda_{1}}{2}\in\SmoothFunction\left(\frac{m}{c},\{0,1,2,\ldots,R\}^{m}\right),\label{eq:srank-Lambda-smooth}\\
 & \Lambda_{0},\Lambda_{1}\in\conv\left(\left\{ \lambda\in\Smooth\left(1,\frac{1}{c},1\right):\phantom{\left\{ \frac{1}{c(t+1)^{2}\,2^{-ct/\sqrt{r}}}\right\} ^{\otimes m}}\right.\right.\nonumber \\
 & \quad\qquad\qquad\qquad\qquad\left.\left.\lambda(t)\leq\frac{1}{c(t+1)^{2}\,2^{ct/\sqrt{r}}}\text{ for }t\in\NN\right\} ^{\otimes m}\right).\label{eq:srank-Lambda-z-smooth}
\end{align}
\end{thm}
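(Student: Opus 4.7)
The proof follows the architecture of Theorem~\ref{thm:dual-MP} but requires locally smooth building blocks, so the first step is to establish a smooth analog of Lemma~\ref{lem:dual-OR-distributions}. I would construct probability distributions $\lambda_0, \lambda_1, \lambda_2$ with (a)~$\lambda_0\in\Smooth(1,1/c)$ full-supported on $\{0,1,\ldots,R\}$; (b)~$\lambda_1,\lambda_2\in\Smooth(1,1/c,1)$ with support $\{1,2,\ldots,R\}$ (shifted by~$1$); (c)~each $\lambda_i(t)\leq 1/(c(t+1)^2 2^{ct/\sqrt r})$; and (d)~$\orth(\tfrac12\lambda_0+\tfrac12\lambda_2-\lambda_1)\geq c\sqrt r$. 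The key resource is that the dual polynomial $\psi$ from Theorem~\ref{thm:dual-OR} already has $|\psi|$ locally smooth with the required exponential-decay envelope. I would take $\lambda_0$ proportional to $|\psi|$ and let $\lambda_1,\lambda_2$ be smooth distributions on $\{1,\ldots,R\}$ that agree with an appropriate rescaling of $|\psi|$, plus a small correction proportional to $\psi$ itself which enforces~(d). Because $\psi$ is $O(1)$-smooth and the nonnegative ``base'' absorbs the correction pointwise, Proposition~\ref{prop:smooth}\ref{enu:smooth-conical} keeps $\lambda_1,\lambda_2$ locally smooth within the envelope required by~(\ref{eq:srank-Lambda-z-smooth}).

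With this in hand, I would mimic Theorem~\ref{thm:dual-MP} and set
\[
\Lambda_0=\mathbb{E}_{|S|\text{ odd}}\,\lambda_0^{\otimes S}\cdot\lambda_2^{\otimes\overline S},\qquad \Lambda_1=\lambda_1^{\otimes m}.
\]
Writing $\nu=\tfrac12(\lambda_0+\lambda_2)$, the identity $\Lambda_0=\nu^{\otimes m}-(\tfrac12\lambda_2-\tfrac12\lambda_0)^{\otimes m}$ gives
$\Lambda_1-\Lambda_0=(\lambda_1^{\otimes m}-\nu^{\otimes m})+(\tfrac12\lambda_2-\tfrac12\lambda_0)^{\otimes m}$,
and by Proposition~\ref{prop:orth} and~(d) the first term has orthogonal content at least $c\sqrt r$, while the second has content at least $m$ because $\orth(\lambda_2-\lambda_0)\geq 1$. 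Hence $\orth(\Lambda_0-\Lambda_1)\geq\min\{m,c\sqrt r\}$, settling~(\ref{eq:srank-Lambda-orth-1}). The convex-hull containment~(\ref{eq:srank-Lambda-z-smooth}) is immediate, since $\Lambda_0$ expands into a nonnegative combination of tensor products built from $\lambda_0$ and $\lambda_2$ and $\Lambda_1$ is itself a tensor power of $\lambda_1$.

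The most delicate steps, which I expect to be the main obstacle, are the full-support equality~(\ref{eq:srank-Lambda0-supp}) and the global smoothness~(\ref{eq:srank-Lambda-smooth}). Full support of $\Lambda_1$ is immediate from $\supp\lambda_1=\{1,\ldots,R\}$, but the odd-cardinality mixture defining $\Lambda_0$ only reaches inputs whose zero-set has odd cardinality, so one must enlarge the mixture to range over \emph{all} nonempty $S\subseteq\{1,\ldots,m\}$ with weights that still collapse, by an algebraic identity analogous to the $\nu^{\otimes m}$ rewriting above, to a difference of tensor powers; this preserves the orthogonality analysis while guaranteeing $\Lambda_0(x)>0$ for every $x$ with a zero coordinate. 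For the global $(m/c)$-smoothness of $(\Lambda_0+\Lambda_1)/2$ on $\{0,\ldots,R\}^m$, on each piece $(\MP_{m,R}^*)^{-1}(i)$ the required smoothness follows from the tensor and conical properties of Proposition~\ref{prop:smooth} applied summand-by-summand, with the combinatorial $2^{-m+1}$ averaging absorbed into the constant $m/c$. The subtle point is the boundary between the two preimages, where ratios $\Lambda_0(x)/\Lambda_1(x')$ must be controlled; this is where the matched pointwise envelope in~(c) and the presence of the $\delta_0$-like mass at the endpoint of $\lambda_0$ become essential, permitting one to bound the transition by a single factor of $\lambda_0(0)/\lambda_1(1)\leq(m/c)^{O(1)}$ per boundary-crossing coordinate.
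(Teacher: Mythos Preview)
Your proposal has a genuine gap at the smoothness property~(\ref{eq:srank-Lambda-smooth}), and this gap is exactly what forces the paper to take a different and more delicate route.

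First, an internal inconsistency: you ask that $\lambda_0$ be full-supported on $\{0,\ldots,R\}$, but your later support analysis (``only reaches inputs whose zero-set has odd cardinality'') presupposes $\supp\lambda_0=\{0\}$, as in Lemma~\ref{lem:dual-OR-distributions}. With full support, every term $\lambda_0^{\otimes S}\lambda_2^{\otimes\overline S}$ already puts mass on $(\MP_{m,R}^*)^{-1}(1)$, destroying~(\ref{eq:srank-Lambda0-supp}).

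Second, and more seriously, with $\supp\lambda_0=\{0\}$ the Theorem~\ref{thm:dual-MP}-style construction cannot give~(\ref{eq:srank-Lambda-smooth}). At a point $x$ with exactly one zero coordinate your $\Lambda_0$ equals $2^{-m+1}\prod_{i:x_i\neq 0}\lambda_2(x_i)$, while at its neighbor $x'\in(\MP_{m,R}^*)^{-1}(1)$ one has $\Lambda_1(x')=\prod_i\lambda_1(x'_i)$; the ratio is $\Theta(2^{-m})$, not $(m/c)^{O(1)}$. The $2^{-m+1}$ averaging factor cannot be ``absorbed into $m/c$.'' Enlarging to all nonempty $S$ does not by itself fix this: you would need weights $a_k$ satisfying $a_k/a_{k-1}=\Theta(1/m)$ \emph{and} a boundary match with $\Lambda_1$ \emph{and} an algebraic identity that still yields high orthogonal content for $\Lambda_1-\Lambda_0$. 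Your sketch does not supply such weights, and an arbitrary convex combination of the $\lambda_0^{\otimes S}\lambda_2^{\otimes\overline S}$ will not make $(\Lambda_0+\Lambda_1)/2$ comparable to a smooth tensor power.

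The paper resolves this by an essentially different construction: it defines two auxiliary objects
\[
\Psi_1=\Bigl(\tfrac{1}{m+1}\lambda_0+\tfrac{m}{m+1}\lambda_1\Bigr)^{\otimes m}-2\lambda_1^{\otimes m}
\]
and a carefully matched $\Psi_2$ built from $\lambda_0,\lambda_2$ (with $\epsilon=\tfrac16$), and takes $\Lambda_0,\Lambda_1$ to be the normalized positive and negative parts of $\Psi_1+\Psi_2$. The whole point of this design is that $|\Psi_1|$ is pointwise within a constant factor of the $Km$-smooth tensor power $\bigl(\tfrac{1}{m+1}\lambda_0+\tfrac{m}{m+1}\lambda_1\bigr)^{\otimes m}$, and $|\Psi_2|$ within a constant factor of the $K$-smooth tensor power $((1-\epsilon)\lambda_0+\epsilon\lambda_2)^{\otimes m}$; since $\Psi_1$ and $\Psi_2$ have the same sign pattern on $(\MP_{m,R}^*)^{-1}(0)$ and $(\MP_{m,R}^*)^{-1}(1)$, one gets $|\Psi_1+\Psi_2|=|\Psi_1|+|\Psi_2|$, a conical combination of two locally smooth functions. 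Orthogonality is then obtained by regrouping $\Psi_1+\Psi_2$ as a sum of tensor-power differences whose pieces invoke $\orth((1-\epsilon)\lambda_0+\epsilon\lambda_2-\lambda_1)\geq c'\sqrt r$ and $\orth(\lambda_2-\lambda_0)\geq 1$. Achieving simultaneously the correct sign pattern, pointwise comparability of the absolute value to smooth tensor powers, and high orthogonal content is the crux of the theorem, and this simultaneous control is absent from your plan.
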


\noindent Our proof of Theorem~\ref{thm:dual-MP-smooth} repeatedly
employs the following simple but useful\emph{ }criterion for $K$-smoothness:
a probability distribution $\lambda$ is $K$-smooth on an integer
interval $I=\{i,i+1,i+2,\ldots,j\}$ if and only if the probabilities
of any two \emph{consecutive} integers in $I$ are within a factor
of $K$.
\begin{proof}[Proof of Theorem~\emph{\ref{thm:dual-MP-smooth}}.]
Abbreviate $\epsilon=1/6.$ For some absolute constants $c',c''\in(0,1)$,
Lemma~\ref{lem:dual-OR-distributions} constructs probability distributions
$\lambda_{0},\lambda_{1},\lambda_{2}$ such that
\begin{align}
 & \supp\lambda_{0}=\{0\},\label{eq:Lambda01-mu0}\\
 & \supp\lambda_{i}=\{1,2,\ldots,R\}, &  & i=1,2,\label{eq:Lambda01-mui}\\
 & \lambda_{i}(t)\in\left[\frac{c'}{t^{2}\,2^{c''t/\sqrt{r}}},\;\frac{1}{c't^{2}\,2^{c''t/\sqrt{r}}}\right], &  & i=1,2;\quad t=1,2,\ldots,R,\label{eq:Lambda01-mui-pointwise}\\
 & \orth((1-\epsilon)\lambda_{0}+\epsilon\lambda_{2}-\lambda_{1})\geq c'\sqrt{r}.\label{eq:Lambda01-orth}
\end{align}
We infer that
\begin{align}
\lambda_{0} & \in\Smooth(1,K),\label{eq:mu0-smooth}\\
\lambda_{1} & \in\Smooth(1,K,1),\label{eq:mu1-smooth}\\
\lambda_{2} & \in\Smooth(1,K,1),\label{eq:mu2-smooth}\\
(1-\epsilon)\lambda_{0}+\epsilon\lambda_{2} & \in\Smooth(1,K),\label{eq:mu02-smooth}\\
\frac{1}{m+1}\lambda_{0}+\frac{m}{m+1}\lambda_{1} & \in\Smooth(1,Km)\label{eq:mu01-smooth}
\end{align}
for some large constant $K=K(c',c'')\geq1.$ Indeed, (\ref{eq:mu0-smooth})
is trivial since $\lambda_{0}$ is the single-point distribution on
the origin; (\ref{eq:mu1-smooth}) holds because by~(\ref{eq:Lambda01-mui})
and~(\ref{eq:Lambda01-mui-pointwise}), the probabilities of any
pair of consecutive integers in $\supp\lambda_{1}=\{1,2,\ldots,R\}$
are the same up to a constant factor; and~(\ref{eq:mu2-smooth})\textendash (\ref{eq:mu01-smooth})
can be seen analogously, by comparing the probabilities of any pair
of consecutive integers. Combining (\ref{eq:mu0-smooth})\textendash (\ref{eq:mu01-smooth})
with Proposition~\ref{prop:smooth-tensor}, we obtain
\begin{align}
\{\lambda_{0},\lambda_{1},\lambda_{2}\}^{\otimes m} & \subseteq\Smooth(m,K,m),\label{eq:mui-tensor}\\
((1-\epsilon)\lambda_{0}+\epsilon\lambda_{2})^{\otimes m} & \in\Smooth(m,K),\label{eq:mu02-tensor}\\
\left(\frac{1}{m+1}\lambda_{0}+\frac{m}{m+1}\lambda_{1}\right)^{\otimes m} & \in\Smooth(m,Km).\label{eq:mu01-tensor}
\end{align}

The proof centers around the dual objects $\Psi_{1},\Psi_{2}\colon\{0,1,2,\ldots,R\}^{m}\to\Re$
given by
\begin{align*}
\Psi_{1} & =\left(\frac{1}{m+1}\lambda_{0}+\frac{m}{m+1}\lambda_{1}\right)^{\otimes m}-2\lambda_{1}^{\otimes m}
\end{align*}
and
\begin{multline*}
\Psi_{2}=2((1-\epsilon)\lambda_{0}+\epsilon\lambda_{2})^{\otimes m}-2(-\epsilon\lambda_{0}+\epsilon\lambda_{2})^{\otimes m}\\
-\left(\frac{1}{m+1}\lambda_{0}+\frac{m}{m+1}((1-\epsilon)\lambda_{0}+\epsilon\lambda_{2})\right)^{\otimes m}.
\end{multline*}
The next four claims establish key properties of $\Psi_{1}$ and $\Psi_{2}.$
\begin{claim}
\label{claim:Psi1}$\Psi_{1}$ satisfies
\begin{align}
 & \pospart\Psi_{1}\in\cone(\{\lambda_{0},\lambda_{1}\}^{\otimes m}\setminus\{\lambda_{1}^{\otimes m}\}),\label{eq:Psi1-pospart}\\
 & \negpart\Psi_{1}\in\cone\{\lambda_{1}^{\otimes m}\},\label{eq:Psi1-negpart}\\
 & \frac{1}{5}|\Psi_{1}|\leq\left(\frac{1}{m+1}\lambda_{0}+\frac{m}{m+1}\lambda_{1}\right)^{\otimes m}\leq|\Psi_{1}|.\label{eq:Psi1-smooth}
\end{align}
\end{claim}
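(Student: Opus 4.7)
The proof plan rests on expanding the first tensor power of $\Psi_{1}$ and exploiting the fact that $\lambda_{0}$ and $\lambda_{1}$ have disjoint supports (namely $\{0\}$ and $\{1,2,\ldots,R\}$ by~\eqref{eq:Lambda01-mu0}\textendash \eqref{eq:Lambda01-mui}). First, by the multinomial expansion,
\begin{align*}
\left(\tfrac{1}{m+1}\lambda_{0}+\tfrac{m}{m+1}\lambda_{1}\right)^{\otimes m}=\sum_{S\subseteq[m]}\left(\tfrac{1}{m+1}\right)^{|S|}\left(\tfrac{m}{m+1}\right)^{m-|S|}\lambda_{0}^{\otimes S}\cdot\lambda_{1}^{\otimes\overline{S}}.
\end{align*}
The disjointness of $\supp\lambda_{0}$ and $\supp\lambda_{1}$ implies that the supports of the distinct terms $\lambda_{0}^{\otimes S}\cdot\lambda_{1}^{\otimes\overline{S}}$ (indexed by $S\subseteq[m]$) partition $\{0,1,\ldots,R\}^{m}$. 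Consequently, at every point of the domain at most one term is nonzero.

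Using this partition, I would read off the positive and negative parts of $\Psi_{1}$ directly. The coefficient of $\lambda_{1}^{\otimes m}$ (i.e.\ the $S=\varnothing$ term) in $\Psi_{1}$ is $\left(\tfrac{m}{m+1}\right)^{m}-2$, which is negative because $\left(\tfrac{m}{m+1}\right)^{m}\leq\tfrac{1}{2}$ for all $m\geq1$. For $S\neq\varnothing,$ the coefficient is strictly positive. Together with the disjoint-supports observation, this yields
\begin{align*}
\pospart\Psi_{1}&=\sum_{\varnothing\ne S\subseteq[m]}\left(\tfrac{1}{m+1}\right)^{|S|}\left(\tfrac{m}{m+1}\right)^{m-|S|}\lambda_{0}^{\otimes S}\cdot\lambda_{1}^{\otimes\overline{S}},\\
\negpart\Psi_{1}&=\left(2-\left(\tfrac{m}{m+1}\right)^{m}\right)\lambda_{1}^{\otimes m},
\end{align*}
establishing~\eqref{eq:Psi1-pospart} and~\eqref{eq:Psi1-negpart}.

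For~\eqref{eq:Psi1-smooth}, I compare $|\Psi_{1}|=\pospart\Psi_{1}+\negpart\Psi_{1}$ against $(\tfrac{1}{m+1}\lambda_{0}+\tfrac{m}{m+1}\lambda_{1})^{\otimes m}$ on each piece of the support partition. On $\supp(\lambda_{0}^{\otimes S}\cdot\lambda_{1}^{\otimes\overline{S}})$ for $S\ne\varnothing,$ the two functions coincide, so both desired inequalities hold with equality. On $\supp\lambda_{1}^{\otimes m}=\{1,\ldots,R\}^{m},$ we have $|\Psi_{1}|=(2-a)\lambda_{1}^{\otimes m}$ and $(\tfrac{1}{m+1}\lambda_{0}+\tfrac{m}{m+1}\lambda_{1})^{\otimes m}=a\,\lambda_{1}^{\otimes m},$ where $a=(\tfrac{m}{m+1})^{m}$. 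The upper bound $(\tfrac{1}{m+1}\lambda_{0}+\tfrac{m}{m+1}\lambda_{1})^{\otimes m}\leq|\Psi_{1}|$ is the inequality $a\leq 2-a,$ i.e.\ $a\leq1,$ which is immediate. The lower bound $\tfrac{1}{5}|\Psi_{1}|\leq(\tfrac{1}{m+1}\lambda_{0}+\tfrac{m}{m+1}\lambda_{1})^{\otimes m}$ becomes $2-a\leq 5a,$ i.e.\ $a\geq\tfrac{1}{3};$ this follows from the standard fact $(\tfrac{m}{m+1})^{m}\geq\lim_{k\to\infty}(\tfrac{k}{k+1})^{k}=\tfrac{1}{e}>\tfrac{1}{3},$ which one verifies by checking that $(\tfrac{m}{m+1})^{m}$ is decreasing in $m.$

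The argument is essentially bookkeeping once the disjoint-supports structure is in place; the only mild subtlety is the uniform bound $a\geq1/e,$ which is where the constant $\tfrac{1}{5}$ in~\eqref{eq:Psi1-smooth} comes from. I do not anticipate any genuine obstacle: no smoothness, orthogonality, or approximation-theoretic property needs to be invoked here, since Claim~\ref{claim:Psi1} is purely an identification of the signed parts of the explicit signed measure $\Psi_{1}$ and a two-sided pointwise comparison with a known convex combination.
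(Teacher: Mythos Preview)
Your proposal is correct and follows essentially the same approach as the paper: expand the tensor power, use the disjointness of $\supp\lambda_{0}$ and $\supp\lambda_{1}$ to separate the positive and negative parts of $\Psi_{1}$, and then compare coefficients termwise to obtain the two-sided bound~\eqref{eq:Psi1-smooth}. Your write-up is in fact more explicit than the paper's in verifying the constant $1/5$ via the bound $(m/(m+1))^{m}\ge 1/e>1/3$, which the paper leaves implicit.
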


\begin{claim}
\label{claim:Psi2}$\Psi_{2}$ satisfies
\begin{align}
 & \pospart\Psi_{2}\in\cone(\{\lambda_{0},\lambda_{2}\}^{\otimes m}\setminus\{\lambda_{2}^{\otimes m}\}),\label{eq:Psi2-pospart}\\
 & \negpart\Psi_{2}\in\cone\{\lambda_{2}^{\otimes m}\},\label{eq:Psi2-negpart}\\
 & \frac{1}{3}|\Psi_{2}|\leq\left((1-\epsilon)\lambda_{0}+\epsilon\lambda_{2}\right)^{\otimes m}\leq3|\Psi_{2}|.\label{eq:Psi2-smooth}
\end{align}
\end{claim}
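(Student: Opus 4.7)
The plan is to mimic the structure established for $\Psi_1$ in Claim~\ref{claim:Psi1}, exploiting the fact that $\lambda_0$ and $\lambda_2$ have disjoint supports (the singleton $\{0\}$ and $\{1,2,\ldots,R\}$, respectively). First I would expand all three tensor powers defining $\Psi_2$ multinomially in the basis $\{\lambda_0^{\otimes\overline{S}}\cdot\lambda_2^{\otimes S}:S\subseteq\{1,\ldots,m\}\}$. Writing $a=\frac{1+m(1-\epsilon)}{m+1}$ and $b=\frac{m\epsilon}{m+1}$ (so $a+b=1$), this yields $\Psi_2=\sum_{S}c_S\,\lambda_0^{\otimes\overline{S}}\cdot\lambda_2^{\otimes S}$ where, setting $j=m-|S|$ and $k=|S|$,
\[
c_S=2(1-\epsilon)^{j}\epsilon^{k}-2(-1)^{j}\epsilon^{m}-a^{j}b^{k}.
\]
Since $\supp\lambda_0\cap\supp\lambda_2=\varnothing$, the summands $\lambda_0^{\otimes\overline{S}}\cdot\lambda_2^{\otimes S}$ for distinct $S$ have disjoint supports: at any point $x\in\{0,1,\ldots,R\}^m$, only the single term indexed by $S(x)=\{i:x_i\ne 0\}$ is nonzero. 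Consequently $\pospart\Psi_2$ and $\negpart\Psi_2$ decompose coefficient by coefficient, so to establish~\eqref{eq:Psi2-pospart} and~\eqref{eq:Psi2-negpart} it suffices to prove that $c_S>0$ whenever $|S|<m$ and $c_{[m]}<0$.

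The case $S=[m]$ (so $j=0$) is immediate: $c_{[m]}=2\epsilon^{m}-2\epsilon^{m}-b^{m}=-b^{m}<0$, contributing $b^{m}\,\lambda_2^{\otimes m}$ to $\negpart\Psi_2$. For $|S|<m$, the key identity is
\[
\frac{c_S}{(1-\epsilon)^{j}\epsilon^{k}}=2-2(-1)^{j}\!\left(\tfrac{\epsilon}{1-\epsilon}\right)^{\!j}-\left(\tfrac{a}{1-\epsilon}\right)^{\!j}\!\left(\tfrac{b}{\epsilon}\right)^{\!k}.
\]
With $\epsilon=1/6$ one computes $\epsilon/(1-\epsilon)=1/5$, $a/(1-\epsilon)=1+\tfrac{1}{5(m+1)}$, and $b/\epsilon=1-\tfrac{1}{m+1}$. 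The troublesome product factor is bounded by
\[
\left(1+\tfrac{1}{5(m+1)}\right)^{\!j}\!\left(1-\tfrac{1}{m+1}\right)^{\!k}\leq\e^{j/(5(m+1))}\leq\e^{1/5}<\tfrac{8}{5},
\]
so for every $j\geq 1$ we obtain $c_S/[(1-\epsilon)^{j}\epsilon^{k}]\geq 2-\tfrac{2}{5}-\e^{1/5}>\tfrac{1}{3}$, yielding both positivity and the lower half of~\eqref{eq:Psi2-smooth}. For the upper half one uses $c_S/[(1-\epsilon)^j\epsilon^k]\leq 2+\tfrac{2}{5}<3$ together with the already-noted bound $|c_{[m]}|/\epsilon^{m}=(m/(m+1))^{m}\leq 1\leq 3$.

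Finally, to derive~\eqref{eq:Psi2-smooth} as a pointwise inequality, I note that at every $x$ the disjoint-support observation gives $|\Psi_2(x)|=|c_{S(x)}|\prod_{i\in S(x)}\lambda_2(x_i)$ and $((1-\epsilon)\lambda_0+\epsilon\lambda_2)^{\otimes m}(x)=(1-\epsilon)^{m-|S(x)|}\epsilon^{|S(x)|}\prod_{i\in S(x)}\lambda_2(x_i)$, so the two-sided ratio bound on $c_S$ transfers verbatim. The only real obstacle is verifying the numerical estimate $2-\tfrac{2}{5}-\e^{1/5}>\tfrac{1}{3}$; this is precisely why the choice $\epsilon=1/6$ is needed (a smaller $\epsilon$ would weaken the factor $\epsilon/(1-\epsilon)$ but shrink the coefficient $\e^{1/5}\to\e^{m\cdot(1-\epsilon)/(m+1)\cdot\text{const}}$). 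Everything else is bookkeeping analogous to the proof of Claim~\ref{claim:Psi1}.
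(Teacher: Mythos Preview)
Your proposal is correct and follows essentially the same approach as the paper: both expand $\Psi_2$ in the disjointly supported basis $\{\lambda_0,\lambda_2\}^{\otimes m}$, verify that the coefficient of $\lambda_2^{\otimes m}$ is negative while every other coefficient lies in $[\tfrac{1}{3}(1-\epsilon)^{j}\epsilon^{k},3(1-\epsilon)^{j}\epsilon^{k}]$, and then read off~\eqref{eq:Psi2-pospart}--\eqref{eq:Psi2-smooth} pointwise. Your numerical estimates (e.g.\ $2-\tfrac{2}{5}-\e^{1/5}>\tfrac{1}{3}$ and $(m/(m+1))^m\geq 1/\e>\tfrac{1}{3}$) are a slightly different packaging of the same calculation the paper does in~\eqref{eq:Psi2-ai}; the only wrinkle is that your closing parenthetical about why $\epsilon=1/6$ is needed is garbled and should be dropped or rewritten, but it plays no role in the argument.
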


\begin{claim}
\label{claim:Psi1-Psi2-support}$\Psi_{1}$ and $\Psi_{2}$ satisfy
\begin{align}
\supp(\pospart\Psi_{i}) & =(\MP_{m,R}^{*})^{-1}(0), &  & i=1,2,\label{eq:Psi12-pospart}\\
\supp(\negpart\Psi_{i}) & =(\MP_{m,R}^{*})^{-1}(1), &  & i=1,2.\label{eq:Psi12-negpart}
\end{align}
\end{claim}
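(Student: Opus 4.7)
My plan is to establish the support identities by expanding $\Psi_1$ and $\Psi_2$ as sums of tensor products $\lambda_0^{\otimes A}\otimes\lambda_j^{\otimes\bar A}$ (indexed by $A\subseteq\{1,\ldots,m\}$, with $j=1$ or $j=2$), checking the sign of each coefficient, and reading off the supports from the fact that these tensor products have pairwise disjoint supports. Specifically, for each fixed $j\in\{1,2\}$,
\[
\supp(\lambda_0^{\otimes A}\otimes\lambda_j^{\otimes\bar A})=\{t\in\{0,1,\ldots,R\}^m:t_i=0\Leftrightarrow i\in A\},
\]
so as $A$ ranges over subsets of $\{1,\ldots,m\}$, these supports form a partition of $\{0,1,\ldots,R\}^m$. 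Their union over $A\neq\emptyset$ is $(\MP_{m,R}^*)^{-1}(0)$, and the case $A=\emptyset$ gives $(\MP_{m,R}^*)^{-1}(1)$. Because the supports are disjoint, $\Psi_i$ splits into positive and negative parts exactly according to the signs of its tensor coefficients, so the identities~(\ref{eq:Psi12-pospart}) and~(\ref{eq:Psi12-negpart}) reduce to verifying that every coefficient has the expected strict sign.

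For $\Psi_1$, multiplying out gives
\[
\Psi_1=\sum_{A\subseteq\{1,\ldots,m\}}\bigl(\tfrac{1}{m+1}\bigr)^{|A|}\bigl(\tfrac{m}{m+1}\bigr)^{m-|A|}\lambda_0^{\otimes A}\otimes\lambda_1^{\otimes\bar A}-2\lambda_1^{\otimes m}.
\]
The coefficient of $\lambda_0^{\otimes A}\otimes\lambda_1^{\otimes\bar A}$ is strictly positive for every nonempty $A$, while the coefficient of $\lambda_1^{\otimes m}$ equals $(m/(m+1))^m-2<-1$ and is strictly negative. The desired identities for $i=1$ follow immediately from the disjoint-support observation above.

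For $\Psi_2$, the analogous expansion, with $\beta=m\epsilon/(m+1)$ and $\alpha=1-\beta$, gives coefficient
\[
c_A=2(1-\epsilon)^{|A|}\epsilon^{m-|A|}-2(-\epsilon)^{|A|}\epsilon^{m-|A|}-\alpha^{|A|}\beta^{m-|A|}.
\]
For $A=\emptyset$ this is $-\beta^m<0$, so $\negpart\Psi_2$ has the same support as $\lambda_2^{\otimes m}$, namely $(\MP_{m,R}^*)^{-1}(1)$. The main obstacle is to show $c_A>0$ for every nonempty $A$; Claim~\ref{claim:Psi2} only guarantees nonnegativity of these coefficients, so strict positivity must be extracted from the concrete formula. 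With $\epsilon=1/6$ fixed in the enclosing proof of Theorem~\ref{thm:dual-MP-smooth}, I would handle the $(-\epsilon)^{|A|}$ term uniformly via $(1-\epsilon)^{|A|}\pm\epsilon^{|A|}\ge(1-\epsilon)^{|A|}-\epsilon^{|A|}\ge\tfrac{4}{5}(1-\epsilon)^{|A|}$ (using $\epsilon/(1-\epsilon)=1/5$). Combining this with the identity $\alpha/(1-\epsilon)=1+1/(5(m+1))$, which implies $\alpha^{|A|}\le e^{1/5}(1-\epsilon)^{|A|}<\tfrac{5}{4}(1-\epsilon)^{|A|}$, and with $\beta^{m-|A|}\le\epsilon^{m-|A|}$, one obtains
\[
c_A\ge\tfrac{8}{5}(1-\epsilon)^{|A|}\epsilon^{m-|A|}-\tfrac{5}{4}(1-\epsilon)^{|A|}\epsilon^{m-|A|}=\tfrac{7}{20}(1-\epsilon)^{|A|}\epsilon^{m-|A|}>0.
\]
The remaining support identities for $\Psi_2$ then follow by the disjoint-support argument, just as for $\Psi_1$.
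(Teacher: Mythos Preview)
Your proof is correct and follows the same underlying approach as the paper: expand $\Psi_i$ in the disjoint-support basis $\{\lambda_0^{\otimes A}\otimes\lambda_j^{\otimes\bar A}\}_A$ and read off the supports from the coefficient signs. The paper's own argument is shorter because it cites Claims~\ref{claim:Psi1} and~\ref{claim:Psi2} directly: properties~(\ref{eq:Psi1-pospart})--(\ref{eq:Psi1-negpart}) and~(\ref{eq:Psi2-pospart})--(\ref{eq:Psi2-negpart}) give the containments $\supp(\pospart\Psi_i)\subseteq(\MP_{m,R}^*)^{-1}(0)$ and $\supp(\negpart\Psi_i)\subseteq(\MP_{m,R}^*)^{-1}(1)$, while the lower bounds in~(\ref{eq:Psi1-smooth}) and~(\ref{eq:Psi2-smooth}) force $|\Psi_i|>0$ on all of $\{0,\ldots,R\}^m$, yielding the reverse containments. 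In particular, your remark that Claim~\ref{claim:Psi2} ``only guarantees nonnegativity'' overlooks~(\ref{eq:Psi2-smooth}), whose right-hand inequality already gives $|\Psi_2|\geq\tfrac{1}{3}((1-\epsilon)\lambda_0+\epsilon\lambda_2)^{\otimes m}>0$ everywhere, so your explicit verification that $c_A>0$ for $A\neq\emptyset$---while correct---is not needed.
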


\begin{claim}
\label{claim:Psi1-Psi2-orthog}$\orth(\Psi_{1}+\Psi_{2})\geq\min\{m,c'\sqrt{r}\}.$
\end{claim}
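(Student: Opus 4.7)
The plan is to expand $\Psi_1+\Psi_2$ and regroup its terms so that every group has a transparent lower bound on its orthogonal content, then invoke the additivity-type Proposition~\ref{prop:orth}\ref{item:orth-sum}. Write $B=(1-\epsilon)\lambda_0+\epsilon\lambda_2$. Adding the definitions of $\Psi_1$ and $\Psi_2$, the summands involving $(\frac{1}{m+1}\lambda_0+\frac{m}{m+1}\lambda_1)^{\otimes m}$ and $(\frac{1}{m+1}\lambda_0+\frac{m}{m+1}B)^{\otimes m}$ appear with opposite signs, and the terms $-2\lambda_1^{\otimes m}$ and $+2B^{\otimes m}$ can be paired off, leaving
\begin{align*}
\Psi_1+\Psi_2 &= \left[\bigl(\tfrac{1}{m+1}\lambda_0+\tfrac{m}{m+1}\lambda_1\bigr)^{\otimes m} - \bigl(\tfrac{1}{m+1}\lambda_0+\tfrac{m}{m+1}B\bigr)^{\otimes m}\right]\\
&\quad + 2\bigl[B^{\otimes m}-\lambda_1^{\otimes m}\bigr] - 2(\epsilon\lambda_2-\epsilon\lambda_0)^{\otimes m}.
\end{align*}

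Next, I would estimate the orthogonal content of each of the three bracketed groups. For the first, Proposition~\ref{prop:orth}\ref{item:orth-difference-of-tensors} gives a lower bound of $\orth\!\bigl(\tfrac{m}{m+1}(\lambda_1-B)\bigr)=\orth(\lambda_1-B)$, which by~(\ref{eq:Lambda01-orth}) is at least $c'\sqrt{r}$. For the second, the same proposition yields $\orth(B-\lambda_1)\geq c'\sqrt{r}$ from~(\ref{eq:Lambda01-orth}). For the third, pull out the scalar $\epsilon^m$ and apply Proposition~\ref{prop:orth}\ref{item:orth-tensor}: the orthogonal content equals $m\cdot\orth(\lambda_2-\lambda_0)\geq m$, where the inequality $\orth(\lambda_2-\lambda_0)\geq 1$ holds because $\langle\lambda_2-\lambda_0,1\rangle=1-1=0$.

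Finally, combining the three bounds via Proposition~\ref{prop:orth}\ref{item:orth-sum} gives $\orth(\Psi_1+\Psi_2)\geq\min\{m,c'\sqrt{r}\}$, as claimed. I do not expect any real obstacle here: the identity~(\ref{eq:Lambda01-orth}) has been arranged precisely so that $B$ and $\lambda_1$ are indistinguishable from each other by degree-$(c'\sqrt{r}-1)$ polynomials, while the trivial observation $\orth(\lambda_2-\lambda_0)\geq 1$ supplies the $m$ factor once we tensor. The only mild care needed is in confirming that the telescoping cancellation above is exact, so that no stray term forces us to lose a factor or to engage the more delicate metric bounds~(\ref{eq:Psi1-smooth}) and~(\ref{eq:Psi2-smooth}); a direct expansion shows no such stray term arises.
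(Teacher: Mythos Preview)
Your proposal is correct and essentially identical to the paper's own proof: the paper writes $\Psi_1+\Psi_2=A+B+C$ with exactly your three bracketed groups, then bounds $\orth A,\orth B\geq c'\sqrt{r}$ via Proposition~\ref{prop:orth}\ref{item:orth-difference-of-tensors} together with~(\ref{eq:Lambda01-orth}), and $\orth C\geq m$ via Proposition~\ref{prop:orth}\ref{item:orth-tensor} and $\langle\lambda_2-\lambda_0,1\rangle=0$, before combining with Proposition~\ref{prop:orth}\ref{item:orth-sum}.
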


We will settle Claims~\ref{claim:Psi1}\textendash \ref{claim:Psi1-Psi2-orthog}
shortly, once we complete the main proof. Define
\begin{align*}
 & \Lambda_{0}=\frac{2}{\|\Psi_{1}\|_{1}+\|\Psi_{2}\|_{1}}\pospart(\Psi_{1}+\Psi_{2}),\\
 & \Lambda_{1}=\frac{2}{\|\Psi_{1}\|_{1}+\|\Psi_{2}\|_{1}}\negpart(\Psi_{1}+\Psi_{2}),
\end{align*}
where the denominators are nonzero by~(\ref{eq:Psi1-smooth}). We
proceed to verify the properties required of $\Lambda_{0}$ and $\Lambda_{1}$
in the theorem statement.

~

\textsc{Support.} Recall from Claim~\ref{claim:Psi1-Psi2-support}
that the positive parts of $\Psi_{1}$ and $\Psi_{2}$ are supported
on $(\MP_{m,R}^{*})^{-1}(0).$ Therefore, the positive part of $\Psi_{1}+\Psi_{2}$
is supported on $(\MP_{m,R}^{*})^{-1}(0)$ as well, which in turn
implies that
\begin{equation}
\supp\Lambda_{0}=(\MP_{m,R}^{*})^{-1}(0).\label{eq:supp-Lambda0}
\end{equation}
Analogously, Claim~\ref{claim:Psi1-Psi2-support} states that the
negative parts of $\Psi_{1}$ and $\Psi_{2}$ are supported on $(\MP_{m,R}^{*})^{-1}(1).$
As a result, the negative part of $\Psi_{1}+\Psi_{2}$ is also supported
on $(\MP_{m,R}^{*})^{-1}(1)$, whence
\begin{equation}
\supp\Lambda_{1}=(\MP_{m,R}^{*})^{-1}(1).\label{eq:supp-Lambda1}
\end{equation}
~

\textsc{Orthogonality.} The defining equations for $\Lambda_{0}$
and $\Lambda_{1}$ imply that 
\[
\Lambda_{0}-\Lambda_{1}=\frac{2}{\|\Psi_{1}\|_{1}+\|\Psi_{2}\|_{1}}\,(\Psi_{1}+\Psi_{2}),
\]
 which along with Claim~\ref{claim:Psi1-Psi2-orthog} forces
\begin{equation}
\orth(\Lambda_{0}-\Lambda_{1})\geq\min\{m,c'\sqrt{r}\}.\label{eq:inproof-Lambda0-Lambda1-orth}
\end{equation}
~

\textsc{Nonnegativity and norm.} By definition, $\Lambda_{0}$ and
$\Lambda_{1}$ are nonnegative functions. We calculate
\begin{align}
\|\Lambda_{0}\|_{1}-\|\Lambda_{1}\|_{1} & =\langle\Lambda_{0},1\rangle-\langle\Lambda_{1},1\rangle\nonumber \\
 & =\langle\Lambda_{0}-\Lambda_{1},1\rangle\nonumber \\
 & =0,\label{eq:inproof-Lambdas-diff}
\end{align}
where the first step uses the nonnegativity of $\Lambda_{0}$ and
$\Lambda_{1}$, and the last step applies~(\ref{eq:inproof-Lambda0-Lambda1-orth}).
In addition,
\begin{align}
\|\Lambda_{0}\|_{1}+\|\Lambda_{1}\|_{1} & =\frac{2}{\|\Psi_{1}\|_{1}+\|\Psi_{2}\|_{1}}(\|\pospart(\Psi_{1}+\Psi_{2})\|_{1}+\|\negpart(\Psi_{1}+\Psi_{2})\|_{1})\nonumber \\
 & =\frac{2}{\|\Psi_{1}\|_{1}+\|\Psi_{2}\|_{1}}\|\Psi_{1}+\Psi_{2}\|_{1}\nonumber \\
 & =2,\label{eq:inproof-Lambdas-sum}
\end{align}
where the last step uses~Claim~\ref{claim:Psi1-Psi2-support}. A
consequence of~(\ref{eq:inproof-Lambdas-diff}) and~(\ref{eq:inproof-Lambdas-sum})
is that $\|\Lambda_{0}\|_{1}=\|\Lambda_{1}\|_{1}=1$, which makes
$\Lambda_{0}$ and $\Lambda_{1}$ probability distributions. In view
of~(\ref{eq:supp-Lambda0}) and~(\ref{eq:supp-Lambda1}), we conclude
that
\begin{align}
\Lambda_{i} & \in\Distribution((\MP_{m,R}^{*})^{-1}(i)), &  & i=0,1.\label{eq:inproof-Lambdas-probab-distributions}
\end{align}
In particular,
\begin{equation}
\frac{\Lambda_{0}+\Lambda_{1}}{2}\in\Distribution(\{0,1,2,\ldots,R\}^{m}).\label{eq:Lambda01-distribution}
\end{equation}
 ~

\textsc{Smoothness.} We have
\begin{align}
\frac{\Lambda_{0}+\Lambda_{1}}{2} & =\frac{|\Psi_{1}+\Psi_{2}|}{\|\Psi_{1}\|_{1}+\|\Psi_{2}\|_{1}}\nonumber \\
 & =\frac{1}{\|\Psi_{1}\|_{1}+\|\Psi_{2}\|_{1}}\,|\Psi_{1}|+\frac{1}{\|\Psi_{1}\|_{1}+\|\Psi_{2}\|_{1}}\,|\Psi_{2}|,\label{eq:Lambda0-plus-Lambda1}
\end{align}
where the first step follows from the defining equations for $\Lambda_{0}$
and $\Lambda_{1}$, and the second step uses Claim~\ref{claim:Psi1-Psi2-support}.
Inequality~(\ref{eq:Psi1-smooth}) shows that at every point, $|\Psi_{1}|$
is within a factor of $5$ of the tensor product~$(\frac{1}{m+1}\lambda_{0}+\frac{m}{m+1}\lambda_{1})^{\otimes m}$,
which by~(\ref{eq:mu01-tensor}) is $Km$-smooth on its support.
It follows that $|\Psi_{1}|$ is $25Km$-smooth on $\{0,1,2,\ldots,R\}^{m}.$
By an analogous argument,~(\ref{eq:Psi2-smooth}) and~(\ref{eq:mu02-tensor})
imply that $|\Psi_{2}|$ is $9K$-smooth (and hence also $25Km$-smooth)
on $\{0,1,2,\ldots,R\}^{m}.$ Now~(\ref{eq:Lambda0-plus-Lambda1})
shows that $\frac{1}{2}(\Lambda_{0}+\Lambda_{1})$ is a conical combination
of two nonnegative $25Km$-smooth functions on $\{0,1,2,\ldots,R\}^{m}.$
By Proposition~\ref{prop:smooth}\ref{enu:smooth-conical},
\begin{equation}
\frac{\Lambda_{0}+\Lambda_{1}}{2}\in\SmoothFunction(25Km,\{0,1,2,\ldots,R\}^{m}).\label{eq:Lambda0-sum-Lambda1-smooth}
\end{equation}

Having examined the convex combination $\frac{\Lambda_{0}+\Lambda_{1}}{2},$
we now turn to the individual distributions $\Lambda_{0}$ and $\Lambda_{1}$.
We have 
\begin{align*}
\Lambda_{0} & =\frac{2}{\|\Psi_{1}\|_{1}+\|\Psi_{2}\|_{1}}\pospart(\Psi_{1}+\Psi_{2})\\
 & =\frac{2}{\|\Psi_{1}\|_{1}+\|\Psi_{2}\|_{1}}(\pospart(\Psi_{1})+\pospart(\Psi_{2}))\\
\rule{0mm}{4mm} & \in\cone(\{\lambda_{0},\lambda_{1},\lambda_{2}\}^{\otimes m}),
\end{align*}
where the first equation restates the definition of $\Lambda_{0},$
the second step applies~(\ref{eq:Psi12-pospart}), and the last step
uses~(\ref{eq:Psi1-pospart}) and (\ref{eq:Psi2-pospart}). Analogously,
\begin{align*}
\Lambda_{1} & =\frac{2}{\|\Psi_{1}\|_{1}+\|\Psi_{2}\|_{1}}\negpart(\Psi_{1}+\Psi_{2})\\
 & =\frac{2}{\|\Psi_{1}\|_{1}+\|\Psi_{2}\|_{1}}(\negpart(\Psi_{1})+\negpart(\Psi_{2}))\\
\rule{0mm}{4mm} & \in\cone(\{\lambda_{1}^{\otimes m},\lambda_{2}^{\otimes m}\}),
\end{align*}
where the first equation restates the definition of $\Lambda_{1},$
the second step applies~(\ref{eq:Psi12-negpart}), and the last step
uses~(\ref{eq:Psi1-negpart}) and (\ref{eq:Psi2-negpart}). Thus,
$\Lambda_{0}$ and $\Lambda_{1}$ are\emph{ }conical\emph{ }combinations
of probability distributions in $\{\lambda_{0},\lambda_{1},\lambda_{2}\}^{\otimes m}.$
Since $\Lambda_{0}$ and $\Lambda_{1}$ are themselves probability
distributions, we conclude that
\[
\Lambda_{0},\Lambda_{1}\in\conv(\{\lambda_{0},\lambda_{1},\lambda_{2}\}^{\otimes m}).
\]
 By~(\ref{eq:Lambda01-mu0})\textendash (\ref{eq:Lambda01-mui-pointwise}),
\begin{align*}
\lambda_{i}(t) & \leq\frac{1}{c'''(t+1)^{2}\,2^{c'''t/\sqrt{r}}} &  & (t\in\NN;\;i=0,1,2)
\end{align*}
for some constant $c'''>0.$ The last two equations along with~(\ref{eq:mu0-smooth})\textendash (\ref{eq:mu2-smooth})
yield
\begin{multline}
\Lambda_{0},\Lambda_{1}\in\conv\left(\left\{ \lambda\in\Smooth(1,K,1):\phantom{\left\{ \frac{1}{c(t+1)^{2}\,2^{-ct/\sqrt{r}}}\right\} ^{\otimes m}}\right.\right.\\
\left.\left.\lambda(t)\leq\frac{1}{c'''(t+1)^{2}\,2^{c'''t/\sqrt{r}}}\text{ for }t\in\NN\right\} ^{\otimes m}\right).\label{eq:Lambda0-Lambda1-smooth}
\end{multline}
 Now~(\ref{eq:supp-Lambda0})\textendash (\ref{eq:inproof-Lambda0-Lambda1-orth}),
(\ref{eq:Lambda0-sum-Lambda1-smooth}), and (\ref{eq:Lambda0-Lambda1-smooth})
imply~(\ref{eq:srank-Lambda0-supp})\textendash (\ref{eq:srank-Lambda-z-smooth})
for a small enough constant $c>0.$
\end{proof}
We now settle the four claims made in the proof of Theorem~\ref{thm:dual-MP-smooth}.
\begin{proof}[Proof of Claim~\emph{\ref{claim:Psi1}}.]
Multiplying out the tensor product in the definition of $\Psi_{1}$
and collecting like terms, we obtain
\begin{multline}
\Psi_{1}=-\left(2-\left(\frac{m}{m+1}\right)^{m}\right)\lambda_{1}^{\otimes m}\\
+\sum_{\substack{S\subseteq\{1,2,\ldots,m\}\\
S\ne\varnothing
}
}\left(\frac{1}{m+1}\right)^{|S|}\left(\frac{m}{m+1}\right)^{m-|S|}\lambda_{0}^{\otimes S}\cdot\lambda_{1}^{\otimes\overline{S}}.\qquad\label{eq:Phi1-multiply-out}
\end{multline}
Recall from~(\ref{eq:Lambda01-mu0}) and~(\ref{eq:Lambda01-mui})
that $\lambda_{0}$ and $\lambda_{1}$ are supported on $\{0\}$ and
$\{1,2,\ldots,R\},$ respectively. Therefore, the right-hand side
of~(\ref{eq:Phi1-multiply-out}) is the sum of $2^{m}$ nonzero functions
whose supports are pairwise disjoint. Now~(\ref{eq:Psi1-pospart})
and (\ref{eq:Psi1-negpart}) follow directly from~(\ref{eq:Phi1-multiply-out}).
One further obtains that
\begin{multline*}
|\Psi_{1}|=\left(2-\left(\frac{m}{m+1}\right)^{m}\right)\lambda_{1}^{\otimes m}\\
\qquad\qquad+\sum_{\substack{S\subseteq\{1,2,\ldots,m\}\\
S\ne\varnothing
}
}\left(\frac{1}{m+1}\right)^{|S|}\left(\frac{m}{m+1}\right)^{m-|S|}\lambda_{0}^{\otimes S}\cdot\lambda_{1}^{\otimes\overline{S}}.
\end{multline*}
From first principles, 
\begin{multline*}
\left(\frac{1}{m+1}\lambda_{0}+\frac{m}{m+1}\lambda_{1}\right)^{\otimes m}=\left(\frac{m}{m+1}\right)^{m}\lambda_{1}^{\otimes m}\\
\qquad\qquad+\sum_{\substack{S\subseteq\{1,2,\ldots,m\}\\
S\ne\varnothing
}
}\left(\frac{1}{m+1}\right)^{|S|}\left(\frac{m}{m+1}\right)^{m-|S|}\lambda_{0}^{\otimes S}\cdot\lambda_{1}^{\otimes\overline{S}}.
\end{multline*}
Comparing the right-hand sides of the last two equations settles~(\ref{eq:Psi1-smooth}).
\end{proof}
\begin{proof}[Proof of Claim~\emph{\ref{claim:Psi2}}.]
Multiplying out the tensor powers in the definition of $\Psi_{2}$
and collecting like terms, we obtain
\begin{equation}
\Psi_{2}=-\left(\frac{m}{m+1}\right)^{m}\epsilon^{m}\lambda_{2}^{\otimes m}+\sum_{\substack{S\subseteq\{1,2,\ldots,m\}\\
S\ne\varnothing
}
}a_{|S|}\,\lambda_{0}^{\otimes S}\cdot\lambda_{2}^{\otimes\overline{S}},\label{eq:Psi2-multiplied-out}
\end{equation}
where the coefficients $a_{1},a_{2},\ldots,a_{m}$ are given by
\begin{align}
a_{i} & =\left(2(1-\epsilon)^{i}\epsilon^{m-i}-2(-1)^{i}\epsilon^{m}-\left(1-\frac{\epsilon m}{m+1}\right)^{i}\left(\frac{\epsilon m}{m+1}\right)^{m-i}\right)\nonumber \\
 & =(1-\epsilon)^{i}\epsilon^{m-i}\left(2-2\left(\frac{-\epsilon}{1-\epsilon}\right)^{i}-\left(1+\frac{\epsilon}{(1-\epsilon)(m+1)}\right)^{i}\left(\frac{m}{m+1}\right)^{m-i}\right)\nonumber \\
 & \in\left[\frac{1}{3}(1-\epsilon)^{i}\epsilon^{m-i},\,3(1-\epsilon)^{i}\epsilon^{m-i}\right].\label{eq:Psi2-ai}
\end{align}
As in the proof of the previous claim, recall from~(\ref{eq:Lambda01-mu0})
and~(\ref{eq:Lambda01-mui}) that $\lambda_{0}$ and $\lambda_{2}$
have disjoint support. Therefore, the right-hand side of~(\ref{eq:Psi2-multiplied-out})
is the sum of $2^{m}$ nonzero functions whose supports are pairwise
disjoint. Now~(\ref{eq:Psi2-pospart}) and (\ref{eq:Psi2-negpart})
are immediate from~(\ref{eq:Psi2-ai}). The disjointness of the supports
of the summands on the right-hand side of~(\ref{eq:Psi2-multiplied-out})
also implies that
\[
|\Psi_{2}|=\left(\frac{m}{m+1}\right)^{m}\epsilon^{m}\lambda_{2}^{\otimes m}+\sum_{\substack{S\subseteq\{1,2,\ldots,m\}\\
S\ne\varnothing
}
}|a_{|S|}|\,\lambda_{0}^{\otimes S}\cdot\lambda_{2}^{\otimes\overline{S}}.
\]
In view of~(\ref{eq:Psi2-ai}), we conclude that $|\Psi_{2}|$ coincides
up to a factor of $3$ with the function
\[
\sum_{S\subseteq\{1,2,\ldots,m\}}(1-\epsilon)^{|S|}\epsilon^{m-|S|}\lambda_{0}^{\otimes S}\cdot\lambda_{2}^{\otimes\overline{S}}=((1-\epsilon)\lambda_{0}+\epsilon\lambda_{2})^{\otimes m}.
\]
This settles~(\ref{eq:Psi2-smooth}) and completes the proof.
\end{proof}
\begin{proof}[Proof of Claim~\emph{\ref{claim:Psi1-Psi2-support}}.]
Recall from~(\ref{eq:Lambda01-mu0}) and~(\ref{eq:Lambda01-mui})
that $\supp\lambda_{0}=\{0\}$ and $\supp\lambda_{1}=\supp\lambda_{2}=\{1,2,\ldots,R\}.$
In this light, (\ref{eq:Psi1-pospart})\textendash (\ref{eq:Psi1-smooth})
imply
\begin{align*}
\supp(\pospart\Psi_{1}) & \subseteq(\MP_{m,R}^{*})^{-1}(0),\\
\supp(\negpart\Psi_{1}) & \subseteq(\MP_{m,R}^{*})^{-1}(1),\\
\supp(\Psi_{1}) & =(\MP_{m,R}^{*})^{-1}(0)\cup(\MP_{m,R}^{*})^{-1}(1),
\end{align*}
respectively. Analogously, (\ref{eq:Psi2-pospart})\textendash (\ref{eq:Psi2-smooth})
imply
\begin{align*}
\supp(\pospart\Psi_{2}) & \subseteq(\MP_{m,R}^{*})^{-1}(0),\\
\supp(\negpart\Psi_{2}) & \subseteq(\MP_{m,R}^{*})^{-1}(1),\\
\supp(\Psi_{2}) & =(\MP_{m,R}^{*})^{-1}(0)\cup(\MP_{m,R}^{*})^{-1}(1).
\end{align*}
Since the support of each $\Psi_{i}$ is the disjoint union of the
supports of its positive and negative parts, (\ref{eq:Psi12-pospart})
and~(\ref{eq:Psi12-negpart}) follow.
\end{proof}
\begin{proof}[Proof of Claim~\emph{\ref{claim:Psi1-Psi2-orthog}}.]
 Write $\Psi_{1}+\Psi_{2}=A+B+C,$ where
\begin{align*}
 & \hspace{-4mm}A=\left(\frac{1}{m+1}\lambda_{0}+\frac{m}{m+1}\lambda_{1}\right)^{\otimes m}-\left(\frac{1}{m+1}\lambda_{0}+\frac{m}{m+1}((1-\epsilon)\lambda_{0}+\epsilon\lambda_{2})\right)^{\otimes m},\\
 & \hspace{-4mm}B=2((1-\epsilon)\lambda_{0}+\epsilon\lambda_{2})^{\otimes m}-2\lambda_{1}^{\otimes m},\\
 & \hspace{-4mm}C=-2(-\epsilon\lambda_{0}+\epsilon\lambda_{2})^{\otimes m}.
\end{align*}
As a result, Proposition~\ref{prop:orth}\ref{item:orth-sum} guarantees
that
\begin{equation}
\orth(\Psi_{1}+\Psi_{2})\geq\min\{\orth A,\orth B,\orth C\}.\label{eq:orth-Psi1-Psi2-inproof}
\end{equation}
We have
\begin{align}
\orth A & \geq\orth\left(\left(\frac{1}{m+1}\lambda_{0}+\frac{m}{m+1}\lambda_{1}\right)\right.\nonumber \\
 & \qquad\qquad\qquad\qquad\qquad\left.-\left(\frac{1}{m+1}\lambda_{0}+\frac{m}{m+1}((1-\epsilon)\lambda_{0}+\epsilon\lambda_{2})\right)\right)\nonumber \\
 & =\orth\left(-\frac{m}{m+1}((1-\epsilon)\lambda_{0}+\epsilon\lambda_{2}-\lambda_{1})\right)\nonumber \\
 & \geq c'\sqrt{r},
\end{align}
where the first step uses Proposition~\ref{prop:orth}\ref{item:orth-difference-of-tensors},
and the last step is a restatement of~(\ref{eq:Lambda01-orth}).
Analogously,
\begin{align}
\orth B & \geq\orth(((1-\epsilon)\lambda_{0}+\epsilon\lambda_{2})-\lambda_{1})\nonumber \\
 & \geq c'\sqrt{r},
\end{align}
where the first and second steps use Proposition~\ref{prop:orth}\ref{item:orth-difference-of-tensors}
and~(\ref{eq:Lambda01-orth}), respectively. Finally,
\begin{align}
\orth C & =\orth((-\epsilon\lambda_{0}+\epsilon\lambda_{2})^{\otimes m})\nonumber \\
 & =m\orth(-\epsilon\lambda_{0}+\epsilon\lambda_{2})\nonumber \\
 & \geq m,\label{eq:orth-C-inproof}
\end{align}
where the second step applies Proposition~\ref{prop:orth}\ref{item:orth-tensor},
and the third step is valid because $\langle-\epsilon\lambda_{0}+\epsilon\lambda_{2},1\rangle=-\epsilon\langle\lambda_{0},1\rangle+\epsilon\langle\lambda_{2},1\rangle=-\epsilon+\epsilon=0.$
By~(\ref{eq:orth-Psi1-Psi2-inproof})\textendash (\ref{eq:orth-C-inproof}),
the proof is complete.
\end{proof}

\subsection{\label{subsec:An-amplification-theorem-for-smooth-thrdeg}An amplification
theorem for smooth threshold degree}

We have reached the technical centerpiece of our sign-rank analysis,
an amplification theorem for smooth threshold degree. This result
is qualitatively stronger than the amplification theorems for threshold
degree in Section~\ref{subsec:Hardness-amplification-for}, which
do not preserve smoothness. We prove the new amplification theorem
by manipulating locally smooth dual objects to achieve the desired
global behavior, an approach unrelated to our work in Section~\ref{subsec:Hardness-amplification-for}.
A detailed statement of our result follows.
\begin{thm}
\label{thm:min-smooth-amplification}There is an absolute constant
$C\geq1$ such that \\
\uline{for all:\mbox{$\rule[-4mm]{0mm}{9mm}$}}

positive integers $n,m,r,R,\theta$ with $R\geq r$ and $\theta\geq Cnm\log(2nm);$ 

real numbers $\gamma\in[0,1];$ 

functions $f\colon\zoon\to\zoo;$ 

probability distributions $\Lambda^{*}$ on $\{0,1,2,\ldots,R\}^{mn}|_{\leq\theta};$
and 

positive integers $d$ with
\begin{equation}
d\leq\frac{1}{C}\min\left\{ m\degthr(f,\gamma),\,\sqrt{r}\degthr(f,\gamma),\,\frac{\theta}{\sqrt{r}\log(2nmR)}\right\} ,\label{eq:sign-rank-def-d}
\end{equation}
\uline{one has:}
\begin{align}
 & \orth((-1)^{f\circ\MP_{m,R}^{*}}\cdot\Lambda)\geq d,\label{eq:srank-orthog}\\
 & \Lambda\geq\gamma\cdot(CnmR)^{-9d}\;\Lambda^{*}\label{eq:srank-min-smoothness}
\end{align}
for some $\Lambda\in\Distribution(\{0,1,2,\ldots,R\}^{mn}|_{\leq\theta})$.
\end{thm}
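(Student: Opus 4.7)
The plan is to construct $\Lambda$ in three phases: block-compose a dual witness for $f$ with the locally smooth dual for $\MP^*_{m,R}$, trim the resulting signed measure down to inputs of weight $\leq\theta$, and finally impose min-smoothness against $\Lambda^*$. Throughout, fix a dual witness $\mu$ for $\degthr(f,\gamma)$, i.e., a probability distribution on $\{0,1\}^n$ with $\mu \geq \gamma/2^n$ and $\orth((-1)^f\mu) \geq \degthr(f,\gamma)$, and invoke Theorem~\ref{thm:dual-MP-smooth} to obtain probability distributions $\Lambda_0,\Lambda_1$ and reference distribution $P = \bigotimes_i \frac{\Lambda_0+\Lambda_1}{2}$ which is $(m/c)$-smooth on $\{0,\ldots,R\}^{mn}$ by~(\ref{eq:srank-Lambda-smooth}) and Proposition~\ref{prop:smooth-tensor}.

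In the block-composition phase I form
$$\Phi(x) \;=\; \sum_{z\in\{0,1\}^n}(-1)^{f(z)}\mu(z)\bigotimes_{i=1}^n\Lambda_{z_i}(x_i).$$
The disjointness of $\supp\Lambda_0$ and $\supp\Lambda_1$ in~(\ref{eq:srank-Lambda0-supp})--(\ref{eq:srank-Lambda1-supp}) collapses the sum: only the single $z$ with $z_i = \MP^*_{m,R}(x_i)$ is nonzero at $x$, so $\sign\Phi = (-1)^{f\circ\MP^*_{m,R}}$, $|\Phi(x)| = 2^n\mu(z(x))P(x)$, $\|\Phi\|_1 = 1$, and Corollary~\ref{cor:orth-block-composition} together with~(\ref{eq:srank-Lambda-orth-1}) gives $\orth\Phi \geq \degthr(f,\gamma)\cdot\min\{m,c\sqrt r\} \geq d$ by the first two clauses of~(\ref{eq:sign-rank-def-d}). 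In particular $|\Phi|$ satisfies the pinching $\gamma P \leq |\Phi| \leq 2^n P$.

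In the tail-removal phase I exploit property~(\ref{eq:srank-Lambda-z-smooth}): the parts $\Phi^\pm$ are supported on $(f\circ\MP^*_{m,R})^{-1}(i)$, and each $\Phi^{\pm}/\|\Phi^\pm\|_1$ is a convex combination of tensor products that by Proposition~\ref{prop:smooth-tensor} lies in $\conv(\Smooth(mn,1/c,mn))$ with per-coordinate tail at most $1/(c(t+1)^2 2^{ct/\sqrt r})$. Lemma~\ref{lem:concentration-of-measure} — applicable because $\theta \geq Cnm\log(2nm)$ — gives each constituent mass at most $2^{-c\theta/(2\sqrt r)}$ on $\{|x|>\theta\}$. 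Corollary~\ref{cor:Zeroizer-for-distributions-convex-hull}, applied separately to $\Phi^{\pm}/\|\Phi^\pm\|_1$, produces $\tilde\Phi^{\pm}$ supported on $\{0,\ldots,R\}^{mn}|_{\leq\theta}$ with $\orth(\Phi^{\pm}-\tilde\Phi^{\pm}) > d$; the third clause $d \leq \theta/(C\sqrt r\log(2nmR))$ of~(\ref{eq:sign-rank-def-d}) forces the pointwise error below $\Phi^{\pm}/2$, so $\tilde\Phi^{\pm}\geq\Phi^{\pm}/2\geq 0$. The signed $\tilde\Phi = \tilde\Phi^+-\tilde\Phi^-$ then satisfies $\sign\tilde\Phi = (-1)^{f\circ\MP^*_{m,R}}$, $\|\tilde\Phi\|_1 \geq 1/2$, $\supp\tilde\Phi \subseteq \{0,\ldots,R\}^{mn}|_{\leq\theta}$, and $\orth\tilde\Phi \geq d$ by Proposition~\ref{prop:orth}(\ref{item:orth-sum}). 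The pinching above, combined with the factor-of-$2$ control on $\tilde\Phi$, provides local $K$-smoothness of $\tilde\Phi$ on $\{0,\ldots,R\}^{mn}|_{\leq\theta}$ with a suitable $K$ built from $m/c$ and the $\mu$-ratio.

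In the redistribution phase I apply Lemma~\ref{lem:SMOOTH-redistribute} to $\tilde\Phi$ against the given $\Lambda^*$ with this $K$, obtaining $\tilde\Phi^*$ with $\orth(\tilde\Phi-\tilde\Phi^*) > d$, $\tilde\Phi\tilde\Phi^*\geq 0$, $\|\tilde\Phi^*\|_1 \leq 2$, and $|\tilde\Phi^*|\geq (2^{3d+1}K^{4d+1}\binom{nm+d}{d}^3\binom{nmR}{d})^{-1}\|\tilde\Phi\|_1\Lambda^*$. Normalizing $\Lambda = |\tilde\Phi^*|/\|\tilde\Phi^*\|_1$ yields a probability distribution on $\{0,\ldots,R\}^{mn}|_{\leq\theta}$ with $(-1)^{f\circ\MP^*_{m,R}}\Lambda = \tilde\Phi^*/\|\tilde\Phi^*\|_1$ orthogonal to polynomials of degree less than $d$, establishing~(\ref{eq:srank-orthog}); bounding the binomial and $K^{4d}$ factors polynomially in $(CnmR)^d$ yields~(\ref{eq:srank-min-smoothness}). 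The hardest part will be the local $K$-smoothness step: the naive pinching $|\Phi|\in[\gamma P, 2^n P]$ only gives $K = O((2^n/\gamma)(m/c))$, whose $4d$th power is ruinous, so the real work is to show that the adverse $2^n/\gamma$ ratio between neighboring $\mu$-values enters the final estimate only as a single multiplicative prefactor (absorbed into the $\gamma$ on the right-hand side of~(\ref{eq:srank-min-smoothness})) rather than compounded $|x-y|$ times, leaving the polynomial $(CnmR)^{9d}$ to cover only the genuine per-step smoothness loss of order $m/c$ and the binomial overheads of Corollary~\ref{cor:Zeroizer-for-distributions-convex-hull} and Lemma~\ref{lem:SMOOTH-redistribute}.
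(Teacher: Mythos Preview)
Your three-phase outline and choice of ingredients (Theorem~\ref{thm:dual-MP-smooth}, Corollary~\ref{cor:Zeroizer-for-distributions-convex-hull}, Lemma~\ref{lem:SMOOTH-redistribute}) match the paper's, and you have correctly isolated the crux: the $\mu$-weighted object $\tilde\Phi$ is only $K$-smooth for $K\approx(2^{n}/\gamma)(m/c)$, and feeding this $K$ into Lemma~\ref{lem:SMOOTH-redistribute} produces a $K^{4d+1}$ loss that is fatal. Your proposed resolution---arguing that the $2^{n}/\gamma$ factor does not compound along paths---is not obviously wrong, but it would require reopening the proofs of Propositions~\ref{prop:remove-cap}--\ref{prop:remove-ball} and Lemma~\ref{lem:Zeroizer-for-distributions} under a ``$K_{0}$-smooth up to a global distortion $D$'' hypothesis and tracking $D$ separately throughout. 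You have not indicated how to do this, and the paper does not do it either.

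The paper's actual fix is to decouple smoothing from orthogonality. It trims each $\Lambda_{z}$ individually to $\tilde\Lambda_{z}$ and then applies Lemma~\ref{lem:SMOOTH-redistribute} not to the $\mu$-weighted object but to the \emph{uniformly} weighted
\[
\Phi \;=\; 2^{-n}\sum_{z\in\{0,1\}^{n}}(-1)^{f(z)}\tilde\Lambda_{z},
\]
whose absolute value is within a factor of~$3$ of $P=\bigl(\tfrac{\Lambda_{0}+\Lambda_{1}}{2}\bigr)^{\otimes n}$ on $X|_{\leq\theta}$ and is therefore genuinely $(3m/c)$-smooth, with no $\mu$-dependence whatsoever. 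This yields $\Phi^{*}$ with $|\Phi^{*}|\geq(CnmR)^{-O(d)}\Lambda^{*}$. The $\mu$-weighted object reenters only in the final combination
\[
\Phi_{\text{final}} \;=\; \sum_{z}\mu(z)(-1)^{f(z)}\tilde\Lambda_{z}\;-\;\gamma\Phi\;+\;\gamma\Phi^{*}
\;=\;\sum_{z}\bigl(\mu(z)-\gamma2^{-n}\bigr)(-1)^{f(z)}\tilde\Lambda_{z}\;+\;\gamma\Phi^{*}.
\]
The first sum has the correct sign because $\mu(z)\geq\gamma2^{-n}$, and supplies the orthogonality via Corollary~\ref{cor:orth-block-composition}; the second term supplies $(-1)^{f\circ\MP_{m,R}^{*}}\Phi_{\text{final}}\geq\gamma|\Phi^{*}|\geq\gamma(CnmR)^{-O(d)}\Lambda^{*}$. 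This algebraic split---smooth the uniform object, then graft $\gamma\Phi^{*}$ onto the $\mu$-weighted one---is the missing idea in your plan.
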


\begin{proof}
Let $0<c<1$ be the constant from Theorem~\ref{thm:dual-MP-smooth}.
Take $C\geq1/c$ to be a sufficiently large absolute constant. By
hypothesis,
\begin{equation}
\theta\geq Cnm\log(2nm).\label{eq:theta-large}
\end{equation}
Abbreviate
\begin{align}
X & =\{0,1,2,\ldots,R\}^{nm},\nonumber \\
\delta & =2^{-c\theta/(2\sqrt{r})}.\label{eq:srank-delta-def}
\end{align}
 The following inequalities are straightforward to verify:
\begin{align}
 & d<\frac{1}{3}\min\{\theta-nm,nmR\},\label{eq:srank-d-small}\\
 & \theta\geq\frac{8\e nm(1+\ln(nm))}{c},\label{eq:theta-above-minimal-threshold}\\
 & \frac{2^{3d+1}}{c^{4d+1}}\binom{nm+d}{d}^{3}\binom{nmR}{d}\frac{\delta}{1-\delta}<\frac{1}{2},\label{eq:srank-blowup-smaller-than-probability}\\
 & 2^{3d+1}\left(\frac{3m}{c}\right)^{4d+1}\binom{nm+d}{d}^{3}\binom{nmR}{d}\leq\frac{(CnmR)^{9d}}{4}.\label{eq:srank-bound-on-blowup}
\end{align}
For example,~(\ref{eq:srank-d-small}) holds because $d\leq nm/C$
by~(\ref{eq:sign-rank-def-d}) and $\theta\geq Cnm\log(2nm)$ by~(\ref{eq:theta-large}).
Inequalities~(\ref{eq:theta-above-minimal-threshold})\textendash (\ref{eq:srank-bound-on-blowup})
follow analogously from~(\ref{eq:sign-rank-def-d}) and~(\ref{eq:theta-large})
for a large enough constant $C$. The rest of the proof splits neatly
into four major steps.

\textsc{~}

\textsc{Step 1: Key distributions.} Theorem~\ref{thm:dual-MP-smooth}
provides probability distributions $\Lambda_{0}$ and $\Lambda_{1}$
such that
\begin{align}
 & \supp\Lambda_{i}=(\MP_{m,R}^{*})^{-1}(i),\qquad\qquad\qquad\qquad\qquad\qquad i=0,1,\label{eq:srank-Lambda-i-supp-restated}\\
 & \orth(\Lambda_{0}-\Lambda_{1})\geq\min\{m,c\sqrt{r}\},\label{eq:srank-Lambda-orth-restated}\\
 & \frac{\Lambda_{0}+\Lambda_{1}}{2}\in\SmoothFunction\left(\frac{m}{c},\{0,1,2,\ldots,R\}^{m}\right),\label{eq:srank-Lambda-smooth-restated}\\
 & \Lambda_{0},\Lambda_{1}\in\conv\left(\left\{ \lambda\in\Smooth\left(1,\frac{1}{c},1\right):\phantom{\frac{C}{(t+1)^{2}\,2^{t/(C\sqrt{r})}}}\right.\phantom{\left\{ \frac{1}{c(t+1)^{2}\,2^{-ct/\sqrt{r}}}\right\} ^{\otimes m}}\right.\nonumber \\
 & \quad\qquad\qquad\qquad\qquad\left.\left.\lambda(t)\leq\frac{1}{c(t+1)^{2}\,2^{ct/\sqrt{r}}}\text{ for }t\in\NN\right\} ^{\otimes m}\right).\label{eq:srank-Lambda-z-smooth-restated}
\end{align}
Consider the probability distributions
\begin{align*}
\Lambda_{z} & =\bigotimes_{i=1}^{n}\Lambda_{z_{i}}, &  & z\in\zoon.
\end{align*}
Then
\begin{align}
\Lambda_{z} & \in\conv\left(\left\{ \lambda\in\Smooth\left(1,\frac{1}{c},1\right):\lambda(t)\leq\frac{1}{c(t+1)^{2}\,2^{ct/\sqrt{r}}}\text{ for }t\in\NN\right\} ^{\otimes mn}\right)\nonumber \\
 & \subseteq\conv\left(\Smooth\left(1,\frac{1}{c},1\right)^{\otimes mn}\cap\phantom{\left\{ \frac{1}{c(t+1)^{2}\,2^{-ct/\sqrt{r}}}\right\} ^{\otimes m}}\right.\nonumber \\
 & \quad\qquad\qquad\qquad\left.\left\{ \lambda\in\Distribution(\NN):\lambda(t)\leq\frac{1}{c(t+1)^{2}\,2^{ct/\sqrt{r}}}\text{ for }t\in\NN\right\} ^{\otimes mn}\right)\nonumber \\
 & \subseteq\conv\left(\Smooth\left(1,\frac{1}{c},1\right)^{\otimes mn}\cap\left\{ \Lambda\in\Distribution(\NN^{mn}):\Lambda(\NN^{nm}|_{>\theta})\leq2^{-c\theta/(2\sqrt{r})}\right\} \right)\nonumber \\
 & \subseteq\conv\left(\Smooth\left(1,\frac{1}{c},1\right)^{\otimes mn}\cap\left\{ \Lambda\in\Distribution(\NN^{mn}):\Lambda(\NN^{nm}|_{>\theta})\leq\delta\right\} \right)\nonumber \\
 & \subseteq\conv\left(\Smooth\left(nm,\frac{1}{c},nm\right)\cap\{\Lambda\in\Distribution(\NN^{mn}):\Lambda(\NN^{nm}|_{>\theta})\leq\delta\}\right),\label{eq:srank-lambda-z-smooth-and-concentrated}
\end{align}
where the first step uses~(\ref{eq:tensor-conv-conv-tensor}) and~(\ref{eq:srank-Lambda-z-smooth-restated});
the third step is valid by~(\ref{eq:theta-above-minimal-threshold})
and Lemma~\ref{lem:concentration-of-measure}; the fourth step is
a substitution from~(\ref{eq:srank-delta-def}); and the last step
is an application of Proposition~\ref{prop:smooth-tensor}. 

\textsc{~}

\textsc{Step 2: Restricting the support.} By~(\ref{eq:srank-d-small}),
(\ref{eq:srank-lambda-z-smooth-and-concentrated}), and Corollary~\ref{cor:Zeroizer-for-distributions-convex-hull},
there is a real function $\tilde{\Lambda}_{z}\colon\NN^{nm}\to\Re$
such that
\begin{align}
 & \orth(\Lambda_{z}-\tilde{\Lambda}_{z})>d,\label{eq:srank-Lambda-tilde-Lambda-orth}\\
 & \supp\tilde{\Lambda}_{z}\subseteq\NN^{nm}|_{\leq\theta},\label{eq:tilde-Lambda-low-Hamming}\\
 & \supp\tilde{\Lambda}_{z}\subseteq\supp\Lambda_{z},\label{eq:tilde-Lambda-inside-support}
\end{align}
and
\[
|\Lambda_{z}-\tilde{\Lambda}_{z}|\leq\frac{2^{3d+1}}{c^{4d+1}}\binom{nm+d}{d}^{3}\binom{\diam(\supp\Lambda_{z})}{d}\frac{\delta}{1-\delta}\cdot\Lambda_{z}\qquad\text{on }\NN^{nm}|_{\leq\theta}.
\]
In view of (\ref{eq:srank-blowup-smaller-than-probability}) and~$\diam(\supp\Lambda_{z})\leq nmR$,
the last equation simplifies to
\begin{equation}
|\Lambda_{z}-\tilde{\Lambda}_{z}|\leq\frac{1}{2}\Lambda_{z}\quad\text{on }\NN^{nm}|_{\leq\theta}.\label{eq:Lambda-tilde-Lambda-approx}
\end{equation}
Properties~(\ref{eq:tilde-Lambda-low-Hamming}) and~(\ref{eq:Lambda-tilde-Lambda-approx})
imply that $\tilde{\Lambda}_{z}$ is a nonnegative function, which
along with~(\ref{eq:srank-Lambda-tilde-Lambda-orth}) and Proposition~\ref{prop:distribution-criterion}
implies that $\tilde{\Lambda}_{z}$ is a probability distribution.
Combining this fact with~(\ref{eq:srank-Lambda-i-supp-restated}),
(\ref{eq:tilde-Lambda-low-Hamming}), and (\ref{eq:tilde-Lambda-inside-support})
gives
\begin{equation}
\tilde{\Lambda}_{z}\in\Distribution\left(\NN^{nm}|_{\leq\theta}\cap\prod_{i=1}^{n}(\MP_{m,R}^{*})^{-1}(z_{i})\right),\qquad\qquad z\in\zoon.\label{eq:srank-tilde-Lambda-support}
\end{equation}
In particular, the $\tilde{\Lambda}_{z}$ are supported on disjoint
sets of inputs.

~

\textsc{Step 3: Ensuring min-smoothness.} Recall from (\ref{eq:srank-tilde-Lambda-support})
that each of the probability distributions $\tilde{\Lambda}_{z}$
is supported on a subset of $X|_{\leq\theta}.$ Consider the function
$\Phi\colon X|_{\leq\theta}\to\Re$ given by
\begin{align*}
\Phi & =2^{-n}\sum_{z\in\zoon}(-1)^{f(z)}\tilde{\Lambda}_{z}.
\end{align*}
Again by~(\ref{eq:srank-tilde-Lambda-support}), the support of $\tilde{\Lambda}_{z}$
is contained in $\prod_{i=1}^{n}(\MP_{m,R}^{*})^{-1}(z_{i}).$ This
means in particular that $f\circ\MP_{m,R}^{*}=f(z)$ on the support
of $\tilde{\Lambda}_{z}$, whence
\begin{equation}
(-1)^{f(z)}\tilde{\Lambda}_{z}=(-1)^{f\circ\MP_{m,R}^{*}}\cdot\tilde{\Lambda}_{z}\label{eq:srank-Lambda-z-f-on-support}
\end{equation}
everywhere on $X|_{\leq\theta}.$ Making this substitution in the
defining equation for $\Phi,$ we find that
\begin{equation}
(-1)^{f\circ\MP_{m,R}^{*}}\cdot\Phi\geq0.\label{eq:srank-Phi-sign}
\end{equation}
The fact that the $\tilde{\Lambda}_{z}$ are supported on pairwise
disjoint sets of inputs forces 
\begin{align}
 & |\Phi|=2^{-n}\sum_{z\in\zoon}\tilde{\Lambda}_{z}\label{eq:srank-abs-Phi}
\end{align}
and in particular
\begin{equation}
\|\Phi\|_{1}=1.\label{eq:srank-Phi-ell1}
\end{equation}

We now examine the smoothness of $\Phi.$ For this, consider the probability
distribution
\begin{align}
\Lambda & =2^{-n}\sum_{z\in\zoon}\Lambda_{z}.\label{eq:srank-lambda}
\end{align}
Comparing equations (\ref{eq:srank-abs-Phi}) and~(\ref{eq:srank-lambda})
term by term and using the upper bound (\ref{eq:Lambda-tilde-Lambda-approx}),
we find that $|\Lambda-|\Phi||\leq\frac{1}{2}\Lambda$ on $X|_{\leq\theta}$.
Equivalently,
\begin{equation}
\frac{1}{2}\Lambda\leq|\Phi|\leq\frac{3}{2}\Lambda\quad\text{on }X|_{\leq\theta.}\label{eq:srank-Lambda-tilde-Lambda-pointwise}
\end{equation}
But
\begin{align}
\Lambda & =\left(\frac{1}{2}\Lambda_{0}+\frac{1}{2}\Lambda_{1}\right)^{\otimes n}\nonumber \\
 & \in\SmoothFunction\left(\frac{m}{c},\{0,1,2,\ldots,R\}^{m}\right)^{\otimes n}\nonumber \\
 & \subseteq\SmoothFunction\left(\frac{m}{c},\{0,1,2,\ldots,R\}^{mn}\right),\label{eq:srank-Lambda-n-smooth}
\end{align}
where the last two steps are valid by~(\ref{eq:srank-Lambda-smooth-restated})
and Proposition~\ref{prop:smooth}\ref{enu:smooth-tensor}, respectively.
Combining (\ref{eq:srank-Lambda-tilde-Lambda-pointwise}) and~(\ref{eq:srank-Lambda-n-smooth}),
we conclude that $\Phi$ is $(3m/c)$-smooth on $X|_{\leq\theta}$.
As a result,~(\ref{eq:srank-d-small}) and Lemma~\ref{lem:SMOOTH-redistribute}
provide a function $\Phi^{*}\colon X|_{\leq\theta}\to\Re$ with 
\begin{align}
 & \orth(\Phi-\Phi^{*})>d,\label{eq:srank-smoothed-distribution-indistinguishable}\\
 & \|\Phi^{*}\|_{1}\leq2\|\Phi\|_{1},\\
 & \Phi\cdot\Phi^{*}\geq0,\label{eq:srank-smoothed-distribution-sign}\\
 & |\Phi^{*}|\geq\left(2^{3d+1}\left(\frac{3m}{c}\right)^{4d+1}\binom{nm+d}{d}^{3}\binom{\diam(\supp\Phi)}{d}\right)^{-1}\|\Phi\|_{1}\,\Lambda^{*}.\label{eq:srank-smoothed-distribution-factor}
\end{align}
In view of~(\ref{eq:srank-Phi-ell1}), the second property simplifies
to
\begin{equation}
\|\Phi^{*}\|_{1}\leq2.\label{eq:srank-smoothed-distribution-norm}
\end{equation}
Recall that on $X|_{\leq\theta},$ the function $\Phi$ is $(3m/c)$-smooth
and, by~(\ref{eq:srank-Phi-ell1}), is not identically zero. Therefore,
$\Phi$ must be nonzero at every point of $X|_{\leq\theta},$ which
includes the support of $\Phi^{*}.$ As a result, (\ref{eq:srank-Phi-sign})
and~(\ref{eq:srank-smoothed-distribution-sign}) imply that
\begin{equation}
(-1)^{f\circ\MP_{m,R}^{*}}\cdot\Phi^{*}\geq0.\label{eq:srank-Phi-star-sign}
\end{equation}
Finally, using~$\diam(\supp\Phi)\leq nmR$ along with the bounds~(\ref{eq:srank-bound-on-blowup})
and (\ref{eq:srank-Phi-ell1}), we can restate~(\ref{eq:srank-smoothed-distribution-factor})
as
\begin{equation}
|\Phi^{*}|\geq4(CnmR)^{-9d}\Lambda^{*}.\label{eq:srank-phi-star-min-smooth}
\end{equation}

\textsc{~}

\textsc{Step 4: The final construction.} By the definition of smooth
threshold degree, there is a probability distribution $\mu$ on $\zoon$
such that 
\begin{align}
 & \orth((-1)^{f}\cdot\mu)\geq\degthr(f,\gamma),\label{eq:srank-mu-orthog}\\
 & \mu(z)\geq\gamma\cdot2^{-n}, &  & z\in\zoon.\label{eq:srank-mu-minsmooth}
\end{align}
Define
\[
\Phi_{\text{final}}=\sum_{z\in\zoon}\mu(z)(-1)^{f(z)}\tilde{\Lambda}_{z}-\gamma\Phi+\gamma\Phi^{*}.
\]
The right-hand side is a linear combination of functions on $X|_{\leq\theta},$
whence
\begin{equation}
\supp(\Phi_{\text{final}})\subseteq X|_{\leq\theta}.\label{eq:srank-final-support}
\end{equation}
Moreover,
\begin{align}
\|\Phi_{\text{final}}\|_{1} & \leq\sum_{z\in\zoon}\mu(z)\|\tilde{\Lambda}_{z}\|_{1}+\gamma\|\Phi\|_{1}+\gamma\|\Phi^{*}\|_{1}\nonumber \\
 & \leq1+3\gamma\nonumber \\
 & \leq4,\label{eq:srank-final-ell1}
\end{align}
where the first step applies the triangle inequality, and the second
step uses~(\ref{eq:srank-tilde-Lambda-support}), (\ref{eq:srank-Phi-ell1})
and~(\ref{eq:srank-smoothed-distribution-norm}). Continuing,
\begin{align}
 & (-1)^{f\circ\MP_{m,R}^{*}}\cdot\Phi_{\text{final}}\nonumber \\
 & \qquad=(-1)^{f\circ\MP_{m,R}^{*}}\cdot\left(\sum_{z\in\zoon}(\mu(z)-\gamma2^{-n})(-1)^{f(z)}\tilde{\Lambda}_{z}+\gamma\Phi^{*}\right)\nonumber \\
 & \qquad=\!\sum_{z\in\zoon}\!(\mu(z)-\gamma2^{-n})(-1)^{f\circ\MP_{m,R}^{*}}\cdot(-1)^{f(z)}\tilde{\Lambda}_{z}+\gamma(-1)^{f\circ\MP_{m,R}^{*}}\cdot\Phi^{*}\nonumber \\
 & \qquad=\sum_{z\in\zoon}(\mu(z)-\gamma2^{-n})\tilde{\Lambda}_{z}+\gamma|\Phi^{*}|\label{eq:srank-final-minsmooth-intermediate}\\
 & \qquad\geq\gamma|\Phi^{*}|\nonumber \\
 & \qquad\geq4\gamma(CnmR)^{-9d}\Lambda^{*},\label{eq:srank-final-minsmooth}
\end{align}
where the first step applies the definition of $\Phi$; the third
step uses~(\ref{eq:srank-Lambda-z-f-on-support}) and~(\ref{eq:srank-Phi-star-sign});
the fourth step follows from~(\ref{eq:srank-mu-minsmooth}); and
the fifth step substitutes the lower bound from~(\ref{eq:srank-phi-star-min-smooth}).
Now
\begin{equation}
\Phi_{\text{final}}\not\equiv0\label{eq:srank-final-nonzero}
\end{equation}
follows from~(\ref{eq:srank-final-minsmooth-intermediate}) if $\gamma=0$,
and from~(\ref{eq:srank-final-minsmooth}) if $\gamma>0.$ 

It remains to examine the orthogonal content of $\Phi_{\text{final}}.$
For this, write
\begin{multline*}
\Phi_{\text{final}}=\sum_{z\in\zoon}\mu(z)(-1)^{f(z)}\Lambda_{z}+\sum_{z\in\zoon}\mu(z)(-1)^{f(z)}(\tilde{\Lambda}_{z}-\Lambda_{z})\\
+\gamma(\Phi^{*}-\Phi).
\end{multline*}
Then
\begin{align}
\orth(\Phi_{\text{final}}) & \geq\min\left\{ \orth\left(\sum_{z\in\zoon}\mu(z)(-1)^{f(z)}\Lambda_{z}\right),\right.\nonumber \\
 & \left.\phantom{\orth\left(\sum_{z\in\zoon}\right),}\min_{z}\{\orth(\tilde{\Lambda}_{z}-\Lambda_{z})\},\;\;\orth(\Phi^{*}-\Phi)\right\} \nonumber \\
 & \geq\min\left\{ \orth\left(\sum_{z\in\zoon}\mu(z)(-1)^{f(z)}\Lambda_{z}\right),d\right\} \nonumber \\
 & \geq\min\left\{ \orth\left(\sum_{z\in\zoon}\mu(z)(-1)^{f(z)}\bigotimes_{i=1}^{n}\Lambda_{z_{i}}\right),d\right\} \nonumber \\
 & \geq\min\left\{ \orth(\mu\cdot(-1)^{f})\orth(\Lambda_{1}-\Lambda_{0}),d\right\} \nonumber \\
 & \geq\min\{\degthr(f,\gamma)\min\{m,c\sqrt{r}\},d\}\nonumber \\
 & =d,\label{eq:srank-final-orth}
\end{align}
where the first step applies Proposition~\ref{prop:orth}\ref{item:orth-sum};
the second step follows from~(\ref{eq:srank-Lambda-tilde-Lambda-orth})
and~(\ref{eq:srank-smoothed-distribution-indistinguishable}); the
third step is valid by the definition of $\Lambda_{z}$; the fourth
step applies Corollary~\ref{cor:orth-block-composition}; the fifth
step substitutes the lower bounds from~(\ref{eq:srank-Lambda-orth-restated})
and~(\ref{eq:srank-mu-orthog}); and the final step uses~(\ref{eq:sign-rank-def-d}).

To complete the proof, let
\[
\Lambda=\frac{\Phi_{\text{final}}}{\|\Phi_{\text{final}}\|_{1}}\cdot(-1)^{f\circ\MP_{m,R}^{*}},
\]
where the right-hand side is well-defined by~(\ref{eq:srank-final-nonzero}).
Then $\|\Lambda\|_{1}=1$ by definition. Moreover, (\ref{eq:srank-final-support})
and~(\ref{eq:srank-final-minsmooth}) guarantee that $\Lambda$ is
a nonnegative function with support contained in $X|_{\leq\theta},$
so that $\Lambda\in\Distribution(X|_{\leq\theta}).$ The orthogonality
property~(\ref{eq:srank-orthog}) follows from~(\ref{eq:srank-final-orth}),
whereas the min-smoothness property~(\ref{eq:srank-min-smoothness})
follows from~(\ref{eq:srank-final-ell1}) and~(\ref{eq:srank-final-minsmooth}).
\end{proof}
\noindent We now translate the new amplification theorem from $\NN^{nm}|_{\leq\theta}$
to the hypercube, using the input transformation scheme of Theorem~\ref{thm:input-compression}. 
\begin{thm}
\label{thm:degthr-composition-min-smooth-Boolean-input}Let $C\geq1$
be the absolute constant from Theorem~\emph{\ref{thm:min-smooth-amplification}.}
Fix positive integers $n,m,\theta$ with $\theta\geq Cnm\log(2nm).$
Then there is an $($explicitly given$)$ transformation $H\colon\zoo^{6\theta\lceil\log(nm+1)\rceil}\to\zoon,$
computable by an AND-OR-AND circuit of polynomial size with bottom
fan-in at most $6\lceil\log(nm+1)\rceil,$ such that
\begin{align}
\degthr(f\circ H,\gamma\theta^{-30d}) & \geq d\lceil\log(nm+1)+1\rceil,\label{eq:min-smooth-composition}\\
\degthr(f\circ\neg H,\gamma\theta^{-30d}) & \geq d\lceil\log(nm+1)+1\rceil\label{eq:min-smooth-composition-negated}
\end{align}
for all Boolean functions $f\colon\zoon\to\zoo,$ all real numbers
$\gamma\in[0,1],$ and all positive integers
\[
d\leq\frac{1}{C}\min\left\{ m\degthr(f,\gamma),\frac{\theta}{4m\log(2\theta)}\right\} .
\]
\end{thm}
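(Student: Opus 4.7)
The approach is to combine the amplification theorem for smooth threshold degree (Theorem~\ref{thm:min-smooth-amplification}) with the input compression of Theorem~\ref{thm:input-compression}, mirroring the proof of Theorem~\ref{thm:degthr-composition-Boolean-input} but with extra care so that min-smoothness survives the pullback along the encoding.

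First I will set $R=\theta$ and invoke Theorem~\ref{thm:input-compression} to obtain a surjection $G\colon\zoo^N\to\NN^{nm}|_{\leq\theta}$ with $N=6\theta\lceil\log(nm+1)\rceil$. Item~(ii) of that theorem makes $\OR^*_\theta(G(x)_i)$ a DNF of bottom fan-in at most $6\lceil\log(nm+1)\rceil$; capping groups of $m$ such DNFs with an $\AND_m$ gate yields an AND-OR-AND transformation $H\colon\zoo^N\to\zoon$ with the claimed complexity, and by construction $f\circ H=(f\circ\MP^*_{m,R})\circ G$.

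To build a min-smooth dual witness for $f\circ H$, I will apply Theorem~\ref{thm:min-smooth-amplification} with $r:=m^2$ and target distribution $\Lambda^*$ chosen to be the pushforward of uniform measure on $\zoo^N$ through $G$, namely $\Lambda^*(v)=|G^{-1}(v)|/2^N$. With $r=m^2$ the two occurrences of $\degthr(f,\gamma)$ in the hypothesis of Theorem~\ref{thm:min-smooth-amplification} collapse to $m\degthr(f,\gamma)$, and the remaining constraint $d\leq\theta/(C\sqrt{r}\log(2nmR))$ reduces to the hypothesis of the present theorem after using $nm\leq\theta$ to bound $\log(2nm\theta)$ by a constant multiple of $\log(2\theta)$. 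Theorem~\ref{thm:min-smooth-amplification} then yields a probability distribution $\Lambda$ on $\NN^{nm}|_{\leq\theta}$ satisfying $\orth((-1)^{f\circ\MP^*_{m,R}}\cdot\Lambda)\geq d$ and $\Lambda\geq\gamma(CnmR)^{-9d}\Lambda^*$.

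Finally I define $\mu(x)=\Lambda(G(x))/|G^{-1}(G(x))|$ on $\zoo^N$, which is a probability distribution because the fibres of $G$ partition $\zoo^N$. Min-smoothness is then automatic from the choice of $\Lambda^*$:
\[ \mu(x)\geq\gamma(CnmR)^{-9d}\cdot\frac{\Lambda^*(G(x))}{|G^{-1}(G(x))|}=\gamma(CnmR)^{-9d}\,2^{-N}\geq\gamma\theta^{-30d}\,2^{-N}, \]
where the last inequality uses $nm\leq\theta$ together with the lower bound $\theta\geq Cnm\log(2nm)$ to absorb the constant $C$ into the exponent. For orthogonality, any polynomial $p$ on $\zoo^N$ of degree less than $d\lceil\log(nm+1)+1\rceil$ gives
\[ \langle(-1)^{f\circ H}\cdot\mu,\,p\rangle=\sum_{v\in\NN^{nm}|_{\leq\theta}}(-1)^{(f\circ\MP^*_{m,R})(v)}\Lambda(v)\,\Exp_{G^{-1}(v)}p=0, \]
since Theorem~\ref{thm:input-compression}(i) makes $v\mapsto\Exp_{G^{-1}(v)}p$ a polynomial of degree less than $d$, which Theorem~\ref{thm:min-smooth-amplification} guarantees is orthogonal to $(-1)^{f\circ\MP^*_{m,R}}\cdot\Lambda$. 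This establishes~(\ref{eq:min-smooth-composition}); the companion bound~(\ref{eq:min-smooth-composition-negated}) follows from the first one applied to $f'(z):=f(\neg z)$, since negating inputs preserves $\degthr(\cdot,\gamma)$ and $f\circ\neg H=f'\circ H$. The substantive technical work all sits inside Theorem~\ref{thm:min-smooth-amplification}; the only new idea here is the choice of $\Lambda^*$ that makes the pullback uniformly min-smooth on the hypercube.
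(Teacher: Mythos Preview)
Your argument is essentially identical to the paper's: invoke Theorem~\ref{thm:input-compression} to get $G$, set $r=m^2$, take $\Lambda^*$ to be the pushforward of the uniform distribution through $G$, apply Theorem~\ref{thm:min-smooth-amplification}, and pull the resulting $\Lambda$ back along $G$ to a min-smooth dual witness on $\zoo^N$; the negated case is handled by the same symmetry observation. The one slip is your choice $R=\theta$: Theorem~\ref{thm:min-smooth-amplification} requires $R\geq r=m^2$, which is not guaranteed by $\theta\geq Cnm\log(2nm)$ (take $n=1$ and $m$ large). The paper sets $R=\max\{\theta,m^2\}$, which fixes this at no cost: since $R\geq\theta$ one still has $\{0,\ldots,R\}^{nm}|_{\leq\theta}=\NN^{nm}|_{\leq\theta}$ and $\MP^*_{m,R}=\MP^*_{m,\theta}$ on that set, and because $m\leq nm\leq\theta$ the bounds $\log(2nmR)=O(\log(2\theta))$ and $(CnmR)^{9d}\leq\theta^{30d}$ continue to hold.
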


\begin{proof}
Negating a function's input bits has no effect on its $\gamma$-smooth
threshold degree for any $0\leq\gamma\leq1$, so that $f(x_{1},x_{2},\ldots,x_{n})$
and $f(\neg x_{1},\neg x_{2},\ldots,\neg x_{n})$ both have $\gamma$-smooth
threshold degree $\degthr(f,\gamma).$ Therefore, proving~(\ref{eq:min-smooth-composition})
for all $f$ will also settle~(\ref{eq:min-smooth-composition-negated})
for all $f.$ In what follows, we focus on the former.

Theorem~\ref{thm:input-compression} constructs an explicit surjection
$G\colon\zoo^{N}\to\NN^{nm}|_{\leq\theta}$ on $N=6\theta\lceil\log(nm+1)\rceil$
variables with the following two properties: 
\begin{enumerate}[topsep=3mm]
\item for every coordinate $i=1,2,\ldots,nm,$ the mapping $x\mapsto\OR_{\theta}^{*}(G(x)_{i})$
is computable by a DNF formula of size $(nm\theta)^{O(1)}=\theta^{O(1)}$
with bottom fan-in at most $6\lceil\log(nm+1)\rceil$;
\item for any polynomial $p,$ the map $v\mapsto\Exp_{G^{-1}(v)}p$ is a
polynomial on $\NN^{nm}|_{\leq\theta}$ of degree at most $(\deg p)/\lceil\log(nm+1)+1\rceil$.
\end{enumerate}
Consider the composition $F=(f\circ\MP_{m,\theta}^{*})\circ G.$ Then
\begin{align*}
F & =(f\circ(\AND_{m}\circ\OR_{\theta}^{*}))\circ G\\
 & =f\circ((\underbrace{\AND_{m}\circ\OR_{\theta}^{*},\ldots,\AND_{m}\circ\OR_{\theta}^{*}}_{n})\circ G),
\end{align*}
which by property~(i) of $G$ means that $F$ is the composition
of $f$ and an AND-OR-AND circuit $H$ of size $(nm\theta)^{O(1)}=\theta^{O(1)}$
and bottom fan-in at most $6\lceil\log(nm+1)\rceil$. Hence, the proof
will be complete once we show that
\begin{equation}
\degthr(F,\gamma\theta^{-30d})\geq d\lceil\log(nm+1)+1\rceil.\label{eq:degthr-composition-needed-bound-min-smooth}
\end{equation}

Define $r=m^{2}$ and $R=\max\{\theta,r\},$ and consider the probability
distribution on $\{0,1,2,\ldots,R\}^{nm}|_{\leq\theta}=\NN^{nm}|_{\leq\theta}$
given by $\Lambda^{*}(v)=|G^{-1}(v)|/2^{N}.$ Then Theorem~\ref{thm:min-smooth-amplification}
constructs a probability distribution $\Lambda$ on $\NN^{nm}|_{\leq\theta}$
such that
\begin{align}
 & \orth((-1)^{f\circ\MP_{m,R}^{*}}\cdot\Lambda)\geq d,\label{eq:srank-orthog-min-smooth-restated-R}\\
 & \Lambda\geq\gamma\theta^{-30d}\;\Lambda^{*}.\label{eq:srank-min-smoothness-min-smooth-restated}
\end{align}
In view of $R\geq\theta,$ inequality~(\ref{eq:srank-orthog-min-smooth-restated-R})
can be restated as
\begin{equation}
\orth((-1)^{f\circ\MP_{m,\theta}^{*}}\cdot\Lambda)\geq d.\label{eq:srank-orthog-min-smooth-restated}
\end{equation}
Define
\[
\lambda=\sum_{v\in\NN^{nm}|_{\leq\theta}}\Lambda(v)\cdot\frac{\1_{G^{-1}(v)}}{|G^{-1}(v)|},
\]
where $\1_{G^{-1}(v)}$ denotes as usual the characteristic function
of the set $G^{-1}(v).$ Clearly, $\lambda$ is a probability distribution
on $\zoo^{N}$. Moreover,
\begin{align}
\lambda & \geq\gamma\theta^{-30d}\sum_{v\in\NN^{nm}|_{\leq\theta}}\Lambda^{*}(v)\cdot\frac{\1_{G^{-1}(v)}}{|G^{-1}(v)|}\nonumber \\
 & =\gamma\theta^{-30d}\sum_{v\in\NN^{nm}|_{\leq\theta}}\frac{|G^{-1}(v)|}{2^{N}}\cdot\frac{\1_{G^{-1}(v)}}{|G^{-1}(v)|}\nonumber \\
 & =\gamma\theta^{-30d}\cdot\frac{\1_{\zoo^{N}}}{2^{N}},\label{eq:amplification-min-smooth-lambda-min-smooth}
\end{align}
where the first two steps use~(\ref{eq:srank-min-smoothness-min-smooth-restated})
and the definition of $\Lambda^{*},$ respectively. 

Finally, we examine the orthogonal content of $(-1)^{F}\cdot\lambda.$
Let $p\colon\Re^{N}\to\Re$ be any polynomial of degree less than
$d\lceil\log(nm+1)+1\rceil.$ Then by property~(ii) of $G,$ the
mapping $p^{*}\colon v\mapsto\Exp_{G^{-1}(v)}p$ is a polynomial on
$\NN^{nm}|_{\leq\theta}$ of degree less than $d.$ As a result, 
\begin{align*}
\langle(-1)^{F}\cdot\lambda,p\rangle & =\langle(-1)^{(f\circ\MP_{m,\theta}^{*})\circ G}\cdot\lambda,p\rangle\\
 & =\sum_{v\in\NN^{nm}|_{\leq\theta}}\,\sum_{G^{-1}(v)}(-1)^{(f\circ\MP_{m,\theta}^{*})\circ G}\cdot\lambda\cdot p\\
 & =\sum_{v\in\NN^{nm}|_{\leq\theta}}\,(-1)^{(f\circ\MP_{m,\theta}^{*})(v)}\sum_{G^{-1}(v)}\lambda\cdot p\\
 & =\sum_{v\in\NN^{nm}|_{\leq\theta}}\,(-1)^{(f\circ\MP_{m,\theta}^{*})(v)}\Lambda(v)\Exp_{G^{-1}(v)}p\\
 & =\langle(-1)^{f\circ\MP_{m,\theta}^{*}}\cdot\Lambda,p^{*}\rangle\\
 & =0,
\end{align*}
where the last step uses~(\ref{eq:srank-orthog-min-smooth-restated})
and $\deg p^{*}<d.$ We conclude that $\orth((-1)^{F}\cdot\lambda)\geq d\lceil\log(nm+1)+1\rceil,$
which along with~(\ref{eq:amplification-min-smooth-lambda-min-smooth})
settles~(\ref{eq:degthr-composition-needed-bound-min-smooth}).
\end{proof}

\subsection{\label{subsec:The-smooth-threshold-degree-AC0}The smooth threshold
degree of AC\protect\textsuperscript{0}}

By Theorem~\ref{thm:smooth-MP}, the composition $\AND_{n^{1/3}}\circ\OR_{n^{2/3}}$
has $\exp(-O(n^{1/3}))$-smooth threshold degree $\Omega(n^{1/3}).$
The objective of this section is to strengthen this result to a near-optimal
bound. For any $\epsilon>0,$ we will construct a constant-depth circuit
$f\colon\zoon\to\zoo$ with $\exp(-O(n^{1-\epsilon}))$-smooth threshold
degree $\Omega(n^{1-\epsilon}).$ This construction is likely to find
applications in future work, in addition to its use in this paper
to obtain a lower bound on the sign-rank of $\classAC^{0}$. The proof
proceeds by induction, with the amplification theorem for smooth threshold
degree (Theorem~\ref{thm:degthr-composition-min-smooth-Boolean-input})
applied repeatedly to construct increasingly harder circuits. To simplify
the exposition, we isolate the inductive step in the following lemma.
\begin{lem}
\label{lem:smooth-inductive-step}Let $f\colon\zoon\to\zoo$ be a
Boolean circuit of size $s,$ depth $d\geq0,$ and smooth threshold
degree
\begin{equation}
\degthr\left(f,\exp\left(-c'\cdot\frac{n^{1-\alpha}}{\log^{\beta}n}\right)\right)\geq c''\cdot\frac{n^{1-\alpha}}{\log^{\beta}n},\label{eq:f-smooth-thrdeg}
\end{equation}
for some constants $\alpha\in[0,1],$ $\beta\geq0,$ $c'>0,$ and
$c''>0.$ Then $f$ can be transformed in polynomial time into a Boolean
circuit $F\colon\zoo^{N}\to\zoo$ on $N=\Theta(n^{1+\alpha}\log^{2+\beta}n)$
variables that has size $s+N^{O(1)},$ depth at most $d+3,$ bottom
fan-in $O(\log N),$ and smooth threshold degree
\begin{equation}
\degthr\left(F,\exp\left(-C'\cdot\frac{N^{\frac{1}{1+\alpha}}}{\log^{\frac{1-\alpha+\beta}{1+\alpha}}N}\right)\right)\geq C''\cdot\frac{N^{\frac{1}{1+\alpha}}}{\log^{\frac{1-\alpha+\beta}{1+\alpha}}N},\label{eq:F-smooth-thrdeg}
\end{equation}
where $C',C''>0$ are constants that depend only on $\alpha,\beta,c',c''$.
Moreover, if $d\geq1$ and $f$ is monotone with AND gates at the
bottom, then the depth of $F$ is at most $d+2.$
\end{lem}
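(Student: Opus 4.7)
The plan is to invoke the smooth threshold degree amplification theorem (Theorem \ref{thm:degthr-composition-min-smooth-Boolean-input}) with a carefully calibrated parameter setting, and to take $F$ to be the composition of $f$ with the resulting transformation. First I would observe that the target quantity $N^{1/(1+\alpha)}/\log^{(1-\alpha+\beta)/(1+\alpha)} N$, when rewritten in terms of $n$ using $N = \Theta(n^{1+\alpha}\log^{2+\beta} n)$ and the identity $(2+\beta) - (1-\alpha+\beta) = 1+\alpha$, simplifies to $\Theta(n \log n)$. Both the target degree and the exponent inside the target smoothness factor in \eqref{eq:F-smooth-thrdeg} therefore reduce to $\Theta(n\log n)$, so the task is to produce an amplified function whose smooth threshold degree and smoothness are both of that order.

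To this end I would set $m = \lceil n^\alpha \log^\beta n \rceil$, $\theta = \lceil C_0 \, n^{1+\alpha}\log^{\beta+1} n \rceil$ for a sufficiently large constant $C_0$, and choose $D$ to be the largest integer of order $n$ permitted by the hypothesis of Theorem \ref{thm:degthr-composition-min-smooth-Boolean-input}. Abbreviating $\gamma = \exp(-c' n^{1-\alpha}/\log^\beta n)$, this calibration makes both arguments of the $\min$ in the hypothesis of order $n$: one has $m\, \degthr(f,\gamma) \geq c'' m n^{1-\alpha}/\log^\beta n = \Theta(n)$ and $\theta/(4Cm \log(2\theta)) = \Theta(n)$, while the side condition $\theta \geq Cnm\log(2nm)$ is built into the definition of $\theta$. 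The theorem then yields an AND-OR-AND transformation $H\colon \{0,1\}^N \to \{0,1\}^n$ of polynomial size with bottom fan-in $O(\log(nm))$ and $N = 6\theta \lceil\log(nm+1)\rceil = \Theta(n^{1+\alpha}\log^{\beta+2} n)$, such that $F := f \circ H$ satisfies
\[
\degthr(F,\gamma\theta^{-30D}) \geq D\lceil\log(nm+1)+1\rceil = \Theta(n\log n).
\]
Since $-\ln \gamma = c' n^{1-\alpha}/\log^\beta n$ is negligible next to $30 D\log\theta = \Theta(n\log n)$, the smoothness parameter $\gamma\theta^{-30D}$ also equals $\exp(-\Theta(n\log n))$, delivering the bound claimed in \eqref{eq:F-smooth-thrdeg}.

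For the circuit complexity, the size of $F$ is $s + N^{O(1)}$, its bottom fan-in equals $O(\log(nm)) = O(\log N)$ (inherited from $H$), and its depth is at most $d + 3$, the depth of $f$ plus the depth of $H$. When $d \geq 1$ and $f$ is monotone with AND gates at its bottom, the top AND layer of $H$ merges with the bottom AND layer of $f$, saving one level and yielding depth $d + 2$. The principal obstacle in this proof is the parameter calibration itself: because $D$ appears both in the lower bound on the smooth threshold degree and inside the smoothness factor $\theta^{-30D}$, the parameters $m$, $\theta$, and $D$ must be chosen jointly so as to hit the desired ambient dimension $N$, the target degree $\Theta(n\log n)$, and the target smoothness $\exp(-\Theta(n\log n))$ simultaneously. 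Once this balance is struck, the argument is a direct application of Theorem \ref{thm:degthr-composition-min-smooth-Boolean-input} with no further ingredients required.
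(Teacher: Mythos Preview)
Your proposal is correct and follows essentially the same approach as the paper: both apply Theorem~\ref{thm:degthr-composition-min-smooth-Boolean-input} with the parameter setting $m=\lceil n^{\alpha}\log^{\beta}n\rceil$, $\theta=\Theta(n^{1+\alpha}\log^{\beta+1}n)$, and $\gamma=\exp(-c'\,n^{1-\alpha}/\log^{\beta}n)$, take $F=f\circ H$, and read off the circuit-complexity claims from the structure of $H$. Your write-up is in fact more explicit than the paper's, which simply asserts that~\eqref{eq:F-smooth-thrdeg} holds ``for some $C',C''>0$''; your observation that the target quantity in~\eqref{eq:F-smooth-thrdeg} reduces to $\Theta(n\log n)$ and your verification that both arguments of the $\min$ are $\Theta(n)$ make the derivation transparent.
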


\begin{proof}
Let $C\geq1$ be the absolute constant from Theorem~\ref{thm:min-smooth-amplification}.
Apply Theorem~\ref{thm:degthr-composition-min-smooth-Boolean-input}
with
\begin{align*}
m & =\lceil n^{\alpha}\log^{\beta}n\rceil,\\
\theta & =\lceil Cnm\log(2nm)\rceil,\\
\gamma & =\exp\left(-c'\cdot\frac{n^{1-\alpha}}{\log^{\beta}n}\right)
\end{align*}
to obtain a function $H\colon\zoo^{N}\to\zoo^{n}$ on $N=\Theta(n^{1+\alpha}\log^{2+\beta}n)$
variables such that the composition $F=f\circ H$ satisfies~(\ref{eq:F-smooth-thrdeg})
for some $C',C''>0$ that depend only on $\alpha,\beta,c',c'',$ and
furthermore $H$ is computable by an AND-OR-AND circuit of polynomial
size and bottom fan-in $O(\log N)$. Clearly, the composition $F=f\circ H$
is a circuit of size $s+N^{O(1)}$, depth at most $d+3,$ and bottom
fan-in $O(\log N)$. Moreover, if $d\geq1$ and the circuit for $f$
is monotone with AND gates at the bottom level, then the bottom level
of $f$ can be merged with the top level of $H$ to reduce the depth
of $F=f\circ H$ to at most $(d+3)-1=d+2.$
\end{proof}
\begin{cor}
\label{cor:smooth-inductive-step}Fix constants $\alpha\in[0,1],$
$\beta\geq0,$ $c'>0,$ $c''>0,$ and $d\geq0.$ Let $\{f_{n}\}_{n=1}^{\infty}$
be a Boolean circuit family in which $f_{n}\colon\zoon\to\zoo$ has
polynomial size, depth at most $d,$ and smooth threshold degree
\begin{equation}
\degthr\left(f_{n},\exp\left(-c'\cdot\frac{n^{1-\alpha}}{\log^{\beta}n}\right)\right)\geq c''\cdot\frac{n^{1-\alpha}}{\log^{\beta}n}\label{eq:f-smooth-thrdeg-1}
\end{equation}
for all $n\geq2.$ Then there is a Boolean circuit family $\{F_{N}\}_{N=1}^{\infty}$
in which $F_{N}\colon\zoo^{N}\to\zoo$ has polynomial size, depth
at most $d+3,$ bottom fan-in $O(\log N),$ and smooth threshold degree
\begin{equation}
\degthr\left(F_{N},\exp\left(-C'\cdot\frac{N^{\frac{1}{1+\alpha}}}{\log^{\frac{1-\alpha+\beta}{1+\alpha}}N}\right)\right)\geq C''\cdot\frac{N^{\frac{1}{1+\alpha}}}{\log^{\frac{1-\alpha+\beta}{1+\alpha}}N}\label{eq:F-smooth-thrdeg-1}
\end{equation}
for all $N\geq2,$ where $C',C''>0$ are constants that depend only
on $\alpha,\beta,c',c''$. Moreover, if $d\geq1$ and each $f_{n}$
is monotone with AND gates at the bottom, then the depth of each $F_{N}$
is at most $d+2.$
\end{cor}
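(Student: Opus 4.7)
The plan is to apply Lemma~\ref{lem:smooth-inductive-step} to each $f_n$ individually and handle the mismatch between the resulting number of variables and the prescribed value $N$ via padding, invoking Proposition~\ref{prop:smooth-degthr-padding} to preserve smooth threshold degree. Thus the corollary is essentially a ``family version'' of the lemma, with the bulk of the work consisting of routine translation between the lemma's output variable count and the corollary's target variable count.

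For each $N$ below a constant threshold $N_{0}=N_{0}(\alpha,\beta,c',c'')$ to be determined, I take $F_{N}(x_{1},\ldots,x_{N})=x_{1}$. This is trivially a polynomial-size Boolean circuit of depth~$0$ (vacuously satisfying all circuit constraints), and Fact~\ref{fact:fact-degthr-min-smooth} together with Proposition~\ref{prop:smooth-degthr-padding} give $\degthr(F_{N},\tfrac{1}{2})\geq1$. By choosing $C''$ small enough and $C'$ large enough (both depending only on $\alpha,\beta,c',c''$), this trivial bound suffices to verify~(\ref{eq:F-smooth-thrdeg-1}) for all $2\leq N\leq N_{0}$.

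For $N>N_{0}$, I let $n=n(N)$ be the largest positive integer for which the variable count $M=M(n)=\Theta(n^{1+\alpha}\log^{2+\beta}n)$ produced by applying Lemma~\ref{lem:smooth-inductive-step} to $f_{n}$ satisfies $M(n)\leq N$. Since $M(n+1)/M(n)\to1$, this choice forces $M(n)=\Theta(N)$, and equivalently $n=\Theta(N^{1/(1+\alpha)}\log^{-(2+\beta)/(1+\alpha)}N)$; in particular, by taking $N_{0}$ large enough, I may assume $n\geq2$, so the hypothesis~(\ref{eq:f-smooth-thrdeg-1}) applies to $f_{n}$. I then invoke Lemma~\ref{lem:smooth-inductive-step} on $f_{n}$ to obtain a circuit $F'\colon\zoo^{M(n)}\to\zoo$ and define $F_{N}(x_{1},\ldots,x_{N})=F'(x_{1},\ldots,x_{M(n)})$. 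The circuit complexity transfers immediately: polynomial size, depth at most $d+3$ (or $d+2$ in the monotone-AND-bottom regime), and bottom fan-in $O(\log M(n))=O(\log N)$.

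For the smooth threshold degree bound, Proposition~\ref{prop:smooth-degthr-padding} gives $\degthr(F_{N},\gamma)\geq\degthr(F',\gamma)$ for every $\gamma\in[0,1]$. The lemma furnishes
\[
\degthr\!\left(F',\,\exp\!\left(-C_{1}\cdot\frac{M^{\frac{1}{1+\alpha}}}{\log^{\frac{1-\alpha+\beta}{1+\alpha}}M}\right)\right)\;\geq\;C_{2}\cdot\frac{M^{\frac{1}{1+\alpha}}}{\log^{\frac{1-\alpha+\beta}{1+\alpha}}M}
\]
for some $C_{1},C_{2}>0$ depending only on $\alpha,\beta,c',c''$. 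Substituting $M=\Theta(N)$, one has $M^{1/(1+\alpha)}/\log^{(1-\alpha+\beta)/(1+\alpha)}M=\Theta(N^{1/(1+\alpha)}/\log^{(1-\alpha+\beta)/(1+\alpha)}N)$, and rescaling the multiplicative constants (enlarging $C'$ and shrinking $C''$) together with the monotonicity of $\degthr(\cdot,\gamma)$ in $\gamma$ yields~(\ref{eq:F-smooth-thrdeg-1}).

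The step most prone to bookkeeping errors is the last one: verifying that the exponents on both $N$ and $\log N$ coming out of the substitution $n\asymp N^{1/(1+\alpha)}\log^{-(2+\beta)/(1+\alpha)}N$ agree with those prescribed by~(\ref{eq:F-smooth-thrdeg-1}), and that the constants $C',C''$ can be chosen uniformly in $N$. Fortunately, the lemma already states its bound in terms of its own output variable count, so this reduces to the elementary identity $M=\Theta(N)$ and a careful accounting of how constants propagate; no delicate new ideas are required beyond those already used in the lemma.
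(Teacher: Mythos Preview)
Your proposal is correct and follows essentially the same route as the paper's proof: handle small $N$ with dictator functions via Fact~\ref{fact:fact-degthr-min-smooth}, and for large $N$ choose an appropriate $n$, apply Lemma~\ref{lem:smooth-inductive-step} to $f_{n}$, and pad up to $N$ variables using Proposition~\ref{prop:smooth-degthr-padding}. The only cosmetic difference is that the paper picks $n$ by an explicit formula $n=\lfloor (cN/\log^{2+\beta}N)^{1/(1+\alpha)}\rfloor$, whereas you define $n$ implicitly as the largest integer with $M(n)\leq N$; both choices yield $M(n)=\Theta(N)$ and the rest is identical.
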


\begin{proof}
It suffices to construct $F_{N}$ for $N$ larger than a constant
of our choosing. The first few circuits of the family $\{F_{N}\}_{N=1}^{\infty}$
can then be taken to be the ``dictator'' functions $x\mapsto x_{1},$
each of which has $1/2$-smooth threshold degree~$1$ by Fact~\ref{fact:fact-degthr-min-smooth}. 

Let $c=c(\alpha,\beta,c',c'')>0$ be a sufficiently small constant.
For any given integer $N$ larger than a certain constant, apply Lemma~\ref{lem:smooth-inductive-step}
to $f_{\lfloor(cN/\log^{2+\beta}N)^{1/(1+\alpha)}\rfloor}$ to obtain
a circuit $F$ on at most $N$ variables with smooth threshold degree~(\ref{eq:F-smooth-thrdeg}),
for some constants $C',C''>0$ that depend only on $\alpha,\beta,c',c''$.
Then Proposition~\ref{prop:smooth-degthr-padding} forces~(\ref{eq:F-smooth-thrdeg-1})
for the circuit $F_{N}\colon\zoo^{N}\to\zoo$ given by $F_{N}(x_{1},x_{2},\ldots,x_{N})=F(x_{1},x_{2},\ldots).$
The claims regarding the size, depth, and fan-ins of each $F_{N}$
follow directly from Lemma~\ref{lem:smooth-inductive-step}.
\end{proof}
We now obtain our lower bounds on the smooth threshold degree of $\classAC^{0}$.
We present two incomparable theorems here, both of which apply Corollary~\ref{cor:smooth-inductive-step}
in a recursive manner but with different base cases.
\begin{thm}
\label{thm:degthr-ac0-min-smooth-3k}Let $k\geq0$ be a given integer.
Then there is an $($explicitly given$)$ Boolean circuit family $\{f_{n}\}_{n=1}^{\infty},$
where $f_{n}\colon\zoon\to\zoo$ has polynomial size, depth~$3k,$
bottom fan-in $O(\log n),$ and smooth threshold degree
\begin{equation}
\degthr\left(f_{n},\exp\left(-c'\cdot\frac{n^{1-\frac{1}{k+1}}}{\log^{\frac{k(k-1)}{2(k+1)}}n}\right)\right)\geq c''\cdot\frac{n^{1-\frac{1}{k+1}}}{\log^{\frac{k(k-1)}{2(k+1)}}n}\label{eq:degthr-ac0-min-smooth}
\end{equation}
for some constants $c',c''>0$ and all $n\geq2.$
\end{thm}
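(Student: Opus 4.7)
The plan is to argue by induction on $k$, starting with the trivial dictator and applying Corollary~\ref{cor:smooth-inductive-step} once per step to progressively amplify the smooth threshold degree, with each application increasing the depth by exactly~$3$.

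Set $\alpha_k = 1/(k+1)$ and $\beta_k = k(k-1)/(2(k+1))$, so that the target bound~(\ref{eq:degthr-ac0-min-smooth}) reads $\Omega(n^{1-\alpha_k}/\log^{\beta_k}n)$. A direct calculation, which I would verify explicitly in the writeup, shows
\[
\frac{\alpha_k}{1+\alpha_k} = \alpha_{k+1}, \qquad \frac{1-\alpha_k+\beta_k}{1+\alpha_k} = \beta_{k+1}.
\]
Thus a single application of Corollary~\ref{cor:smooth-inductive-step} advances the parameter pair $(\alpha_k,\beta_k)$ to $(\alpha_{k+1},\beta_{k+1})$, exactly matching the pattern prescribed by the theorem.

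For the base case $k=0$, take $f_n(x)=x_1$. This is a depth-$0$ monotone circuit with bottom fan-in trivially $O(\log n)$, and Fact~\ref{fact:fact-degthr-min-smooth} yields $\degthr(f_n,1/2)\geq 1$, which matches~(\ref{eq:degthr-ac0-min-smooth}) at $(\alpha_0,\beta_0)=(1,0)$ with $c'=\ln 2$ and $c''=1$. For the inductive step, assume $\{f_n\}_{n=1}^{\infty}$ is an explicit polynomial-size family of depth $3k$ with bottom fan-in $O(\log n)$ satisfying~(\ref{eq:degthr-ac0-min-smooth}) with parameters $(\alpha_k,\beta_k)$ for some constants $c_k',c_k''>0$. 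Feeding this family into Corollary~\ref{cor:smooth-inductive-step} with $\alpha=\alpha_k$, $\beta=\beta_k$, and input depth $d=3k$ produces an explicit polynomial-size family $\{F_N\}_{N=1}^{\infty}$ of depth at most $3k+3=3(k+1)$ and bottom fan-in $O(\log N)$ satisfying~(\ref{eq:degthr-ac0-min-smooth}) with parameters $(\alpha_{k+1},\beta_{k+1})$ for some new constants $c_{k+1}',c_{k+1}''>0$. If the resulting depth is strictly less than $3(k+1)$, I would pad the circuit at the top with alternating dummy single-input gates (e.g., AND with a fresh constant~$1$ followed by OR with a fresh constant~$0$) to land at depth exactly $3(k+1)$; this neither changes the function computed nor its smooth threshold degree, and leaves bottom fan-in unchanged.

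The only substantive obstacle is the arithmetic verification that Corollary~\ref{cor:smooth-inductive-step} indeed maps $(\alpha_k,\beta_k)$ to $(\alpha_{k+1},\beta_{k+1})$, which is straightforward algebra that I would include for completeness. The constants $c_k',c_k''$ degrade at each step but remain positive for every fixed $k$, which is all we need since $k$ is a fixed parameter of the theorem. This completes the induction and yields the claimed family at depth exactly $3k$.
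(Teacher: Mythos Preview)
Your proposal is correct and follows essentially the same approach as the paper: induction on $k$ with the dictator as the base case and Corollary~\ref{cor:smooth-inductive-step} applied once per step with $\alpha=1/(k+1)$ and $\beta=k(k-1)/(2(k+1))$. Your explicit verification that the corollary maps $(\alpha_k,\beta_k)$ to $(\alpha_{k+1},\beta_{k+1})$ and your depth-padding remark are additions the paper omits, but they do not alter the argument.
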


\begin{proof}
The proof is by induction on $k.$ The base case $k=0$ corresponds
to the family of ``dictator'' functions $x\mapsto x_{1},$ each
of which has $1/2$-smooth threshold degree~$1$ by Fact~\ref{fact:fact-degthr-min-smooth}.
For the inductive step, fix an explicit circuit family $\{f_{n}\}_{n=1}^{\infty}$
in which $f_{n}\colon\zoon\to\zoo$ has polynomial size, depth~$3k,$
and smooth threshold degree~(\ref{eq:degthr-ac0-min-smooth}) for
some constants $c',c''>0$. Then taking $\alpha=\frac{1}{k+1}$ and
$\beta=\frac{k(k-1)}{2(k+1)}$ in Corollary~\ref{cor:smooth-inductive-step}
produces an explicit circuit family $\{F_{n}\}_{n=1}^{\infty}$ in
which $F_{n}\colon\zoon\to\zoo$ has polynomial size, depth~$3k+3=3(k+1),$
bottom fan-in $O(\log n),$ and smooth threshold degree 
\[
\degthr\left(F_{n},\exp\left(-C'\cdot\frac{n^{\frac{k+1}{k+2}}}{\log^{\frac{k(k+1)}{2(k+2)}}n}\right)\right)\geq C''\cdot\frac{n^{\frac{k+1}{k+2}}}{\log^{\frac{k(k+1)}{2(k+2)}}n}
\]
for some constants $C',C''>0$ and all $n\geq2.$
\end{proof}
\begin{thm}
\label{thm:degthr-ac0-min-smooth-3k-plus-1}Let $k\geq1$ be a given
integer. Then there is an $($explicitly given$)$ Boolean circuit
family $\{f_{n}\}_{n=1}^{\infty},$ where $f_{n}\colon\zoon\to\zoo$
has polynomial size, depth~$3k+1,$ bottom fan-in $O(\log n),$ and
smooth threshold degree
\begin{equation}
\degthr\left(f_{n},\exp\left(-c'\cdot\frac{n^{1-\frac{2}{2k+3}}}{\log^{\frac{k^{2}}{2k+3}}n}\right)\right)\geq c''\cdot\frac{n^{1-\frac{2}{2k+3}}}{\log^{\frac{k^{2}}{2k+3}}n}\label{eq:degthr-ac0-min-smooth-3}
\end{equation}
for some constants $c',c''>0$ and all $n\geq2.$
\end{thm}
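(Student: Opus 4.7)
My plan is to mimic the inductive proof of Theorem \ref{thm:degthr-ac0-min-smooth-3k} but start from a stronger, depth-$2$ base case so that $k$ iterations of the amplification machinery land at depth $3k+1$ rather than $3k$. The induction follows the same recursion on $(\alpha,\beta)$ used throughout the previous theorem, and the parameter evolution is already what we want.

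For the base case $k=1$, I would take the Minsky--Papert function $\MP_{m,r}$ with $m=\lfloor n^{1/3}\rfloor$ and $r=\lfloor n^{2/3}\rfloor$, padded to $n$ variables by Proposition \ref{prop:smooth-degthr-padding}. By Theorem \ref{thm:smooth-MP}, this gives a depth-$2$, polynomial-size circuit with $\degthr(\cdot,\exp(-\Omega(n^{1/3})))\geq\Omega(n^{1/3})$, i.e.\ the initial parameters $\alpha_0=2/3$ and $\beta_0=0$. Using the fact that $\gamma$-smooth threshold degree is invariant under negation of inputs and output (both operations leave the dual-distribution formulation \eqref{eq:dual-degthr-min-smooth} unchanged up to sign), I replace $\MP_{m,r}=\AND_m\circ\OR_r$ by its De~Morgan dual $\OR_m\circ\AND_r$, which is monotone with $\wedge$-gates at the bottom. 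I then apply Corollary \ref{cor:smooth-inductive-step} once, invoking its monotone-speedup clause, yielding a polynomial-size circuit of depth $2+2=4=3(1)+1$ with bottom fan-in $O(\log N)$ and the required $\exp(-\Omega(N^{3/5}/(\log N)^{1/5}))$-smooth threshold degree $\Omega(N^{3/5}/(\log N)^{1/5})$, matching the theorem's formula at $k=1$.

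For the inductive step from $k$ to $k+1$, I would apply Corollary \ref{cor:smooth-inductive-step} once more to the family produced at stage $k$, this time without the monotone speedup (so depth increases by $3$, going from $3k+1$ to $3(k+1)+1$). A direct calculation then verifies that the updates $\alpha_{k+1}=\alpha_k/(1+\alpha_k)$ and $\beta_{k+1}=(1-\alpha_k+\beta_k)/(1+\alpha_k)$ preserve the closed-form expressions $\alpha_k=\tfrac{2}{2k+3}$ and $\beta_k=\tfrac{k^2}{2k+3}$: for $\alpha$ this is immediate, and for $\beta$ one gets $\beta_{k+1}=\frac{k^2+2k+1}{2k+5}=\frac{(k+1)^2}{2(k+1)+3}$. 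Polynomial size, bottom fan-in $O(\log N)$, and the exponential smoothness parameter are all carried through Corollary \ref{cor:smooth-inductive-step} automatically.

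The one delicate point worth flagging is the budgeting of depth: I can afford to use the ``$+2$ monotone'' clause of Corollary \ref{cor:smooth-inductive-step} only at the very first iteration, because the amplification transformation $H$ produced by Theorem \ref{thm:degthr-composition-min-smooth-Boolean-input} is an AND-OR-AND circuit that is not monotone (the input-compression surjection $G$ uses both positive and negative literals). Consequently the resulting depth-$4$ circuit at the $k=1$ stage is no longer monotone, and each subsequent iteration must pay the full $+3$ in depth, giving exactly $2+2+3(k-1)=3k+1$ overall. No other step poses a genuine difficulty: all the analytic work has been done in Theorem \ref{thm:smooth-MP} and Corollary \ref{cor:smooth-inductive-step}, and the present proof is essentially a depth-accounting argument on top of them.
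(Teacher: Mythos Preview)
Your proposal is correct and follows essentially the same approach as the paper: start from the De~Morgan dual $\OR_{\lfloor n^{1/3}\rfloor}\circ\AND_{\lfloor n^{2/3}\rfloor}$ of the Minsky--Papert function, invoke Theorem~\ref{thm:smooth-MP} for the base-case smooth threshold degree bound, apply Corollary~\ref{cor:smooth-inductive-step} once with the monotone depth-$(d+2)$ clause to reach depth~$4$, and then iterate the corollary with the generic depth-$(d+3)$ clause, verifying the same recursion on $(\alpha,\beta)$. Your explicit depth-budgeting remark (that the monotone speedup is available only at the first step because $H$ is not monotone) is a nice addition that the paper leaves implicit.
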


\begin{proof}
As with Theorem~\ref{thm:degthr-ac0-min-smooth-3k}, the proof is
by induction on $k.$ For the base case $k=1,$ consider the family
$\{g_{n}\}_{n=1}^{\infty}$ in which $g_{n}\colon\zoon\to\zoo$ is
given by 
\[
g_{n}(x)=\bigvee_{i=1}^{\lfloor n^{1/3}\rfloor}\bigwedge_{j=1}^{\lfloor n^{2/3}\rfloor}x_{i,j}.
\]
 Then
\begin{align*}
\degthr(g_{n},12^{-\lfloor n^{1/3}\rfloor-1}) & =\degthr(\OR_{\lfloor n^{1/3}\rfloor}\circ\AND_{\lfloor n^{2/3}\rfloor},12^{-\lfloor n^{1/3}\rfloor-1})\\
 & =\degthr(\MP_{\lfloor n^{1/3}\rfloor,\lfloor n^{2/3}\rfloor},12^{-\lfloor n^{1/3}\rfloor-1})\\
 & =\Omega(n^{1/3}),
\end{align*}
where the first step uses Proposition~\ref{prop:smooth-degthr-padding};
the second step is valid because a function's smooth threshold degree
remains unchanged when one negates the function or its input variables;
and the last step follows from Theorem~\ref{thm:smooth-MP}. Applying
Corollary~\ref{cor:smooth-inductive-step} to the circuit family
$\{g_{n}\}_{n=1}^{\infty}$ with $\alpha=2/3$ and $\beta=0$ yields
an explicit circuit family $\{G_{n}\}_{n=1}^{\infty}$ in which $G_{n}\colon\zoon\to\zoo$
has polynomial size, depth $2+2=4,$ bottom fan-in $O(\log n),$ and
smooth threshold degree
\[
\degthr\left(G_{n},\exp\left(-C'\cdot\frac{n^{3/5}}{\log^{1/5}n}\right)\right)\geq C''\cdot\frac{n^{3/5}}{\log^{1/5}n}
\]
for some constants $C',C''>0$ and all $n\geq2$. This new circuit
family $\{G_{n}\}_{n=1}^{\infty}$ establishes the base case.

For the inductive step, fix an integer $k\geq1$ and an explicit circuit
family $\{f_{n}\}_{n=1}^{\infty}$ in which $f_{n}\colon\zoon\to\zoo$
has polynomial size, depth $3k+1,$ and smooth threshold degree~(\ref{eq:degthr-ac0-min-smooth-3})
for some constants $c',c''>0.$ Applying Corollary~\ref{cor:smooth-inductive-step}
with $\alpha=2/(2k+3)$ and $\beta=k^{2}/(2k+3)$ yields an explicit
circuit family $\{F_{n}\}_{n=1}^{\infty}$, where $F_{n}\colon\zoon\to\zoo$
has polynomial size, depth $(3k+1)+3=3(k+1)+1,$ bottom fan-in $O(\log n),$
and smooth threshold degree
\[
\degthr\left(F_{n},\exp\left(-C'''\cdot\frac{n^{\frac{2k+3}{2k+5}}}{\log^{\frac{(k+1)^{2}}{2k+5}}n}\right)\right)\geq C''''\cdot\frac{n^{\frac{2k+3}{2k+5}}}{\log^{\frac{(k+1)^{2}}{2k+5}}n}
\]
for some constants $C''',C''''>0$ and all $n\geq2$. This completes
the inductive step.
\end{proof}

\subsection{\label{subsec:The-sign-rank-of-AC0}The sign-rank of AC\protect\textsuperscript{0}}

We have reached our main result on the sign-rank and unbounded-error
communication complexity of constant-depth circuits. The proof amounts
to lifting, by means of Theorem~\ref{thm:thrdeg-to-sign-rank}, the
lower bounds on smooth threshold degree in Theorems~\ref{thm:degthr-ac0-min-smooth-3k}
and~\ref{thm:degthr-ac0-min-smooth-3k-plus-1} to sign-rank lower
bounds. 
\begin{thm}
\label{thm:sign-rank-ac0-3k}Let $k\geq1$ be a given integer. Then
there is an $($explicitly given$)$ Boolean circuit family $\{F_{n}\}_{n=1}^{\infty},$
where $F_{n}\colon\zoon\times\zoon\to\zoo$ has polynomial size, depth
$3k,$ bottom fan-in $O(\log n),$ sign-rank
\begin{equation}
\srank(F_{n})=\exp\left(\Omega\left(n^{1-\frac{1}{k+1}}\cdot(\log n)^{-\frac{k(k-1)}{2(k+1)}}\right)\right),\label{eq:Fn-sign-rank}
\end{equation}
and unbounded-error communication complexity
\begin{equation}
\upp(F_{n})=\Omega\left(n^{1-\frac{1}{k+1}}\cdot(\log n)^{-\frac{k(k-1)}{2(k+1)}}\right).\label{eq:Fn-upp}
\end{equation}
\end{thm}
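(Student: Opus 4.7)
The plan is to combine the smooth threshold degree bound of Theorem~\ref{thm:degthr-ac0-min-smooth-3k} with the threshold-degree-to-sign-rank reduction of Theorem~\ref{thm:thrdeg-to-sign-rank}, and then invoke the Paturi--Simon identity of Theorem~\ref{thm:srank-upp} for the unbounded-error communication bound. Let $\{f_N\}_{N\geq 1}$ be the family given by Theorem~\ref{thm:degthr-ac0-min-smooth-3k}: each $f_N\colon\zoo^N\to\zoo$ has polynomial size, depth $3k$, bottom fan-in $O(\log N)$, and satisfies $\degthr(f_N,\gamma_N)\geq d_N$ with $d_N=\Theta(N^{1-1/(k+1)}\log^{-k(k-1)/(2(k+1))}N)$ and $\gamma_N=\exp(-\alpha d_N)$ for some absolute constant $\alpha>0$. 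I would fix an integer constant $m\geq 2$ with $\lfloor m/2\rfloor\geq \e^{2\alpha+2}$, set $n=mN$, and consider the two-party function $F_n = f_N\circ\OR_m\circ\AND_2$ on $\zoon\times\zoon$, where each $\AND_2$ gate reads one bit from each player.

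Theorem~\ref{thm:thrdeg-to-sign-rank} then gives
$$\srank(F_n)\;\geq\;\gamma_N\,\lfloor m/2\rfloor^{d_N/2}\;\geq\;\exp(\Omega(d_N))\;=\;\exp\!\left(\Omega\!\left(n^{1-\frac{1}{k+1}}\cdot(\log n)^{-\frac{k(k-1)}{2(k+1)}}\right)\right),$$
where the middle inequality uses the choice of $m$ to absorb the factor of $\gamma_N$ and the last equality uses $n=\Theta(N)$. This settles~\eqref{eq:Fn-sign-rank}, and~\eqref{eq:Fn-upp} follows at once from $\upp(F_n)\geq\log\srank(F_n)$ in Theorem~\ref{thm:srank-upp}.

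It remains to check the circuit parameters. Polynomial size and bottom fan-in $2$ are automatic, since $\OR_m\circ\AND_2$ is a constant-size depth-$2$ formula placed below $f_N$. The main subtlety is depth: naively, $F_n$ has depth $3k+2$, and sharpening this to $3k$ is the step I expect to require care. I would first replace $f_N$ by its De~Morgan dual so that its bottom layer consists of $\OR$ gates; these then merge with the $\OR_m$ layer of the gadget, saving one level. To save the second level, I would revisit the last amplification step in the inductive construction of $f_N$, where Theorem~\ref{thm:degthr-composition-min-smooth-Boolean-input} adds a depth-$3$ AND-OR-AND sub-circuit of bottom fan-in $O(\log N)$. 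Because the role of that sub-circuit is precisely to encode a weight-restricted computation into Boolean variables, it should be possible to design a slightly richer encoding gadget of the same depth $3$ and polynomial size that already threads in the $\OR_m\circ\AND_2$ composition, yielding the claimed total depth $3k$ with bottom fan-in $O(\log n)$ intact. This depth bookkeeping, rather than the analytic content, is the step I would be most careful about.
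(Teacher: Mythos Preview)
Your analytic argument is exactly the paper's: invoke Theorem~\ref{thm:degthr-ac0-min-smooth-3k} for the smooth threshold degree of $f_N$, choose a constant $m$ large enough that $\lfloor m/2\rfloor^{d_N/2}$ swamps $\gamma_N^{-1}$, apply Theorem~\ref{thm:thrdeg-to-sign-rank} to $f_N\circ\OR_m\circ\AND_2$, and finish with Theorem~\ref{thm:srank-upp}. That part is fine.

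The depth reduction is where you diverge, and your proposal is both more complicated and incomplete. Your De~Morgan step saves one level, but the second saving---redesigning the encoding gadget to absorb $\OR_m\circ\AND_2$---is left as a hope rather than an argument. The paper's route is much simpler and requires no changes to $f_N$'s construction. The point you are missing is that $m$ is a \emph{constant}. Look at the bottom four layers of the naive depth-$(3k+2)$ circuit: their fan-ins are $n^{O(1)}$, $O(\log n)$, $m$, and $2$. Each gate at the second-from-bottom level of $f_N$ (fan-in $O(\log n)$) therefore feeds on only $O(\log n)\cdot m\cdot 2=O(\log n)$ input bits total, so the depth-$3$ subcircuit below it computes a function of $O(\log n)$ variables. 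Rewrite each such function as a polynomial-size depth-$2$ CNF or DNF whose top gate matches the type of the $n^{O(1)}$-fan-in layer above; merging gives a polynomial-size depth-$2$ replacement for the original four bottom layers, with bottom fan-in $O(\log n)$. This collapses the depth from $3k+2$ to $3k$ directly, with no need to dualize $f_N$ or revisit its construction.
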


\begin{proof}
Theorem~\ref{thm:degthr-ac0-min-smooth-3k} constructs a circuit
family $\{f_{n}\}_{n=1}^{\infty}$ in which $f_{n}\colon\zoon\to\zoo$
has polynomial size, depth $3k,$ bottom fan-in $O(\log n),$ and
smooth threshold degree~(\ref{eq:degthr-ac0-min-smooth}) for some
constants $c',c''>0$ and all $n\geq2.$ Abbreviate $m=2\lceil\exp(4c'/c'')\rceil.$
For any $n\geq m,$ define $F_{n}=f_{\lfloor n/m\rfloor}\circ\OR_{m}\circ\AND_{2}$.
Then~(\ref{eq:Fn-sign-rank}) is immediate from~(\ref{eq:degthr-ac0-min-smooth})
and Theorem~\ref{thm:thrdeg-to-sign-rank}. Combining~(\ref{eq:Fn-sign-rank})
with Theorem~\ref{thm:srank-upp} settles~(\ref{eq:Fn-upp}).

It remains to analyze the circuit complexity of $F_{n}.$ We defined
$F_{n}$ formally as a circuit of depth~$3k+2$ in which the bottom
four levels have fan-ins $n^{O(1)},$ $O(\log n),$ $m,$ and $2,$
in that order. Since $m$ is a constant independent of $n$, these
four levels can be computed by a circuit of polynomial size, depth~$2$,
and bottom fan-in $O(\log n).$ This optimization reduces the depth
of $F_{n}$ to~$(3k+2)-4+2=3k$ while keeping the bottom fan-in at
$O(\log n).$
\end{proof}
\noindent We now similarly lift Theorem~\ref{thm:degthr-ac0-min-smooth-3k-plus-1}
to a lower bound on sign-rank and unbounded-error communication complexity.
\begin{thm}
\label{thm:sign-rank-ac0-3k-plus-1}Let $k\geq1$ be a given integer.
Then there is an $($explicitly given$)$ Boolean circuit family $\{F_{n}\}_{n=1}^{\infty},$
where $F_{n}\colon\zoon\times\zoon\to\zoo$ has polynomial size, depth
$3k+1,$ bottom fan-in $O(\log n),$ sign-rank
\[
\srank(F_{n})=\exp\left(\Omega\left(n^{1-\frac{2}{2k+3}}\cdot(\log n)^{-\frac{k^{2}}{2k+3}}\right)\right),
\]
and unbounded-error communication complexity
\[
\upp(F_{n})=\Omega\left(n^{1-\frac{2}{2k+3}}\cdot(\log n)^{-\frac{k^{2}}{2k+3}}\right).
\]
\end{thm}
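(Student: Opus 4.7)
The plan is to mirror the proof of Theorem~\ref{thm:sign-rank-ac0-3k} essentially verbatim, substituting Theorem~\ref{thm:degthr-ac0-min-smooth-3k-plus-1} for Theorem~\ref{thm:degthr-ac0-min-smooth-3k} as the source of the smooth threshold degree lower bound. All the heavy lifting — the iterative amplification argument and the passage from smooth threshold degree to sign-rank — has already been done in Sections~\ref{subsec:The-smooth-threshold-degree-AC0} and~\ref{subsec:The-sign-rank-of-AC0}, so this result is obtained by appropriate packaging rather than by any new technical work.

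First I would invoke Theorem~\ref{thm:degthr-ac0-min-smooth-3k-plus-1} to obtain an explicit family $\{f_n\}_{n=1}^{\infty}$ of polynomial-size circuits of depth $3k+1$ and bottom fan-in $O(\log n)$ satisfying
\[
\degthr\!\left(f_n,\;\exp\!\left(-c'\cdot\frac{n^{1-\frac{2}{2k+3}}}{\log^{\frac{k^2}{2k+3}} n}\right)\right)\;\geq\; c''\cdot\frac{n^{1-\frac{2}{2k+3}}}{\log^{\frac{k^2}{2k+3}} n}
\]
for some constants $c',c''>0$ and all $n\geq 2$. Next, set $m=2\lceil\exp(4c'/c'')\rceil$ (a constant independent of $n$) and for $n\geq m$ define
\[
F_n \;=\; f_{\lfloor n/m\rfloor}\circ\OR_m\circ\AND_2.
\]
An application of Theorem~\ref{thm:thrdeg-to-sign-rank} with this value of $m$ directly converts the smooth threshold degree bound into the claimed sign-rank lower bound: the factor $\lfloor m/2\rfloor^{d/2}=\exp(\Theta(d))$ produced by that theorem absorbs the exponentially small smoothness parameter, leaving a net bound of $\exp(\Omega(n^{1-2/(2k+3)}(\log n)^{-k^2/(2k+3)}))$. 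The unbounded-error communication bound then follows immediately from Theorem~\ref{thm:srank-upp}.

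The only remaining task is the depth accounting, which is identical in structure to that in Theorem~\ref{thm:sign-rank-ac0-3k}. Written out naively, $F_n$ is a circuit of depth $(3k+1)+2 = 3k+3$ whose bottom four layers have fan-ins $n^{O(1)}, O(\log n), m, 2$, in that order. Because $m$ is a constant, these four layers compute a function expressible by a polynomial-size depth-$2$ subcircuit with bottom fan-in $O(\log n)$, which reduces the overall depth of $F_n$ to $(3k+3)-4+2 = 3k+1$ while preserving the bottom fan-in~$O(\log n)$ and polynomial size.

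I do not anticipate any substantive obstacle here: the argument is essentially bookkeeping once Theorems~\ref{thm:degthr-ac0-min-smooth-3k-plus-1}, \ref{thm:thrdeg-to-sign-rank}, and~\ref{thm:srank-upp} are in hand. The one mildly delicate point is the depth collapse, which relies crucially on the fact that $m$ is a constant independent of $n$; if one were to attempt an analogous construction with $m$ growing with $n$, the bottom-fan-in gain from merging would no longer be $O(\log n)$ and the depth would not drop to $3k+1$.
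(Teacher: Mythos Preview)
Your proposal is correct and takes exactly the same approach as the paper, which simply states that the proof is analogous to that of Theorem~\ref{thm:sign-rank-ac0-3k} with Theorem~\ref{thm:degthr-ac0-min-smooth-3k-plus-1} substituted for Theorem~\ref{thm:degthr-ac0-min-smooth-3k}. Your spelled-out version of this substitution, including the depth-collapse arithmetic $(3k+3)-4+2=3k+1$, matches the paper's argument in every detail.
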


\begin{proof}
The proof is analogous to that of Theorem~\ref{thm:sign-rank-ac0-3k},
with the only difference that the appeal to Theorem~\ref{thm:degthr-ac0-min-smooth-3k}
should be replaced with an appeal to Theorem~\ref{thm:degthr-ac0-min-smooth-3k-plus-1}.
\end{proof}
\noindent Theorems~\ref{thm:sign-rank-ac0-3k} and~\ref{thm:sign-rank-ac0-3k-plus-1}
settle Theorems~\ref{thm:MAIN-sign-rank-3k},~\ref{thm:MAIN-sign-rank-3k-plus-1},
and~\ref{thm:MAIN-unbounded} in the introduction.

\section*{Acknowledgments}

The authors are thankful to Mark Bun and Justin Thaler for valuable
comments on an earlier version of this paper.\bibliographystyle{siamplain}
\bibliography{refs}

\begin{thebibliography}{10}

\bibitem{aaronson-shi04distinctness}
{\sc S.~Aaronson and Y.~Shi}, {\em Quantum lower bounds for the collision and
  the element distinctness problems}, J. ACM, 51 (2004), pp.~595--605,
  \href{http://dx.doi.org/10.1145/1008731.1008735}
  {doi:10.1145/1008731.1008735}.

\bibitem{ambainis05collision}
{\sc A.~Ambainis}, {\em Polynomial degree and lower bounds in quantum
  complexity: {C}ollision and element distinctness with small range}, Theory of
  Computing, 1 (2005), pp.~37--46,
  \href{http://dx.doi.org/10.4086/toc.2005.v001a003}
  {doi:10.4086/toc.2005.v001a003}.

\bibitem{ACRSZ07nand}
{\sc A.~Ambainis, A.~M. Childs, B.~Reichardt, R.~{{\v S}}palek, and S.~Zhang},
  {\em Any {AND-OR} formula of size {$N$} can be evaluated in time
  {$N^{1/2+o(1)}$} on a quantum computer}, {SIAM} J. Comput., 39 (2010),
  pp.~2513--2530, \href{http://dx.doi.org/10.1137/080712167}
  {doi:10.1137/080712167}.

\bibitem{aspnes91voting}
{\sc J.~Aspnes, R.~Beigel, M.~L. Furst, and S.~Rudich}, {\em The expressive
  power of voting polynomials}, Combinatorica, 14 (1994), pp.~135--148,
  \href{http://dx.doi.org/10.1007/BF01215346} {doi:10.1007/BF01215346}.

\bibitem{BFS86cc}
{\sc L.~Babai, P.~Frankl, and J.~Simon}, {\em Complexity classes in
  communication complexity theory}, in \textit{Proceedings of the
  Twenty-Seventh Annual IEEE Symposium on Foundations of Computer Science}
  \textup{(FOCS)}, 1986, pp.~337--347,
  \href{http://dx.doi.org/10.1109/SFCS.1986.15} {doi:10.1109/SFCS.1986.15}.

\bibitem{bns92}
{\sc L.~Babai, N.~Nisan, and M.~Szegedy}, {\em Multiparty protocols,
  pseudorandom generators for logspace, and time-space trade-offs}, J. Comput.
  Syst. Sci., 45 (1992), pp.~204--232,
  \href{http://dx.doi.org/10.1016/0022-0000(92)90047-M}
  {doi:10.1016/0022-0000(92)90047-M}.

\bibitem{beame-huyn-ngoc09multiparty-focs}
{\sc P.~Beame and T.~Huynh}, {\em Multiparty communication complexity and
  threshold circuit size of {$\mathsf{AC}^0$}}, {SIAM} J. Comput., 41 (2012),
  pp.~484--518, \href{http://dx.doi.org/10.1137/100792779}
  {doi:10.1137/100792779}.

\bibitem{beame-machmouchi12quantum-query-ac0}
{\sc P.~Beame and W.~Machmouchi}, {\em The quantum query complexity of
  {$\AC^0$}}, Quantum Information {\&} Computation, 12 (2012), pp.~670--676.

\bibitem{beigel91rational}
{\sc R.~Beigel, N.~Reingold, and D.~A. Spielman}, {\em {$\PP$} is closed under
  intersection}, J. Comput. Syst. Sci., 50 (1995), pp.~191--202,
  \href{http://dx.doi.org/10.1006/jcss.1995.1017} {doi:10.1006/jcss.1995.1017}.

\bibitem{buhrman07pp-upp}
{\sc H.~Buhrman, N.~K. Vereshchagin, and R.~de~Wolf}, {\em On computation and
  communication with small bias}, in \textit{Proceedings of the Twenty-Second
  Annual IEEE Conference on Computational Complexity} \textup{(CCC)}, 2007,
  pp.~24--32, \href{http://dx.doi.org/10.1109/CCC.2007.18}
  {doi:10.1109/CCC.2007.18}.

\bibitem{BKT17poly-strikes-back}
{\sc M.~Bun, R.~Kothari, and J.~Thaler}, {\em The polynomial method strikes
  back: {T}ight quantum query bounds via dual polynomials}.
\newblock ECCC Report TR17-169, 2017.

\bibitem{bun-thaler13and-or-tree}
{\sc M.~Bun and J.~Thaler}, {\em Dual lower bounds for approximate degree and
  {M}arkov--{B}ernstein inequalities}, Inf. Comput., 243 (2015), pp.~2--25,
  \href{http://dx.doi.org/10.1016/j.ic.2014.12.003}
  {doi:10.1016/j.ic.2014.12.003}.

\bibitem{bun-thaler13amplification}
{\sc M.~Bun and J.~Thaler}, {\em Hardness amplification and the approximate
  degree of constant-depth circuits}, in \textit{Proceedings of the
  Forty-Second International Colloquium on Automata, Languages and Programming}
  \textup{(ICALP)}, 2015, pp.~268--280,
  \href{http://dx.doi.org/10.1007/978-3-662-47672-7\_22}
  {doi:10.1007/978-3-662-47672-7\_22}.

\bibitem{bun-thaler16approx-degree-depth3}
{\sc M.~Bun and J.~Thaler}, {\em Approximate degree and the complexity of depth
  three circuits}, in Electronic Colloquium on Computational Complexity (ECCC),
  2016.
\newblock Report TR16-121.

\bibitem{BT16sign-rank-ac0}
{\sc M.~Bun and J.~Thaler}, {\em Improved bounds on the sign-rank of
  {$\classAC^0$}}, in \textit{Proceedings of the Forty-Third International
  Colloquium on Automata, Languages and Programming} \textup{(ICALP)}, 2016,
  pp.~37:1--37:14, \href{http://dx.doi.org/10.4230/LIPIcs.ICALP.2016.37}
  {doi:10.4230/LIPIcs.ICALP.2016.37}.

\bibitem{bun-thaler17adeg-ac0}
{\sc M.~Bun and J.~Thaler}, {\em A nearly optimal lower bound on the
  approximate degree of {$\AC^0$}}, in \textit{Proceedings of the Fifty-Eighth
  Annual IEEE Symposium on Foundations of Computer Science} \textup{(FOCS)},
  2017, pp.~1--12, \href{http://dx.doi.org/10.1109/FOCS.2017.10}
  {doi:10.1109/FOCS.2017.10}.

\bibitem{BT18ac0-large-error}
{\sc M.~Bun and J.~Thaler}, {\em The large-error approximate degree of
  {$\classAC^0$}}, in Electronic Colloquium on Computational Complexity (ECCC),
  August 2018.
\newblock Report TR18-143.

\bibitem{cfl83multiparty}
{\sc A.~K. Chandra, M.~L. Furst, and R.~J. Lipton}, {\em Multi-party
  protocols}, in \textit{Proceedings of the Fifteenth Annual ACM Symposium on
  Theory of Computing} \textup{(STOC)}, 1983, pp.~94--99,
  \href{http://dx.doi.org/10.1145/800061.808737} {doi:10.1145/800061.808737}.

\bibitem{arkadev07multiparty}
{\sc A.~Chattopadhyay}, {\em Discrepancy and the power of bottom fan-in in
  depth-three circuits}, in \textit{Proceedings of the Forty-Eighth Annual IEEE
  Symposium on Foundations of Computer Science} \textup{(FOCS)}, 2007,
  pp.~449--458, \href{http://dx.doi.org/10.1109/FOCS.2007.24}
  {doi:10.1109/FOCS.2007.24}.

\bibitem{chor-goldreich88ip}
{\sc B.~Chor and O.~Goldreich}, {\em Unbiased bits from sources of weak
  randomness and probabilistic communication complexity}, SIAM J. Comput., 17
  (1988), pp.~230--261, \href{http://dx.doi.org/10.1137/0217015}
  {doi:10.1137/0217015}.

\bibitem{forster02linear}
{\sc J.~Forster}, {\em A linear lower bound on the unbounded error
  probabilistic communication complexity}, J. Comput. Syst. Sci., 65 (2002),
  pp.~612--625, \href{http://dx.doi.org/10.1016/S0022-0000(02)00019-3}
  {doi:10.1016/S0022-0000(02)00019-3}.

\bibitem{forster01relations}
{\sc J.~Forster, M.~Krause, S.~V. Lokam, R.~Mubarakzjanov, N.~Schmitt, and
  H.-U. Simon}, {\em Relations between communication complexity, linear
  arrangements, and computational complexity}, in Proc. of the 21st Conf. on
  Foundations of Software Technology and Theoretical Computer Science (FST
  TCS), 2001, pp.~171--182, \href{http://dx.doi.org/10.1007/3-540-45294-X\_15}
  {doi:10.1007/3-540-45294-X\_15}.

\bibitem{concrete-mathematics}
{\sc R.~L. Graham, D.~E. Knuth, and O.~Patashnik}, {\em Concrete Mathematics:
  {A} Foundation for Computer Science}, Addison-Wesley, 2nd~ed., 1994.

\bibitem{jukna11extremal-2nd-edition}
{\sc S.~Jukna}, {\em Extremal Combinatorics with Applications in Computer
  Science}, Springer-Verlag Berlin Heidelberg, 2nd~ed., 2011,
  \href{http://dx.doi.org/10.1007/978-3-642-17364-6}
  {doi:10.1007/978-3-642-17364-6}.

\bibitem{KOS:02}
{\sc A.~R. Klivans, R.~O'Donnell, and R.~A. Servedio}, {\em Learning
  intersections and thresholds of halfspaces}, J. Comput. Syst. Sci., 68
  (2004), pp.~808--840, \href{http://dx.doi.org/10.1016/j.jcss.2003.11.002}
  {doi:10.1016/j.jcss.2003.11.002}.

\bibitem{KS01dnf}
{\sc A.~R. Klivans and R.~A. Servedio}, {\em Learning {DNF} in time {$2^{\tilde
  O(n^{1/3})}$}}, J. Comput. Syst. Sci., 68 (2004), pp.~303--318,
  \href{http://dx.doi.org/10.1016/j.jcss.2003.07.007}
  {doi:10.1016/j.jcss.2003.07.007}.

\bibitem{klivans-servedio06decision-lists}
{\sc A.~R. Klivans and R.~A. Servedio}, {\em Toward attribute efficient
  learning of decision lists and parities}, J.~Machine Learning Research, 7
  (2006), pp.~587--602.

\bibitem{krause94depth2mod}
{\sc M.~Krause and P.~Pudl{\'a}k}, {\em On the computational power of depth-$2$
  circuits with threshold and modulo gates}, Theor. Comput. Sci., 174 (1997),
  pp.~137--156, \href{http://dx.doi.org/10.1016/S0304-3975(96)00019-9}
  {doi:10.1016/S0304-3975(96)00019-9}.

\bibitem{KP98threshold}
{\sc M.~Krause and P.~Pudl{\'a}k}, {\em Computing {B}oolean functions by
  polynomials and threshold circuits}, Comput. Complex., 7 (1998),
  pp.~346--370, \href{http://dx.doi.org/10.1007/s000370050015}
  {doi:10.1007/s000370050015}.

\bibitem{ccbook}
{\sc E.~Kushilevitz and N.~Nisan}, {\em Communication complexity}, Cambridge
  University Press, 1997.

\bibitem{lee09formulas}
{\sc T.~Lee}, {\em A note on the sign degree of formulas}, 2009.
\newblock Available at \url{http://arxiv.org/abs/0909.4607}.

\bibitem{minsky88perceptrons}
{\sc M.~L. Minsky and S.~A. Papert}, {\em Perceptrons: {A}n Introduction to
  Computational Geometry}, MIT Press, Cambridge, Mass., 1969.

\bibitem{nisan-szegedy94degree}
{\sc N.~Nisan and M.~Szegedy}, {\em On the degree of {B}oolean functions as
  real polynomials}, Computational Complexity, 4 (1994), pp.~301--313,
  \href{http://dx.doi.org/10.1007/BF01263419} {doi:10.1007/BF01263419}.

\bibitem{OS-extremal-ptf}
{\sc R.~O'Donnell and R.~A. Servedio}, {\em Extremal properties of polynomial
  threshold functions}, J. Comput. Syst. Sci., 74 (2008), pp.~298--312,
  \href{http://dx.doi.org/10.1016/j.jcss.2007.06.021}
  {doi:10.1016/j.jcss.2007.06.021}.

\bibitem{odonnell03degree}
{\sc R.~O'Donnell and R.~A. Servedio}, {\em New degree bounds for polynomial
  threshold functions}, Combinatorica, 30 (2010), pp.~327--358,
  \href{http://dx.doi.org/10.1007/s00493-010-2173-3}
  {doi:10.1007/s00493-010-2173-3}.

\bibitem{paturi92approx}
{\sc R.~Paturi}, {\em On the degree of polynomials that approximate symmetric
  {B}oolean functions}, in \textit{Proceedings of the Twenty-Fourth Annual ACM
  Symposium on Theory of Computing} \textup{(STOC)}, 1992, pp.~468--474,
  \href{http://dx.doi.org/10.1145/129712.129758} {doi:10.1145/129712.129758}.

\bibitem{paturi-saks94rational}
{\sc R.~Paturi and M.~E. Saks}, {\em Approximating threshold circuits by
  rational functions}, Inf. Comput., 112 (1994), pp.~257--272,
  \href{http://dx.doi.org/10.1006/inco.1994.1059} {doi:10.1006/inco.1994.1059}.

\bibitem{paturi86cc}
{\sc R.~Paturi and J.~Simon}, {\em Probabilistic communication complexity}, J.
  Comput. Syst. Sci., 33 (1986), pp.~106--123,
  \href{http://dx.doi.org/10.1016/0022-0000(86)90046-2}
  {doi:10.1016/0022-0000(86)90046-2}.

\bibitem{RS07dc-dnf}
{\sc A.~A. Razborov and A.~A. Sherstov}, {\em The sign-rank of {$\AC^0$}}, SIAM
  J. Comput., 39 (2010), pp.~1833--1855,
  \href{http://dx.doi.org/10.1137/080744037} {doi:10.1137/080744037}.
\newblock Preliminary version in \textit{Proceedings of the Forty-Ninth Annual
  IEEE Symposium on Foundations of Computer Science} \textup{(FOCS)}, 2008.

\bibitem{saks93slicing}
{\sc M.~E. Saks}, {\em Slicing the hypercube}, Surveys in Combinatorics,
  (1993), pp.~211--255, \href{http://dx.doi.org/10.1017/CBO9780511662089.009}
  {doi:10.1017/CBO9780511662089.009}.

\bibitem{sherstov07halfspace-mat}
{\sc A.~A. Sherstov}, {\em Halfspace matrices}, Computational Complexity, 17
  (2008), pp.~149--178, \href{http://dx.doi.org/10.1007/s00037-008-0242-4}
  {doi:10.1007/s00037-008-0242-4}.
\newblock Preliminary version in \textit{Proceedings of the Twenty-Second
  Annual IEEE Conference on Computational Complexity} \textup{(CCC)}, 2007.

\bibitem{sherstov07ac-majmaj}
{\sc A.~A. Sherstov}, {\em Separating {$\AC^0$} from depth-$2$ majority
  circuits}, SIAM J. Comput., 38 (2009), pp.~2113--2129,
  \href{http://dx.doi.org/10.1137/08071421X} {doi:10.1137/08071421X}.
\newblock Preliminary version in \textit{Proceedings of the Thirty-Ninth Annual
  ACM Symposium on Theory of Computing} \textup{(STOC)}, 2007.

\bibitem{sherstov07cc-prod-nonprod}
{\sc A.~A. Sherstov}, {\em Communication complexity under product and
  nonproduct distributions}, Computational Complexity, 19 (2010), pp.~135--150,
  \href{http://dx.doi.org/10.1007/s00037-009-0285-1}
  {doi:10.1007/s00037-009-0285-1}.
\newblock Preliminary version in \textit{Proceedings of the Twenty-Third Annual
  IEEE Conference on Computational Complexity} \textup{(CCC)}, 2008.

\bibitem{sherstov07quantum}
{\sc A.~A. Sherstov}, {\em The pattern matrix method}, SIAM J. Comput., 40
  (2011), pp.~1969--2000, \href{http://dx.doi.org/10.1137/080733644}
  {doi:10.1137/080733644}.
\newblock Preliminary version in \textit{Proceedings of the Fortieth Annual ACM
  Symposium on Theory of Computing} \textup{(STOC)}, 2008.

\bibitem{sherstov07symm-sign-rank}
{\sc A.~A. Sherstov}, {\em The unbounded-error communication complexity of
  symmetric functions}, Combinatorica, 31 (2011), pp.~583--614,
  \href{http://dx.doi.org/10.1007/s00493-011-2580-0}
  {doi:10.1007/s00493-011-2580-0}.
\newblock Preliminary version in \textit{Proceedings of the Forty-Ninth Annual
  IEEE Symposium on Foundations of Computer Science} \textup{(FOCS)}, 2008.

\bibitem{sherstov11quantum-sdpt}
{\sc A.~A. Sherstov}, {\em Strong direct product theorems for quantum
  communication and query complexity}, SIAM J. Comput., 41 (2012),
  pp.~1122--1165, \href{http://dx.doi.org/10.1137/110842661}
  {doi:10.1137/110842661}.
\newblock Preliminary version in \textit{Proceedings of the Forty-Third Annual
  ACM Symposium on Theory of Computing} \textup{(STOC)}, 2011.

\bibitem{sherstov09hshs}
{\sc A.~A. Sherstov}, {\em The intersection of two halfspaces has high
  threshold degree}, {SIAM} J. Comput., 42 (2013), pp.~2329--2374,
  \href{http://dx.doi.org/10.1137/100785260} {doi:10.1137/100785260}.
\newblock Preliminary version in \textit{Proceedings of the Fiftieth Annual
  IEEE Symposium on Foundations of Computer Science} \textup{(FOCS)}, 2009.

\bibitem{sherstov12noisy}
{\sc A.~A. Sherstov}, {\em Making polynomials robust to noise}, Theory of
  Computing, 9 (2013), pp.~593--615,
  \href{http://dx.doi.org/10.4086/toc.2013.v009a018}
  {doi:10.4086/toc.2013.v009a018}.
\newblock Preliminary version in \textit{Proceedings of the Forty-Fourth Annual
  ACM Symposium on Theory of Computing} \textup{(STOC)}, 2012.

\bibitem{sherstov09opthshs}
{\sc A.~A. Sherstov}, {\em Optimal bounds for sign-representing the
  intersection of two halfspaces by polynomials}, Combinatorica, 33 (2013),
  pp.~73--96, \href{http://dx.doi.org/10.1007/s00493-013-2759-7}
  {doi:10.1007/s00493-013-2759-7}.
\newblock Preliminary version in \textit{Proceedings of the Forty-Second Annual
  ACM Symposium on Theory of Computing} \textup{(STOC)}, 2010.

\bibitem{sherstov14sign-deg-ac0}
{\sc A.~A. Sherstov}, {\em Breaking the {Minsky--Papert} barrier for
  constant-depth circuits}, in \textit{Proceedings of the Forty-Sixth Annual
  ACM Symposium on Theory of Computing} \textup{(STOC)}, 2014, pp.~223--232,
  \href{http://dx.doi.org/10.1145/2591796.2591871}
  {doi:10.1145/2591796.2591871}.
\newblock Full version available as ECCC Report TR14-009, January 2014.

\bibitem{sherstov13directional}
{\sc A.~A. Sherstov}, {\em Communication lower bounds using directional
  derivatives}, J.~{ACM}, 61 (2014), pp.~1--71,
  \href{http://dx.doi.org/10.1145/2629334} {doi:10.1145/2629334}.
\newblock Preliminary version in \textit{Proceedings of the Forty-Fifth Annual
  ACM Symposium on Theory of Computing} \textup{(STOC)}, 2013.

\bibitem{sherstov15asymmetry}
{\sc A.~A. Sherstov}, {\em The power of asymmetry in constant-depth circuits},
  in \textit{Proceedings of the Fifty-Sixth Annual IEEE Symposium on
  Foundations of Computer Science} \textup{(FOCS)}, 2015, pp.~431--450,
  \href{http://dx.doi.org/10.1109/FOCS.2015.34} {doi:10.1109/FOCS.2015.34}.

\bibitem{sherstov12mdisj}
{\sc A.~A. Sherstov}, {\em The multiparty communication complexity of set
  disjointness}, {SIAM} J. Comput., 45 (2016), pp.~1450--1489,
  \href{http://dx.doi.org/10.1137/120891587} {doi:10.1137/120891587}.
\newblock Preliminary version in \textit{Proceedings of the Forty-Fourth Annual
  ACM Symposium on Theory of Computing} \textup{(STOC)}, 2009.

\bibitem{sherstov17algopoly}
{\sc A.~A. Sherstov}, {\em Algorithmic polynomials}, in \textit{Proceedings of
  the Fiftieth Annual ACM Symposium on Theory of Computing} \textup{(STOC)},
  2018, pp.~311--324, \href{http://dx.doi.org/10.1145/3188745.3188958}
  {doi:10.1145/3188745.3188958}.

\bibitem{siu-roy-kailath94rational}
{\sc K.-Y. Siu, V.~P. Roychowdhury, and T.~Kailath}, {\em Rational
  approximation techniques for analysis of neural networks}, IEEE Transactions
  on Information Theory, 40 (1994), pp.~455--466,
  \href{http://dx.doi.org/10.1109/18.312168} {doi:10.1109/18.312168}.

\bibitem{thaler14omb}
{\sc J.~Thaler}, {\em Lower bounds for the approximate degree of block-composed
  functions}, in \textit{Proceedings of the Forty-Third International
  Colloquium on Automata, Languages and Programming} \textup{(ICALP)}, 2016,
  pp.~17:1--17:15, \href{http://dx.doi.org/10.4230/LIPIcs.ICALP.2016.17}
  {doi:10.4230/LIPIcs.ICALP.2016.17}.

\bibitem{spalek08dual-or}
{\sc R.~\v{S}palek}, {\em A dual polynomial for {OR}}.
\newblock Available at \url{http://arxiv.org/abs/0803.4516}, 2008.

\bibitem{yao79cc}
{\sc A.~C.-C. Yao}, {\em Some complexity questions related to distributive
  computing}, in \textit{Proceedings of the Eleventh Annual ACM Symposium on
  Theory of Computing} \textup{(STOC)}, 1979, pp.~209--213,
  \href{http://dx.doi.org/10.1145/800135.804414} {doi:10.1145/800135.804414}.

\end{thebibliography}

\appendix

\section{\label{sec:Sign-rank-and-smooth-thrdeg}Sign-rank and smooth threshold
degree}

The purpose of this appendix is to prove Theorem~\ref{thm:thrdeg-to-sign-rank},
implicit in~\cite{sherstov07symm-sign-rank,RS07dc-dnf}. We closely
follow the treatment in those earlier papers. Sections~\ref{subsec:Fourier-transform}\textendash \ref{subsec:Pattern-matrices}
cover necessary technical background, followed by the proof proper
in Section~\ref{subsec:Sign-rank-and-smooth}.

\subsection{\label{subsec:Fourier-transform}Fourier transform}

Consider the real vector space of functions $\zoon\to\Re.$ For $S\subseteq\oneton,$
define $\chi_{S}\colon\zoon\to\{-1,+1\}$ by $\chi_{S}(x)=(-1)^{\sum_{i\in S}x_{i}}.$
Then 
\[
\langle\chi_{S},\chi_{T}\rangle=\begin{cases}
2^{n} & \text{if }S=T,\\
0 & \text{otherwise.}
\end{cases}
\]
Thus, $\{\chi_{S}\}_{S\subseteq\oneton}$ is an orthogonal basis for
the vector space in question. In particular, every function $\phi\colon\zoon\to\Re$
has a unique representation of the form 
\begin{align*}
\phi=\sum_{S\subseteq\oneton}\hat{\phi}(S)\chi_{S}
\end{align*}
for some reals $\hat{\phi}(S),$ where by orthogonality $\hat{\phi}(S)=2^{-n}\langle\phi,\chi_{S}\rangle$.
The reals $\hat{\phi}(S)$ are called the \emph{Fourier coefficients
of $\phi,$} and the mapping $\phi\mapsto\hat{\phi}$ is the \emph{Fourier
transform of $f.$} The following fact is immediate from the definition
of $\hat{\phi}(S).$ 
\begin{prop}
\label{prop:fourier-coeff-bound} Let $\phi\colon\zoon\to\Re$ be
given. Then 
\[
\max_{S\subseteq\{1,2,\ldots,n\}}|\hat{\phi}(S)|\leq2^{-n}\|\phi\|_{1}.
\]
\end{prop}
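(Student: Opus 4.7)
The plan is to derive the bound directly from the formula for the Fourier coefficients stated in the paragraph preceding the proposition, namely $\hat{\phi}(S)=2^{-n}\langle\phi,\chi_{S}\rangle$. Fix an arbitrary $S\subseteq\{1,2,\ldots,n\}$. Expanding the inner product according to its definition gives
\[
|\hat{\phi}(S)|=2^{-n}\left|\sum_{x\in\zoon}\phi(x)\chi_{S}(x)\right|.
\]
The triangle inequality then allows me to pull the absolute value inside the sum, after which the estimate $|\chi_{S}(x)|=1$ (valid because $\chi_S$ takes values in $\{-1,+1\}$) collapses the right-hand side to $2^{-n}\sum_{x\in\zoon}|\phi(x)|=2^{-n}\|\phi\|_{1}$. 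Taking the maximum over $S$ completes the argument.

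There is essentially no obstacle here: the proposition is a one-line consequence of the definition of $\hat{\phi}(S)$ together with the fact that the characters $\chi_{S}$ are $\{-1,+1\}$-valued. The only thing to be mildly careful about is the convention established earlier in the paper that $\|\phi\|_{1}=\sum_{x\in\supp\phi}|\phi(x)|$, but this sum agrees with $\sum_{x\in\zoon}|\phi(x)|$ since points outside the support contribute zero.
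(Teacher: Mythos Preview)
Your proof is correct and matches the paper's treatment: the paper does not even give a formal proof, stating only that the fact is ``immediate from the definition of $\hat{\phi}(S)$,'' which is precisely the computation you wrote out.
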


The linear subspace of real polynomials on $\zoon$ of degree at most
$d$ is easily seen to be $\Span\{\chi_{S}:|S|\leq d\}.$ Its orthogonal
complement, $\Span\{\chi_{S}\colon|S|>d\}$, is then the linear subspace
of functions that have zero inner product with every polynomial of
degree at most $d.$ As a result, the orthogonal content of a nonzero
function $\phi\colon\zoon\to\Re$ is given by
\begin{align}
\orth\phi & =\min\{|S|:\hat{\phi}(S)\ne0\}, &  & \phi\not\equiv0.\label{eq:orth-fourier}
\end{align}

\subsection{Forster's bound}

The \emph{spectral norm} of a real matrix $A=[A_{xy}]_{x\in X,y\in Y}$
is given by 
\[
\|A\|=\max_{v\in\Re^{|Y|},\;\|v\|_{2}=1}\|Av\|_{2},
\]
where $\|\cdot\|_{2}$ is the Euclidean norm on vectors. The first
strong lower bound on the sign-rank of an explicit matrix was obtained
by Forster~\cite{forster02linear}, who proved that 
\[
\srank(A)\geq\frac{\sqrt{|X|\,|Y|}}{\|A\|}
\]
for any matrix $A=[A_{xy}]_{x\in X,y\in Y}$ with $\pm1$ entries.
Forster's result has seen a number of generalizations, including the
following theorem~\cite[Theorem~3]{forster01relations}.
\begin{thm}[Forster et al.]
\label{thm:forster} Let $A=[A_{xy}]_{x\in X,y\in Y}$ be a real
matrix without zero entries. Then 
\[
\srank(A)\geq\frac{\sqrt{|X|\,|Y|}}{\|A\|}\;\min_{x,y}|A_{xy}|.
\]
\end{thm}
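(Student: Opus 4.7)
The plan is to follow Forster's original argument for the bound $\srank(A)\geq\sqrt{|X||Y|}/\|A\|$, with a minor modification that drags the magnitudes $|A_{xy}|$ through the main double-counting step to pick up the factor $\min_{x,y}|A_{xy}|$. First I would fix $r := \srank(A)$ and factor a sign-equivalent rank-$r$ matrix as $UV^{T}$, obtaining vectors $u_x, v_y \in \Re^{r}$ with $\sign\langle u_x, v_y\rangle = \sign A_{xy}$ for every $x,y$. An arbitrarily small perturbation (which cannot change sign-rank) lets me assume in addition that the family $\{u_x\}$ is in general position and that no $u_x$ vanishes.

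The heart of the proof is Forster's balancing lemma: there exists $T \in \mathrm{GL}_r(\Re)$ such that the normalized vectors $w_x := Tu_x/\|Tu_x\|$ satisfy
\[
\sum_{x\in X} w_x w_x^{T} \;=\; \frac{|X|}{r}\,I_r.
\]
I expect this step to be the main obstacle. The standard route is to minimize the functional $\phi(T) := \sum_x \log\|Tu_x\|^{2}$ over $\mathrm{SL}_r(\Re)$; the general-position hypothesis provides the coercivity needed to guarantee that the infimum is attained, and writing the first-order optimality condition at a minimizer produces exactly the stated identity. Because the sign of $\langle u_x,v_y\rangle$ is unchanged under $u_x\mapsto Tu_x$, $v_y\mapsto T^{-T}v_y$, and further under rescaling each vector to unit length, I may pass to unit vectors $w_x$ and $\tilde v_y$ with $\sign\langle w_x,\tilde v_y\rangle = \sign A_{xy}$ and $\sum_x w_x w_x^T = (|X|/r)\,I_r$.

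With balancing in hand, I would estimate the quantity
\[
S \;:=\; \sum_{x,y} A_{xy}\,\langle w_x,\tilde v_y\rangle
\]
from both sides. For the lower bound, each summand equals $|A_{xy}|\cdot|\langle w_x,\tilde v_y\rangle|$ because the two factors agree in sign; for each fixed $y$, the balancing identity gives
\[
\sum_{x}\langle w_x,\tilde v_y\rangle^{2} \;=\; \tilde v_y^{T}\!\left(\sum_x w_x w_x^{T}\right)\!\tilde v_y \;=\; \frac{|X|}{r},
\]
and since $|\langle w_x,\tilde v_y\rangle|\leq 1$, dropping a squared factor yields $\sum_x|\langle w_x,\tilde v_y\rangle|\geq |X|/r$. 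Summing over $y$ and pulling out $\min_{x,y}|A_{xy}|$ gives $S\geq \min_{x,y}|A_{xy}|\cdot|X||Y|/r$. For the upper bound, let $W$ be the $|X|\times r$ matrix whose rows are the $w_x$, so that $\|W\|_{F}=\sqrt{|X|}$; grouping $S$ by columns, Cauchy--Schwarz together with the inequality $\|A^{T}W\|_{F}\leq \|A\|\cdot\|W\|_{F}$ produces
\[
S \;=\; \sum_{y}\bigl\langle \tilde v_y,\, \textstyle\sum_x A_{xy}w_x\bigr\rangle \;\leq\; \sqrt{|Y|}\cdot\|A^{T}W\|_{F} \;\leq\; \|A\|\sqrt{|X||Y|}.
\]
Combining the two estimates and solving for $r$ gives the claimed bound. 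If the general-position reduction is uncomfortable, one can instead derive the theorem directly for a perturbed matrix $A+\epsilon E$ with $E$ a small random matrix and pass to the limit $\epsilon\to 0$, using continuity of sign-rank from above.
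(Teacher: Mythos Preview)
The paper does not prove Theorem~\ref{thm:forster}; it is quoted without proof as a known generalization of Forster's original bound, with an explicit citation to~\cite[Theorem~3]{forster01relations}. There is therefore no in-paper proof to compare against.

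For what it is worth, your proposal is precisely the standard argument from Forster's work and its follow-ups, and the main line is correct: balance the left factors to an isotropic configuration, then double-count $\sum_{x,y}A_{xy}\langle w_x,\tilde v_y\rangle$, using $|\langle w_x,\tilde v_y\rangle|\leq 1$ together with $\sum_x\langle w_x,\tilde v_y\rangle^2=|X|/r$ for the lower bound and $\|A^TW\|_F\leq\|A\|\,\|W\|_F$ for the upper bound. One small remark: your closing fallback about perturbing $A$ itself is not to the point---the general-position requirement concerns the factor vectors $u_x$, not $A$, and for small $\epsilon$ the matrix $A+\epsilon E$ has the same sign pattern as $A$ and hence the same sign-rank, so nothing is gained. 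Your primary route (perturbing the $u_x$ within $\Re^r$, which preserves rank and, for small perturbations, all inner-product signs) is the right one.
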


\subsection{\label{subsec:Pattern-matrices}Spectral norm of pattern matrices}

\emph{Pattern matrices} were introduced in~\cite{sherstov07ac-majmaj,sherstov07quantum}
and proved useful in obtaining strong lower bounds on communication
complexity. Relevant definitions and results from~\cite{sherstov07quantum}
follow. Let $n$ and $N$ be positive integers with $n\mid N.$ Partition
$\{1,2,\ldots,N\}$ into $n$ contiguous blocks, each with $N/n$
elements: 
\begin{multline*}
\{1,2,\ldots,N\}=\left\{ 1,2,\dots,\frac{N}{n}\right\} \cup\left\{ \frac{N}{n}+1,\dots,\frac{2N}{n}\right\} \\
\cup\cdots\cup\left\{ \frac{(n-1)N}{n}+1,\dots,N\right\} .
\end{multline*}
Now, let $\VV(N,n)$ denote the family of subsets $V\subseteq\{1,2,\ldots,N\}$
that have exactly one element in each of these blocks (in particular,
$|V|=n$). Clearly, $|\VV(N,n)|=(N/n)^{n}.$ For a function $\phi\colon\zoo^{n}\to\Re,$
the \emph{$(N,n,\phi)$-pattern matrix} is the real matrix $A$ given
by
\[
A=\Big[\phi(x|_{V}\oplus w)\Big]_{x\in\zoo^{N},\,(V,w)\in\VV(N,n)\times\zoo^{n}}\;.
\]
In words, $A$ is the matrix of size $2^{N}$~by~$(N/n)^{n}2^{n}$
whose rows are indexed by strings $x\in\zoo^{N},$ whose columns are
indexed by pairs $(V,w)\in\VV(N,n)\times\zoo^{n},$ and whose entries
are given by $A_{x,(V,w)}=\phi(x|_{V}\oplus w).$ We will need the
following expression for the spectral norm of a pattern matrix~\cite[Theorem~4.3]{sherstov07quantum}.
\begin{thm}[Sherstov]
\label{thm:pattern-spectrum}Let $\phi\colon\zoo^{n}\to\Re$ be given.
Let $A$ be the $(N,n,\phi)$-pattern matrix. Then 
\[
\|A\|\;=\;\sqrt{2^{N+n}\left(\frac{N}{n}\right)^{n}}\;\max_{S\subseteq\{1,2,\ldots,n\}}\left\{ |\hat{\phi}(S)|\left(\frac{n}{N}\right)^{|S|/2}\right\} .
\]
\end{thm}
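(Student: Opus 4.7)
The plan is to Fourier-decompose $A$ and exhibit a block structure that reduces the computation of $\|A\|$ to the spectral norm of elementary factors. Writing $\phi=\sum_{S\subseteq\{1,\ldots,n\}}\hat\phi(S)\chi_S$ and using $\chi_S(x|_V\oplus w)=\chi_S(x|_V)\chi_S(w)$, I will decompose
\[
A=\sum_{S\subseteq\{1,\ldots,n\}}\hat\phi(S)\,M_S,\qquad M_S[x,(V,w)]=\chi_S(x|_V)\chi_S(w).
\]
The goal is to argue that $\|A\|=\max_S|\hat\phi(S)|\,\|M_S\|$ and then to compute $\|M_S\|$ in closed form.

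The first key step is to show that the summands $M_S$ are ``orthogonal'' in a sense strong enough to make the maximum of $|\hat\phi(S)|\,\|M_S\|$ the true spectral norm of the sum. I will verify two clean orthogonality statements. First, $M_S M_T^{\mathsf T}=0$ whenever $S\ne T$: the sum over $w$ factors out as $\sum_w\chi_S(w)\chi_T(w)=2^n\cdot\mathbb{1}[S=T]$, killing every cross-term. Second, the column spaces of the $M_S$ are pairwise orthogonal inside $\Re^{\zoo^N}$: the $(V,w)$-th column of $M_S$ is, as a function of $x$, a scalar multiple of the Fourier character $\chi_{V(S)}$, where $V(S)\subseteq\{1,\ldots,N\}$ picks out the coordinates of $V$ indexed by $S$. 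As $S$ varies, the sets $V(S)$ hit disjoint blocks, so distinct $M_S$ live in orthogonal Fourier subspaces of the row space $\Re^{\zoo^N}$. Combining the two orthogonality statements, $AA^{\mathsf T}=\sum_S\hat\phi(S)^2 M_S M_S^{\mathsf T}$ is block-diagonal in a suitable orthogonal decomposition, which yields $\|A\|^2=\|AA^{\mathsf T}\|=\max_S\hat\phi(S)^2\|M_S\|^2$.

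The second key step is to compute $\|M_S\|$ via a tensor-product decomposition. The factorization $M_S[x,(V,w)]=\chi_S(x|_V)\cdot\chi_S(w)$ identifies $M_S=P_S\otimes q_S^{\mathsf T}$, where $P_S[x,V]=\chi_S(x|_V)$ and $q_S\in\Re^{\zoo^n}$ has entries $\chi_S(w)$. Since $q_S^{\mathsf T}$ is a rank-one row vector of Euclidean norm $\sqrt{2^n}$, one has $\|M_S\|=\sqrt{2^n}\,\|P_S\|$. Next, decomposing $x=(x^{(1)},\ldots,x^{(n)})$ and $V=(v_1,\ldots,v_n)$ block by block gives a further tensor factorization $P_S=\bigotimes_{i=1}^n R_S^{(i)}$ of $2^{N/n}\times(N/n)$ matrices, where for $i\in S$ the matrix $R_S^{(i)}[u,v]=(-1)^{u_v}$ is Hadamard-like and for $i\notin S$ the matrix $R_S^{(i)}$ is the all-ones matrix. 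A direct computation gives $(R_S^{(i)})^{\mathsf T}R_S^{(i)}=2^{N/n}I$ for $i\in S$, so $\|R_S^{(i)}\|=2^{(N/n)/2}$; for $i\notin S$, the all-ones matrix has spectral norm $2^{(N/n)/2}\sqrt{N/n}$. Multiplying over $i$ yields $\|P_S\|=2^{N/2}(N/n)^{(n-|S|)/2}$ and thus $\|M_S\|=\sqrt{2^{n+N}}\,(N/n)^{(n-|S|)/2}$. Substituting into $\|A\|=\max_S|\hat\phi(S)|\,\|M_S\|$ and rewriting $(N/n)^{(n-|S|)/2}=(N/n)^{n/2}(n/N)^{|S|/2}$ delivers the stated formula.

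The main obstacle is the orthogonality step: while $M_S M_T^{\mathsf T}=0$ is immediate from character orthogonality, one also needs to rule out interference between the $M_S M_S^{\mathsf T}$'s so that the spectral norm of their sum equals the maximum of the individual norms. This is what the pairwise orthogonality of column spaces provides, and it hinges on the combinatorial observation that the subsets $V(S)\subseteq\{1,\ldots,N\}$ arising from the blocks indexed by different $S$ are necessarily distinct — so the pattern-matrix structure is designed precisely to diagonalize the sum in the Fourier basis. Once this is in place, the computation of $\|M_S\|$ is a routine product of norms of Hadamard-like and all-ones matrices, and the final identity follows by arithmetic.
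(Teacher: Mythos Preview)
Your argument is correct. The paper itself does not give a proof of this statement; it is quoted verbatim from~\cite[Theorem~4.3]{sherstov07quantum} and used as a black box. Your derivation---Fourier-decomposing $A=\sum_S\hat\phi(S)M_S$, establishing $M_S M_T^{\mathsf T}=0$ for $S\ne T$ via character orthogonality in $w$, establishing pairwise orthogonality of the column spaces of the $M_S$ in $\Re^{\zoo^N}$ via the block structure of $V(S)$, and then computing $\|M_S\|$ as a tensor product of $\pm1$-column matrices and all-ones matrices---is essentially the original proof in that reference. One small clarification worth making explicit in your write-up: the two orthogonality facts play different roles. The identity $M_S M_T^{\mathsf T}=0$ collapses $AA^{\mathsf T}$ to $\sum_S\hat\phi(S)^2 M_S M_S^{\mathsf T}$, while the column-space orthogonality (which is the statement $M_S^{\mathsf T}M_T=0$, not $M_S M_T^{\mathsf T}=0$) is what ensures the PSD summands $M_S M_S^{\mathsf T}$ have orthogonal ranges and hence $\|\sum_S\hat\phi(S)^2 M_S M_S^{\mathsf T}\|=\max_S\hat\phi(S)^2\|M_S\|^2$. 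You use both correctly, but the sentence ``Combining the two orthogonality statements'' would benefit from spelling this out.
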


\subsection{\label{subsec:Sign-rank-and-smooth}Proof of Theorem~\ref{thm:thrdeg-to-sign-rank}}

We are now in a position to prove Theorem~\ref{thm:thrdeg-to-sign-rank}.
We will derive it from the following more general result, stated in
terms of pattern matrices.
\begin{thm}
\label{thm:sign-rank-pm}Let $f\colon\zoon\to\zoo$ be given. Suppose
that $\degthr(f,\gamma)\geq d,$ where $\gamma$ and $d$ are positive
reals. Then for any integer $T\geq1,$ the $(Tn,n,(-1)^{f})$-pattern
matrix has sign-rank at least $\gamma T^{d/2}$.
\end{thm}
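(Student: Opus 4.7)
The plan is to combine Forster's sign-rank bound (Theorem~\ref{thm:forster}) with the spectral-norm formula for pattern matrices (Theorem~\ref{thm:pattern-spectrum}), applied not to the original sign matrix $A$ but to a carefully perturbed matrix built from a dual witness for the smooth threshold degree of $f$.

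First, I would unpack the assumption $\degthr(f,\gamma)\geq d$. By definition~(\ref{eq:dual-degthr-min-smooth}), there exists a probability distribution $\mu$ on $\zoon$ with $\mu\geq\gamma/2^{n}$ pointwise such that the function $\psi=(-1)^{f}\cdot\mu$ satisfies $\orth\psi\geq d$. In view of~(\ref{eq:orth-fourier}), this means $\hat{\psi}(S)=0$ whenever $|S|<d$. Moreover $\|\psi\|_{1}=\|\mu\|_{1}=1$, so Proposition~\ref{prop:fourier-coeff-bound} gives $|\hat{\psi}(S)|\leq2^{-n}$ for every $S$, and by construction $|\psi(x)|=\mu(x)\geq\gamma/2^{n}$ for every $x\in\zoon$. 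Crucially, $\psi$ has the same sign pattern as $(-1)^{f}$.

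Next, I would pass to the $(Tn,n,\psi)$-pattern matrix $M$. Because $\sign\psi=(-1)^{f}$, the matrix $M$ has exactly the same sign pattern as the $(Tn,n,(-1)^{f})$-pattern matrix $A$, so $\srank(A)=\srank(M)$. The entries of $M$ are values of $\psi$, hence $\min_{x,(V,w)}|M_{x,(V,w)}|\geq\gamma/2^{n}$. For the spectral norm, Theorem~\ref{thm:pattern-spectrum} with $N=Tn$ gives
\[
\|M\|=\sqrt{2^{Tn+n}T^{n}}\cdot\max_{S\subseteq\{1,\ldots,n\}}\left\{ |\hat{\psi}(S)|\,T^{-|S|/2}\right\} .
\]
The vanishing Fourier coefficients restrict the maximum to $|S|\geq d$, and combining $|\hat{\psi}(S)|\leq2^{-n}$ with $T^{-|S|/2}\leq T^{-d/2}$ yields $\|M\|\leq\sqrt{2^{Tn+n}T^{n}}\cdot2^{-n}T^{-d/2}$.

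Finally, plugging into Theorem~\ref{thm:forster} with $|X|=2^{Tn}$ and $|Y|=T^{n}2^{n}$ gives
\[
\srank(M)\geq\frac{\sqrt{2^{Tn}\cdot T^{n}2^{n}}}{\sqrt{2^{Tn+n}T^{n}}\cdot2^{-n}T^{-d/2}}\cdot\frac{\gamma}{2^{n}}=\gamma T^{d/2},
\]
and the claim follows since $\srank(A)=\srank(M)$. There is no serious obstacle here; the only thing to be careful about is the bookkeeping between $\|\psi\|_{1}=1$ and the pointwise lower bound $\mu\geq\gamma/2^{n}$, which is precisely what lets Forster's bound convert the spectral-norm savings $T^{-d/2}$ into a multiplicative sign-rank lower bound and picks up the smoothness factor $\gamma$. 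Theorem~\ref{thm:thrdeg-to-sign-rank} then follows by embedding the $(Tn,n,(-1)^{f})$-pattern matrix, for $T=\lfloor m/2\rfloor$, as a submatrix of the communication matrix of $f\circ\OR_{m}\circ\AND_{2}$ in the standard way.
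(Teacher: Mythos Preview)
Your proposal is correct and follows essentially the same approach as the paper's proof: obtain a min-smooth dual distribution $\mu$ from $\degthr(f,\gamma)\geq d$, pass to the pattern matrix of $\psi=(-1)^{f}\mu$ (which shares the sign pattern of the target matrix), bound its spectral norm via Theorem~\ref{thm:pattern-spectrum} using $\hat{\psi}(S)=0$ for $|S|<d$ and $|\hat{\psi}(S)|\leq 2^{-n}$, and finish with Forster's bound (Theorem~\ref{thm:forster}) using the pointwise lower bound $|\psi|\geq\gamma/2^{n}$. The derivation of Theorem~\ref{thm:thrdeg-to-sign-rank} as a submatrix embedding is also the same as in the paper.
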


\begin{proof}
By the definition of smooth threshold degree, there is a probability
distribution $\mu$ on $\zoon$ such that 
\begin{align}
 & \mu(x)\geq\gamma\,2^{-n}, &  & x\in\zoon,\label{eq:srank-mu-smooth}\\
 & \orth((-1)^{f}\cdot\mu)\geq d.\label{eq:srank-mu-orth}
\end{align}
Abbreviate $\phi=(-1)^{f}\cdot\mu.$ Let $F$ and $\Phi$ denote the
$(Tn,n,(-1)^{f})$- and $(Tn,n,\phi)$-pattern matrices, respectively.
By~(\ref{eq:orth-fourier}) and~(\ref{eq:srank-mu-orth}),
\begin{align}
\hat{\phi}(S) & =0, &  & |S|<d.\label{eq:srank-phi-low-order}
\end{align}
The remaining Fourier coefficients of $\phi$ can be bounded using
Proposition~\ref{prop:fourier-coeff-bound}:
\begin{align}
|\hat{\phi}(S)| & \leq2^{-n}, &  & S\subseteq\oneton.\label{eq:srank-phi-high-order}
\end{align}
Now 
\begin{align*}
\srank(F) & =\srank(\Phi)\\
 & \geq\frac{\sqrt{2^{Tn+n}\,T^{n}}}{\|\Phi\|}\cdot\gamma\,2^{-n}\\
 & =\frac{\gamma\,2^{-n}}{\max_{S}\{|\hat{\phi}(S)|\,T^{-|S|/2}\}}\\
 & \geq\gamma T^{d/2},
\end{align*}
where the first step is valid because $F$ and $\Phi$ have the same
sign pattern; the second step uses~(\ref{eq:srank-mu-smooth}) and
Theorem~\ref{thm:forster}; the third step applies Theorem~\ref{thm:pattern-spectrum};
and the final step substitutes the upper bounds from~(\ref{eq:srank-phi-low-order})
and~(\ref{eq:srank-phi-high-order}).
\end{proof}
We have reached the main result of this appendix.
\begin{thm*}[restatement of Theorem~\ref{thm:thrdeg-to-sign-rank}]
Let $f\colon\zoon\to\zoo$ be given. Suppose that $\degthr(f,\gamma)\geq d,$
where $\gamma$ and $d$ are positive reals. Fix an integer $m\geq2$
and define $F\colon\zoo^{mn}\times\zoo^{mn}\to\zoo$ by $F(x,y)=f\circ\OR_{m}\circ\AND_{2}.$
Then
\[
\srank(F)\geq\gamma\left\lfloor \frac{m}{2}\right\rfloor ^{d/2}.
\]
\end{thm*}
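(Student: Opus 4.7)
The plan is to reduce to Theorem~\ref{thm:sign-rank-pm} by exhibiting the $(Tn, n, (-1)^{f})$-pattern matrix as a submatrix of the communication matrix $M_{F} = [(-1)^{F(x,y)}]_{x,y}$, where $T = \lfloor m/2 \rfloor$. Since sign-rank only decreases when passing to submatrices, this immediately yields
\[
\srank(F) \;\geq\; \srank\bigl((Tn,n,(-1)^{f})\text{-pattern matrix}\bigr) \;\geq\; \gamma\,T^{d/2} \;=\; \gamma\,\lfloor m/2\rfloor^{d/2}.
\]

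The embedding is as follows. Split Alice's input $x \in \{0,1\}^{mn}$ into $n$ blocks of $m$ bits, and inside each block $i$ pair up coordinates as $(2j-1, 2j)$ for $j = 1, 2, \ldots, T$ (ignoring the last coordinate if $m$ is odd). Likewise for Bob's input $y$. For each $z \in \{0,1\}^{Tn}$, restrict Alice's rows to those $x = x(z)$ given by $x_{i,2j-1} = z_{i,j}$, $x_{i,2j} = 1 - z_{i,j}$, with any leftover coordinate set to $0$. For each $(V,w) \in \mathcal{V}(Tn,n) \times \{0,1\}^{n}$, let $V_{i} \in \{1,\ldots,T\}$ denote the selection of $V$ in the $i$-th block, and restrict Bob's columns to $y = y(V,w)$ given by $y_{i,2V_{i}-1} = 1 - w_{i}$, $y_{i,2V_{i}} = w_{i}$, and every other coordinate $0$. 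Then inside each block $i$, exactly one of $y_{i,2V_{i}-1}$ and $y_{i,2V_{i}}$ equals $1$ and all other $y_{i,k}$ vanish, so
\[
\bigvee_{k=1}^{m}\bigl(x_{i,k}\wedge y_{i,k}\bigr) \;=\; \begin{cases} x_{i,2V_{i}-1} = z_{i,V_{i}} & \text{if } w_{i}=0,\\ x_{i,2V_{i}} = 1 - z_{i,V_{i}} & \text{if } w_{i}=1, \end{cases} \;=\; z_{i,V_{i}} \oplus w_{i}.
\]
Consequently $F(x(z),\,y(V,w)) = f(z|_{V}\oplus w)$, which is exactly the $(z,(V,w))$ entry of the $(Tn,n,(-1)^{f})$-pattern matrix (up to the $(-1)^{(\cdot)}$).

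The argument is then closed by noting that distinct $z$ and distinct $(V,w)$ give distinct rows $x(z)$ and columns $y(V,w)$ respectively (the map $z \mapsto x(z)$ is injective because it reproduces $z$ in the odd-indexed coordinates, and $(V,w) \mapsto y(V,w)$ is injective because the unique $1$ in each block encodes both $V_{i}$ and $w_{i}$). Hence the indicated submatrix of $M_{F}$ is precisely the $(Tn,n,(-1)^{f})$-pattern matrix, and invoking Theorem~\ref{thm:sign-rank-pm} completes the proof. I do not anticipate any genuine obstacle: the only care needed is to make sure the pairing is valid (which requires $m \geq 2$, i.e.\ $T \geq 1$, as assumed) and that the inner $\OR_{m}\circ\AND_{2}$ on the gadget-encoded pair of bits computes the XOR $z_{i,V_{i}} \oplus w_{i}$; the block-by-block verification above handles this mechanically.
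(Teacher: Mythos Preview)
Your proposal is correct and follows exactly the paper's approach: the paper's proof is the single sentence ``The result is immediate from Theorem~\ref{thm:sign-rank-pm} since the $(\lfloor m/2\rfloor n,n,(-1)^{f})$-pattern matrix is a submatrix of $[(-1)^{F(x,y)}]_{x,y}$,'' and you have spelled out the submatrix embedding that the paper leaves implicit. Your verification that the gadget $\OR_{m}\circ\AND_{2}$ restricted to the encoded inputs computes $z_{i,V_{i}}\oplus w_{i}$ is correct, as are the injectivity checks.
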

\begin{proof}
The result is immediate from Theorem~\ref{thm:sign-rank-pm} since
the $(\lfloor m/2\rfloor n,n,(-1)^{f})$-pattern matrix is a submatrix
of $[(-1)^{F(x,y)}]_{x,y}.$
\end{proof}

\section{\label{sec:dual-OR}A dual object for OR}

The purpose of this appendix is to prove Theorem~\ref{thm:dual-OR},
which gives a dual polynomial for the OR function with a number of
additional properties. The treatment here closely follows earlier
work by Špalek~\cite{spalek08dual-or}, Bun and Thaler~\cite{bun-thaler13and-or-tree,bun-thaler17adeg-ac0,BKT17poly-strikes-back},
and Sherstov~\cite{sherstov14sign-deg-ac0,sherstov15asymmetry}.
We start with a well-known binomial identity~\cite{concrete-mathematics}.
\begin{fact}
\label{fact:binom-orthog}For every univariate polynomial $p$ of
degree less than $n,$
\[
\sum_{t=0}^{n}(-1)^{t}{n \choose t}p(t)=0.
\]
\end{fact}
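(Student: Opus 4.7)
The plan is to reduce to monomials by linearity and then invoke the fact that iterated finite differences annihilate low-degree polynomials. Since the map $p \mapsto \sum_{t=0}^{n}(-1)^{t}\binom{n}{t}p(t)$ is linear in $p$, it suffices to establish the identity for the monomial basis $p(t)=t^{k}$ with $0\leq k\leq n-1$.

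For the monomial case I would introduce the forward-difference operator $\Delta$, defined on functions $q\colon\NN\to\Re$ by $(\Delta q)(t)=q(t+1)-q(t)$, and observe two elementary facts: first, if $q$ is a polynomial of degree $k\geq 1$, then $\Delta q$ is a polynomial of degree exactly $k-1$, since the leading terms cancel in the subtraction; second, by a straightforward induction using the binomial theorem applied to the operator identity $\Delta=E-I$ (where $E$ denotes the shift $q(t)\mapsto q(t+1)$), one has
\[
(\Delta^{n}q)(0)=\sum_{t=0}^{n}(-1)^{n-t}\binom{n}{t}q(t).
\]
Applying the first fact $n$ times to a polynomial $p$ of degree less than $n$ shows that $\Delta^{n}p\equiv 0$; evaluating at $0$ and multiplying by $(-1)^{n}$ then gives the claimed identity.

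No step here poses a genuine obstacle: the induction that produces the explicit expansion of $\Delta^{n}$ in terms of $E^{t}$ is a one-line application of the binomial theorem in a commutative operator ring, and the degree-lowering property of $\Delta$ follows by inspecting leading coefficients. The only mild care needed is to keep track of the sign $(-1)^{n-t}$ versus $(-1)^{t}$; the two differ by the overall factor $(-1)^{n}$, which does not affect whether the sum vanishes. If preferred, one could instead deduce the identity from the generating-function identity $(1-\e^{x})^{n}=\sum_{t=0}^{n}(-1)^{t}\binom{n}{t}\e^{tx}$ by comparing coefficients of $x^{k}/k!$ for $k<n$, noting that the left-hand side has vanishing Taylor coefficients below order $n$; this alternative avoids operator notation entirely but is otherwise equivalent.
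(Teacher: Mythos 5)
Your proof is correct. Note that the paper does not actually prove this fact---it is stated as a well-known binomial identity with a citation to \emph{Concrete Mathematics}---so there is no in-paper argument to compare against; your finite-difference derivation (reducing to $\Delta^{n}p\equiv0$ for $\deg p<n$ via the operator expansion $\Delta^{n}=(E-I)^{n}$, with the harmless overall sign $(-1)^{n}$) is the standard proof of exactly this identity, and the generating-function alternative you sketch is equally valid.
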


The next lemma constructs a dual polynomial for OR that has the sign
behavior claimed in Theorem~\ref{thm:dual-OR} but may lack some
of the metric properties. The lemma is an adaptation of~\cite[Lemma~A.2]{sherstov14sign-deg-ac0}.
\begin{lem}
\label{lem:construction-omega-Big-weight-at-zero}Let $\epsilon$
be given, $0<\epsilon<1$. Then for some constant $c=c(\epsilon)\in(0,1)$
and every integer $n\geq1,$ there is an $($explicitly given$)$
function $\omega\colon\{0,1,2,\dots,n\}\to\Re$ such that
\begin{align}
 & \omega(0)>\frac{1-\epsilon}{2}\cdot\|\omega\|_{1},\label{eq:construction-omega-at-zero}\\
 & |\omega(t)|\leq\frac{1}{ct^{2}\,2^{ct/\sqrt{n}}}\cdot\|\omega\|_{1} &  & (t=1,2,\ldots,n),\label{eq:construction-omega-upper-bound}\\
 & (-1)^{t}\omega(t)\geq0 &  & (t=0,1,2,\dots,n),\label{eq:construction-omega-sign}\\
 & \orth\omega\geq c\sqrt{n}.\label{eq:construction-omega-orthog}
\end{align}
\end{lem}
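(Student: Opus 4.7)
The plan is to construct $\omega$ explicitly along the lines pioneered by Špalek~\cite{spalek08dual-or} and refined in the Bun--Thaler / Sherstov papers cited earlier in the manuscript. Set $d := \lfloor c_0\sqrt n\rfloor$ for a small constant $c_0 = c_0(\epsilon)>0$ to be chosen at the end. The building block is the forward-difference kernel $\phi_d(t) = (-1)^t\binom{d}{t}$ on $\{0,1,\ldots,d\}$, which by Fact~\ref{fact:binom-orthog} is orthogonal to every polynomial of degree less than $d$ and satisfies $(-1)^t\phi_d(t)\geq 0$. This already achieves sign alternation and orthogonal content $\geq d = \Omega(\sqrt n)$, but it is supported on too small a set and has neither the $1/(t+1)^2$ decay nor the $2^{-c t/\sqrt n}$ decay, nor the mass concentration at~$0$.

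Next, I would spread $\phi_d$ over $\{0,1,\ldots,n\}$ by writing $\omega$ as an explicit rational function
\[
\omega(t) \;=\; C\cdot \frac{(-1)^t\,(d!)^2}{(t+1)(t+2)\cdots(t+d+1)}\cdot \frac{1}{t+1},
\]
or an equivalent partial-fraction combination of such terms. The two Pochhammer-type factors contribute, respectively, the quasi-exponential tail $2^{-ct/\sqrt n}$ (using that $\prod_{i=1}^{d}(1+t/i) \asymp 2^{c t/\sqrt n}$ for $d=\Theta(\sqrt n)$) and the polynomial $1/(t+1)^2$ tail; together they match both sides of~(\ref{eq:construction-omega-upper-bound}) up to an absolute constant. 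Alternating sign is built in via the $(-1)^t$ factor. Orthogonality to polynomials of degree $<d$ is verified by expanding $1/((t+1)\cdots(t+d+1))$ in partial fractions $\sum_{j=0}^{d}\alpha_j/(t+j+1)$ and observing that for any polynomial $p$ of degree $<d$, the sum $\sum_t(-1)^t p(t)/(t+j+1)$ vanishes after applying Fact~\ref{fact:binom-orthog} to a suitable difference representation of $p$.

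Finally, for the mass at $0$, note that the construction gives $\omega(0)/\|\omega\|_1 = \frac{1}{2} + o_{c_0}(1)$: as $c_0\to 0$ the decay steepens, the tail sum $\sum_{t\geq 1}|\omega(t)|$ approaches $\omega(0)$ from below, and by a direct bound $\sum_{t\geq 1}|\omega(t)| \leq (1+O(c_0))\,\omega(0)$. Choosing $c_0$ small enough depending on $\epsilon$ forces $\omega(0) > \frac{1-\epsilon}{2}\|\omega\|_1$, which is~(\ref{eq:construction-omega-at-zero}). Setting $c := \min\{c_0,1/c_0\}$ divided by a large constant absorbs all implicit constants into the single $c$ appearing in~(\ref{eq:construction-omega-upper-bound})--(\ref{eq:construction-omega-orthog}).

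The main obstacle is the matching \emph{two-sided} pointwise bound~(\ref{eq:construction-omega-upper-bound}): an LP-duality argument extracted from the Nisan--Szegedy $\Omega(\sqrt n)$ lower bound for $\OR_n$ would give a one-sided upper bound cheaply, but to get the lower bound on $|\omega(t)|$ required later (in Lemma~\ref{lem:dual-OR-distributions}, the key auxiliary in the amplification arguments) one really must point to an explicit closed form. I expect the technical heart of the proof will be the partial-fraction manipulation verifying that the above explicit $\omega$ is simultaneously (i) orthogonal to polynomials of degree $<d$, (ii) pointwise comparable from above and below to $1/((t+1)^2\,2^{ct/\sqrt n})\cdot\|\omega\|_1$, and (iii) sign-alternating without spurious cancellations.
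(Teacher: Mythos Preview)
Your orthogonality step does not go through. For the explicit candidate
\[
\omega(t)=C\cdot\frac{(-1)^t\,(d!)^2}{(t+1)^2(t+2)\cdots(t+d+1)},
\]
you claim that expanding $1/((t+1)\cdots(t+d+1))$ in partial fractions reduces $\langle\omega,p\rangle$ to sums of the form $\sum_{t=0}^{n}(-1)^t p(t)/(t+j+1)$, and that these vanish by Fact~\ref{fact:binom-orthog}. But Fact~\ref{fact:binom-orthog} requires the weight $(-1)^t\binom{n}{t}$, not $(-1)^t$; without the binomial factor there is no reason for $\sum_{t=0}^{n}(-1)^t p(t)/(t+j+1)$ to vanish, and indeed it does not (try $d=1$, $p\equiv1$, $n=1$). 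So the proposed $\omega$ does not satisfy $\orth\omega\geq c\sqrt n$. The ``alternatively, a partial-fraction combination of such terms'' escape hatch does not help: any finite linear combination of your basic rational functions still lacks the $\binom{n}{t}$ factor that drives orthogonality on $\{0,\ldots,n\}$.

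The paper's construction is quite different and worth contrasting. It does \emph{not} support $\omega$ on all of $\{0,\ldots,n\}$; instead it fixes a sparse set $S\subseteq\{0,\ldots,n\}$ of size $\Theta(\sqrt n)$ consisting of a few small points together with the quadratically spaced points $\{i^2\Delta:i=0,\ldots,d\}$, and defines
\[
\omega(t)=\frac{(-1)^{n+t+|S|+1}}{n!}\binom{n}{t}\prod_{i\notin S}(t-i).
\]
This vanishes off $S$, and orthogonality to polynomials of degree $\leq d$ is immediate from Fact~\ref{fact:binom-orthog} because $p(t)\prod_{i\notin S}(t-i)$ has degree $<n$. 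On $S$ one gets the closed form $\omega(t)=(-1)^{|\{i\in S:i<t\}|}\prod_{i\in S\setminus\{t\}}|t-i|^{-1}$, from which the decay $|\omega(i^2\Delta)|/\omega(0)\lesssim i^{-4}\exp(-i^2/2d)$ and the sign pattern $(-1)^t$ are read off directly; the parameter $\Delta=\Theta(1/\epsilon)$ is what forces $\omega(0)>\tfrac{1-\epsilon}{2}\|\omega\|_1$.

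A smaller point: the present lemma only asks for the \emph{upper} bound in~(\ref{eq:construction-omega-upper-bound}); the two-sided pointwise control you worry about is a feature of Theorem~\ref{thm:dual-OR}, which is proved later by convolving $\omega$ with a geometrically decaying sequence of shifts. So ``the matching two-sided bound'' is not the obstacle here---orthogonality is.
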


\begin{rem}
\noindent \label{rem:monotonicity}It is helpful to keep in mind that
properties~(\ref{eq:construction-omega-at-zero})\textendash (\ref{eq:construction-omega-orthog})
are logically monotonic in $c.$ In other words, establishing these
properties for a given constant $c>0$ also establishes them for all
smaller positive constants.
\end{rem}

\begin{proof}[Proof of Lemma~\emph{\ref{lem:construction-omega-Big-weight-at-zero}.}]
Let $\Delta=8\lceil1/\epsilon\rceil+3.$ If $n\leq\Delta,$ the requirements
of the lemma hold for the function $\omega:(0,1,2,3,\ldots,n)\mapsto(1,-1,0,0,\ldots,0)$
and all $c\in(0,1/\Delta].$ In what follows, we treat the complementary
case $n>\Delta.$ 

Define $d=\lfloor\sqrt{n/\Delta}\rfloor$ and let $S=\{1,\frac{\Delta+1}{2}\}\cup\{i^{2}\Delta:i=0,1,2,\dots,d\},$
so that $S\subseteq\{0,1,2,\dots,n\}.$ Consider the function $\omega\colon\{0,1,2,\dots,n\}\to\Re$
given by
\[
\omega(t)=\frac{(-1)^{n+t+|S|+1}}{n!}{n \choose t}\prod_{\substack{i=0,1,2,\dots,n:\\
i\notin S
}
}(t-i).
\]
Fact~\ref{fact:binom-orthog} implies that
\begin{align}
\orth\omega & >d+1\nonumber \\
 & \geq\sqrt{\frac{n}{\Delta}}.\label{eq:omega-orth}
\end{align}
A routine calculation reveals that
\begin{equation}
\omega(t)=\begin{cases}
(-1)^{|\{i\in S:i<t\}|}\prod_{i\in S\setminus\{t\}}\frac{1}{|t-i|} & \text{if \ensuremath{t\in S,}}\\
0 & \text{otherwise.}
\end{cases}\label{eq:construction-omega-cases}
\end{equation}
It follows that 
\begin{align}
\frac{\omega(0)}{|\omega(1)|} & =\frac{\Delta-1}{\Delta+1}\prod_{i=1}^{d}\frac{i^{2}\Delta-1}{i^{2}\Delta}\nonumber \\
 & \geq1-\frac{2}{\Delta+1}-\sum_{i=1}^{d}\frac{1}{i^{2}\Delta}\nonumber \\
 & >1-\frac{2}{\Delta+1}-\frac{1}{\Delta}\sum_{i=1}^{\infty}\frac{1}{i^{2}}\nonumber \\
 & >1-\frac{4}{\Delta}.\label{eq:omega-0-vs-1}
\end{align}
An analogous application of~(\ref{eq:construction-omega-cases})
shows that
\begin{align}
\frac{|\omega(\frac{\Delta+1}{2})|}{|\omega(0)|} & \leq\frac{\frac{\Delta+1}{2}}{\frac{\Delta+1}{2}\cdot(\frac{\Delta+1}{2}-1)}\frac{\Delta^{d}d!\,d!}{(\Delta-\frac{\Delta+1}{2})\cdot\frac{1}{2}\Delta^{d-1}(d-1)!\,(d+1)!}\nonumber \\
 & =\frac{8\Delta d}{(\Delta-1)^{2}(d+1)}\nonumber \\
 & \leq\frac{8\Delta}{(\Delta-1)^{2}}.
\end{align}
Finally, for $i=1,2,\ldots,d,$
\begin{align}
\frac{|\omega(i^{2}\Delta)|}{|\omega(0)|} & =\frac{\frac{\Delta+1}{2}}{(i^{2}\Delta-1)(i^{2}\Delta-\frac{\Delta+1}{2})}\cdot\frac{d!\,d!\,\Delta^{d}}{\frac{1}{2}\cdot(d-i)!\,(d+i)!\,\Delta^{d}}\nonumber \\
 & \leq\frac{2(\Delta+1)}{i^{4}(\Delta-1)^{2}}\cdot\frac{d!\,d!}{(d-i)!\,(d+i)!}\nonumber \\
 & =\frac{2(\Delta+1)}{i^{4}(\Delta-1)^{2}}\cdot\frac{d}{d+i}\cdot\frac{d-1}{d+i-1}\cdot\cdots\cdot\frac{d-i+1}{d+1}\nonumber \\
 & \leq\frac{2(\Delta+1)}{i^{4}(\Delta-1)^{2}}\cdot\left(1-\frac{i}{d+i}\right)^{i}\nonumber \\
 & \leq\frac{2(\Delta+1)}{i^{4}(\Delta-1)^{2}}\cdot\exp\left(-\frac{i^{2}}{d+i}\right)\nonumber \\
 & \leq\frac{2(\Delta+1)}{i^{4}(\Delta-1)^{2}}\cdot\exp\left(-\frac{i^{2}}{2d}\right)\nonumber \\
 & \leq\frac{2(\Delta+1)}{i^{4}(\Delta-1)^{2}}\cdot\exp\left(-\frac{i^{2}}{2\sqrt{n/\Delta}}\right).\label{eq:omega-decay}
\end{align}
Now,
\begin{align}
\frac{\|\omega\|_{1}}{\omega(0)} & =1+\frac{|\omega(1)|}{\omega(0)}+\frac{|\omega(\frac{\Delta+1}{2})|}{\omega(0)}+\sum_{i=1}^{d}\frac{|\omega(i^{2}\Delta)|}{\omega(0)}\nonumber \\
 & \leq1+\left(1-\frac{4}{\Delta}\right)^{-1}+\frac{8\Delta}{(\Delta-1)^{2}}+\sum_{i=1}^{\infty}\frac{2(\Delta+1)}{i^{4}(\Delta-1)^{2}}\nonumber \\
 & =1+\left(1-\frac{4}{\Delta}\right)^{-1}+\frac{8\Delta}{(\Delta-1)^{2}}+\frac{\pi^{4}(\Delta+1)}{45(\Delta-1)^{2}}\nonumber \\
 & \leq\frac{2}{1-\frac{8}{\Delta}}\nonumber \\
 & <\frac{2}{1-\epsilon},\label{eq:omega-large-at-0}
\end{align}
where the second step uses~(\ref{eq:omega-0-vs-1})\textendash (\ref{eq:omega-decay}),
and the last step substitutes the definition of $\Delta$. Now~(\ref{eq:construction-omega-at-zero})
follows from~(\ref{eq:omega-large-at-0}) and $\omega(0)>0$. Moreover,
for $c=c(\Delta)>0$ small enough, (\ref{eq:construction-omega-orthog})
follows from~(\ref{eq:omega-orth}), whereas (\ref{eq:construction-omega-upper-bound})
follows from~(\ref{eq:omega-decay}) and the fact that $\omega$
vanishes outside the union $\{1,\frac{\Delta+1}{2}\}\cup\{i^{2}\Delta:i=0,1,2,\dots,d\}.$

It remains to verify that $\omega$ has the desired sign behavior.
Since $\omega$ vanishes outside $S,$ the requirement (\ref{eq:construction-omega-sign})
holds trivially at those points. For $t\in S,$ it follows from (\ref{eq:construction-omega-cases})
that 
\begin{align*}
 & \sign\omega(1)=-1,\\
 & \sign\omega\!\left({\textstyle \frac{\Delta+1}{2}}\right)=1,\\
 & \sign\omega(i^{2}\Delta)=(-1)^{i}, &  & i=0,1,2,\ldots,d.
\end{align*}
Since $\Delta\in4\ZZ+3$ by definition, we conclude that $\sign\omega(t)=(-1)^{t}$
for all $t\in S.$ This settles (\ref{eq:construction-omega-sign})
and completes the proof.
\end{proof}
We have reached the main result of this section.
\begin{thm*}[restatement of Theorem~\ref{thm:dual-OR}]
Let $0<\epsilon<1$ be given. Then for some constants $c',c''\in(0,1)$
and all integers $N\geq n\geq1,$ there is an $($explicitly given$)$
function $\psi\colon\{0,1,2,\ldots,N\}\to\Re$ such that
\begin{align}
 & \psi(0)>\frac{1-\epsilon}{2},\label{eq:psi-at-0}\\
 & \|\psi\|_{1}=1,\\
 & \orth\psi\geq c'\sqrt{n},\\
 & \sign\psi(t)=(-1)^{t}, &  & t=0,1,2,\ldots,N,\\
 & |\psi(t)|\in\left[\frac{c'}{(t+1)^{2}\,2^{c''t/\sqrt{n}}},\;\frac{1}{c'(t+1)^{2}\,2^{c''t/\sqrt{n}}}\right], &  & t=0,1,2,\ldots,N.\label{eq:psi-pointwise}
\end{align}
\end{thm*}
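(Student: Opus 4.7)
The goal is to construct $\psi$ satisfying five properties simultaneously: heavy weight at $0$, $\ell_1$-normalization, orthogonal content $\Omega(\sqrt n)$, strictly alternating sign pattern on \emph{every} integer in $\{0,1,\ldots,N\}$, and matching two-sided pointwise bounds $|\psi(t)| \asymp (t+1)^{-2} 2^{-c''t/\sqrt n}$. Lemma~\ref{lem:construction-omega-Big-weight-at-zero} already delivers a function $\omega$ on $\{0,1,\ldots,n\}$ with the first four properties together with the \emph{upper} half of the pointwise bound, but $\omega$ vanishes identically off the sparse set $S$. The plan is therefore to perturb $\omega$ by a small ``filler'' object whose role is precisely to restore nonvanishing with the correct sign and matching lower bound, while not disturbing orthogonality or the other global properties.

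First, I would apply Lemma~\ref{lem:construction-omega-Big-weight-at-zero} with error parameter $\epsilon/4$ in place of $\epsilon$, and extend $\omega$ by zero to the larger domain $\{0,1,\ldots,N\}$; the extension is harmless because any polynomial of degree less than $\orth\omega$ restricted from $\Re$ to $\{0,\ldots,n\}$ is still a polynomial. Second, I would construct an auxiliary function $\omega^{*}\colon\{0,1,\ldots,N\}\to\Re$ with $\sign\omega^{*}(t)=(-1)^{t}$ everywhere, two-sided pointwise bounds $|\omega^{*}(t)|\asymp (t+1)^{-2}\,2^{-c''t/\sqrt n}$, and orthogonal content $\Omega(\sqrt n)$. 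A natural candidate has the partial-fractions form
\[
\omega^{*}(t)=(-1)^{t}\prod_{j=0}^{D}\frac{1}{t-r_{j}},
\]
where $D=\lceil c\sqrt n\rceil$, the roots $r_{j}$ are chosen as integers strictly outside the domain (e.g.\ $r_{0}=N+1$ and $r_{j}=N+1+2^{j}$ for $j\geq 1$), and the placement is tuned so that the geometric scale of the roots produces the required $2^{-c''t/\sqrt n}$ decay, while a pair of small roots contributes the $(t+1)^{-2}$ factor. Orthogonality against polynomials of degree less than $D$ follows by expanding $1/\prod_{j}(t-r_{j})$ via partial fractions and applying the standard finite-difference identity $\sum_{t=0}^{N}(-1)^{t}\binom{N}{t}p(t)=0$ for $\deg p<N$, together with Fact~\ref{fact:binom-orthog}.

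Third, I would take a convex combination $\tilde\psi=(1-\lambda)\omega+\lambda\,\omega^{*}/\|\omega^{*}\|_{1}$, with a small absolute constant $\lambda=\lambda(\epsilon)>0$. By construction both summands have sign pattern $(-1)^{t}$ on their supports, so $\tilde\psi$ does as well, and the $\omega^{*}$ term guarantees $\tilde\psi(t)\neq 0$ at every integer. Choosing $\lambda$ small enough preserves the heavy-weight condition $\tilde\psi(0)>\tfrac{1-\epsilon}{2}\|\tilde\psi\|_{1}$ (inherited from $\omega$'s $\tfrac{1-\epsilon/4}{2}$ bound with a little slack), and the upper pointwise bound is inherited because both $\omega$ and $\omega^{*}$ obey it. The lower pointwise bound is supplied entirely by the $\lambda\omega^{*}$ term, at every point. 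Orthogonal content $\Omega(\sqrt n)$ is additive under this sort of linear combination because both constituents are orthogonal to polynomials of the same degree. Finally, normalize by setting $\psi=\tilde\psi/\|\tilde\psi\|_{1}$.

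The main obstacle is the construction of $\omega^{*}$ in the second step: the two metric requirements are tightly coupled, because the $(t+1)^{-2}$ prefactor and the $2^{-c''t/\sqrt n}$ exponential decay must both emerge from a polynomial of degree $O(\sqrt n)$ in the denominator, while simultaneously maintaining orthogonality of order $\Omega(\sqrt n)$. Verifying the pointwise lower bound (rather than only the upper bound) requires tracking the partial-fractions coefficients and showing no unwanted cancellations occur at any integer in $\{0,1,\ldots,N\}$, which is where the geometric spacing of the $r_{j}$ is crucial. Once $\omega^{*}$ is in hand, the remaining combination and normalization steps reduce to direct estimates that parallel~\eqref{eq:omega-0-vs-1}--\eqref{eq:omega-large-at-0} in the proof of Lemma~\ref{lem:construction-omega-Big-weight-at-zero}.
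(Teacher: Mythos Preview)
Your overall three-step strategy (start from $\omega$ with slack in the $\epsilon$ parameter, add a ``filler'' that supplies the strictly nonzero lower bound, then normalize) is exactly right, and matches what the paper does. The gap is in your Step~2: the proposed $\omega^{*}(t)=(-1)^{t}\prod_{j=0}^{D}(t-r_{j})^{-1}$ does not deliver either of the two properties you need from it.

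First, the orthogonality argument is wrong. Fact~\ref{fact:binom-orthog} is an identity about sums weighted by $\binom{N}{t}$, but your $\omega^{*}$ has no binomial factor. Partial fractions reduces $\langle\omega^{*},p\rangle$ to a combination of sums $\sum_{t=0}^{N}(-1)^{t}p(t)/(t-r_{j})$, and there is no reason these vanish for low-degree $p$. The standard route to $\orth=\Omega(\sqrt{n})$ via Fact~\ref{fact:binom-orthog} would force a $\binom{N}{t}$ weight into $\omega^{*}$, and that weight completely dominates (and ruins) the pointwise profile you are after. Second, placing all roots $r_{j}>N$ makes $|\omega^{*}(t)|=\prod_{j}(r_{j}-t)^{-1}$ \emph{increasing} in $t$, not decaying; so you cannot get the $2^{-c''t/\sqrt{n}}$ factor this way, regardless of the geometric spacing. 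Adding a pair of roots below $0$ recovers the $(t+1)^{-2}$ prefactor but does not fix the wrong monotonicity coming from the roots above $N$.

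The paper sidesteps both issues by not building a separate $\omega^{*}$ at all: it takes the filler to be a weighted sum of \emph{translates and reflections of $\omega$ itself}. Concretely, one sets
\[
\Psi(t)=\omega(t)+\delta\sum_{i=1}^{N}\frac{(-1)^{i}}{i^{2}\,2^{ci/\sqrt{n}}}\;\omega(\pm(t-i)),
\]
where the sign is chosen so that the shifted copy lands inside $\{0,\ldots,N\}$. Translation and reflection preserve orthogonal content exactly (because $p(s\pm i)$ is still a polynomial of the same degree in $s$), so $\orth\Psi\geq\orth\omega$. The sign of each summand at $t$ is $(-1)^{i}\cdot(-1)^{t-i}=(-1)^{t}$, so the whole sum has the alternating sign pattern and there is no cancellation. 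The lower bound at $t$ comes for free from the single term $i=t$, which contributes $\delta\,\omega(0)/(t^{2}2^{ct/\sqrt{n}})$; the upper bound is a short convolution estimate using the upper bound you already have on $\omega$. This construction buys you orthogonality and the two-sided pointwise bound simultaneously, with no separate object to engineer.
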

\begin{proof}
For a sufficiently small constant $c>0$ and all $n\geq1,$ Lemma~\ref{lem:construction-omega-Big-weight-at-zero}
and Remark~\ref{rem:monotonicity} ensure the existence of a function
$\omega\colon\{0,1,2,\dots,\lceil n/2\rceil\}\to\Re$ such that
\begin{align}
 & \|\omega\|_{1}=1,\label{eq:construction-thm-omegas-ell1}\\
 & \omega(0)>\frac{1}{2}\left(1-\frac{\epsilon}{6}\right),\label{eq:construction-thm-omega-at-0}\\
 & |\omega(t)|\leq\frac{1}{ct^{2}\,2^{ct/\sqrt{n}}} &  & (t=1,2,\ldots,\lceil n/2\rceil),\label{eq:construction-thm-omega-pointwise}\\
 & (-1)^{t}\omega(t)\geq0 &  & (t=0,1,2,\ldots,\lceil n/2\rceil),\label{eq:construction-thm-omega-sign}\\
 & \orth\omega\geq c\sqrt{n}.\label{eq:construction-thm-omega-orth}
\end{align}
For convenience, extend $\omega$ to all of $\ZZ$ by defining it
to be zero outside its original domain. Define $\Psi\colon\{0,1,2,\dots,N\}\to\Re$
by 
\begin{multline*}
\Psi(t)=\omega(t)+\delta\left(\sum_{i=1}^{N-\lceil n/2\rceil}\frac{(-1)^{i}}{i^{2}\,2^{ci/\sqrt{n}}}\omega(t-i)\right.\\
\left.+\sum_{i=N-\lceil n/2\rceil+1}^{N}\frac{(-1)^{i}}{i^{2}\,2^{ci/\sqrt{n}}}\omega(-t+i)\right),
\end{multline*}
where
\[
\delta=\frac{5\epsilon}{\pi^{2}(1-\epsilon)}.
\]
By~(\ref{eq:construction-thm-omega-orth}) and Proposition~\ref{prop:orth}\ref{item:orth-sum},
\begin{equation}
\orth\Psi\geq c\sqrt{n}.\label{eq:Psi-orth}
\end{equation}
We now move on to metric properties of $\Psi.$ Multiplying the defining
equation for $\Psi$ on both sides by $(-1)^{t}$ and applying~(\ref{eq:construction-thm-omega-sign}),
we arrive at
\begin{multline}
(-1)^{t}\Psi(t)=|\omega(t)|+\delta\left(\sum_{i=1}^{N-\lceil n/2\rceil}\frac{|\omega(t-i)|}{i^{2}\,2^{ci/\sqrt{n}}}+\sum_{i=N-\lceil n/2\rceil+1}^{N}\frac{|\omega(-t+i)|}{i^{2}\,2^{ci/\sqrt{n}}}\right),\\
t=0,1,2,\ldots,N.\label{eq:Psi-sign}
\end{multline}
Summing over $t$ gives
\begin{align}
\|\Psi\|_{1} & =\|\omega\|_{1}+\delta\sum_{i=1}^{N}\frac{1}{i^{2}\,2^{ci/\sqrt{n}}}\|\omega\|_{1}\nonumber \\
 & =1+\delta\sum_{i=1}^{N}\frac{1}{i^{2}\,2^{ci/\sqrt{n}}}\nonumber \\
 & \in\left[1,1+\delta\sum_{i=1}^{\infty}\frac{1}{i^{2}}\right]\nonumber \\
 & =\left[1,\frac{6-\epsilon}{6(1-\epsilon)}\right],\label{eq:Psi-ell1}
\end{align}
where the second step uses~(\ref{eq:construction-thm-omegas-ell1}).
We also have
\begin{align}
\Psi(0) & \geq\omega(0)\nonumber \\
 & >\frac{6-\epsilon}{12},\label{eq:Psi-at-0}
\end{align}
where the first and second steps use~(\ref{eq:Psi-sign}) and~(\ref{eq:construction-thm-omega-at-0}),
respectively.

We now estimate $|\Psi(t)|$ for each $t=1,2,\ldots,N.$ For a lower
bound, we have
\begin{align}
|\Psi(t)| & =|\omega(t)|+\delta\left(\sum_{i=1}^{N-\lceil n/2\rceil}\frac{|\omega(t-i)|}{i^{2}\,2^{ci/\sqrt{n}}}+\sum_{i=N-\lceil n/2\rceil+1}^{N}\frac{|\omega(-t+i)|}{i^{2}\,2^{ci/\sqrt{n}}}\right)\nonumber \\
 & \geq\delta\cdot\frac{|\omega(0)|}{t^{2}\,2^{ct/\sqrt{n}}}\nonumber \\
 & \geq\frac{5\epsilon}{\pi^{2}(1-\epsilon)}\cdot\frac{6-\epsilon}{12}\cdot\frac{1}{t^{2}\,2^{ct/\sqrt{n}}},\label{eq:Psi-at-t-lower}
\end{align}
where the first and last steps use~(\ref{eq:Psi-sign}) and~(\ref{eq:construction-thm-omega-at-0}),
respectively. The upper bound on $|\Psi(t)|$ is somewhat more technical.
To begin with, we have the following bound for every positive integer
$t$:

\begin{align}
\sum_{i=1}^{t-1}\frac{1}{(t-i)^{2}\,i^{2}} & =\sum_{i=1}^{t-1}\frac{1}{\max\{(t-i)^{2},i^{2}\}\,\min\{(t-i)^{2},i^{2}\}}\nonumber \\
 & \leq\frac{1}{(t/2)^{2}}\sum_{i=1}^{t-1}\frac{1}{\min\{(t-i)^{2},i^{2}\}}\nonumber \\
 & \leq\frac{1}{(t/2)^{2}}\cdot2\sum_{i=1}^{\infty}\frac{1}{i^{2}}\nonumber \\
 & =\frac{4\pi^{2}}{3t^{2}}.\label{eq:inverse-two-squares}
\end{align}
Continuing,
\begin{align}
\sum_{i=1}^{\infty}\frac{|\omega(t-i)|}{i^{2}\,2^{ci/\sqrt{n}}} & =\frac{|\omega(0)|}{t^{2}\,2^{ct/\sqrt{n}}}+\sum_{i=1}^{t-1}\frac{|\omega(t-i)|}{i^{2}\,2^{ci/\sqrt{n}}}\nonumber \\
 & \leq\frac{1}{t^{2}\,2^{ct/\sqrt{n}}}+\sum_{i=1}^{t-1}\frac{1}{c(t-i)^{2}\,i^{2}\,2^{ct/\sqrt{n}}}\nonumber \\
 & \leq\frac{1}{t^{2}\,2^{ct/\sqrt{n}}}\left(1+\frac{4\pi^{2}}{3c}\right),\label{eq:shift-right}
\end{align}
where the second step uses~(\ref{eq:construction-thm-omegas-ell1})
and~(\ref{eq:construction-thm-omega-pointwise}), and the third step
substitutes the bound from~(\ref{eq:inverse-two-squares}). Analogously,
\begin{align}
\sum_{i=1}^{\infty}\frac{|\omega(-t+i)|}{i^{2}\,2^{ci/\sqrt{n}}} & =\frac{|\omega(0)|}{t^{2}\,2^{ct/\sqrt{n}}}+\sum_{i=t+1}^{\infty}\frac{|\omega(-t+i)|}{i^{2}\,2^{ci/\sqrt{n}}}\nonumber \\
 & \leq\frac{1}{t^{2}\,2^{ct/\sqrt{n}}}+\sum_{i=t+1}^{\infty}\frac{1}{c(t-i)^{2}\,i^{2}\,2^{ci/\sqrt{n}}}\nonumber \\
 & \leq\frac{1}{t^{2}\,2^{ct/\sqrt{n}}}\left(1+\sum_{i=t+1}^{\infty}\frac{1}{c(t-i)^{2}}\right)\nonumber \\
 & =\frac{1}{t^{2}\,2^{ct/\sqrt{n}}}\left(1+\frac{\pi^{2}}{6c}\right),\label{eq:shift-left}
\end{align}
where the second step uses~(\ref{eq:construction-thm-omegas-ell1})
and~(\ref{eq:construction-thm-omega-pointwise}). Now for every integer
$t\geq1,$
\begin{align}
|\Psi(t)| & \leq|\omega(t)|+\delta\left(\sum_{i=1}^{\infty}\frac{|\omega(t-i)|}{i^{2}\,2^{ci/\sqrt{n}}}+\sum_{i=1}^{\infty}\frac{|\omega(-t+i)|}{i^{2}\,2^{ci/\sqrt{n}}}\right)\nonumber \\
 & \leq\frac{1}{ct^{2}\,2^{ct/\sqrt{n}}}\left(1+2c\delta+\frac{4\pi^{2}\delta}{3}+\frac{\pi^{2}\delta}{6}\right),\label{eq:Psi-at-t-upper}
\end{align}
where the first step is immediate from the defining equation for $\Psi,$
and the second step uses~(\ref{eq:construction-thm-omega-pointwise}),
(\ref{eq:shift-right}), and~(\ref{eq:shift-left}). 

To complete the proof, let $\psi\colon\{0,1,2,\ldots,N\}\to\Re$ be
given by $\psi=\Psi/\|\Psi\|_{1}.$ Then for $c''=c$ and small enough
$c'=c'(c,\epsilon,\delta)>0$, properties~(\ref{eq:psi-at-0})\textendash (\ref{eq:psi-pointwise})
follow directly from~(\ref{eq:Psi-orth})\textendash (\ref{eq:Psi-at-t-lower})
and (\ref{eq:Psi-at-t-upper}). 
\end{proof}

\end{document}